\crefname{appsec}{Appendix}{Appendices}
\theoremstyle{plain}
\newtheorem{theorem}{Theorem}[section]
\newtheorem{lemma}[theorem]{Lemma}
\newtheorem{corollary}[theorem]{Corollary}
\newtheorem{fact}[theorem]{Fact}
\theoremstyle{definition}
\newtheorem{definition}[theorem]{Definition}
\newtheorem*{assumption*}{Assumption}
\theoremstyle{remark}
\newtheorem{remark}[theorem]{Remark}
\crefname{lemma}{Lemma}{Lemmas}
\crefname{theorem}{Theorem}{Theorems}
\crefname{definition}{Definition}{Definitions}
\crefname{fact}{Fact}{Facts}
\crefname{claim}{Claim}{Claims}
\crefname{proposition}{Proposition}{Propositions}
\newcommand{\dif}{\,\mathrm{d}}
\newcommand{\ind}{\mathbbm{1}}
\newcommand{\norm}[1]{\left\lVert #1 \right\rVert}
\newcommand{\ip}[2]{\left\langle #1 , #2 \right\rangle}
\newcommand{\N}{\mathbb{N}}
\newcommand{\R}{\mathbb{R}}
\newcommand{\EE}{\mathcal{E}}
\newcommand{\TT}{\mathcal{T}}
\newcommand{\XX}{\mathcal{X}}
\newcommand{\qandq}{\quad\text{and}\quad}
\newcommand{\e}{\mathrm{e}}
\renewcommand{\epsilon}{\varepsilon}
\renewcommand{\emptyset}{\varnothing}
\newcommand{\set}[1]{\left\{#1\right\}}
\newcommand{\tuple}[1]{\left(#1\right)} 
\newcommand{\inner}[2]{\left\langle #1,#2\right\rangle}
\newcommand{\tp}{\tuple}
\newcommand{\abs}[1]{\left\vert#1\right\vert}
\newcommand{\ctp}[1]{\left\lceil#1\right\rceil}
\def\*#1{\boldsymbol{#1}} 
\def\+#1{\mathcal{#1}} 
\def\-#1{\mathrm{#1}} 
\def\=#1{\mathbb{#1}} 
\def\!#1{\mathfrak{#1}} 
\def\oPr{\mathop{\mathrm{Pr}}}
\renewcommand{\Pr}[2][]{ \ifthenelse{\isempty{#1}}
  {\oPr\left[#2\right]}
  {\oPr_{#1}\left[#2\right]} } 
\def\oE{\mathop{\mathbb{E}}}
\newcommand{\E}[2][]{ \ifthenelse{\isempty{#1}}
  {\oE\left[#2\right]}
  {\oE_{#1}\left[#2\right]} }
\def\oVar{\mathrm{Var}}
\newcommand{\Var}[2][]{ \ifthenelse{\isempty{#1}}
  {\oVar\left[#2\right]}
  {\oVar_{#1}\left[#2\right]} }
\def\oEnt{\mathrm{Ent}}
\newcommand{\Ent}[2][]{ \ifthenelse{\isempty{#1}}
  {\oEnt\left[#2\right]}
  {\oEnt_{#1}\left[#2\right]} }
\newcommand{\PhiEnt}[2][]{ \ifthenelse{\isempty{#1}}
  {\oEnt^\phi\left[#2\right]}
  {\oEnt^\phi_{#1}\left[#2\right]} }
\newcommand{\fixed}[1]{#1}
\title{Rapid Mixing at the Uniqueness Threshold}
\author{Xiaoyu Chen\thanks{Massachusetts Institute of Technology, USA. Email: \texttt{xiaoyu@mit.edu}.} 
\and Zongchen Chen\thanks{Georgia Institute of Technology, USA. Email: \texttt{chenzongchen@gatech.edu}.}
\and Yitong Yin\thanks{Nanjing University, China. Email: \texttt{yinyt@nju.edu.cn}.}
\and Xinyuan Zhang\thanks{Nanjing University, China. Email: \texttt{zhangxy@smail.nju.edu.cn}.}
}
\date{\today}
\pgfplotsset{compat=1.18} 
\begin{document}

\maketitle

\begin{abstract}
Over the past decades, a fascinating computational phase transition has been identified in sampling from Gibbs distributions. 
Specifically, for the hardcore model on graphs with $n$ vertices and maximum degree $\Delta$, 
the computational complexity of sampling from the Gibbs distribution,
defined over the independent sets of the graph with vertex-weight $\lambda>0$,
undergoes a sharp transition at the critical threshold $\lambda_c(\Delta) := \frac{(\Delta-1)^{\Delta-1}}{(\Delta-2)^\Delta}$, known as the tree-uniqueness threshold:
\begin{itemize}
    \item In the uniqueness regime where $\lambda<\lambda_c(\Delta)$, a local Markov chain for sampling from the Gibbs distribution known as Glauber dynamics  has an optimal mixing time of $O(n \log n)$.
    \item In the non-uniqueness regime where $\lambda>\lambda_c(\Delta)$, the Glauber dynamics exhibits exponential mixing time; furthermore, the sampling problem becomes intractable unless $\mathsf{RP}=\mathsf{NP}$.
\end{itemize}
The computational complexity at the critical point $\lambda = \lambda_c(\Delta)$ remains poorly understood, 
as previous algorithmic and hardness results all required a constant slack from this threshold.

In this paper, we resolve this open question at the critical phase transition threshold, thus completing the picture of the computational phase transition. 
We show that for the hardcore model on graphs with maximum degree $\Delta\ge 3$ at the uniqueness threshold $\lambda = \lambda_c(\Delta)$,
the mixing time of Glauber dynamics is upper bounded by a polynomial in $n$, but is not nearly linear in the worst case: 
specifically, it falls between $\tilde{O}\left(n^{(2+4\e)+O(1/\Delta)}\right)$ and $\Omega\left(n^{4/3}\right)$.

For the Ising model (either antiferromagnetic or ferromagnetic), we establish similar results.
For the Ising model on graphs with maximum degree $\Delta\ge 3$ at the critical temperature $\beta$ where $|\beta| = \beta_c(\Delta)$, 
with the tree-uniqueness threshold $\beta_c(\Delta)$   defined by $(\Delta-1)\tanh\beta_c(\Delta)=1$, 
we show that the mixing time of Glauber dynamics is upper bounded by \fixed{$\tilde{O}\left(n^{3 + O(1/\Delta)}\right)$} and lower bounded by $\Omega\left(n^{3/2}\right)$ in the worst case.
For the Ising model specified by a critical interaction matrix $J$ with $\norm{J}_2=1$, we obtain an upper bound $\tilde{O}(n^{3/2})$ for the mixing time, matching the lower bound $\Omega\left(n^{3/2}\right)$ on the complete graph up to a logarithmic factor.

Our mixing time upper bounds  hold regardless of whether the maximum degree $\Delta$ is constant.
These bounds are derived from a new interpretation and analysis of the localization scheme method introduced by Chen and Eldan \cite{CE22}, applied to the field dynamics for the hardcore model and the proximal sampler for the Ising model.
As key steps in both our upper and lower bounds,  
we establish sub-linear upper and lower bounds for spectral independence at the critical point for worst-case instances.
	
\end{abstract}

\thispagestyle{empty}

\newpage 
\thispagestyle{empty}

\tableofcontents

\thispagestyle{empty}

\newpage

\setcounter{page}{1}

\section{Introduction}\label{sec:introduction}

One of the most remarkable critical phenomena in computation established over the past decades is the computational phase transition in two-spin systems at the uniqueness threshold.

The hardcore model is a quintessential example of a two-spin system that exhibits such a phase transition.
Let $G=(V,E)$ be a graph, and let $\+I(G)$ be the set of all independent sets of $G$. 
The Gibbs distribution $\mu$ of the hardcore model on $G$ defines a distribution over $\+I(G)$, 
where each independent set $I \in \+I(G)$ is assigned a probability:
\begin{align*}
    \mu(I) = \frac{\lambda^{|I|}}{Z},
\end{align*}
where $\lambda>0$ is a parameter known as the \emph{fugacity},
and $Z = \sum_{I \in \+I(G)} \lambda^{|I|}$ is the normalizing constant, referred to as the \emph{partition function}.

Given an instance of the hardcore model on a graph $G$ with fugacity $\lambda > 0$, 
a fundamental problem is to generate random samples approximately from the associated Gibbs distribution $\mu$.
This problem is computationally equivalent to the approximate counting problem of estimating the partition function $Z$,
which has broad applications in statistical physics and statistical inference.
Intuitively, these computational tasks become more intractable as $\lambda$ increases, 
since finding large independent sets is known to be computationally challenging.

In a seminal work~\cite{dyer2002counting}, Dyer, Frieze, and Jerrum raised the question of the critical threshold for the fugacity $\lambda>0$ beyond which an efficient sampler ceases to exist for hardcore models on graphs with maximum degree $\Delta$.
They questioned\footnote{To be precise, the original wording in \cite{dyer2002counting} states: ``One might even rashly conjecture (though we shall not do so) that this critical $\lambda$ is the same as that marking the boundary between unique and multiple Gibbs measures in the independent set (hard core gas) model in the regular infinite tree of degree $\Delta$ (the so-called Bethe lattice).''} 
whether this critical threshold for computation actually coincides with the tree uniqueness threshold $\lambda_c(\Delta) = \frac{(\Delta-1)^{\Delta-1}}{(\Delta-2)^\Delta}$, 
which indicates the phase transition of the uniqueness of Gibbs measure on the infinite $\Delta$-regular tree,
as well as the decay of correlation within such a measure.

This beautiful computational phase transition regarding sampling from the hardcore model has been established in the past decades.
A significant step was made by Weitz~\cite{weitz2006counting}, who showed that if $\lambda < \lambda_c(\Delta)$, 
then there exists a polynomial-time deterministic approximation algorithm for the partition function on graphs with constant maximum degree $\Delta$.
Weitz's result has been highly influential, stimulating numerous follow-up works,
including Barvinok's algorithm based on Taylor's theorem and the interpolation of zero-free polynomials~\cite{Bar06,peters2019conjecture}.
Due to standard self-reducibility, these counting results extend straightforwardly to approximate sampling.
Meanwhile, Sly~\cite{sly2010computational} showed that in a neighborhood just beyond the uniqueness threshold $\lambda_c(\Delta)$, there is no polynomial-time algorithm for approximate counting or sampling unless $\mathsf{RP}=\mathsf{NP}$.
This intractability was later extended to the entire non-uniqueness regime $\lambda > \lambda_c(\Delta)$ \cite{sly2012computational,GSV16}.

While these near-critical results provide theoretical foundations for the computational phase transition of sampling, 
the algorithms based on deterministic counting tend to be slow. 
This is because these algorithms typically rely on enumerating connected substructures of logarithmic size.
As a result, while their running time is polynomial in the size of the graph $n$, the exponent typically depends on model parameters, 
such as the maximum degree of the graph.

On the other hand, 
randomized approaches for sampling and counting, such as Markov chain Monte Carlo (MCMC) algorithms, can be much more efficient.
Among these, Glauber dynamics, also called (random-scan) Gibbs sampling or heat-bath dynamics, is perhaps one of the simplest and most popular MCMC method for sampling from Gibbs distributions.
When applied to the hardcore model, if the current independent set is $I_t$, then the next one $I_{t+1}$ is generated as follows:

\begin{enumerate}
    \item Choose a vertex $v$ uniformly at random;
    \item If $I_t \cup \{v\} \notin \+I(G)$, then set $I_{t+1} = I_t$;
    \item Otherwise, set $I_{t+1} = I_t \cup \{v\}$ with probability $\frac{\lambda}{1+\lambda}$ and otherwise set $I_{t+1} = I_t \setminus \{v\}$ with the remaining probability.
\end{enumerate}

It is well known that the Glauber dynamics is ergodic and  converges to the target distribution~$\mu$. 
The mixing time characterizes the number of steps the chain requires to be sufficiently close to~$\mu$. 
Formally, it is defined as the minimum $t$ such that the distribution of $I_t$ is $(1/4)$-close to $\mu$ in the total variation distance  regardless of the initialization. See \cref{subsec:MC-basics} for details.

Despite the simplicity of Glauber dynamics, 
establishing its mixing time can be extremely challenging. 
For years, progress on rapid mixing in the near-critical regime~\cite{vigoda2001note} lagged behind that of deterministic counting algorithms.
Recently, however, 
Anari, Liu and Oveis Gharan introduced a novel and powerful tool called \emph{spectral independence} for analyzing Glauber dynamics \cite{anari2020spectral}. 
Since its introduction, the spectral independence method has seen great success in obtaining optimal (nearly linear) mixing time for a large family of models under various parameter regimes, see e.g. \cite{chen2021optimal,liu2021coupling,blanca2022mixing,anari2022entropic,CE22,CFYZ22optimal,chen2023near,chen2023coloring,chen2024fast,chen2024stability,anari2024universality}.

Roughly speaking, the notion of spectral independence characterizes the decay of correlations in a spectral way so that the mixing of Glauber dynamics could be proved inductively.
Moreover, the spectral independence property (with the constant independent of $n$) occurs in the same regime as previously established correlation decay properties such as strong spatial mixing \cite{weitz2006counting} or zero-freeness of partition function \cite{peters2019conjecture}. 
In particular, all these nice properties happen if and only if the fugacity is below the critical value $\lambda_c(\Delta)$.

For the hardcore model, a fascinating and drastic transition in the mixing of Glauber dynamics for sampling from the Gibbs distribution
has been established at the uniqueness threshold:
\begin{itemize}
\item \textbf{Uniqueness regime} \cite{anari2020spectral,chen2021optimal,anari2022entropic,CE22,CFYZ22optimal}: For any graph of $n$ vertices and maximum degree $\Delta$, if $\lambda < \lambda_c(\Delta)$, then the mixing time of Glauber dynamics is $O(n \log n)$,
where the constant in $O(\cdot)$ depends only on the slackness of $\lambda$ to $\lambda_c(\Delta)$.
\item \textbf{Non-uniqueness regime} \cite{mossel2009hardness,sly2010computational,sly2012computational,GSV16}: Fix any constant $\Delta \ge 3$. 
There exists an infinite family of graphs of maximum degree $\Delta$, such that for any graph from the family of $n$ vertices, if $\lambda > \lambda_c(\Delta)$, then the mixing time of Glauber dynamics is $\exp(\Omega(n))$.
Furthermore, if $\lambda > \lambda_c(\Delta)$, there is no polynomial-time algorithm for approximate sampling on graphs with maximum degree $\Delta$ unless $\mathsf{RP}=\mathsf{NP}$.
\end{itemize}

The critical behavior described above provides an almost complete picture for the mixing time of Glauber dynamics for the hardcore model on graphs with maximum degree $\Delta$: as the parameter $\lambda$ increases from the subcritical (uniqueness) regime to the supercritical (non-uniqueness) regime, the mixing time transitions from nearly linear to exponential, with the phase transition occurring at the critical point $\lambda_c(\Delta)$.
However, one important case remains unresolved: the sampling problem \textbf{at the uniqueness threshold}, where $\lambda = \lambda_c(\Delta)$.

In fact, not only do we not know the mixing time of Glauber dynamics at criticality, we barely know anything rigorously, regarding either tractability or hardness, of the critical hardcore model. 
All previous algorithms in the subcritical regime, whether deterministic or randomized, cease to be efficient when $\lambda$ approaches $\lambda_c(\Delta)$ from below.
Similarly, the proofs of hardness in the supercritical regime in \cite{sly2012computational,GSV16} fail when $\lambda$ approaches $\lambda_c(\Delta)$ from above. 
As far as we know, there are no rigorously efficient algorithms or established computational hardness for the critical hardcore model on general graphs of maximum degree $\Delta$ or on random $\Delta$-regular graphs. 

The situation is similar for the Ising model, another extensively studied two-spin system defined on graphs; see \cref{subsec:Ising} for the definition. 
While we still  know almost nothing about the critical Ising model on general graphs, 
we have gained more insights into specific families of graphs.
For the ferromagnetic mean-field Ising model (i.e., on the complete graph), it has been shown that at the critical temperature, the Glauber dynamics mixes in time $\Theta(n^{3/2})$ \cite{LLP10,ding2009meanfield}. 
On the planar grid $\mathbb{Z}^2$, Lubetzky and Sly showed that the mixing time of the Glauber dynamics for the critical Ising model is polynomial but not linear in the size of the graph \cite{LS12}.
More recently, Bauerschmidt and Dagallier proved polynomial mixing time upper bounds in $\mathbb{Z}^d$ for $d \ge 5$ \cite{BD24}. 
These results confirm the conjectures from statistical physics that the Ising model undergoes a critical slowdown: the mixing time evolves from nearly linear in the subcritical regime, to polynomial but not nearly linear at criticality, and finally to exponential in the supercritical regime.

We further note that these critical results for the Ising model on complete graphs or lattice do not carry over to the hardcore model. 
In the hardcore model, both complete graphs and complete bipartite graphs are trivial at the critical point;
while on the planar grid, the exact value of the critical point remains unknown, with only numerical predictions available  \cite{blanca2019phase}.

\subsection{Rapid mixing for the critical hardcore model}

We establish polynomial mixing time for the hardcore model at critical fugacity.

Let $G = (V,E)$ be an undirected graph, and let $\lambda \in \mathbb{R}_{>0}$ denote the fugacity. 
Recall that the Gibbs distribution $\mu$ for the hardcore model, supported on the collection of independent sets $\+I(G)$,  is given by:
\begin{align*}
\forall I \in \+I(G), \quad \mu(I) \propto \lambda^{\abs{I}}.
\end{align*}
Let $\lambda_c(\Delta) = \frac{(\Delta-1)^{\Delta-1}}{(\Delta-2)^\Delta}$ be the critical fugacity for degree $\Delta \ge 3$.
For a graph $G$ of maximum degree $\Delta \ge 3$, the \emph{critical hardcore model} on $G$ is the hardcore model with critical fugacity $\lambda = \lambda_c(\Delta)$.

\begin{theorem}[Hardcore Model at the Critical Fugacity]\label{thm:hardcore}
    Consider the Glauber dynamics for the critical hardcore model.
    \begin{itemize}
    \item \textnormal{(Upper bound)} For any graph $G$ of $n$ vertices and maximum degree $\Delta \ge 3$, the mixing time of Glauber dynamics for the critical hardcore model on $G$ is $O\left( n^{(2+4\e) + \frac{4\e}{\Delta-2}} \log \Delta \right)$.
    \item \textnormal{(Lower bound)} Fix a constant $\Delta \ge 3$. There exists an infinite family of graphs of maximum degree $\Delta$, such that for any graph $G$ from the family of $n$ vertices, the mixing time of Glauber dynamics for the critical hardcore model on $G$ is $\Omega\tp{n^{4/3}}$.
    \end{itemize}
\end{theorem}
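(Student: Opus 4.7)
The plan is to derive both bounds from sharp sub-linear \emph{spectral independence} (SI) estimates on the critical hardcore Gibbs distribution $\mu$, converted into mixing-time statements for Glauber dynamics via a localization-scheme analysis of the \emph{field dynamics}. Both directions hinge on one geometric fact: at $\lambda = \lambda_c(\Delta)$ the tree recursion on the infinite $\Delta$-regular tree has derivative exactly $1$ at its symmetric fixed point, so correlations on the self-avoiding-walk (SAW) tree decay only polynomially in distance, rather than exponentially as in the strict uniqueness regime. This critical polynomial decay sets the sub-linear scale of SI, which then drives the mixing-time upper bound (via SI $\Rightarrow$ mixing through localization) and the mixing-time lower bound (via near phase-coexistence $\Rightarrow$ SI lower bound $\Rightarrow$ slow mixing).

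\textbf{Upper bound.} I would first establish a worst-case SI bound of the form $\Vert \Psi_\mu \Vert_{\mathrm{op}} \le n^{1 + 4\e + O(1/\Delta)}$ at the critical fugacity. The argument proceeds row-by-row on the influence matrix $\Psi_\mu$: using Weitz's SAW-tree correspondence, I push each pairwise influence to the tree, expand the tree recursion to second order around its critical fixed point (where the quadratic term has size $\Theta(1/\Delta)$), and thereby extract a polynomial-in-distance correlation decay whose exponent degrades like $\Theta(1/\Delta)$. Summing this decay against the number of length-$r$ SAW paths, truncated at depth $\Theta(\log n)$ at a cost of $n^{O(\e)}$, yields the claimed polynomial bound. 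The second step is to feed this SI estimate into the Chen--Eldan localization framework specialized to the field dynamics with tilt $\theta < 1$ chosen so that $\theta \lambda_c(\Delta)$ sits strictly below the uniqueness threshold. The tilted target is firmly subcritical and its Glauber dynamics mixes in $O(n\log n)$; meanwhile, the Chen--Eldan analysis turns the SI control along the localization trajectory into a relaxation time $\tilde O(\Vert\Psi_\mu\Vert_{\mathrm{op}})$ for field dynamics. Simulating each field-dynamics step by $\tilde O(n)$ Glauber steps on the tilted instance and comparing back to single-site Glauber on the original model gives the announced $\tilde O(n^{2+4\e+O(1/\Delta)})$ mixing-time bound.

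\textbf{Lower bound.} For the matching $\Omega(n^{4/3})$ bound I would exhibit an explicit family of $\Delta$-regular graphs on which the critical Gibbs measure is approximately phase-coexistent; natural candidates are random bipartite $\Delta$-regular graphs or a chain of ``Sly-style'' phase-separating gadgets. A second-moment analysis of the natural order parameter---for instance, the occupation imbalance between the two sides of the bipartition---should show that at $\lambda_c(\Delta)$ its equilibrium distribution has critical mesoscopic width, pinning the SI constant at $\Vert \Psi_\mu \Vert_{\mathrm{op}} = \Omega(n^{1/3})$ and the Glauber spectral gap at $\Theta(n^{-1/3})$. A hitting-time argument on the order parameter, in the spirit of the critical mean-field Ising lower bound of Levin--Luczak--Peres, then converts the slow drift into the desired $\Omega(n^{4/3})$ lower bound on the mixing time.

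\textbf{Main obstacle.} The hardest step is the sub-linear SI upper bound. Existing SI proofs rely on strict contraction in the tree recursion and become vacuous at criticality, where the derivative at the symmetric fixed point is exactly $1$. The delicate point is to extract a quantitatively useful polynomial decay from the second-order expansion (of size $\Theta(1/\Delta)$) of the critical recursion, control the SAW-tree truncation error uniformly in depth, and match this decay against the exponential growth of the number of length-$r$ SAW paths, all while keeping the final operator-norm exponent strictly less than $2$ so that the resulting mixing time remains sub-quadratic. The $4\e + O(1/\Delta)$ loss in the exponent appears intrinsic to this approach. A secondary technical issue on the lower-bound side is ruling out atypical ``wormhole'' paths that might accelerate escape from a metastable phase, which I expect to handle via a contour-style estimate on restricted partition functions.
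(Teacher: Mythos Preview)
Your lower-bound plan is on the right track: the paper likewise takes random (symmetric) bipartite $\Delta$-regular graphs, analyzes the occupation imbalance between the two sides, and extracts an $\Omega(n^{1/3})$ spectral-independence lower bound from an anti-concentration estimate on that imbalance (proved via a local limit theorem for the partition function, rather than a bare second-moment computation). The conversion to mixing time is via the universality-of-spectral-independence theorem, which gives relaxation time $\Omega(n\cdot n^{1/3})=\Omega(n^{4/3})$ directly from the SI lower bound, rather than a hitting-time argument; but either route is viable. (Your aside that the Glauber spectral gap is $\Theta(n^{-1/3})$ is off by a factor of $n$.)

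The upper-bound plan, however, misidentifies where the exponent comes from. First, an SI bound of the form $\lambda_{\max}(\Psi_\mu)\le n^{1+4\e+O(1/\Delta)}$ is vacuous: $\Psi_\mu$ is $n\times n$ with entries in $[-1,1]$, so $\lambda_{\max}(\Psi_\mu)\le n$ trivially. More to the point, the paper does \emph{not} establish any sub-linear SI bound at the critical fugacity for the hardcore model. The constant $c^\star=4\e\bigl(1+\tfrac{1}{\Delta-2}\bigr)$ is instead the \emph{slope} in the known subcritical bound: for $\lambda=(1-\eta)\lambda_c(\Delta)$ one has $\lambda_{\max}(\Psi_{\mu^\tau})\le c^\star/\eta$ under every pinning $\tau$. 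The localization machinery needs SI of $(1-\eta)*\mu$ for \emph{every} $\eta\in[0,\theta]$, and produces the tensorization constant
\[
\exp\Bigl(\int_0^\theta \frac{C(\eta)}{1-\eta}\,\mathrm{d}\eta\Bigr),\qquad C(\eta)=\min\bigl(c^\star/\eta,\;n\bigr),
\]
which integrates to $O(n^{c^\star})$; the exponent comes entirely from the $c^\star/\eta$ tail, with only the crude bound $C(\eta)\le n$ invoked near $\eta=0$. Your ``second-order expansion of the critical recursion'' aims at a different target---a direct sublinear SI bound at criticality---which \emph{could} sharpen the exponent (the paper does precisely this for the Ising model via a critical-percolation argument on the SAW tree, obtaining $O(\sqrt n)$-SI), but it is not how the stated hardcore bound is proved, and the row-sum scheme you describe (polynomial correlation decay against exponential SAW-path growth) does not converge as written.
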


Our mixing time upper bound holds for any maximum degree $\Delta\ge 3$,
regardless of whether $\Delta$ is bounded by a constant.
Furthermore, our lower bound $\Omega\tp{n^{4/3}}$ refutes the optimal near linear mixing time, 
effectively separating the critical case from the subcritical case.

\subsection{Rapid mixing for the critical Ising model}
\label{subsec:Ising}

Let $G=(V,E)$ be an undirected graph. Let $\beta \in \mathbb{R}$ denote the inverse temperature, and let $\*h \in \mathbb{R}^V$ denote the external fields. The Gibbs distribution for the Ising model is given by:
\begin{align}\label{eq:graph-Ising}
\forall \*x \in \{-1,+1\}^V, \quad \mu(x) \propto \exp\tp{\frac{\beta}{2} \*x^\intercal A_G \*x + \*h^\intercal \*x},
\end{align}
where $A_G$ is the adjacency matrix of $G$. When $\beta>0$, the Ising model is called \emph{ferromagnetic}, while it is called \emph{antiferromagnetic} when $\beta<0$.

The tree-uniqueness threshold of the Ising model (with worst-case or zero external fields $\*h$)  is characterized by a critical inverse temperature $\beta_c(\Delta)$ for every degree $\Delta\ge 3$, defined by:
\[
(\Delta-1) \tanh \beta_c(\Delta) =1.
\]
For a graph $G$ of maximum degree $\Delta$, the \emph{critical ferromagnetic (respectively, antiferromagnetic) Ising model} on $G$ is the Ising model at critical temperature $\beta = \beta_c(\Delta)$ (respectively, $\beta = -\beta_c(\Delta)$) with zero external fields.

An optimal mixing time of  $O(n \log n)$ has been established for the Glauber dynamics within the uniqueness regime when $|\beta|< \beta_c(\Delta)$ \cite{anari2022entropic,chen2021optimal,CE22}.
And slow mixing can be similarly deduced from \cite{mossel2009hardness,sly2010computational,sly2012computational,GSV16} when $|\beta| > \beta_c(\Delta)$.
We remark that while for the antiferromagnetic Ising model the sampling problem is computationally hard when $\beta < -\beta_c(\Delta)$ \cite{sly2010computational,sly2012computational,GSV16}, for the ferromagnetic Ising model there exists an efficient approximate sampler for all $\beta > 0$ \cite{JS93,RW99}.

We prove upper and lower bounds on the mixing time 
of the Glauber dynamics for the Ising model at the critical temperature.

\begin{theorem}[Ising Model at the Critical Temperature]\label{thm:Ising-graphical}
    Consider the Glauber dynamics for the critical Ising model (either ferromagnetic or antiferromagnetic) with zero external fields.
    \begin{itemize}
    \item \textnormal{(Upper bound)} For any graph $G$ of $n$ vertices and maximum degree $\Delta \ge 3$, the mixing time of Glauber dynamics for the critical Ising model on $G$ is \fixed{$O\tp{n^{3+ \frac{4}{\Delta - 2}} \log n}$};
    \item \textnormal{(Lower bound)} Fix a constant $\Delta \ge 3$. There exists an infinite family of graphs of maximum degree $\Delta$, such that for any graph $G$ from the family of $n$ vertices, the mixing time of Glauber dynamics for the critical Ising model on $G$ is $\Omega(n^{3/2})$.
    \end{itemize}
\end{theorem}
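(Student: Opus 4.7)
The plan is to prove the upper and lower bounds by separate techniques: the upper bound via a proximal-sampler realisation of the Chen--Eldan localization scheme fed by a sub-linear spectral independence estimate at the critical temperature, and the lower bound via a bottleneck/coupling argument on the magnetization of a carefully chosen bounded-degree graph family.

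For the upper bound, the centrepiece is to show that for every Ising Gibbs measure on a graph of maximum degree $\Delta$ at $|\beta|=\beta_c(\Delta)$, and under every pinning, the associated pairwise influence matrix has spectral radius $O\tp{n^{1/(\Delta-2)}}$. On the infinite $\Delta$-regular tree the standard tree recursion is only neutrally stable at the uniqueness threshold, but in an appropriate potential coordinate it still contracts at rate $1-\Theta(1/\Delta)$ per level; iterating this contraction over $\Theta(\Delta\log n)$ levels converts neutral stability into polynomial growth with exponent $O(1/\Delta)$. The extension to arbitrary graphs uses Weitz's self-avoiding walk tree together with a uniform estimate on tilted marginals at the critical field. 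Once this spectral estimate is in hand, the proximal sampler for Ising alternates between adding Gaussian noise to the external field and sampling from the resulting tilted Ising measure: the KL divergence from stationarity contracts per outer iteration at a rate governed by the operator norm of the tilted covariance, so polylogarithmically many outer iterations suffice. Each inner problem is compared back to Glauber dynamics, whose mixing is controlled by the same sub-linear estimate, and the product yields the claimed $O\tp{n^{2+2/(\Delta-2)}\log n}$ bound.

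For the lower bound I would exhibit an infinite family of $\Delta$-regular graphs on which the critical Glauber dynamics mixes no faster than $n^{3/2}$. A natural candidate is a family of locally-tree-like bipartite $\Delta$-regular graphs on which the magnetization $M=\sum_v x_v$ under $\mu$ has variance $\Theta(n)$ and a flat marginal on the scale of $\sqrt{n}$, mirroring the Curie--Weiss behaviour at criticality. Since a single Glauber step changes $M$ by at most $2$, the projection of the chain onto $M$ behaves like an unbiased lazy random walk on an interval of length $\Theta(\sqrt{n})$, requiring $\Omega(n)$ projection steps to explore it; a standard test-function / Dirichlet form argument then translates this into $\Omega(n\cdot\sqrt{n})=\Omega(n^{3/2})$ Glauber steps, the bounded-degree analogue of the Levin--Luczak--Peres lower bound on the complete graph.

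The main obstacle is the sub-linear spectral independence estimate at the exact critical point. All earlier spectral independence analyses for Ising exploit a constant slack below $\beta_c(\Delta)$; at $\beta_c(\Delta)$ the tree recursion is only neutrally stable, so the per-level contraction is the weak factor $1-\Theta(1/\Delta)$ rather than a fixed constant, and one must track this weakened contraction over $\Theta(\Delta\log n)$ levels while controlling the multiplicative error coming from worst-case boundaries. Establishing the matching magnetization fluctuation $\Theta(\sqrt{n})$ on an explicit bounded-degree family, needed for the lower bound, is similarly subtle and requires a sharp second-moment computation at criticality.
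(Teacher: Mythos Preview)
Your high-level architecture for the upper bound (proximal sampler / stochastic localization fed by a spectral-independence estimate) matches the paper, but the central quantitative claim is wrong.  You assert that at $|\beta|=\beta_c(\Delta)$ the influence matrix has spectral radius $O\bigl(n^{1/(\Delta-2)}\bigr)$ via a ``$1-\Theta(1/\Delta)$ per-level'' potential contraction.  This cannot be correct: at $\beta_c$ the tree recursion is genuinely neutrally stable (derivative exactly $1$ at the fixed point), and no reparametrization can manufacture a contraction bounded away from $1$.  In fact the paper proves the critical spectral-independence constant is $\Theta(\sqrt{n})$ for every $\Delta\ge 3$: the upper bound $O(\sqrt{n})$ comes from bounding coupling independence by the cluster size of critical bond percolation on the infinite $(\Delta-1)$-ary tree (retention probability $\tanh\beta_c=\tfrac{1}{\Delta-1}$, Galton--Watson tail $\ell^{-3/2}$, truncated expectation $O(\sqrt{n})$); and the lower-bound construction exhibits graphs with $\lambda_{\max}(\Psi_\mu)=\Omega(\sqrt{n})$, so for $\Delta\ge 5$ your claimed $O(n^{1/(\Delta-2)})=o(\sqrt{n})$ is actually false.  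The exponent $2/(\Delta-2)$ in the mixing time does \emph{not} come from the critical SI bound.  It arises from the integral $\norm{J}_2\int_0^1 C(t)\,dt$ in the entropy-conservation estimate, where $C(t)$ is the \emph{sub}critical SI bound $\tfrac{\Delta}{\Delta-1}\cdot\tfrac{1}{1-(\Delta-1)\tanh(\beta_c t)}\sim\tfrac{1}{\beta_c(\Delta-2)(1-t)}$; this logarithmic divergence exponentiates to $n^{\Delta/(\Delta-2)}$.  The $O(\sqrt{n})$ critical bound is only needed to cap the contribution of the final window $[1-n^{-1/2},1]$ to $O(1)$.

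For the lower bound you again have the right family and mechanism but the wrong scale.  At criticality the magnetization on random $\Delta$-regular bipartite graphs fluctuates on scale $n^{3/4}$ (the Curie--Weiss critical exponent), not $\sqrt{n}$; variance $\Theta(n)$ is subcritical behaviour.  With your stated scale a diffusion-on-$M$ argument yields only $\Omega((\sqrt{n})^2)=\Omega(n)$, and the extra factor in your ``$\Omega(n\cdot\sqrt{n})$'' has no justification.  The paper establishes the $n^{3/4}$ anti-concentration via a local limit theorem for the partition function, converts it into $\mathbf{s}^\top\Psi_\mu\mathbf{s}/\mathbf{s}^\top\mathbf{s}=\Omega(\sqrt{n})$ (with $\mathbf{s}=\pm1$ on the two sides), and then invokes universality of spectral independence to obtain the $\Omega(n^{3/2})$ relaxation/mixing time.
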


Like in \Cref{thm:hardcore}, our mixing time upper bound holds for any maximum degree $\Delta \ge 3$ possibly unbounded. 
Furthermore, the upper bound also holds under arbitrary external fields, though the constant in big-O in the mixing time inevitably depends on the magnitude of the fields (this can be avoided by initializing the Glauber dynamics from a state aligned with the signs of the external fields $\*h$).

Compared to \Cref{thm:hardcore} for the hardcore model, the gap between the upper and lower bounds in \Cref{thm:Ising-graphical} is much smaller.
Notably, the exponent $3/2$ in the lower bound aligns with the critical mean-field ferromagnetic Ising model on the complete graph \cite{LLP10,ding2009meanfield}.

More generally, the Ising model can be specified by an \emph{interaction matrix} $J\in\mathbb{R}^{n\times n}$ which is symmetric and positive semidefinite. The Gibbs distribution $\mu$ is then defined as follows:
\begin{align}
\forall \*x \in \{-1,+1\}^n,\quad  \mu(\*x) \propto \exp\tp{\frac{1}{2}\*x^\intercal J \*x + \*h^\intercal \*x}.\label{eq:Ising-model-interaction-matrix}
\end{align}
Here the assumption of $J$ being positive semidefinite is without loss of generality, since the Ising model with interaction matrix $J+a I$ for any $a \in \mathbb{R}$ is equivalent to the original one.
Note that $J = \beta A_G$ recovers the previous definition \eqref{eq:graph-Ising} for the Ising model defined on a graph $G$.

Recent works have established $O(n \log n)$ mixing time of Glauber dynamics when the interaction matrix satisfies $\norm{J}_2 < 1$ \cite{EKZ22,anari2022entropic,CE22}. 
Very recently, it was shown that the sampling and counting problems become intractable once $\norm{J}_2 > 1$ \cite{Kun24,GKK24}.
These results suggest that the critical Ising model with general interaction matrices should correspond to the case where $\norm{J}_2 = 1$.
Indeed, for the mean-field Ising model on the complete graph, where the interaction matrix is a rank-one matrix given by $J = \frac{\beta'}{n} \*1_n \*1_n^\intercal$, the critical temperature occurs exactly at $\beta' = 1$, in which case $\norm{J}_2 = 1$.

The following theorem establishes a nearly tight bound for the mixing time of the Glauber dynamics for the Ising model specified by an interaction matrix $J$ with critical $2$-norm $\norm{J}_2 = 1$.

\begin{theorem}[Ising Model at the Critical Interaction Norm]\label{thm:Ising-interaction}
    Consider the Glauber dynamics for the Ising model with interaction matrix $J$ and zero external fields.
    \begin{itemize}
    \item \textnormal{(Upper bound)} For any symmetric and positive semidefinite $J \in \mathbb{R}^{n \times n}$ with $\norm{J}_2 = 1$, the mixing time of Glauber dynamics for the Ising model specified by $J$ is $O\tp{n^{3/2} \log n}$.
    \item \textnormal{(Lower bound~\cite{LLP10})} The mixing time of Glauber dynamics for the critical mean-field Ising model specified by $J = \frac{1}{n} \*1_n \*1_n^\intercal$ is $\Omega(n^{3/2})$.
    \end{itemize}
\end{theorem}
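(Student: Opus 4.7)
The plan is to apply the localization-scheme framework of Chen--Eldan~\cite{CE22} to the Gaussian proximal sampler for the Ising measure $\mu$, reducing the mixing-time bound to a sharp sub-linear bound on the operator norm of the covariance of tilted Ising measures at the critical interaction norm. Concretely, the proximal sampler iterates: given $\*x \in \{-1,+1\}^n$, draw $\*y \sim \+N(\*x, \eta I_n)$, then resample $\*x'$ from $\mu(\cdot \mid \*y) \propto \mu(\*x')\exp(-\|\*y-\*x'\|^2/(2\eta))$. Because $\|\*x'\|^2 = n$ is constant on the hypercube, this conditional is itself the Ising measure with the same PSD matrix $J$ and external field $\*h + \*y/\eta$. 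The framework of \cite{CE22} then produces a functional inequality whose rate is governed by
\[
K \;=\; \sup_{\*h' \in \R^n} \|\Cov(\mu_{J,\*h'})\|_{\mathrm{op}},
\]
and a bound of $K = O(\sqrt{n})$ combined with a comparison of the proximal chain to single-site Glauber delivers the claimed $O(n^{3/2}\log n)$ mixing time; the matching lower bound $\Omega(n^{3/2})$ is cited directly from \cite{LLP10}.

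The core technical step is thus the sub-linear covariance bound $\|\Cov(\mu_{J,\*h})\|_{\mathrm{op}} = O(\sqrt{n})$ for every symmetric PSD $J$ with $\|J\|_2 \le 1$ and every $\*h \in \R^n$. I would attack this through the Hubbard--Stratonovich representation, introducing an auxiliary field $\*g$ whose marginal density is proportional to $\exp\bigl(-\tfrac{1}{2}\*g^\intercal J^{-1}\*g + \sum_i \log 2\cosh(h_i + g_i)\bigr)$. Under $\|J\|_2 \le 1$ this density is log-concave (the Hessian of $-\log p$ equals $J^{-1} - \diag(\mathrm{sech}^2(\*h+\*g)) \succeq I - I = 0$), and $\*x \mid \*g$ is a product of Bernoullis with mean $\tanh(\*h+\*g)$, so the law of total covariance gives
\[
\Cov(\*x) \;=\; \E{\diag(\mathrm{sech}^2(\*h+\*g))} + \Cov(\tanh(\*h+\*g)).
\]
The first summand is $\preceq I$. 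For the second, a Brascamp--Lieb bound on $\*g$ yields $O((1-\|J\|_2)^{-1})$, which diverges at criticality, while the trivial $|\tanh|\le 1$ bound only gives $O(n)$. The sharp $O(\sqrt{n})$ scaling, known to be tight for the mean-field model \cite{LLP10}, should emerge from interpolating between these regimes: on directions in which $J^{-1} - \diag(\mathrm{sech}^2)$ has spectral gap of order $n^{-1/2}$ apply Brascamp--Lieb, while on the complementary near-null directions exploit the saturation of $\tanh$ near $\pm 1$ together with a sub-Gaussian concentration estimate for $\*g$ derived from the log-concavity of its density.

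The main obstacle is precisely this worst-case covariance / spectral-independence bound at criticality: at $\|J\|_2 = 1$ the log-concavity of the Hubbard--Stratonovich density can be arbitrarily degenerate, and the mean-field example already shows $\Omega(\sqrt n)$ is tight, so the argument must simultaneously leverage the global log-concavity of the auxiliary field and the $\ell_\infty$-saturation of the spins, uniformly over all PSD $J$ with $\|J\|_2 \le 1$ and all $\*h \in \R^n$. Once this sub-linear spectral-independence statement is in hand, feeding it into the Chen--Eldan localization scheme and a proximal-vs-Glauber comparison---exploiting that each proximal backward step is an Ising problem with strongly tilted field of magnitude $\sim 1/\eta$, which is well-separated and simulated by a short Glauber run---closes the proof.
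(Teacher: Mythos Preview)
Your high-level plan --- run the Chen--Eldan localization / proximal-sampler machinery and feed in a sub-linear spectral-independence (covariance) bound at criticality --- is exactly what the paper does. But two concrete choices you make diverge from the paper's proof, and one of them is a genuine gap.

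\textbf{Driving matrix.} You describe the proximal sampler with identity driving matrix: $\*y\sim\+N(\*x,\eta I_n)$, so the posterior $\mu(\cdot\mid\*y)$ is Ising with the \emph{same} critical $J$ and a shifted field. The paper instead takes driving matrix $L$ with $L^\intercal L=J$ (Definition~3.5 and the discussion after it). This is not cosmetic. With $L^\intercal L=J$, the kernel $Q_{\theta\to 1}(\*y,\cdot)$ is Ising with interaction $(1-\tfrac{\theta}{1-\theta})J$; in particular at $\theta=1/2$ it is an exact product measure, so the endpoint in the annealing lemma is trivial. More importantly, the entropic-stability rate you integrate (Theorem~3.6 / Theorem~3.9) becomes the spectral-independence constant for the \emph{subcritical} interaction $(1-t)J$, namely $C(t)=\min\{1/(1-t),\,O(\sqrt n)\}$, and
\[
\int_0^1 C(t)\,\dif t \;=\; \tfrac12\log n + O(1),
\]
yielding an approximate-tensorization constant $O(\sqrt n)$ and hence MLSI $\Omega(n^{-1/2})$. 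With your $L=I$, every tilted measure keeps the critical $J$; the covariance/SI bound is the constant $K=O(\sqrt n)$ throughout, and the same integral gives $e^{O(K)}=e^{O(\sqrt n)}$, which is sub-exponential but not polynomial. Your proposed fix --- that the backward step has ``strongly tilted field of magnitude $\sim 1/\eta$'' and is therefore easy --- does not close the loop: to get a large field you need $\theta$ near $1$, but then the conservation constant $\exp(K\theta/(1-\theta))$ explodes; for small $\theta$ the field is $O(1)$ and the backward step is the original critical problem again. The paper's choice of driving matrix is precisely what breaks this tension.

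\textbf{The $O(\sqrt n)$ bound.} The paper also does not attempt your uniform bound $\sup_{\*h}\norm{\Cov(\mu_{J,\*h})}_{\mathrm{op}}=O(\sqrt n)$ for general PSD $J$. Instead it first applies the needle decomposition of \cite{anari2021entropic} to reduce the MLSI statement to rank-one interactions $J=\*u\*u^\intercal$, and for those proves $O(\sqrt n)$-spectral independence (Lemma~3.12) by a self-avoiding-walk-tree expansion combined with a birthday-paradox estimate --- the total influence from the root is $\sum_\ell \Pr{T\ge\ell}\,\norm{\*u}_2^{2\ell}$ with $T$ a collision time on $n$ states, which is $O(\sqrt n)$ when $\norm{\*u}_2^2\le 1+O(n^{-1/2})$. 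Your Hubbard--Stratonovich plus degenerate Brascamp--Lieb route might be workable for rank one, but for general $J$ the HS density can be log-concave with arbitrarily many flat directions, and the ``interpolation'' you sketch (between Brascamp--Lieb on well-curved directions and $\tanh$-saturation on flat ones) is the entire difficulty; as written it is a restatement of the problem rather than a proof strategy.
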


As discussed after \cref{thm:Ising-graphical}, the mixing time upper bound in \cref{thm:Ising-interaction} holds under arbitrary external fields as well.

\begin{remark}
	Concurrently and independently to our work, Prodromidis and Sly \cite{PS24+} also showed polynomial mixing of the Glauber dynamics for the critical ferromagnetic Ising model. After posting our paper, we became aware that Bauerschmidt, Bodineau, and Dagallier had previously obtained a similar polynomial mixing result as \cite{PS24+} in their survey \cite[Example 6.19]{BBD24}. Both \cite{PS24+,BBD24} focused on the ferromagnetic Ising model on graphs by extending the approach from \cite{BD24}. Besides, the mixing time upper bounds they obtained are large polynomials, while in our work we aim to obtain a sharp exponent as much as we can.
\end{remark}

\subsection{Deterministic counting at criticality}

As discussed earlier, previous deterministic counting algorithms fail at the critical point including Weitz's self-avoiding walk tree algorithm and Barvinok's Taylor polynomial interpolation method.
We show that there exist simple deterministic counting algorithms for the partition function which work for the critical hardcore and Ising model, with a running time $\exp({O}(n^c))$ for some constant $c < 1$; in particular, the running time is sub-exponential in $n$.

\begin{theorem}\label{thm:main-deterministic}
Let $\Delta \ge 3$ be a constant. 
There exist deterministic algorithms for estimating the partition functions for the hardcore and Ising models at criticality on graphs of $n$ vertices and maximum degree $\Delta$, achieving a relative error of $1\pm n^{-\Omega(1)}$  in time that is sub-exponential in~$n$.
\end{theorem}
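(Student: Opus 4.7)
The plan is to combine Weitz's self-avoiding walk (SAW) tree algorithm with a polynomial-rate correlation-decay analysis at criticality. Since $\Delta$ is a constant, enumerating a SAW tree to depth $L$ takes time $(\Delta-1)^{O(L)} \poly(n)$, which is sub-exponential in $n$ as long as $L = o(n)$; the goal is to choose $L$ so that the truncated marginal has error $n^{-\Omega(1)}$.

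First, I would invoke Weitz's reduction: computing each single-site marginal $\mu_v$ in $G$ reduces to evaluating the tree recurrence on ratios (for the critical hardcore, $R_v = \lambda_c(\Delta) \prod_{u}(1+R_u)^{-1}$; and the analogous symmetric-ratio recurrence for the Ising model at $\pm\beta_c(\Delta)$) on the self-avoiding walk tree $T_{\-{SAW}}(G,v)$ with appropriate pinnings. Truncating this recurrence at depth $L$ yields an algorithm for each $\mu_v$ running in time $(\Delta-1)^{O(L)} \poly(n)$. Using the standard self-reducibility that writes $Z$ as a telescoping product of $n$ conditional marginals (over a fixed vertex ordering) reduces the theorem to establishing a per-marginal accuracy of $n^{-\Omega(1)}$.

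The key analytic ingredient is a polynomial decay-of-correlations statement: at the critical point, for any boundary pinning at depth $L$ of the SAW tree, the resulting root ratio differs from its depth-infinity limit by at most $C L^{-\alpha}$ for constants $C,\alpha>0$ depending on $\Delta$. Working in the potential-function coordinate $\varphi$ of Restrepo--\v{S}tefankovi\v{c}--Vera--Vigoda--Yang, the tree recurrence $F$ is a weak contraction at criticality with $|F'(R^{*})|=1$. A local Taylor expansion gives $y_{k+1} = y_k - c\,y_k^{p} + O(y_k^{p+1})$ with $p=2$, $c>0$ for the ferromagnetic Ising (saddle-node, $F'(R^{*})=1$), and with $p=3$ after one iteration of $F^{2}$ for the hardcore and antiferromagnetic Ising (period-doubling, $F'(R^{*})=-1$, so that $(F^{2})''(R^{*})=0$ and the cubic term dominates). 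Standard integration of this scalar recurrence gives $|y_L| = O(L^{-1/(p-1)})$, so $\alpha = 1/(p-1)\in\{1/2,1\}$. Adapting from the scalar recurrence to the SAW tree with non-trivial branching and arbitrary pinnings amounts to verifying that the message-combining step in the coordinate $\varphi$ does not amplify deviations past this single-site contraction budget, in the spirit of Weitz's argument but with inequality $\le 1$ rather than $<1$.

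Finally, I would set $L = n^{\beta}$ with $\beta>0$ a small constant chosen so that $L = o(n)$ while $\alpha\beta > 1$. The per-marginal error is then $O(n^{-\alpha\beta}) = n^{-\Omega(1)}$, and self-reducibility (with at most $n$ telescoped factors) yields a relative error $1 \pm n^{-\Omega(1)}$ for $Z$. The total running time is $(\Delta-1)^{L}\poly(n) = \exp(O(n^{\beta}\log\Delta)) = \exp(o(n))$, as required. The main obstacle is the polynomial decay lemma itself: while the $L^{-\alpha}$ decay rate at a neutral fixed point is classical for one-dimensional maps, establishing it uniformly on the SAW tree---with worst-case branching and arbitrary boundary pinnings---requires a careful construction of the potential $\varphi$ so that the weak contraction survives the branching, together with an amortized second-order analysis at criticality. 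I expect this to parallel (and be informed by) the sub-linear spectral-independence bounds that the paper develops as a core ingredient of its main mixing-time results.
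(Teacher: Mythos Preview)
Your approach has a genuine gap in the parameter accounting. You correctly identify that the decay rate at a neutral fixed point is $L^{-\alpha}$ with $\alpha = 1/(p-1) \in \{1/2, 1\}$, and that for self-reducibility over $n$ marginals you need per-marginal error $n^{-\alpha\beta}$ with $\alpha\beta > 1$. But you also require $L = n^\beta = o(n)$, i.e.\ $\beta < 1$. Since $\alpha \le 1$, the product $\alpha\beta < 1$ always, so these constraints are mutually incompatible. Concretely, for the hardcore model ($\alpha = 1/2$), achieving per-marginal error $O(n^{-1})$ forces $L = \Omega(n^2)$, and the SAW-tree enumeration time $(\Delta-1)^{L}$ becomes super-exponential---worse than brute force. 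The paper explicitly flags this obstacle in the remark following the statement of \Cref{thm:main-deterministic}, noting that Weitz's algorithm with polynomial decay ``may fail to obtain a better algorithm than the brute-force enumeration.''

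The paper's actual proof sidesteps the slow critical decay entirely. Using the continuous-time down walk, it writes $Z$ as a truncated sum $\sum_{|S|<k} \bigl(\tfrac{\theta}{1-\theta}\bigr)^{|S|} Z_{S,\theta}$ (\Cref{lem:subexp-counting}), where each $Z_{S,\theta}$ is the partition function of a \emph{sub-critical} instance---fugacity $(1-\theta)\lambda_c$ for hardcore---conditioned on $S$ being occupied. These sub-critical instances enjoy \emph{exponential} correlation decay with gap $\Theta(\theta)$, so the Li--Lu--Yin algorithm computes each $Z_{S,\theta}$ in time $(n/\epsilon)^{O(\log\Delta/\theta)}$. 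Choosing $\theta = \Theta(n^{-1/2})$ (hardcore) or $\theta = \Theta(n^{-1/3})$ (Ising) balances the enumeration of $O(n^k)$ subsets, $k = O(n\theta)$, against this per-instance cost, yielding total time $\exp(\tilde O(\sqrt{n}))$ and $\exp(\tilde O(n^{2/3}))$ respectively. The idea you are missing is to trade the single critical instance for a controlled family of slightly sub-critical ones, rather than extracting decay directly at criticality.
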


We remark that one could attempt to utilize Weitz's algorithm by replacing the exponential decay with the polynomial decay in the strong spatial mixing property, which would yield a significantly weakened approximation, such as a PTAS for the log-partition function (see~\cite[Theorem 5.5]{song2019counting}). 
However, in order to obtain an estimation within $O(n^{-1})$ error bound, the depth of the truncated self-avoiding walk tree needs to be $\Omega(n^2)$, which is even larger than the maximum depth of the self-avoiding walk tree. 
Thus, it may fail to obtain a better algorithm than the brute-force enumeration by simply applying Weitz's algorithm when at criticality.

\paragraph{Organization of the paper}
The rest of the paper is organized as follows.
In \cref{sec:prelim} we give preliminaries. 
In \cref{sec:upper-bound} we present a new interpretation of localization schemes introduced in \cite{CE22} with a new analysis for upper bounding the mixing time of Glauber dynamics.
\fixed{In \cref{sec:SI-criticality} we establish $O(\sqrt{n})$ spectral independence for the critical Ising model with interaction norm $1$, which allows us to obtain a sharper mixing time upper bound in this case.}
We establish the lower bounds on the mixing time in \cref{sec:LB}.
Finally, we present deterministic counting algorithms in \cref{sec:deterministic} and prove \cref{thm:main-deterministic}.

\fixed{
	\begin{remark}
    \label{rmk:bug}
		In previous versions of the paper, as well as in the proceeding of the Annual ACM Symposium on Theory of Computing (STOC 2025), we claimed that the critical Ising model is $O(\sqrt{n})$-spectrally independent, in both the graphical setting with $|\beta| = \beta_c(\Delta)$ and the general interaction setting with $\norm{J}_2 = 1$. We later found a serious loophole in our proof for the graphical case, invalidating our claim of $O(\sqrt{n})$-spectral independence. 
		For this reason, our mixing result for the critical graphical Ising model states as $\tilde{O}\left(n^{3 + O(1/\Delta)}\right)$, worse than the previous, falsely claimed $\tilde{O}\left(n^{2 + O(1/\Delta)}\right)$ upper bound. 
		While we still believe that $O(\sqrt{n})$-spectral independence should hold, we are unable to fix our previous proof or find an alternative approach.
		Meanwhile, our results for the general interaction setting with $\norm{J}_2 = 1$ remain unaffected and valid.
	\end{remark}
}

\section{Preliminaries}\label{sec:prelim}

\subsection{Notations}
Let $\mu$ be a distribution over $\{-1,+1\}^n$ and $\tau \in \{-1,+1\}^\Lambda$ be a partial configuration on $\Lambda \subseteq [n]$ such that $\Pr[X \sim \mu]{X_\Lambda = \tau} > 0$. The distribution $\mu^\tau$ follows from the law of $X \sim \mu$ conditioned on $X_\Lambda = \tau$. For any $\*\lambda \in \mathbb{R}_{\ge 0}$, the distribution $\*\lambda * \mu$ satisfies:
\begin{align} \label{eq:def-magnetization}
    \forall \sigma \in \{-1,+1\}^n,\quad (\*\lambda * \mu)(\sigma) \propto \prod_{i \in [n]:\, \sigma_i = +1} \lambda_i \cdot \mu(\sigma).
\end{align}
The operation in \eqref{eq:def-magnetization} is usually called magnetizing a distribution with local fields~\cite{chen2021rapid} or tilting the measure by external field~\cite{anari2022entropic} in previous works.
\subsection{Markov chains and mixing times}
\label{subsec:MC-basics}
\subsubsection{Basic definitions}
Let $\Omega$ be a finite state space, and $(X_t)_{t \ge 0}$ be a Markov chain over $\Omega$ with transition matrix $P$. 
\begin{itemize}
    \item The Markov chain is \emph{irreducible} if for all $x, y \in \Omega$, there exists $t \ge 0$ with $P^t(x,y) > 0$;
    \item The Markov chain is \emph{aperiodic} if $\gcd\{t \ge 1:P^t(x,x)\} = 1$ for all $x \in \Omega$.
\end{itemize}
The fundamental theorem of Markov chains shows that the Markov chain $(X_t)_{t \ge 0}$ has a unique \emph{stationary distribution}, i.e. a distribution $\mu$ satisfying $\mu P = \mu$, if $(X_t)_{t \ge 0}$ is irreducible and aperiodic.

Let $(X_t)_{t \ge 0}$ be a Markov chain with transition matrix $P$ and stationary distribution $\mu$. The \emph{mixing time} measures the speed of the Markov chain $(X_t)_{t \ge 0}$ converging to its stationary distribution $\mu$, which is formally defined as
\begin{align*}
T_{\mathrm{mix}}:=\max_{x \in \Omega} \min\set{t: d_{\mathrm{TV}}\tp{P^t(x,\cdot),\mu} < \frac{1}{4} },
\end{align*}
where $d_{\mathrm{TV}}(\cdot,\cdot)$ denotes the \emph{total variational distance}.

Let $\mu$ be a distribution supported over the Boolean hypercube $\{-1,+1\}^n$. The Glauber dynamics on $\mu$ updates a configuration $X$ to $Y$ according to the following rules:
\begin{enumerate}
\item Select integer $1 \le i \le n$ uniformly at random;
\item Sample $Y$ from distribution $\mu$ conditioning on $Y_j = X_j$ for all $j\neq i$.
\end{enumerate}

\subsubsection{Functional inequalities}
Let $\mu$ be a distribution over a finite state space $\Omega$, and $(X_t)_{t \ge 0}$ be a Markov chain over $\Omega$ with the transition matrix $P$ and the stationary distribution $\mu$. The \emph{expectation}, \emph{variance}, \emph{entropy}, and the inner product is defined as follows:
\begin{itemize}
\item Expectation: For all $f:\Omega \to \mathbb{R}$, $\E[\mu]{f} := \sum_{x \in \Omega} \mu(x) f(x)$;
\item Variance: For all $f:\Omega \to \mathbb{R}$, $\Var[\mu]{f} := \E[\mu]{f} - \tp{\E[\mu]{f}}^2$;
\item Entropy: For all $f:\Omega \to \mathbb{R}_{\geq 0}$, $\Ent[\mu]{f} := \E[\mu]{f\log f}-\E[\mu]{f} \log \E[\mu]{f}$;
\item Inner product: For all $f,g:\Omega \to \mathbb{R}$, $\inner{f}{g}_\mu = \E[\mu]{f \cdot g}$.
\end{itemize}
In particular, we will use the convention $0 \log 0 = 0$ in the definition of entropy.

\begin{definition}[Poincar\'e inequality and modified log-Sobolev inequality]
    Let $(X_t)_{t \ge 0}$ be a Markov chain over $\Omega$ with the transition matrix $P$ and stationary distribution $\mu$. \begin{enumerate}
        \item We say the \emph{Poincar\'e inequality} holds with constant $\lambda > 0$ if
        \begin{align*}
        \forall f:\Omega \to \mathbb{R},\quad \lambda \Var[\mu]{f} \le \+E_P(f,f),
        \end{align*}
        where $\+E_P(f,g) := \inner{f}{(I-P)g}$ is the \emph{Dirichlet form}. 
        \item We say the \emph{modified log-Sobolev inequality} holds with constant $\rho > 0$ if
        \begin{align*}
            \forall f:\Omega \to \mathbb{R}_{\geq 0},\quad \rho \Ent[\mu]{f} \le \+E_P(f,\log f).
        \end{align*}
    \end{enumerate}
\end{definition}
By the Courant-Fischer theorem, the Poincar\'e inequality holds with constant $\lambda$ if and only if $\lambda \le 1-\lambda_2(P)$, where $\lambda_2(P)$ is the second largest eigenvalue of $P$, and $1-\lambda_2(P)$ is the \emph{spectral gap}. 

Suppose the eigenvalues of $P$ are non-negative. The \emph{relaxation time} $T_{\mathrm{rel}}$ is defined as the inverse of the spectral gap, specifically $T_{\mathrm{rel}} = \frac{1}{1-\lambda_2(P)}$. By~\cite[Theorem 12.4]{levin2017markov}, the mixing time of $P$ can be bounded as follows:
 \begin{align*}
    T_{\mathrm{mix}} \le T_{\mathrm{rel}} \cdot \log \frac{4}{\mu_{\min}}, \quad \text{where $\mu_{\min}:=\min_{x \in \Omega} \mu(x)$.}
 \end{align*}
With \emph{modified log-Sobolev inequality}, a better mixing time bound can be achieved~\cite{bobkov2006modified}:
\begin{align*}
T_{\mathrm{mix}} \le \frac{1}{\rho} \tp{\log \log \frac{1}{\mu_{\min}} + 3}.
\end{align*}

For distribution supported over the Boolean hypercube $\{-1,+1\}^n$, we introduce the \emph{approximate tensorization of variance and entropy}.
\begin{definition}[Approximate tensorization of variance/entropy~\cite{caputo2015approximate}]
    Suppose a distribution $\mu$ is supported over a subset $\Omega$ of the Boolean hypercube $\{-1,+1\}^n$. The distribution $\mu$ satisfies the \emph{approximate tensorization of variance} with constant $C$, if 
\begin{align*}
\forall f \in \Omega \to \mathbb{R}, \quad \Var[\mu]{f} \le C \cdot \sum_{v \in [n]} \E{\Var[v]{f}},
\end{align*}
where $\E[]{\Var[v]{f}}:= \E[X \sim \mu]{\Var[\mu^{X_{[n] \setminus \{v\}}}]{f}}$ denotes the average of local variance. 
Here, the conditional distribution $\mu^{X_{[n] \setminus \{v\}}}$ is $Y \sim \mu$ conditioned on the event that $X_w = Y_w$ for all $w \in [n] \setminus \{v\}$.
Similarly, the distribution $\mu$ exhibits the \emph{approximate tensorization of entropy} with constant $C$, if
\begin{align*}
\forall f \in \Omega \to \mathbb{R}_{\geq 0}, \quad \Ent[\mu]{f} \le C \cdot \sum_{v \in [n]} \E[]{\Ent[v]{f}}.
\end{align*}
where $\E[]{\Ent[v]{f}}$ is defined accordingly.
\end{definition}

We note that the definition of variance/entropy tensorization can be defined accordingly to general $\phi$-entropy ($\phi$-divergence).
We refer to \Cref{def:approximate-tensorization} for details.

The approximate tensorization of variance/entropy is closely related to the Poincar\'e inequality and modified log-Sobolev inequalities for the Glauber dynamics on the target distribution $\mu$, and thereby implies the an upper bound for the mixing time of Glauber dynamics. 
\begin{lemma}[Folklore] \label{lem:tensorization-implies-mixing}
    Let $\mu$ be a distribution over a subset of the Boolean hypercube $\{-1,+1\}^n$. 
    \begin{enumerate}
    \item The distribution $\mu$ satisfies the approximate tensorization of variance with constant $C$ if and only if the Poincar\'e inequality for the Glauber dynamics on $\mu$ holds with constant $\frac{1}{Cn}$.
    As a corollary, the mixing time of Glauber dynamics is bounded by $Cn \log \frac{4}{\mu_{\min}}$.
    \item If $\mu$ satisfies the approximate tensorization of entropy with constant $C$, then the modified log-Sobolev inequality for the Glauber dynamics holds with constant $\frac{1}{Cn}$. As a corollary, the mixing time of Glauber dynamics is bounded by $Cn \tp{\log \log \frac{1}{\mu_{\min}}+3}$.
    \end{enumerate}
\end{lemma}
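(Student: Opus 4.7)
The plan is to use the explicit form of the Dirichlet form of Glauber dynamics and match it (or relate it via Jensen's inequality) to the right-hand side of the approximate tensorization inequalities. First, I would recall that the Glauber transition matrix decomposes as $P = \frac{1}{n}\sum_{v \in [n]} P_v$, where $P_v$ resamples the $v$-th coordinate from $\mu^{X_{[n]\setminus\{v\}}}$. Since each $P_v$ acts on functions by taking a conditional expectation given $X_{[n]\setminus\{v\}}$, it is an orthogonal projection in $L^2(\mu)$, so $P$ is positive semidefinite (its eigenvalues lie in $[0,1]$). This non-negativity is what legitimises the use of the standard bounds $T_{\mathrm{mix}} \le T_{\mathrm{rel}} \log(4/\mu_{\min})$ and the MLSI-to-mixing estimate cited just before the lemma.

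For the variance/Poincar\'e direction, I would expand the Dirichlet form by conditioning on $X_{[n]\setminus\{v\}}$ coordinate by coordinate:
\begin{align*}
\+E_P(f,f) = \inner{f}{(I-P)f}_\mu = \frac{1}{n}\sum_{v \in [n]} \inner{f}{(I-P_v)f}_\mu = \frac{1}{n}\sum_{v \in [n]} \E{\Var[v]{f}},
\end{align*}
where the last equality is the standard identity $\inner{f}{(I-P_v)f}_\mu = \E{\Var[v]{f}}$ obtained from $P_v f = \E[\mu^{X_{[n]\setminus\{v\}}}]{f}$ together with the variance decomposition $\E[\mu^\tau]{f^2} - \E[\mu^\tau]{f}^2 = \Var[\mu^\tau]{f}$. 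Consequently the Poincar\'e inequality $\lambda \Var[\mu]{f} \le \+E_P(f,f)$ is literally equivalent to $\Var[\mu]{f} \le \frac{1}{\lambda n}\sum_v \E{\Var[v]{f}}$, so the approximate tensorization constant $C$ and the Poincar\'e constant $\lambda$ are related by $\lambda = \frac{1}{Cn}$. Inserting $T_{\mathrm{rel}} = Cn$ into the standard relaxation-to-mixing bound yields the claimed $Cn\log(4/\mu_{\min})$ upper bound.

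For the entropy/MLSI direction, a small but essential subtlety arises, and this is the main (though still mild) obstacle. A direct expansion gives
\begin{align*}
\+E_P(f,\log f) = \frac{1}{n}\sum_v \E[X\sim\mu]{f(X)\left(\log f(X) - \E[\mu^{X_{[n]\setminus\{v\}}}]{\log f}\right)},
\end{align*}
so in each conditional slice the inner quantity is $\E[\mu^\tau]{f\log f} - \E[\mu^\tau]{f}\,\E[\mu^\tau]{\log f}$, whereas the quantity appearing in approximate tensorization is $\Ent[\mu^\tau]{f} = \E[\mu^\tau]{f\log f} - \E[\mu^\tau]{f}\log\E[\mu^\tau]{f}$. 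By Jensen applied to the concave function $\log$, $\E[\mu^\tau]{\log f} \le \log \E[\mu^\tau]{f}$, so subtracting gives $\+E_P(f,\log f) \ge \frac{1}{n}\sum_v \E{\Ent[v]{f}}$. Hence approximate tensorization of entropy with constant $C$ implies MLSI with constant $\rho = \frac{1}{Cn}$, which in turn implies the claimed $Cn(\log\log(1/\mu_{\min})+3)$ mixing bound via \cite{bobkov2006modified}. This Jensen gap cleanly explains why the entropy statement in \cref{lem:tensorization-implies-mixing} is only a one-way implication while the variance statement is an equivalence.
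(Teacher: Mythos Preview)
Your proof is correct and is exactly the standard argument for this folklore fact. The paper does not actually prove \cref{lem:tensorization-implies-mixing} itself; it simply refers to \cite[Fact~3.5 and Fact~A.3]{chen2021optimal}, and your derivation (the projection decomposition $P=\frac{1}{n}\sum_v P_v$, the identity $\+E_P(f,f)=\frac{1}{n}\sum_v\E{\Var[v]{f}}$, and the Jensen step $\E[\mu^\tau]{\log f}\le\log\E[\mu^\tau]{f}$ for the MLSI direction) is precisely what underlies those facts.
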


We refer to~\cite[Fact 3.5 and Fact A.3]{chen2021optimal} for the proof of this lemma.

\subsubsection{Spectral independence}

\begin{definition}[Spectral independence~\cite{anari2020spectral}]
Let $\mu$ be a distribution over a subset of the Boolean hypercube $\{-1,+1\}^n$. Let $X\sim \mu$. The \emph{influence matrix} $\Psi_{\mu}$ of $\mu$ is defined as
\begin{align*}
\Psi_{\mu}(i,j) = \begin{cases}
    \Pr[]{X_j=+1\mid X_i = +1} - \Pr[]{X_j = +1 \mid X_i = -1} & \text{if } \E{X_i} \in (-1,1),\\
    0 & \text{otherwise.}
    \end{cases}
\end{align*}
The distribution $\mu$ is $C$-\emph{spectrally independent} if $\lambda_{\max}(\Psi_{\mu}) \le C$.
Furthermore, we say $\mu$ is \emph{$C$-spectrally independent for all pinning} if for every feasible pinning $\tau$, $\lambda_{\max}(\Psi_{\mu^\tau}) \leq C$.
\end{definition}

\begin{remark}
    For a set $S \in 2^{[n]}$, we can represent it as some vector $X \in \set{-1,+1}^n$ by setting $X_i = +1$ for $i \in S$ and $X_j = -1$ for $j\not\in S$.
    Hence, any distribution $\nu$ on $2^{[n]}$ can be considered as a distribution on $\set{-1,+1}^n$.
    And the notion of spectral independence can be defined for the distribution $\nu$ accordingly.
\end{remark}

We list several equivalent definitions of spectral independence.

\begin{lemma}\label{lem:SI-equiv}
Let $\mu$ be a distribution over a subset $\+X$ of the Boolean hypercube $\{-1,+1\}^n$. The following statements are equivalent. 
\begin{enumerate}
    \item\label{item:SI-1} $\mu$ is $C$-spectrally independent;
    \item\label{item:SI-2} $\mathrm{Cov}(\mu) \preceq C  \cdot \mathrm{diag}\set{\Var[X \sim \mu]{X_i}}_{i \in [n]}$, where $\mathrm{Cov}(\mu)$ is the covariance matrix of $\mu$ defined as:
    $$\mathrm{Cov}(\mu)_{i,j} = \E[X \sim \mu]{X_i X_j} - \E[X \sim \mu]{X_i} \E[X \sim \mu]{X_j},$$
    and we use the notation $A\preceq B$ to indicate that $A - B$ is negative semi-definite;
    \item\label{item:SI-3} For all functions $f = \frac{\nu}{\mu}$, where $\nu$ is a distribution absolutely continuous with respect to $\mu$,
    \begin{align}\label{eq:SI-equiv-form}
        \sum_{i \in [n]} p_i \tp{\frac{q_i}{p_i} - 1}^2 + (1-p_i) \tp{\frac{1-q_i}{1-p_i}-1}^2 \le C \cdot \Var[\mu]{f},
    \end{align}
    where $q_i$ denotes $\Pr[X\sim \nu]{X_i=+1}$, and $p_i$ denotes $\Pr[X\sim \mu]{X_i = +1}$. 
\end{enumerate}
\end{lemma}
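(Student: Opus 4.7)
For the equivalence $(1) \Leftrightarrow (2)$, I first establish the explicit identity $\Cov_\mu(X_i, X_j)/\Var[\mu]{X_i} = \Psi_\mu(i,j)$ whenever $\E[\mu]{X_i} \in (-1,1)$. This follows by writing $\Pr[]{X_j = +1 \mid X_i = s} = \Pr[]{X_i = s, X_j = +1}/\Pr[]{X_i = s}$ for $s \in \{+1,-1\}$, subtracting and simplifying, and using the $\pm 1$-valued identities $\Cov_\mu(X_i, X_j) = 4\bigl(\Pr[]{X_i=+1, X_j=+1} - \Pr[]{X_i=+1}\Pr[]{X_j=+1}\bigr)$ and $\Var[\mu]{X_i} = 4 \Pr[]{X_i=+1}\Pr[]{X_i=-1}$. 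On the non-degenerate coordinates this reads $\Psi_\mu = D^{-1} \Cov(\mu)$ where $D := \diag\{\Var[\mu]{X_i}\}_{i \in [n]}$. Since $D^{-1} \Cov(\mu)$ is similar to the symmetric positive semidefinite matrix $D^{-1/2} \Cov(\mu) D^{-1/2}$, the two matrices share spectra; hence $\lambda_{\max}(\Psi_\mu) \le C$ is equivalent to $D^{-1/2}\Cov(\mu) D^{-1/2} \preceq C \cdot I$, which after conjugation by $D^{1/2}$ is exactly (2).

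For $(2) \Leftrightarrow (3)$, I would first rewrite the left-hand side of (3) as a sum of marginal $\chi^2$-divergences: a direct calculation gives $p_i (q_i/p_i - 1)^2 + (1 - p_i)((1-q_i)/(1-p_i) - 1)^2 = (q_i - p_i)^2/(p_i(1-p_i))$. Using $p_i(1-p_i) = \Var[\mu]{X_i}/4$ and $q_i - p_i = (\E[\nu]{X_i} - \E[\mu]{X_i})/2 = \Cov_\mu(f, X_i)/2$ (since $\E[\mu]{f} = 1$), the left-hand side of (3) equals $\sum_i \Cov_\mu(f, X_i)^2/\Var[\mu]{X_i}$. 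Introducing $\tilde f := f - 1 \in L^2(\mu)$ and the unit vectors $g_i := (X_i - \E[\mu]{X_i})/\sqrt{\Var[\mu]{X_i}} \in L^2(\mu)$, the inequality in (3) reads $\sum_i \inner{\tilde f}{g_i}_\mu^2 \le C \cdot \|\tilde f\|_\mu^2$, i.e., $\|T\|_{\mathrm{op}}^2 \le C$ for the operator $T : L^2(\mu) \to \mathbb{R}^n$ defined by $(Th)_i := \inner{h}{g_i}_\mu$. The Gram operator $T T^*$ has entries $\inner{g_i}{g_j}_\mu = \Cov_\mu(X_i, X_j)/\sqrt{\Var[\mu]{X_i} \Var[\mu]{X_j}}$, so $T T^* = D^{-1/2} \Cov(\mu) D^{-1/2}$, and $\|T\|_{\mathrm{op}}^2 = \|T T^*\|_{\mathrm{op}} \le C$ is precisely the spectral bound in (2).

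The one subtlety is that (3) restricts $f$ to the form $\nu/\mu$ for a probability measure $\nu \ll \mu$, whereas the spectral reformulation above naturally applies to arbitrary $\tilde f \in L^2(\mu)$. Since both sides of (3) are quadratic in $\tilde f = f - 1$ and depend on $f$ only through $\tilde f$, and since any mean-zero $\tilde h \in L^2(\mu)$ on the finite state space can be scaled by a sufficiently small positive $\alpha$ so that $1 + \alpha \tilde h \ge 0$ (yielding a valid density $f_\alpha := 1 + \alpha \tilde h = \nu_\alpha / \mu$), this restriction costs nothing; constant components of $\tilde f$ contribute zero to both sides. I expect the only real bookkeeping hurdle—conceptually minor—to be the uniform treatment of degenerate coordinates with $\Var[\mu]{X_i} = 0$, but since both $\Cov(\mu)$ and $\Psi_\mu$ vanish on such rows and columns by construction, one can simply restrict attention to the support of $D$ throughout.
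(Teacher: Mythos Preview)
Your proof is correct. For $(1)\Leftrightarrow(2)$ you follow essentially the same route as the paper, deriving the identity $\Psi_\mu = D^{-1}\Cov(\mu)$ and passing to the symmetric matrix $D^{-1/2}\Cov(\mu)D^{-1/2}$.

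For the equivalence with (3), however, your argument is genuinely different from the paper's. The paper works through the machinery of high-dimensional expanders: it homogenizes $\mu$ to a distribution on $n$-subsets of $[n]\times\{0,1\}$, identifies the left-hand side of \eqref{eq:SI-equiv-form} as $n\Var[\mu^{\mathrm{hom}} D_{n\to 1}]{U_{1\to n}f^{\mathrm{hom}}}$ for the down/up operators, and then invokes the known identity $\lambda_2(P_{1\leftrightarrow n}) = \frac{1}{n}\lambda_{\max}(\Psi_\mu)$ from \cite{anari2020spectral}. Your approach is more elementary and self-contained: you rewrite the left-hand side directly as $\sum_i \Cov_\mu(f,X_i)^2/\Var[\mu]{X_i}$, recognize this as $\|T\tilde f\|_2^2$ for an explicit operator $T$, and observe that the Gram matrix $TT^*$ equals $D^{-1/2}\Cov(\mu)D^{-1/2}$. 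This avoids the simplicial-complex language entirely and makes the equivalence a one-page linear-algebra exercise. The paper's detour, on the other hand, situates the statement within the down-up walk framework that drives the rest of the paper, so each approach has its merits: yours is shorter and more transparent for this lemma in isolation, while the paper's ties the lemma to its broader toolkit.
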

\begin{remark}
    The third statement~\eqref{eq:SI-equiv-form} is an analogue of the entropic independence introduced in~\cite{anari2021entropic}, which evaluates the ``independence'' of homogeneous distributions via the higher-order random walk $P_{n \leftrightarrow 1} = D_{n \to 1} U_{1 \to n}$.
\end{remark}
\begin{remark} \label{rem:SI-domain}
  The condition $\-{Cov}(\mu) \preceq C \cdot \-{diag}\set{\Var[X\sim \mu]{X_i}}_{i\in [n]}$ can be defined similarly for distributions $\mu$ over $\set{0,1}^{n}$.
  If $X \sim \mu$, then let $Y = 2X - \*1$, where $\*1$ is the all $1$ vector.
  Let the distribution of $Y$ be $\nu$.
  It is direct to see that $\nu$ is a distribution over $\set{-1,+1}^{n}$.
  We can also verify that $\-{Cov}(\nu) = 4 \-{Cov}(\mu)$ and $\Var[Y\sim \nu]{Y_i} = 4 \Var[X\sim \mu]{X_i}$ for each $1 \leq i \leq n$.
\end{remark}
\begin{proof}
The equivalence of~\Cref{item:SI-1,item:SI-2} follows from the identity:
\begin{align} \label{eq:SI-ratio}
    \Psi_{\mu}(i,j) = \frac{\mathrm{Cov}(\mu)_{i,j}}{\Var[X \sim \mu]{X_i}}.
\end{align}
The equivalence of~\Cref{item:SI-1,item:SI-3} requires the knowledge on high dimensional expander. We refer to~\cite{stefankovic2023lecture} for the background. 
Let $\mu^{\mathrm{hom}}$ and $\nu^{\mathrm{hom}}$ be the homogenization of $\mu$ and $\nu$ respectively, and $f^{\mathrm{hom}} = \frac{\mu^{\mathrm{hom}}}{\mu^{\mathrm{hom}}}$ be the homogenization of $f$.
The left-hand-side of~\eqref{eq:SI-equiv-form} is $n$ times the variance of $\frac{\nu D_{n \to 1}}{\mu D_{n \to 1}} = U_{1 \to n} f$, where $D_{n \to 1}$ is the higher-order down walk from dimension $n$ faces (subsets with $n$ elements) to dimension $1$ faces, and $U_{1 \to n}$ is its adjoint operator. Formally,
\begin{align}\label{eq:equiv-1}
    \nonumber \sum_{i \in [n]} p_i \tp{\frac{q_i}{p_i} - 1}^2 + (1-p_i) \tp{\frac{1-q_i}{1-p_i}-1}^2 &= n \Var[\mu^{\mathrm{hom}} D_{n \to 1}]{U_{1 \to n} f^{\mathrm{hom}}} \\ 
    &= n\inner{f^{\mathrm{hom}}-1}{P_{n \leftrightarrow 1}\tp{f^{\mathrm{hom}}-1}}_{\mu^{\mathrm{hom}}},
\end{align}
where $P_{n \leftrightarrow 1} = D_{n \to 1} U_{1 \to n}$ is the higher-order down-up walk. By Courant--Fischer theorem,
\begin{align}\label{eq:equiv-2}
    \lambda_2(P_{1 \leftrightarrow n})=\lambda_2(P_{n \leftrightarrow 1}) = \sup_{f^{\mathrm{hom}}: \{-1,+1\}^n \to \mathbb{R}_{\ge 0}} \frac{\inner{f^{\mathrm{hom}}-1}{P_{n \leftrightarrow 1}\tp{f^{\mathrm{hom}}-1}}_{\mu^{\mathrm{hom}}}}{\Var[\mu^{\mathrm{hom}}]{f^{\mathrm{hom}}}}.
\end{align}
By~\cite[Theorem 1.3]{anari2020spectral}, the second eigenvalue of the higher-order down-up walk $P_{1 \leftrightarrow n}$ satisfies 
\begin{align}\label{eq:equiv-3}
\lambda_2(P_{1 \leftrightarrow n})=\frac{1}{n} + \frac{n-1}{n} \lambda_2(P^{\perp}_{1 \leftrightarrow n}) = \frac{1}{n} + \frac{n-1}{n} \cdot \frac{1}{n-1} \tp{\lambda_{\max}(\Psi_{\mu}) - 1}= \frac{1}{n} \lambda_{\max}\tp{\Psi_{\mu}},
\end{align}
where $P_{1 \leftrightarrow n}^\perp$ denotes the non-lazy up-down walk of $P_{1 \leftrightarrow n}$.
The equivalence of~\Cref{item:SI-1,item:SI-3} then follows from~\eqref{eq:equiv-1},~\eqref{eq:equiv-2},~\eqref{eq:equiv-3} and $\Var[\mu^{\mathrm{hom}}]{f^{\mathrm{hom}}} = \Var[\mu]{f}$.
\end{proof}

\subsection{Self-avoiding walk tree}
\label{sec:def-SAW-tree}
Let $G = (V, E)$ be a graph.
Given a vertex $r \in V$, a self-avoiding walk $v_0, v_1, \cdots, v_\ell$  from $r$ (i.e., $v_0 = r$) is a sequence of vertices such that $v_i$ and $v_{i+1}$ are adjacent ($0 \leq i <\ell$); and $v_0, \cdots, v_{\ell-1}$ are distinct vertices (but $v_\ell$ may equal to some vertex in $v_0, \cdots, v_{\ell-2}$).
Intuitively, the walk from $r$ is forced to stop once it reaches some vertex that has been passed before.

Let $\mu$ be the Gibbs distribution of a Ising model on the graph $G$ with parameter $\beta$ and inhomogeneous external field $(\lambda_v)_{v\in V}$.
Fix an order $\prec$ on $V$.
For every vertex $r \in V$, the self-avoiding walk (SAW) tree $\+T^{\mathrm{SAW}}_{G,r}$ is a tree with pinning and external field~\cite{weitz2006counting}.
Each vertex $u$ of $\+T^{\mathrm{SAW}}_{G,r}$ represents a self-avoiding walk $p_u$ on $G$ initiated from $r$.
The vertex $u$ is usually considered as a copy of $v_\ell$ (the last vertex in $p_u$); and therefore carries the same external field $\lambda_{v_\ell}$.
Vertices in $\+T^{\mathrm{SAW}}_{G,r}$ are organized as follow:
\begin{itemize}
    \item For vertices $u, v$ in the SAW tree, $u$ is a child of $v$ if and only if $p_u$ extends $p_v$ (i.e., $p_v$ can be obtained by removing the last vertex in $p_u$).
    \item For a leaf node $u$ in the SAW tree with the path $p_u$ be $v_0, \cdots, v_\ell$, either degree of $v_\ell$ in $G$ is 1, or $v_\ell = v_i$ for some $0\leq i \leq \ell-2$.
    In the later case, $u$ will have a pinning $+1$ (resp. $-1$) if $v_{i+1} \prec v_{\ell-1}$ (resp. $v_{i+1} \succ v_{\ell-1}$).
\end{itemize}
The pinnings, external fields, and parameter $\beta$ then define a Gibbs distribution $\mu_{\+T}$  of Ising model on the SAW tree.
For each vertex $u$ in the SAW tree, we also define $\mu_{\+T_u}$ as $\mu_{\+T}$ projected to the subtree rooted at $u$.

\subsection{Large deviation and local limit theorem}
The following local limit theorem plays a key role in our lower bound proofs.
\begin{lemma}[\cite{richter1958multi}]\label{lem:llt}
    Let $(X_i)_{i \ge 1}$ be independent and identically distributed random variables supported over a finite subset $\Omega=\{\*x_1,\*x_2,\ldots,\*x_k\} \subseteq \mathbb{Z}^d$, and $S_n = \sum_{i=1}^n X_i$. 
    Let $\*\mu = \E{X_1}$ and $\Sigma = \mathrm{Cov} \tp{X_1}$. 
    Suppose $X_1$ is not supported over any sublattice and $\mathrm{det}(\Sigma) > 0$. There exist constants $C_1,C_2 > 0$ (only relies on $X_1$) such that the following holds for sufficiently large $n$:
    
    For any $\*x \in \mathbb{Z}^d$ satisfying $\norm{\*x - n \*\mu}_{\infty} \le 3 n^{2/3}$, it holds that
    \begin{align}\label{eq:llt}
        C_1 \le \frac{\Pr{S_n = \*x}}{n^{-d/2} \exp\tp{-\frac{1}{2n} (\*x - n \*\mu)^T \Sigma^{-1} (\*x - n \*\mu)}} \le C_2.
    \end{align}
    
    Furthermore, if $X_1$ is symmetric, i.e.  $\Pr{X_1  = \*\mu + \*x} = \Pr{X_1 = \*\mu-\*x}$ for any $\*x \in \Omega$, then for all $\*x \in \mathbb{Z}^d$ with $\norm{\*x-n\*\mu}_{\infty} \le 3 n^{3/4}$, the Gaussian approximation~\eqref{eq:llt} holds.
\end{lemma}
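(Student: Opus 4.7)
This is a classical multidimensional local limit theorem, so I would follow the standard Fourier-analytic route. Starting from the Fourier inversion formula for $\Z^d$-valued random variables,
\[
\Pr{S_n = \*x} = \frac{1}{(2\pi)^d} \int_{[-\pi,\pi]^d} \phi(\*t)^n \exp(-i\langle \*t, \*x\rangle)\, \dif \*t,
\]
where $\phi(\*t) := \E{\exp(i\langle \*t, X_1\rangle)}$ is the characteristic function of $X_1$, I would set $\*y := \*x - n\*\mu$ and $\psi(\*t) := \phi(\*t)\exp(-i\langle \*\mu,\*t\rangle)$ (the characteristic function of the centered variable $X_1-\*\mu$), and substitute $\*t = \*s/\sqrt{n}$. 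Up to the prefactor $(2\pi n)^{-d/2}$, the task reduces to showing that
\[
I_n(\*y) := \int_{[-\pi\sqrt{n},\pi\sqrt{n}]^d} \psi(\*s/\sqrt{n})^n \exp(-i\langle \*s,\*y/\sqrt{n}\rangle)\,\dif \*s
\]
is sandwiched between two positive constants times $(2\pi)^{d/2}(\det\Sigma)^{-1/2} \exp(-\tfrac{1}{2n}\*y^\trans \Sigma^{-1}\*y)$ uniformly over $\*y$ in the claimed range.

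First, I would split $I_n(\*y)$ into a \emph{near} region $\|\*s\|\le \delta\sqrt{n}$ and its complement, for a small constant $\delta$ depending only on $X_1$. The non-sublattice hypothesis forces $|\psi(\*t)|<1$ for every nonzero $\*t\in[-\pi,\pi]^d$ (otherwise $X_1-\*\mu$ would concentrate on a coset of a proper sublattice), so by continuity and compactness $\sup_{\*t\in[-\pi,\pi]^d,\,\|\*t\|\ge\delta}|\psi(\*t)| \le 1 - c$ for some $c = c(\delta, X_1) > 0$. Consequently the \emph{far} contribution is bounded by $O(n^{d/2}(1-c)^n)$, which is exponentially small in $n$ and much smaller than the target estimate even at the worst admissible $\|\*y\|$, where the main term is only stretched-exponentially small.

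For the near region, I would Taylor-expand $\log \psi(\*t) = -\tfrac{1}{2}\*t^\trans\Sigma\*t + R(\*t)$, with $R(\*t) = O(\|\*t\|^3)$ in general and $R(\*t) = O(\|\*t\|^4)$ when $X_1$ is symmetric about $\*\mu$ (since then $\psi$ is real-valued, killing every odd-order term in the cumulant expansion). Completing the square,
\[
-\tfrac{1}{2}\*s^\trans\Sigma\*s - i\langle \*s,\*y/\sqrt{n}\rangle = -\tfrac{1}{2}\bigl(\*s+i\Sigma^{-1}\*y/\sqrt{n}\bigr)^\trans\Sigma\bigl(\*s+i\Sigma^{-1}\*y/\sqrt{n}\bigr) - \tfrac{1}{2n}\*y^\trans\Sigma^{-1}\*y,
\]
so the factor $\exp(-\tfrac{1}{2n}\*y^\trans\Sigma^{-1}\*y)$ pulls outside after the contour shift $\*s\mapsto\*s+i\Sigma^{-1}\*y/\sqrt{n}$. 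The remaining perturbation in the exponent is $nR\bigl((\*s+i\Sigma^{-1}\*y/\sqrt{n})/\sqrt{n}\bigr)$. In the general case, $\|\*y\|_\infty \le 3n^{2/3}$ gives $\|\Sigma^{-1}\*y/\sqrt{n}\| = O(n^{1/6})$, so the cubic remainder is $O\bigl((n^{1/6})^3/\sqrt{n}\bigr) = O(1)$ over the Gaussian bulk of $\*s$; the shifted integrand is then pinched between constants times $\exp(-\tfrac{1}{2}\*w^\trans\Sigma\*w)$, and integrates to a constant multiple of $(2\pi)^{d/2}(\det\Sigma)^{-1/2}$. In the symmetric case, the quartic remainder allows $\|\Sigma^{-1}\*y/\sqrt{n}\|$ to grow to order $n^{1/4}$ while still keeping $R = O(1)$, explaining the extended range $\|\*y\|_\infty \le 3n^{3/4}$.

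\textbf{Main obstacle.} The delicate point is controlling the Taylor remainder \emph{after} the contour has been shifted off the real axis by an amount that grows with $\|\*y\|$: the exponents $2/3$ and $3/4$ are exactly the largest for which the cubic (respectively quartic) remainder stays $O(1)$ uniformly along the shifted contour. A secondary technical point is justifying the shift for the truncated integral over $[-\pi\sqrt{n},\pi\sqrt{n}]^d$ rather than all of $\R^d$, but this is routine: the Gaussian tails are subgaussian and the contributions from the shifted boundary are exponentially small in $n$.
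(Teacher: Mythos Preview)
Your Fourier-analytic sketch is correct and is essentially the standard derivation of Richter's large-deviation local limit expansion. The paper, however, does not reprove the lemma: it simply cites \cite{richter1958multi}, which gives the asymptotic
\[
\frac{\Pr{S_n=\*x}}{\phi(\*x)} = \exp\Bigl(n\sum_{k\ge 3} Q_k\bigl(\tfrac{\*x-n\*\mu}{n}\bigr)\Bigr)\bigl(1+O(\|\tfrac{\*x-n\*\mu}{n}\|_\infty)\bigr),
\]
with $Q_k$ homogeneous of degree $k$, and then reads off the bounds in a short remark: since $nQ_3(\*y/n)=O(\|\*y\|^3/n^2)=O(1)$ when $\|\*y\|_\infty\le 3n^{2/3}$, the exponential correction is $\Theta(1)$; in the symmetric case the third cumulants vanish so $Q_3\equiv 0$, and the same computation with $Q_4$ permits $\|\*y\|_\infty\le 3n^{3/4}$. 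Your contour-shift argument is precisely how one proves Richter's expansion in the first place (the $Q_k$ arise from the Taylor coefficients of $\log\psi$ evaluated at the saddle), so the two routes coincide at the level of mechanism---the paper just outsources the analysis to the citation, while you carry it out from scratch.
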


\begin{remark}
    The local limit theorem stated in~\cite{richter1958multi} is of the following form:
    \begin{align*}
       \frac{\Pr{S_n = \*x}}{\phi(\*x)} = \exp\tp{n\sum_{k=3}^{+\infty} Q_k\tp{\frac{\*x - n\*\mu}{n}}} \tp{1+ O\tp{\norm{\frac{\*x - n\*\mu}{n}}_{\infty}}},
    \end{align*}
    where $\phi(\*x) = (2\pi)^{-d/2} \mathrm{det}(n\Sigma)^{-1/2} \exp\tp{-\frac{1}{2n} (\*x - n \*\mu)^T \Sigma^{-1} (\*x - n \*\mu)}$, the lattice point $\*x$ satisfying $\norm{\*x - n \*\mu}_{\infty} = o(n)$, and $Q_k$ is a homogeneous polynomial of degree $k$ whose coefficients are determined by the distribution of $X_1$. This implies~\eqref{eq:llt} when $\norm{\*x- n \*\mu}_{\infty} \le 3n^{2/3}$. Furthermore, when random variables are symmetric, $Q_3$ vanishes, allowing the Gaussian approximation~\eqref{eq:llt} holds for a larger regime $\norm{\*x - n \*\mu}_{\infty} \le 3n^{3/4}$. This follows from that the semi-invariants (the coefficients in the Taylor's expansion of the logarithm of the moment generating function) of degree $3$ of a symmetric random variable with mean $\*0$ are zeroes. As a corollary, the constant $3$ in $\norm{\*x-n \*\mu}_{\infty} \le 3n^{2/3}$ or $3n^{3/4}$ can be replaced by arbitrary universal constant.
\end{remark}

\subsection{Birthday paradox}
Our percolation proof for spectral independence utilizes the following birthday bound.
\begin{lemma}\label{lem:birthday-paradox}
Suppose $X_0,X_1,\ldots,X_{n}$ are independent and identically distributed copies of a random variable $X$ on $[n]$. Let $T$ be the smallest index $i$ with $X_i = X_j$ for some $0 \le j <i$, then
\begin{align*}
\Pr{T \ge \ell} \le \exp\tp{-\frac{\ell(\ell-1)}{2n}}.
\end{align*}
\end{lemma}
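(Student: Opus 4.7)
The plan is to reformulate the event $\{T \ge \ell\}$ combinatorially and then reduce to the classical uniform birthday estimate via a symmetric-function inequality. Write $p_i := \Pr{X = i}$ for $i \in [n]$. First I would observe that $T \ge \ell$ holds precisely when $X_0, X_1, \ldots, X_{\ell-1}$ are pairwise distinct, since by definition $T$ is the smallest index at which a collision with an earlier sample occurs. This reformulation gives
$$\Pr{T \ge \ell} \;=\; \sum_{\substack{i_0, \ldots, i_{\ell-1} \in [n] \\ \text{pairwise distinct}}} p_{i_0} p_{i_1} \cdots p_{i_{\ell-1}} \;=\; \ell! \cdot e_\ell(p_1, \ldots, p_n),$$
where $e_\ell$ denotes the $\ell$-th elementary symmetric polynomial.

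The main step is to bound $e_\ell$ uniformly over probability vectors on $[n]$. Since $e_1 = p_1 + \cdots + p_n = 1$, Maclaurin's inequality applied to the non-negative reals $p_1, \ldots, p_n$ yields $\bigl(e_\ell/\binom{n}{\ell}\bigr)^{1/\ell} \le e_1/n = 1/n$, hence $e_\ell(p_1, \ldots, p_n) \le \binom{n}{\ell} n^{-\ell}$, with equality at the uniform distribution. Substituting back,
$$\Pr{T \ge \ell} \;\le\; \ell! \binom{n}{\ell} n^{-\ell} \;=\; \prod_{i=0}^{\ell-1}\left(1 - \frac{i}{n}\right).$$
If one prefers to avoid citing Maclaurin by name, the same bound follows either from Schur-concavity of $(p_1, \ldots, p_n) \mapsto e_\ell(p_1, \ldots, p_n)$ on the probability simplex combined with the fact that the uniform vector is majorized by every probability vector on $[n]$, or by a short induction on $\ell$.

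Finally, applying the elementary inequality $1 - x \le e^{-x}$ termwise produces
$$\Pr{T \ge \ell} \;\le\; \exp\!\left(-\sum_{i=0}^{\ell-1} \frac{i}{n}\right) \;=\; \exp\!\left(-\frac{\ell(\ell-1)}{2n}\right),$$
which is exactly the claimed bound. For $\ell > n$ the left-hand side vanishes by pigeonhole, so the inequality is automatic in that regime. The whole argument is short and classical; I do not anticipate any real obstacle, the only substantive ingredient being the reduction to the uniform case via Maclaurin's (or a Schur-majorization) inequality, which is standard.
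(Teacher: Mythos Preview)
Your proof is correct and follows essentially the same route as the paper's: both rewrite $\Pr{T \ge \ell}$ as $\ell!\, e_\ell(p_1,\ldots,p_n)$, reduce to the uniform distribution, and finish with $1-x \le e^{-x}$. The only cosmetic difference is that you invoke Maclaurin's inequality (or Schur-concavity) to get $e_\ell \le \binom{n}{\ell}n^{-\ell}$ directly, whereas the paper uses Lagrange multipliers to argue that the optimizer has the form $p_1=\cdots=p_k=1/k$, $p_{k+1}=\cdots=p_n=0$ and then maximizes over $k$; your route is slightly cleaner.
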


\begin{proof}
Let $p_k = \Pr{X = k}$. Note that
\begin{align}\label{eq:elementary-symmetric-polynomial}
    \Pr{T \ge \ell} = \ell! \sum_{1 \le x_1<x_2<\ldots<x_\ell \le n} \prod_{i=1}^\ell p_{x_i}.
\end{align}
To upper bound the elementary symmetric polynomial, we view it as a optimization program:
\begin{equation}\label{eq:optimization-birthday}
\begin{aligned}
\max_{\*p} \quad & \sum_{1 \le x_1<x_2<\ldots<x_\ell \le n} \prod_{i=1}^\ell p_{x_i}\\
\textrm{s.t.} \quad & 
\sum_{i=1}^n p_i = 1.
\end{aligned}
\end{equation}

By a standard application of the method of Lagrange multiplier, the maximum is achieved when $p_1=p_2=\ldots=p_k$ and $p_{k+1}=\ldots=p_n = 0$. Combining~\eqref{eq:elementary-symmetric-polynomial}, we have
\begin{align*}
\Pr{T \ge \ell} \le \max_{\ell \le k \le n} \ell! \binom{k}{\ell} \frac{1}{k^\ell} \le \exp\tp{-\frac{\ell (\ell-1)}{2n}}.
\end{align*}
This concludes the proof.
\end{proof}

\subsection{The probabilistic method}
In our lower bound proofs, particularly for the constructions of hard instances, 
we utilize the following non-standard application of the probabilistic method.

\begin{lemma}\label{lem:probabilistic-method}
Let $X$ and $Y$ be random variables over a {finite} support $\Omega \subseteq \mathbb{R}$ where $Y>0$ holds. 
Then there exists a {realization} $(x,y)$ 
of $(X,Y)$  such that
\begin{align*}
    \frac{x}{y} \ge \frac{\E[]{X}}{\E[]{Y}}.
\end{align*}
\end{lemma}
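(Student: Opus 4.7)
The plan is a short probabilistic-method / averaging argument. I would introduce the auxiliary random variable
\[
Z \;=\; X \;-\; \frac{\E[]{X}}{\E[]{Y}}\cdot Y
\]
and observe that, by linearity of expectation,
\[
\E[]{Z} \;=\; \E[]{X} - \frac{\E[]{X}}{\E[]{Y}}\cdot \E[]{Y} \;=\; 0.
\]
Since $Z$ has mean zero on a nonempty finite support, there must exist a realization $(x,y)$ of $(X,Y)$ (attained with positive probability) for which $Z \ge 0$, i.e.\ $x \ge (\E[]{X}/\E[]{Y})\cdot y$.

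Using the hypothesis $Y>0$, which guarantees $y>0$ for every realization in the support, I would then divide by $y$ to conclude $x/y \ge \E[]{X}/\E[]{Y}$, as desired. No genuine obstacle arises: the only subtlety to mention is that the positivity of $Y$ is exactly what is needed to rearrange the inequality $x \ge (\E[]{X}/\E[]{Y})\,y$ into the ratio form without reversing the inequality or dividing by zero. Equivalently, one can phrase it as a contrapositive: if every $(x,y)$ satisfied $x/y < \E[]{X}/\E[]{Y}$, then $x < y\cdot \E[]{X}/\E[]{Y}$ pointwise, and taking expectations would give the contradiction $\E[]{X} < \E[]{X}$. Finiteness of $\Omega$ is only used to ensure the expectations exist and the support is well-defined, so no approximation step is required.
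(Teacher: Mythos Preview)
Your proposal is correct and essentially identical to the paper's own proof: the paper defines $Z = \E[]{Y}\cdot X - \E[]{X}\cdot Y$, notes $\E[]{Z}=0$, picks a realization with $z\ge 0$, and divides by $y>0$. Your $Z$ is simply the paper's $Z$ scaled by the positive constant $1/\E[]{Y}$, so the two arguments coincide.
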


\begin{proof}
    Let $Z = \E[]{Y} \cdot X - \E[]{X} \cdot Y$. Then we have $\E[]{Z}=0$. 
    By the probabilistic method, there must exist a realization $z$ of $Z$ and corresponding realization $(x,y)$ of $(X,Y)$, such that $0\le z=\E[]{Y} \cdot x - \E[]{X} \cdot y$, and $y>0$ as $Y>0$. This implies $x/y\ge {\E[]{X}}/{\E[]{Y}}$.
\end{proof}

\section{Upper Bounds}\label{sec:upper-bound}
Our proof of upper bounds involves two steps.
In this section, we present a new proof of the localization scheme approach developed in \cite{CE22} and upper bounds the mixing time by an integral of the spectral independence constant, a fact that has been implicitly implied in \cite{CE22}.
Combining known bounds on the spectral independence and crude bounds near the critical point, we establish rapid mixing of Glauber dynamics for both critical hardcore and Ising models.
In \cref{sec:SI-criticality}, we establish spectral independence for the critical Ising model with a bound of $O(\sqrt{n})$.
This allows us to obtain a sharper exponent in the mixing time for the critical Ising model,
not far from the lower bound which will be proved in \cref{sec:LB}.

    This section is organized as follows. In~\Cref{sec:overview-continuous-downup}, we outline the proof of the mixing time upper bound for the hardcore model by interpreting the field dynamics as a continuous-time down-up walk. In~\Cref{subsec:information-theoretic}, we introduce the information-theoretic view of localization schemes, including the field dynamics and proximal samplers as working examples. Finally, we apply this information-theoretic perspective to critical hardcore model and Ising model in~\Cref{subsec:hardcore-upper,subsec:Ising-upper} respectively.

\subsection{Backgrounds and proof overview}
\label{sec:overview-continuous-downup} 

\subsubsection{Matroid, down-up walks, and field dynamics}

Perhaps it is fair to say that the spectral independence approach and its successful and exciting applications all starts from the breakthrough in sampling basis of matroids \cite{ALOV24}.
We begin by reviewing the method for analyzing the basis exchange process for sampling basis of matroids, also called the down-up walks.

Let $V$ be a set of $n$ elements. 
We do not formally define matroids here but instead consider $\mu$ to be a distribution supported on a family of $r$-subsets of $V$ that satisfy nice properties, e.g. spanning trees of a given graph.
The down and up walks are naturally defined for such distributions.
For any $0\le k \le r$, we define the down walk $D_{r \to k}$ as the Markov kernel from a given $r$-subset $T$ to a random $k$-subset $S\subseteq T$ by removing $r-k$ elements from $T$ chosen uniformly at random.
The up walk $U_{k \to r}$ is the time-reversal for $D_{r \to k}$, i.e., the adjoint operator. 
More precisely, if $\mu_r = \mu$ is the original distribution over $r$-subsets and $\mu_k = \mu_r D_{r \to k}$ is the resulted distribution over $k$-subsets from the down walk, 
then the up walk $U_{k \to r}$ is defined such that 
\begin{align}\label{eq:down-up-adjoint}
    \mu_k (S) U_{k \to r} (S,T) = \mu_r(T) D_{r \to k} (T,S)
\end{align}
for any $k$-subset $S$ and $r$-subset $T$. 
These Markov kernels are crucial for inductively proving rapid mixing of the down-up walk, defined by the kernel $P_{r \leftrightarrow (r-1)} = D_{r \to (r-1)} U_{(r-1) \to r}$. 
One way to explain the idea of the proof is that we study the mixing property for the whole family of associated down-up walks $P_{r \leftrightarrow k} = D_{r \to k} U_{k \to r}$ and use induction on $k$. Crucially, the inductive step requires one to study the mixing property of a local walk and the overall proof forms an elegant local-to-global argument \cite{KO20,CGM21}.

Meanwhile, if the distribution $\mu$ we want to sample from is not supported on subsets of fixed size, such down and up walks are not suitable for the exact same analysis.
To get around this, Anari, Liu and Oveis Gharan \cite{anari2020spectral} noticed a simple and clever way: They define a bijective mapping from subsets of $V$ to $n$-subsets of $U = V \times \{0,1\}$, i.e., each subset $T \subseteq V$ one-to-one corresponds to an $n$-subset $\{(v,\ind_{\{v \in T\}}): v \in V\} \subseteq U$.
Thus, any distribution $\mu$ over subsets of $V$ of any size can be understood as an equivalent distribution $\pi$ over (legal) $n$-subsets of $U$.
The intuition behind this idea is homogenization of the associated partition function.
In particular, the down-up walk for $\pi$ recovers the Glauber dynamics for sampling from $\mu$, and the local-to-global analysis also can be applied in a generalized form \cite{AL20}, giving rise to the notion of spectral independence.

While the homogenization trick works nicely, the resulting mixing time has poor dependency on other parameters such as the maximum degree of the graph or the spectral independence constant. 
To improve this, Chen, Feng, Yin and Zhang \cite{chen2021rapid} developed a new approach. They introduced a new Markov chain named as field dynamics, and compared its mixing property with the Glauber dynamics.
Their approach can also be viewed as a homogenization but in a highly non-trivial way.
Later, the same authors showed that the mixing time of Glauber dynamics is nearly linear in the uniqueness regime, independent of the maximum degree \cite{CFYZ22optimal}.

At about the same time, Chen and Eldan \cite{CE22} developed a new framework for proving mixing bounds for Markov
chains. Not only they established the same result for both field dynamics and Glauber dynamics as in \cite{CFYZ22optimal}, their approach more generally applies to a large family of Markov chains, including for example down-up walks for basis of matroids and the proximal sampler for continuous (log-concave) distributions.
Chen and Eldan study the field dynamics and other Markov chains by defining and understanding the associated localization schemes, a measure-valued stochastic process with martingale behaviors. 
The approach developed in \cite{CE22} relies heavily on tools from stochastic calculus.

One of the contributions of this work, from our point of view, is to present a more intuitive explanation of the localization scheme method from \cite{CE22}.
Our starting point is the following key observation: \emph{the field dynamics is the down-up walk run in continuous time}.
Thus, previous approaches for analyzing down-up walks on fixed-size subsets can be naturally carried over to, for example, the hardcore model over independent sets of any size.
In particular, we are able to recover previous results on field dynamics \cite{CFYZ22optimal,CE22} and furthermore obtain new mixing results all the way up to the critical point, and our proof does not require much knowledge from stochastic calculus.

\subsubsection{Field dynamics is continuous-time down-up walk}

Let us present more details on this observation.
For any $\theta \in [0,1]$, we define $D_{1 \to \theta}$ to be the Markov kernel that, given a subset $T$, generates a random subset $S \subseteq T$ by independently including each element of $T$ into $S$ with probability $\theta$.
Namely, each element in $T$ flips a coin independently to decide if it should be kept or removed.
We note that the Markov kernel $D_{1 \to \theta}$ is a natural analog of the (discrete-time) down walk $D_{r \to k}$ introduced earlier (i.e., removing $r-k$ elements uniformly at random).

Let $\mu_1 = \mu$ be the original distribution, and $\mu_\theta = \mu_1 D_{1 \to \theta}$ be the distribution resulted from the down walk. Note that $\E[\mu_\theta]{|S|} = \theta \, \E[\mu_1]{|T|}$, and thus $\theta$ can be understood as the expected relative size of the subset (analogous to $\mu_k$ on $k$-subsets).

These Markov kernels $D_{1\to \theta}$ in fact describes a stochastic process $(X_t)_{t\in[0,1]}$ on $2^V$ 
which we simply refer to it as the \emph{continuous-time down walk}.
It is formally defined as follows: $X_0$ is a random subset generated from $\mu_1$, and for each $t \in [0,1]$,
\begin{align}
	X_t = \{v \in V: v \in X_0 \wedge r_v > t \}
\end{align}
where $\{r_v\}_{v \in V}$ is a collection of independent random variables uniformly distributed on $[0,1]$.
In particular, $X_1 = \emptyset$. 
In other words, we place a clock at each element $v$ in $X_0$ which rings at a random time $r_v$ independently and uniformly distributed on $[0,1]$, and when the clock rings $v$ is removed.
The continuous-time down walk $(X_t)_{t\in[0,1]}$ is a decreasing process and a time-inhomogeneous Markov process.
We observe that for any $t \in [0,1]$, 
it holds
\begin{align*}
    \Pr{ X_t = S \mid X_0 = T } = D_{1 \to (1-t)} (T,S),
\end{align*}
and the distribution of $X_t$ is exactly $\mu_{1-t}$.

As in the analysis for matroids, we need the analogous notions for the up walks.
Mathematically, they are just Markov kernels $U_{\theta \to 1}$ adjoint to the down walk kernels $D_{1 \to \theta}$ (i.e., satisfying an analog of \eqref{eq:down-up-adjoint}).
However, it would be more intuitive to understand them through the time reversal of the continuous-time down walk, which we call the \emph{continuous-time up walk} and denote it by $(Y_\theta)_{\theta\in[0,1]}$.
Specifically, a path $(Y_\theta)_{\theta\in[0,1]}$ of the continuous-time up walk can be generated by sampling $Y_1$ from $\mu_1$, and putting independent clocks at each element in $Y_1$, so that when the clock at $v$ rings we add $v$ to the current set.

From the general Markov process theory, the continuous-time up walk $(Y_\theta)_{\theta\in[0,1]}$ is also a time-inhomogeneous Markov process and, crucially, it depends on the target distribution $\mu_1$ (observe that this feature is also shared by the matroid up walks).
It is an increasing process and its final state $Y_1$ is distributed as $\mu_1$.
Moreover, both $Y_\theta$ and $X_{1-\theta}$ have the same distribution $\mu_\theta$.
The transitions of $(Y_\theta)_{\theta \in[0,1]}$ are described by the Markov kernels $U_{\theta \to 1}$:
\begin{align*} 
    \Pr{Y_1 = T \mid Y_\theta = S} = U_{\theta \to 1}(S,T).
\end{align*}
In particular, $D_{1 \to \theta}$ and $U_{\theta \to 1}$ are adjoint to each other.

Given the Markov kernels $D_{1 \to \theta}$ and $U_{\theta \to 1}$ for the continuous-time down and up walks respectively, we are now ready to define the \emph{continuous-time down-up walk}, whose Markov kernel is given by $P_{1 \leftrightarrow \theta} = D_{1 \to \theta} U_{\theta \to 1}$ where $\theta \in [0,1]$ is a parameter.
A moment of thought reveals that the continuous-time down-up walk consists of two steps, where in the first step we drop each element from the current subset independently with probability $1-\theta$, and in the second step we try to add back elements according to a suitable conditional distribution.
It turns out that this is \emph{exactly} the field dynamics introduced in \cite{chen2021rapid}. 
In other words, what field dynamics does is exactly running the down-up walk but in continuous time.

With this understanding in mind, to analyze mixing properties of the field dynamics, we could simply follow the inductive approach taken for analyzing the (discrete-time) down-up walk.
The only distinction is that the inductive step involving taking differences of $P_{r \leftrightarrow k}$ for adjacent $k$'s now becomes taking derivative of $P_{1 \leftrightarrow \theta}$ with respect to $\theta$.

We will give a complete proof of rapid mixing in \cref{subsec:information-theoretic} soon assuming entropic stability.
Here we only explain the high-level idea.
Let $\nu$ be a distribution over $2^V$.
To prove rapid mixing of the field dynamics, we need to show that the $\chi^2$ or KL divergence $D( \cdot \,\Vert\, \cdot )$ from $\nu$ to $\mu$ contracts along the continuous-time down walk.
Recall that $\mu_\theta = \mu D_{1 \to \theta}$ and let
$\nu_\theta = \nu D_{1 \to \theta}$.
We wish to show that for all $\theta \in [0,1]$,
\begin{align}\label{eq:half-step-contraction}
    D \left( \nu_\theta \,\Vert\, \mu_\theta \right)
    \le (1-\gamma(\theta)) D \left( \nu \,\Vert\, \mu \right),
\end{align}
where $\gamma(\theta)$ describes the contraction rate.
Rapid mixing of the field dynamics follows immediately from \eqref{eq:half-step-contraction}.
The local-to-global argument essentially corresponds to upper bounding the derivative of the divergence $D \left( \nu_\theta \,\Vert\, \mu_\theta \right)$ with respect to $\theta$. Namely, if we can prove for some $\alpha(\theta)$ that 
\begin{align}\label{eq:ov-entropic-stability}
    \frac{\dif}{\dif \theta} D \left( \nu_\theta \,\Vert\, \mu_\theta \right) 
    \le \alpha(\theta) \left( D \left( \nu \,\Vert\, \mu \right) - D \left( \nu_\theta \,\Vert\, \mu_\theta \right) \right),
\end{align}
then by Gr\"{o}nwall's inequality, \eqref{eq:half-step-contraction} immediately follows.
The inequality \eqref{eq:ov-entropic-stability} is basically the entropic stability condition introduced in \cite{CE22}, a notion generalizing both spectral and entropic independence.

\subsection{An information-theoretic view of localization schemes}
\label{subsec:information-theoretic}
In this subsection, we aim to provide an intuitive interpretation of the localization schemes introduced in \cite{CE22}, 
requiring minimal knowledge of stochastic calculus.
Our approach can be seen as the information-theoretic view of stochastic localization \cite{EM22}, extended also to the field dynamics.

This section is organized as follows.
In \cref{subsubsec:noising-denoising} we study localization schemes by introducing a pair of noising and denoising Markov processes which are time reversals of each other, and 
in \cref{subsubsec:kernel-chain} we define the associated Markov kernels as well as the Markov chains induced by them; our running examples are the field dynamics in the discrete space $2^V$ (applied to the hardcore model) and the proximal sampler in the continuous space $\R^n$ (applied to the Ising model).
In \cref{subsubsec:app-conserv-ent}, we introduce the approximate conservation of entropy which implies, for example, rapid mixing of the field dynamics and proximal sampler, and furthermore rapid mixing of the Glauber dynamics under mild extra assumptions.
Finally, in \cref{subsubsec:ent-stab} we define entropic stability introduced in \cite{CE22} (stated slightly differently) which implies approximate conservation of entropy; this implication can be viewed as a continuous analog of the local-to-global analysis for matroid down-up walks.

\subsubsection{Noising and denoising processes}
\label{subsubsec:noising-denoising}
Let $\mu$ represent the target distribution over a state space $\XX \subseteq \R^n$, from which we seek to generate random samples.
For simplicity we assume $\XX$ is \emph{finite} in this paper, which is sufficient for our applications in hardcore and Ising models.

\paragraph{Noising Process.} 
Consider a stochastic process $(X_t)_{t \in \TT}$ taking values in $\XX$, 
where $X_0$ is drawn from the target distribution $\mu$.
This process progressively modifies the distribution over time, and is referred to as the \emph{noising process} (also called \emph{forward process} or \emph{down walk} under various settings).
In this paper, we focus on the continuous-time case where $\TT = [0,1]$,  though extending this to the discrete-time case, where $\TT = \{0,1,\dots, T\}$, $T \in \N$, is straightforward.
The stochastic process $(X_t)$ we consider here is a \emph{Markov process}, though it is generally \emph{not} time-homogeneous. 
For simplicity we assume the law of $X_1$ is a Dirac measure concentrated at $\*0$; in other words, $(X_t)$ always arrives at the origin at the end of the process.

\paragraph{Denoising Process.}
Now, consider the time reversal of $(X_t)_{t \in [0,1]}$,  denoted by $(Y_\theta)_{\theta \in [0,1]}$.
The time-reversed process $(Y_\theta)_{\theta \in [0,1]}$ is also a \emph{Markov process}, though, like the original process, it is generally \emph{not} time-homogeneous. 
We refer to $(Y_\theta)_{\theta \in [0,1]}$ as the \emph{denoising process} (also called \emph{backward process} or \emph{up walk} under various settings); crucially, it depends on $\mu$, the distribution of $Y_1$.
Note that we have the distributional equality $Y_\theta \overset{d}{=} X_{1-\theta}$.
The initial point $Y_0 = \*0$ is set to the origin, which corresponds to $X_1$ in the noising process.
This denoising process is called an \emph{observation process} in \cite{Mon23} and is more important in the analysis.

\paragraph{\underline{\normalfont\textit{Examples of noising and denoising processes.}}}
Before moving on, we look at some examples first.
The important noising/denoising processes that will be considered in this paper is the continuous-time down/up walk and the Brownian bridge/F\"ollmer processes.
\begin{definition}[continuous-time down and up walks] \label{definition-decreasing-process}
    Let $\mu$ be a distribution over $2^V$ where  $V=[n]$.
The \emph{continuous-time down walk} $(X_t)_{t\in[0,1]}$ with initial distribution $X_0\sim \mu$ is a continuous-time stochastic process constructed as follows.
Each $v\in V$ is associated with an independent random variable $r_v$, uniformly distributed on $[0,1]$.
    For each $t \in [0,1]$, the process is defined by:
    \[
    X_t = \{v \in V: v \in X_0 \wedge t < r_v \}.
    \]
    As $t$ increases, elements  are progressively removed from $X_t$, until eventually $X_1 = \emptyset$. 

    The associated denoising process $(Y_\theta)_{\theta \in[0,1]}$, i.e., the \emph{continuous-time up walk} will be defined as a revealing or observing process. 
    Let $Y_0 = \emptyset$ and $Y_1 \sim \mu$ be sampled in the beginning.
    Each vertex $v \in V$ is associated with an independent random variable $s_v$, uniformly distributed on $[0,1]$.
    Then, for each $\theta \in [0,1]$, the process is defined by:
    \begin{align*}
        Y_\theta = \set{v \in V: v \in Y_1 \land \theta > s_v}.
    \end{align*}
    As $\theta$ increases, the elements in $Y_1$ is revealed progressively.
\end{definition}

We remark that the continuous-time up walk $(Y_\theta)_{\theta\in[0,1]}$ is a Markov process since it is the time reversal of $(X_t)_{t\in[0,1]}$, even though the way we describe it does not indicate this.

The continuous time down/up walks are defined in the discrete space $2^V$ for set families.
We now consider the analog of them in the continuous space $\R^n$.
Let $\mu$ be the target distribution in $\R^n$.
In the discrete case of set families, the continuous-time down walk starts from a random subset, gradually removes elements, and eventually reaches the empty set at time $t=1$. 
In $\R^n$, a natural analog of the empty set is the origin point $\*0$, and starting from a random point $X \sim \mu$, we wish to design a random continuous path (i.e., a stochastic process) connecting $X$ and the origin $\*0$.
This motivates us to consider the \emph{(standard) Brownian bridge process}, which is the (standard) Brownian motion $(B_t)_{t \in [0,1]}$ conditioned on $B_1 = \*0$.
The following standard fact about the Brownian bridge process is helpful to us.

\begin{fact}\label{fact:BB}
    Let $(B_t)_{t \in [0,1]}$ denote the standard Brownian motion. 
    Then the continuous-time stochastic process $(B^{\mathrm{br}}_t)_{t \in [0,1]}$ defined as
    \[
        B^{\mathrm{br}}_t = (1-t) B_{\frac{t}{1-t}} 
    \]
    is a standard Brownian bridge process whose probability distribution is the conditional distribution of $(B_t)_{t \in [0,1]}$ subject to the condition that $B_1 = \*0$.
    Similarly, the process $(B^{\mathrm{br}'}_t)_{t \in [0,1]}$ defined as
    \[
        B^{\mathrm{br}'}_t = t B_{\frac{1-t}{t}} 
    \]
    is also a standard Brownian bridge process.
\end{fact}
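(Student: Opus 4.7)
The plan is to identify the law of $(B^{\mathrm{br}}_t)_{t \in [0,1]}$ with that of the standard Brownian bridge via the Gaussian-process characterization: the standard Brownian bridge is the unique centered continuous Gaussian process on $[0,1]$ with covariance $s(1-t)$ for $s \le t$, starting and ending at $0$. Since $B^{\mathrm{br}}_t = (1-t) B_{t/(1-t)}$ is obtained by a deterministic affine rescaling of Brownian motion evaluated at a single time, its finite-dimensional joint distributions are automatically jointly Gaussian; it therefore suffices to verify the mean, the covariance, and the boundary behavior.

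First I would compute directly: for each $t \in [0,1)$, $\E{B^{\mathrm{br}}_t} = (1-t)\,\E{B_{t/(1-t)}} = 0$, and for $0 \le s \le t < 1$, using $s/(1-s) \le t/(1-t)$ together with the Brownian covariance $\Cov(B_u, B_v) = \min(u,v)$,
\[
\Cov(B^{\mathrm{br}}_s, B^{\mathrm{br}}_t) = (1-s)(1-t)\cdot \frac{s}{1-s} = s(1-t),
\]
which is exactly the Brownian-bridge covariance. The endpoints are handled next: $B^{\mathrm{br}}_0 = 0$ is immediate, while $B^{\mathrm{br}}_1 = 0$ and the almost-sure continuity at $t = 1$ follow, after the substitution $u = t/(1-t)$, from $\lim_{u\to\infty} B_u/(1+u) = 0$ almost surely, which is the strong law of large numbers for Brownian motion. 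Combined with the pathwise continuity on $[0,1)$ inherited from $B$, this identifies the law of $(B^{\mathrm{br}}_t)_{t \in [0,1]}$ with that of a standard Brownian bridge; the claimed identification with the conditional law of $(B_t)_{t\in[0,1]}$ given $B_1 = \*0$ is then the standard characterization of the Brownian bridge as the unique centered continuous Gaussian process with the above covariance.

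For the second process, the cleanest route is a pathwise substitution. Setting $s = 1-t$ in the definition of $B^{\mathrm{br}'}$ yields $B^{\mathrm{br}'}_t = (1-s)\,B_{s/(1-s)} = B^{\mathrm{br}}_{1-t}$ with respect to the same underlying Brownian motion, so $(B^{\mathrm{br}'}_t)_{t \in [0,1]}$ is distributed as the time-reversal $(B^{\mathrm{br}}_{1-t})_{t \in [0,1]}$. Because the covariance $s(1-t)$ is invariant under the involution $(s,t) \mapsto (1-t, 1-s)$, the time-reversed Brownian bridge is again a standard Brownian bridge, which finishes the claim. The only step in the whole argument that is not a direct Gaussian moment computation is the almost-sure continuity at the endpoint via the strong law for Brownian motion; this is the one place where a little care is required.
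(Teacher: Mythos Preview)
Your argument is correct and is the standard covariance-computation proof of this well-known representation of the Brownian bridge. The paper, however, does not prove this statement at all: it is stated as a \emph{Fact} (``The following standard fact about the Brownian bridge process is helpful to us'') and left without proof, so there is nothing to compare against. Your write-up would serve perfectly well as a supplied proof.
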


We now define the continuous-space analog of down and up walks using Brownian bridges. 
For the noising process, we run the Brownian bridge but further assuming starting at a random point (generated from $\mu$ plus a pushforward step); we still call it the Brownian bridge process for simplicity.
For the denoising process, we recover the F{\"o}llmer process and stochastic localization (with a fixed driving matrix).

\begin{definition} \label{definition-brownian-bridge}
Let $(B^{\mathrm{br}}_t)_{t \in [0,1]}$ be a $r$-dimensional standard Brownian bridge.
Let $\mu$ be an initial distribution over $\XX \subseteq \mathbb{R}^n$,
and let $L \in \mathbb{R}^{r \times n}$ be a driving matrix. 
The \emph{Brownian bridge process} $(X_t)_{t \in [0,1]}$ with driving matrix $L$
starts at $X_0 \sim \mu$, and for all $t \in (0,1]$, evolves as
\begin{align}\label{eq:BBP}
    X_t = (1-t) L X_{0} + B^{\mathrm{br}}_t.
\end{align}
At time $t = 0$, it jumps from $X_0$ to $L X_0$.
Then, as $t$ increases, it evolves from $LX_0$ to $\*0$ via a Brownian bridge.

Let $(B_\theta^{\mathrm{br'}})_{\theta \in [0,1]}$ be a $r$-dimensional standard Brownian bridge.
The \emph{F{\"o}llmer process} $(Y_\theta)_{\theta\in [0,1]}$ 
starts at $Y_0 = 0$ and is defined by
first sample $Y_1 \sim \mu$ and for $\theta \in [0,1)$, it evolves as
\begin{align}\label{eq:FP}
    Y_\theta = \theta L Y_1 + B^{\mathrm{br'}}_\theta.
\end{align}
As $t$ increases, it evolves from $\*0$ to $L Y_1$.
And, finally, when $\theta = 1$, it jumps from $L Y_1$ to $Y_1$.
\end{definition}

The F{\"o}llmer process \cite{follmer2005,follmer2006} has attracted a lot of attention in recent years and is especially helpful in establishing various functional inequalities, see for examples \cite{Lehec13,ELS20,EM20,Mik21,KP23}.
The F{\"o}llmer process is also known to be closely related to stochastic localization, see \cite[Lemma 4.1]{MS24}.
Here our treatment of the F{\"o}llmer process is slightly informal. 
For a standard and formal definition, one needs to write down the stochastic differential equation and show the solution has the same law as the stochastic process defined by \eqref{eq:FP}.
For our purposes, however, we only need the existence of the F{\"o}llmer process as a Markov process and the explicit forms of Markov kernels associated with it (defined in the next subsection). This is sufficient for us to derive rapid mixing of associated Markov chains and further the Glauber dynamics.

\begin{remark}
    By \cref{fact:BB}, the Brownian bridge process \eqref{eq:BBP} can be equivalently represented as 
    \begin{align*}
        X_t = (1-t) \left( L X_{0} + B_{\frac{t}{1-t}} \right).
    \end{align*}
    Taking $\tilde{X}_t = (1+t) X_{\frac{t}{1+t}}$ for $t \in (0,\infty)$, we obtain
    \begin{align}\label{eq:tilde-X}
        \tilde{X}_t = LX_0 + B_t,
    \end{align}
    which is a more common representation in literature for the noising process.
    Similarly, the F{\"o}llmer process \eqref{eq:FP} can be equivalently represented as
    \begin{align*}
        Y_\theta = (1-\theta) \left( \frac{\theta}{1-\theta} L Y_1 + B_{\frac{\theta}{1-\theta}} \right).
    \end{align*}
    Taking $\tilde{Y}_t = (1+t) Y_{\frac{t}{1+t}}$ for $t \in (0,\infty)$, we obtain
    \begin{align}\label{eq:tilde-Y}
        \tilde{Y}_t = t L Y_1 + B_t,
    \end{align}
    which is again a more common representation in literature for the denoising process or stochastic localization \cite{EM22,KP23}.
    We remark that \cref{eq:tilde-Y} is called the \emph{linear observation process} in \cite{Mon23}.
\end{remark}

\subsubsection{Associated Markov kernels and Markov chains}
\label{subsubsec:kernel-chain}
Given a pair of noising process $(X_t)_{t \in [0,1]}$ and denoising process $(Y_\theta)_{\theta \in [0,1]}$, we study the Markov kernels associated with them and define a family of discrete-time Markov chains from these kernels.
These Markov chains are not only important themselves, but also allow us to derive rapid mixing of the Glauber dynamics.
A summary of noising and denoising processes discussed in this paper, and their associated Markov chains is provided in \cref{table:summary}.

{
\renewcommand{\arraystretch}{1.4}
\begin{table}[t]
\centering
\begin{tabular}{|c|c|c|}
\hline
Markov chain & \makecell{Field dynamics\\(\Cref{def:field-dynamics})} & \makecell{Proximal sampler\\(\Cref{def:proximal-sampler})} \\ \hhline{|=|=|=|}
Domain\tablefootnote{Though we only apply the proximal sampler to discrete distributions, it is applicable to continuous-space distributions.}  & $2^{[n]}$ & $\mathbb{R}^n$ \\ \hline
Null information state\tablefootnote{The state that $X_1$ equals almost surely.} & $\emptyset$ & $\*0$ \\ \hline 
Noising process & \makecell{Continuous-time down walk\\
(\Cref{definition-decreasing-process})} & \makecell{Brownian bridge process\\(\Cref{definition-brownian-bridge})} \\ \hline
Denoising process & \makecell{Continuous-time up walk\\
(\Cref{definition-decreasing-process})} & \makecell{F{\"o}llmer process \\(\Cref{definition-brownian-bridge})} \\ \hline
\end{tabular}
\caption{Two examples of localization schemes: Field dynamics and Proximal sampler}
\label{table:summary}
\end{table}
}

\paragraph{Markov Kernels.}
Both the noising process $(X_t)$ and the denoising process $(Y_\theta)$ can be characterized by a collection of double-indexed Markov kernels $(P_{\theta \to \eta})_{0\le \eta \le \theta \le 1}$ and their time reversals $(Q_{\eta \to \theta})_{0\le \eta \le \theta \le 1}$. 
For any $0 \le \eta \le \theta \le 1$, the following holds:
\begin{align*}
	P_{\theta \to \eta}(x, \cdot) 
	:= \Pr{ X_{1-\eta} \in \cdot \mid X_{1-\theta} = x }
	= \Pr{ Y_\eta \in \cdot \mid Y_\theta = x }; \\
    Q_{\eta \to \theta}(x, \cdot)
    := \Pr{ Y_\theta \in \cdot \mid Y_\eta = x } 
    = \Pr{ X_{1-\theta} \in \cdot \mid X_{1-\eta} = x }.
\end{align*}
Observe that for all $\theta \in [0,1]$, it holds $Q_{\theta \to \theta} = P_{\theta \to \theta} = \mathrm{id}$, and for all $0 \le \eta \le \zeta \le \theta \le 1$, it holds $Q_{\eta \to \theta} = Q_{\eta \to \zeta} Q_{\zeta \to \theta}$ and $P_{\theta \to \eta} = P_{\theta \to \zeta} P_{\zeta \to \eta}$.

\begin{remark}
    The measure-valued stochastic process $\left( Q_{\theta \to 1}(Y_\theta,\cdot) \right)_{\theta \in [0,1]}$ where $(Y_\theta)_{\theta \in [0,1]}$ is the denoising process, corresponds to the \emph{localization process} $(\nu_\theta)_{\theta \in [0,1]}$ in~\cite{CE22}. Specifically, when $\theta = 0$, the measure $Q_{\theta \to 1}(Y_\theta,\cdot) = Q_{0 \to 1}(\*0,\cdot)$ is always the target distribution $\mu$ since $Y_0 = \*0$; and when $\theta = 1$, the measure $Q_{\theta \to 1}(Y_\theta,\cdot) = \*\delta_{Y_1}$ is a random Dirac measure where $Y_1$ is distributed as $\mu$. 
    However, it is non-trivial to rigorously verify that this is indeed a localization process. 
    We remark that in \cite{CE22} the notion of localization processes is defined abstractly and generally, without the presence of noising and denoising processes.
\end{remark}

\paragraph{Densities.}
Recall that $X_0$ is generated from the target distribution $\mu$.
Let $\mu_\theta$ denote the law of $Y_\theta \overset{d}{=} X_{1-\theta}$ (so $\mu_1 = \mu$); that is, $\mu_\theta := \mu_1 P_{1 \to \theta}$.
More generally, it holds 
\begin{align*}
    \mu_\theta P_{\theta \to \eta} = \mu_\eta
    \qandq
    \mu_\eta Q_{\eta \to \theta} = \mu_\theta.
\end{align*}
We remark that $Q_{\eta \to \theta}$ and $P_{\theta \to \eta}$ are adjoint to each other, i.e., for any two integrable functions $f,g: \R^n \to \R$, it holds $\ip{f}{P_{\theta \to \eta} g}_{\mu_\theta} = \ip{Q_{\eta \to \theta} f}{g}_{\mu_\eta}$.

\paragraph{Markov Chains.}
The noising and denoising processes allow us to define a family of discrete-time Markov chains associated with them.
For any $\theta \in [0,1]$, we define the Markov kernel
\begin{align*}
	K_{1 \leftrightarrow \theta} = P_{1 \to \theta} Q_{\theta \to 1}.
\end{align*}
Since $P_{1 \to \theta}$ and $Q_{\theta \to 1}$ are adjoint to each other, the Markov kernel $K_{1 \leftrightarrow \theta}$ is reversible with respect to $\mu = \mu_1$.
Observe that the chain $P_{1 \to 1} = \mathrm{id}$ does not move at all, while the chain $P_{1 \to 0}$ moves to the stationary distribution $\mu$ in a single step.

\paragraph{\underline{\normalfont\textit{Examples of associated Markov chains.}}}
The examples of noising and denoising processes can be used to define families of discrete-time Markov chains as described above.
Here, we do short calculations to see how the continuous-time down-up walks recover the field dynamics \cite{chen2021rapid}, and the Brownian bridge and F{\"o}llmer processes recover the proximal sampler \cite{lee2021structured}.

The continuous-time down-up walk is exactly the field dynamics introduced in \cite{chen2021rapid}.  
We adopt the continuous-time down walk as the noising process starting from $X_0\sim\mu$.
For $\theta \in [0,1]$. The Markov kernel of this process, $P_{1 \to \theta}(S,\cdot)$, independently discards each element $v \in S$ with probability $1-\theta$.  
Correspondingly, the Markov kernel $Q_{\theta \to 1}(S,\cdot)$ of the continuous-time up walk samples a subset $T \supseteq S$ with probability proportional to $(1-\theta)*\mu (T)$, where $(1-\theta)*\mu$ is obtained by magnetizing $\mu$ with local fields $1-\theta$ (see \eqref{eq:def-magnetization}). In particular, we have:
\begin{align}
\forall T \subseteq [n], \quad \tp{(1-\theta)*\mu}(T) \propto (1-\theta)^{\abs{T}} \cdot \mu(T). \label{eq:def-mu-external-field}
\end{align}
Specifically, the Markov kernel for this denoising process is given by:
\begin{align*}
Q_{\theta \to 1}(S,T) \propto \Pr[]{X_0 = T \text{ and } X_{1-\theta} = S} = \mu(T) (1-\theta)^{\abs{T}-\abs{S}} \theta^{\abs{S}},
\end{align*}
which means $Q_{\theta \to 1}(S,\cdot)$ is the distribution $(1-\theta) * \mu$ conditioned on all $v\in S$ being occupied.
The continuous-time down-up walk $K_{1 \leftrightarrow \theta} = P_{1 \to \theta} Q_{\theta \to 1}$ then recovers the field dynamics, formally defined as below.
\begin{definition}[Field dynamics, \cite{chen2021rapid}]\label{def:field-dynamics}
    Let $\mu$ be a distribution over $2^V$, where $V$ is a finite ground set. Let $\theta \in (0,1)$ be a parameter. The field dynamics with parameter $\theta$ is a Markov chain with stationary distribution $\mu$.
    Given a subset $T \in 2^V$, it updates $T$ to a new subset $T'$ as:
    \begin{enumerate}
        \item (Noising Step) Remove each $v \in T$ with probability $1-\theta$ independently, and denote the resulting subset by $S \subseteq T$;
        \item (Denoising Step) Draw a new subset $T' \in 2^V$ from the distribution $(1-\theta) * \mu$ conditioned on $T' \supseteq S$.
    \end{enumerate}
\end{definition}

The Brownian bridge and F{\"o}llmer processes recover the proximal sampler introduced in \cite{lee2021structured}.
%
Consider the Brownian bridge process $(X_t)_{t \in [0,1]}$ with driving matrix $L \in \mathbb{R}^{r\times n}$ that starts from $X_0 \sim \mu$.
For a fixed $\theta \in [0,1)$ and configuration $\*x \in \{-1,+1\}^n$, the noising process $P_{1 \to \theta}(\*x,\cdot)$ first adds Gaussian noise to vector $L \*x$ and then scales it by a factor of $\theta$. The Gaussian noise follows a multivariate Gaussian distribution with covariance matrix $\frac{1-\theta}{\theta} I_r$, where $I_r$ is the $r\times r$ identity matrix. For any state $\*y \in \mathbb{R}^r$, the F\"ollmer process $Q_{\theta \to 1}(\*y,\cdot)$, by definition, satisfies
\begin{align*}
Q_{\theta \to 1}(\*y,\*z) &\propto \mu(\*z)\exp\tp{-\frac{1}{2\theta (1-\theta)} \tp{\theta L \*z-\*y}^\intercal \tp{\theta L \*z - \*y} }\\
&\propto \mu(\*z)\exp\tp{-\frac{\theta}{2(1-\theta)}\*z^\intercal L^\intercal L \*z + \frac{1}{1-\theta} \*y^\intercal L \*z }, \quad \forall \*z \in \XX.
\end{align*}
The Markov kernel $K_{1 \leftrightarrow \theta} = P_{1 \to \theta} Q_{\theta \to 1}$ corresponds to the proximal sampler defined below.
\begin{definition}[Proximal sampler, \cite{lee2021structured}]\label{def:proximal-sampler}
    Let $\mu$ be a distribution over a finite state space $\XX \subseteq \mathbb{R}^n$.
    Let $\theta \in (0,1)$ be a parameter and $L \in \mathbb{R}^{r\times n}$ be a matrix. 
    The proximal sampler with parameter $\theta$ and driving matrix $L$ is a Markov chain with stationary distribution $\mu$.
    Given a vector $\*x \in \XX$, it updates $\*x$ to a new vector $\*z$ as:
    \begin{enumerate}
        \item (Noising Step) Sample $\*y \in \R^r$ from $\+{N}(\theta L\*x, \theta(1-\theta) I_r)$, which is the Gaussian distribution on $\mathbb{R}^r$ with mean $\theta L \*x$ and covariance $\theta(1-\theta) I_r$;
        \item (Denoising Step) Sample a new vector $\*z \in \XX$ from the distribution $\nu$ defined over $\XX$ as:
        \begin{align*}
            \nu(\*z) \propto \mu(\*z) \exp\tp{-\frac{\theta}{2(1-\theta)} \*z^\intercal L^\intercal L\*z + \frac{1}{1-\theta} \*y^\intercal L \*z}, \quad \forall \*z \in \XX.
        \end{align*}
    \end{enumerate}
\end{definition}

We now apply the proximal sampler to the Ising model.
Let $\XX = \{\pm 1\}^n$ and let distribution $\mu(\*x) \propto \exp\tp{\frac{1}{2} \*x^\intercal L^\intercal L \*x + \*h^\intercal \*x}$ for $\*x \in \XX$ be the Gibbs distribution of the Ising model with interaction matrix $J = L^\intercal L$ and external fields $\*h$.
Take the same matrix $L$ as the driving matrix in the proximal sampler.
Then for any $\theta \in (0,1)$ and $\*y \in \R^r$, we have
\begin{align*}
    Q_{\theta \to 1}(\*y, \*z) \propto \exp\tp{\frac{1}{2} \tp{1-\frac{\theta}{1-\theta}} \*z^\intercal L^\intercal L \*z + \*h_\star^\intercal \*z}, \quad \forall \*z \in \{\pm 1\}^n
\end{align*}
where $\*h_\star = \*h + \frac{1}{1-\theta} L^\intercal \*y$; 
that is, $Q_{\theta \to 1}(\*y,\cdot)$ is the Gibbs distribution of another Ising model.
In particular, if we take $\theta = 1/2$ then $Q_{\theta \to 1}(\*y, \*z) \propto \exp\tp{\*h_\star^\intercal \*z}$ becomes a product distribution; this essentially recovers the Hubbard--Stratonovich transformation, see e.g. \cite{AKV24}.
Therefore, under this setting the proximal sampler takes a pleasant form.
From $\*x \in \{\pm 1\}^n$ it moves to $\*z \in \{\pm 1\}^n$ by:
\begin{enumerate}
    \item (Noising Step) Sample $\*y \in \R^r$ from $\+{N}(L\*x, I_r)$;
    \item (Denoising Step) Sample a new vector $\*z \in \{\pm 1\}^n$ from the product distribution $\nu$ defined as:
        \begin{align*}
            \nu(\*z) \propto \exp\tp{(\*h + L^\intercal \*y)^\intercal \*z}, \quad \forall \*z \in \{\pm 1\}^n.
        \end{align*}
\end{enumerate}

\subsubsection{Approximate conservation of entropy}
\label{subsubsec:app-conserv-ent}

The framework of localization scheme provides a rich toolbox for analyzing the mixing time of Markov chains.
In this section, we will introduce these tools aligning with the context of information-theoretic view for localization scheme.

\paragraph{Relative Entropy.}
Let $f: \XX \to \R_{\ge 0}$ be a non-negative function. 
Then, for a convex function $\phi:\mathbb{R} \to \mathbb{R}$,  the (relative) \emph{$\phi$-entropy} functional is defined as:
\begin{align*}
    \PhiEnt[\mu]{f} := \textstyle \E[\mu]{ \phi(f) } - \phi(\E[\mu]{f}).
\end{align*}

\begin{remark}
    Typical $\phi$-entropy functionals are variance and relative entropy.
    Specifically:
    \begin{itemize}
        \item When $\phi(x) = x^2$, the $\phi$-entropy corresponds to the variance functional $$\PhiEnt[\mu]{f}=\Var[\mu]{f} := \E[\mu]{f^2} - \E[\mu]{f}^2.$$
        \item When $\phi(x) = x\log x$, the $\phi$-entropy  corresponds to relative entropy functional $$\PhiEnt[\mu]{f}=\Ent[\mu]{f} := \E[\mu]{f\log f} - \E[\mu]{f}\log \E[\mu]{f}.$$
    \end{itemize}
\end{remark}

For $\theta \in [0,1]$, define $f_\theta: \XX \to \R_{\ge 0}$ as $f_\theta := Q_{\theta \to 1} f$ (so $f_1 = f$); that is, $f_\theta(x) = \E[Q_{\theta \to 1}(x,\cdot)]{f}$.
Observe that $\E[\mu_\theta]{f_\theta} = \E[\mu]{f}$. 
More generally, for all $0 \le \eta \le \theta \le 1$, it holds that 
\begin{align*}
    Q_{\eta \to \theta} f_\theta = f_\eta
    \qandq
    P_{\theta \to \eta} f_\eta = f_\theta.
\end{align*}

The following law of total entropy for 
$\PhiEnt[\mu_\theta]{f_\theta}=\textstyle \E[\mu_\theta]{ \phi(f_\theta) } - \phi(\E[\mu_\theta]{f_\theta})$
plays a key role.

\begin{lemma}[Law of Total Entropy]
\label{lem:total-ent}
    For any $0 \le \eta \le \theta \le 1$, it holds
    \begin{align*}
        \PhiEnt[\mu_\theta]{f_\theta} = \PhiEnt[\mu_\eta]{f_\eta} + \E[x \sim \mu_\eta] { \PhiEnt[Q_{\eta \to \theta}(x,\cdot)]{f_\theta} }.
    \end{align*}
\end{lemma}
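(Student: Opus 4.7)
The plan is to verify the identity by direct expansion of both sides using the definition of $\phi$-entropy and the compatibility relations between the Markov kernels $Q_{\eta \to \theta}$, the marginals $\mu_\theta$, and the functions $f_\theta$. The key observation is that the mean $\E[\mu_\theta]{f_\theta}$ is conserved along $\theta$, which follows from the adjointness of $P_{\theta \to \eta}$ and $Q_{\eta \to \theta}$: since $\mu_\eta Q_{\eta \to \theta} = \mu_\theta$ and $Q_{\eta \to \theta} f_\theta = f_\eta$, we have
\[
    \E[\mu_\theta]{f_\theta} = \ip{1}{f_\theta}_{\mu_\theta} = \ip{1}{Q_{\eta \to \theta} f_\theta}_{\mu_\eta} = \E[\mu_\eta]{f_\eta},
\]
so the nonlinear term $\phi(\E[\mu_\theta]{f_\theta})$ is the same on both sides and will cancel.

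The next step is to handle the linear term $\E[\mu_\theta]{\phi(f_\theta)}$ using the tower property. Specifically, I would rewrite
\[
    \E[\mu_\theta]{\phi(f_\theta)} = \E[x \sim \mu_\eta]{\, \E[Q_{\eta \to \theta}(x,\cdot)]{\phi(f_\theta)} \,},
\]
again using $\mu_\eta Q_{\eta \to \theta} = \mu_\theta$. Then inside the inner expectation I would add and subtract $\phi(\E[Q_{\eta \to \theta}(x,\cdot)]{f_\theta})$ so that the inner expression becomes precisely $\PhiEnt[Q_{\eta \to \theta}(x,\cdot)]{f_\theta}$ plus a leftover term $\phi(\E[Q_{\eta \to \theta}(x,\cdot)]{f_\theta})$. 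Using the identity $\E[Q_{\eta \to \theta}(x,\cdot)]{f_\theta} = (Q_{\eta \to \theta} f_\theta)(x) = f_\eta(x)$, the leftover term is exactly $\E[\mu_\eta]{\phi(f_\eta)}$.

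Combining these computations gives
\[
    \E[\mu_\theta]{\phi(f_\theta)} = \E[\mu_\eta]{\phi(f_\eta)} + \E[x \sim \mu_\eta]{\PhiEnt[Q_{\eta \to \theta}(x,\cdot)]{f_\theta}},
\]
and subtracting the common quantity $\phi(\E[\mu]{f})$ from both sides yields the stated decomposition. No serious obstacle is expected here: the identity is essentially a tower-law manipulation for $\phi$-entropies, and all three ingredients (mean conservation, the semigroup relation $Q_{\eta \to \theta} f_\theta = f_\eta$, and $\mu_\eta Q_{\eta \to \theta} = \mu_\theta$) are already recorded just above the statement. The only mild care needed is that $\phi$ is only assumed convex rather than, say, strictly convex or smooth, but the decomposition is purely algebraic and does not invoke any convexity (convexity will only be used later to ensure nonnegativity of the conditional $\phi$-entropy term).
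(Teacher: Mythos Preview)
Your proposal is correct and is exactly the standard tower-law argument. The paper in fact does not spell out a proof of this lemma (it is treated as a basic identity), but the random-variable formulation it introduces immediately afterwards, $\PhiEnt{F} = \PhiEnt{\E{F\mid G}} + \E{\PhiEnt{F\mid G}}$, unpacks to precisely your computation once one takes $F = f_\theta(Y_\theta)$ and $G = Y_\eta$.
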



\paragraph{Entropy Conservation and Tensorization.}
A key aspect of the theory of localization scheme is its ability to establish an approximate conservation of entropy.

\begin{definition}[Approximate Conservation of Entropy] \label{def:entropy-conservation}
    Given a time $\theta\in[0,1]$ and a parameter $K>0$, 
    a noising process $(X_t)_{t\in [0,1]}$ with $X_0\sim \mu$ is said to satisfy
    \emph{$K$-approximate conservation of $\phi$-entropy from time $\theta$} (or equivalently, the corresponding denoising process $(Y_\theta)_{\theta\in [0,1]}$ satisfies \emph{$K$-approximate conservation of $\phi$-entropy up to time $\theta$}) 
    if, for all $f: \+X \to \=R_{\geq 0}$, it holds
    \begin{align*}
        \PhiEnt[\mu]{f} \leq K\cdot \E[x\sim \mu_\theta]{\PhiEnt[Q_{\theta \to 1}(x,\cdot)]{f}},
    \end{align*}
    where 
    $Q_{\cdot\to\cdot}$ represents the Markov kernel of the corresponding denoising process $(Y_\theta)_{\theta \in [0,1]}$.
\end{definition}

The approximate conservation of $\phi$-entropy, when established for suitable $\phi$,
can imply the Poincar\'{e} or modified log-Sobolev inequalities for the Markov chains 
$K_{1 \leftrightarrow \theta} = P_{1 \to \theta} Q_{\theta \to 1}$,
thereby leading to rapid mixing of these chains.
For the specific noising and denoising processes we consider,
$K_{1 \leftrightarrow \theta}$  may be either the field dynamics or the proximal sampler.
Therefore, the corresponding approximate conservation of $\phi$-entropy may imply the rapid mixing of these samplers.

\begin{lemma} \label{lem:entropy-decay}
  Let $\phi: \mathbb{R} \to \mathbb{R}$ be a convex function.
  Suppose there exists a noising process $(X_t)_{t \in [0,1]}$ such that $X_0 \sim \mu$, and for a time $\theta \in [0, 1]$, the process $(X_t)_{t\in [0,1]}$ satisfies $(1/\kappa)$-approximate conservation of $\phi$-entropy from time $\theta$.
  Then, the corresponding Markov kernel $K_{1\leftrightarrow\theta} = P_{1\to\theta}Q_{\theta\to 1}$ admits $\phi$-entropy decay with rate $\kappa$.
  That is, for every function $f:\XX \to \mathbb{R}_{\geq 0}$, we have
  \begin{align}
    \PhiEnt[\mu]{K_{1\leftrightarrow\theta} f} \leq (1 - \kappa) \PhiEnt[\mu]{f}.
  \end{align}
\end{lemma}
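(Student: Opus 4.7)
The plan is to chain together three ingredients: the law of total $\phi$-entropy (\Cref{lem:total-ent}), the hypothesized $K$-approximate conservation of $\phi$-entropy (with $K=1/\kappa$), and the data-processing inequality for $\phi$-entropy under a Markov kernel. The picture is that one step of the chain $K_{1\leftrightarrow\theta}$ factors as first going ``halfway down'' via $Q_{\theta\to 1}$ and then going ``back up'' via $P_{1\to\theta}$, and we can control the $\phi$-entropy loss in each half-step separately.

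First I would fix a non-negative test function $f:\XX \to \R_{\ge 0}$ and set $f_\theta := Q_{\theta\to 1} f$, so that by definition of the kernel $K_{1\leftrightarrow\theta} f = P_{1\to\theta}\,Q_{\theta\to 1}\,f = P_{1\to\theta} f_\theta$. I would then apply \Cref{lem:total-ent} with $\eta=\theta$ and the top time equal to $1$, which yields
\begin{align*}
    \PhiEnt[\mu]{f} \;=\; \PhiEnt[\mu_\theta]{f_\theta} \;+\; \E[x \sim \mu_\theta]{\PhiEnt[Q_{\theta \to 1}(x,\cdot)]{f}}.
\end{align*}
Plugging this identity into the approximate conservation hypothesis
$\PhiEnt[\mu]{f} \le (1/\kappa)\,\E[x\sim\mu_\theta]{\PhiEnt[Q_{\theta\to 1}(x,\cdot)]{f}}$
and rearranging gives the half-step contraction
\begin{align*}
    \PhiEnt[\mu_\theta]{f_\theta} \;\le\; (1-\kappa)\,\PhiEnt[\mu]{f}.
\end{align*}

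Next I would show the data-processing inequality $\PhiEnt[\mu]{P_{1\to\theta} f_\theta} \le \PhiEnt[\mu_\theta]{f_\theta}$. Since $\mu_\theta = \mu\, P_{1\to\theta}$ (i.e. $P_{1\to\theta}$ pushes $\mu$ to $\mu_\theta$), expectations are preserved: $\E[\mu]{P_{1\to\theta} f_\theta} = \E[\mu_\theta]{f_\theta}$. For the nonlinear part, Jensen's inequality applied pointwise to the convex $\phi$ gives
\begin{align*}
    \phi\bigl((P_{1\to\theta} f_\theta)(x)\bigr) \;\le\; \bigl(P_{1\to\theta}\,\phi(f_\theta)\bigr)(x),
\end{align*}
and averaging against $\mu$ produces $\E[\mu]{\phi(P_{1\to\theta} f_\theta)} \le \E[\mu_\theta]{\phi(f_\theta)}$; subtracting the common constant $\phi(\E[\mu]{f})$ yields the claimed data-processing bound.

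Combining the two bounds finishes the proof: $\PhiEnt[\mu]{K_{1\leftrightarrow\theta} f} = \PhiEnt[\mu]{P_{1\to\theta} f_\theta} \le \PhiEnt[\mu_\theta]{f_\theta} \le (1-\kappa)\PhiEnt[\mu]{f}$. There is no genuinely hard step here; the only thing to be careful about is the correct placement of subscripts $\mu$ versus $\mu_\theta$ when invoking the total-entropy identity and when verifying that $P_{1\to\theta}$ is indeed expectation-preserving from $(\mu_\theta,f_\theta)$ back to $(\mu,K_{1\leftrightarrow\theta}f)$—this hinges on the adjointness $\mu_\theta P_{\theta\to\eta}=\mu_\eta$ that was recorded in \Cref{subsubsec:kernel-chain} and is really the content of the reversibility of $K_{1\leftrightarrow\theta}$ with respect to $\mu$.
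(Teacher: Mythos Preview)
Your proof is correct and follows essentially the same approach as the paper's: both combine the law of total $\phi$-entropy with the conservation hypothesis to obtain the half-step contraction $\PhiEnt[\mu_\theta]{f_\theta}\le(1-\kappa)\PhiEnt[\mu]{f}$, and then pass back to $\PhiEnt[\mu]{K_{1\leftrightarrow\theta}f}$ via a data-processing step. The only cosmetic difference is that the paper phrases everything in terms of random variables $F=f(Y_1)$ and conditional expectations $\E{F\mid Y_\theta}$, $\E{\,\cdot\mid X_0}$, whereas you work directly with the operators $Q_{\theta\to 1}$ and $P_{1\to\theta}$; the underlying identities are the same.
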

\Cref{lem:entropy-decay} is a simple consequence of the law of total entropy.
To prove \Cref{lem:entropy-decay}, we will consider $\phi$-entropy with respect to random variables.
Let $F, G$ be two random variables on the same probability space such that $F$ is $R_{\geq 0}$-valued.
We can define the $\phi$-entropy and conditional $\phi$-entropy on $F$ and $G$ as follows:
\begin{align}
  \label{eq:def-phi-ent-variable}
    \PhiEnt{F} &:= \E{\phi(F)} - \phi(\E{F}), \\
  \label{eq:def-phi-ent-variable-cond}
    \PhiEnt{F\mid G} &:= \E{\phi(F) \mid G} -  \phi(\E{F\mid G}).
\end{align}
\begin{proof}[Proof of \cref{lem:entropy-decay}]
  Let $X_0 \sim \mu$, $Y_\theta \sim P_{1\to \theta}(X_0, \cdot)$ and $Y_1 \sim Q_{\theta \to 1}(Y_\theta, \cdot)$ be random variables in the same probability space.
  Let $F = f(Y_1)$, and by the conservation of entropy, we have
  \begin{align*}
    \PhiEnt{F}
    &\leq \frac{1}{\kappa} \cdot \E{\PhiEnt{F \mid Y_\theta}} \\
    &= \frac{1}{\kappa} \tp{\PhiEnt{F} - \PhiEnt{\E{F \mid Y_\theta}}}, \tag{law of total entropy}
  \end{align*}
  which, after rearranging terms, implies
  \begin{align*}
    \PhiEnt{\E{F \mid Y_\theta}} \leq (1 - \kappa) \PhiEnt{F}.
  \end{align*}
  By another application of the law of total entropy, we have
  \begin{align*}
    \PhiEnt{\E{F\mid Y_\theta}}
    &= \E{\PhiEnt{\E{F\mid Y_\theta} \mid X_0}} + \PhiEnt{\E{\E{F\mid Y_\theta}\mid X_0}} \\
    &\geq \PhiEnt{\E{\E{F\mid Y_\theta}\mid X_0}} = \PhiEnt[\mu]{K_{1\leftrightarrow\theta} f}, \tag{non-negativity of $\phi$-entropy}
  \end{align*}
  where the last inequality is also known as the data processing inequality in information theory.
\end{proof}

When applied to the analysis of Glauber dynamics, 
another important consequence of the approximate conservation of entropy is a ``lifting'' of the mixing properties in a sub-critical regime up to the critical threshold. 
In particular, these mixing properties are captured by the abstraction of approximate tensorization of $\phi$-entropy,
which, with appropriate choices of $\phi$, corresponds to the Poincar\'{e} or modified log-Sobolev inequalities for Glauber dynamics.

\begin{definition}[Approximate Tensorization of Entropy] \label{def:approximate-tensorization}
    Given a parameter $C > 0$, a distribution $\mu$, supported on $\+X \subseteq \=R^n$, is said to satisfy \emph{$C$-approximate tensorization of $\phi$-entropy} if, for all $f: \+X \to \=R_{\geq 0}$, it holds
    \begin{align*}
        \PhiEnt[\mu]{f} \leq C \sum_{i=1}^n \E[x\sim \mu]{\PhiEnt[\mu^{x_{- i}}]{f}},
    \end{align*}
    where $x_{-i}$ denotes the partial configuration $x([n]\setminus \set{i})$.
\end{definition}

The notion of approximate tensorization of $\phi$-entropy  is closely related to the mixing time of Glauber dynamics.
\begin{itemize}
    \item \textbf{Variance case}:
    When $\phi(x) = x^2$, 
    the $\chi^2$-entropy corresponds to variance. 
    In this context, the approximate tensorization of variance with parameter $C$ is equivalent to the well-known Poincar{\'e} inequality with constant $C$, which leads to a $1/(Cn)$ lower bound on the spectral gap of the Glauber dynamics.
\item \textbf{Entropy case}:
When $\phi(x) = x\log x$, where the $\phi$-entropy  corresponds to the relative entropy.
Here, the approximate tensorization of entropy indicates a decay of relative entropy at a rate of $1/(Cn)$.
This decay of entropy further implies an upper bound on the mixing time of the Glauber dynamics of the form $C n (\log\log 1/\mu_{\min})$, where $\mu_{\min} = \min_{x\in\XX} \mu(x)$. 
\end{itemize}

The following lemma formalizes a ``lifting'' of the approximate tensorization of $\phi$-entropy within the context of the noising-denoising framework  we adopt. 

\begin{lemma} \label{lem:annealing}
Let $\phi:\=R \to \=R$ be a convex function.
    Suppose there exists a noising process $(X_t)_{t\in [0,1]}$  such that $X_0\sim\mu$,
    and for a time  $\theta \in [0, 1]$, the following conditions hold: 
    \begin{enumerate}
        \item \label{item:conservation} The noising process $(X_t)_{t\in[0,1]}$ satisfies $K$-approximate conservation of $\phi$-entropy from time~$\theta$.
        \item  \label{item:tensorization} The distribution $Q_{\theta \to 1}(X_{1-\theta}, \cdot)$ always satisfies $C$-approximate tensorization of $\phi$-entropy.
    \end{enumerate}
    Then, the distribution $\mu$ satisfies $CK$-approximate tensorization of $\phi$-entropy.
\end{lemma}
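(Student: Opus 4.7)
The plan is to chain the two hypotheses in a natural way, but the final step requires a careful ``reversal'' of the law of total entropy (\cref{lem:total-ent}) in order to replace the conditional distributions $Q_{\theta \to 1}(x,\cdot)^{y_{-i}}$ appearing on the right-hand side of the tensorization with the ``canonical'' conditional distributions $\mu^{y_{-i}}$. The whole argument rests on the observation that conditioning commutes: conditioning first on $Y_\theta = x$ (to get $Q_{\theta \to 1}(x,\cdot)$) and then on $Y_1|_{[n]\setminus\{i\}} = y_{-i}$ yields exactly the same distribution as conditioning the joint $(Y_\theta, Y_1)$ simultaneously.

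Concretely, I would begin by applying the $K$-approximate conservation of $\phi$-entropy (\cref{def:entropy-conservation}) to any $f: \+X \to \=R_{\ge 0}$, which immediately yields
\[
\PhiEnt[\mu]{f} \;\leq\; K\cdot \E[x\sim \mu_\theta]{\PhiEnt[Q_{\theta\to 1}(x,\cdot)]{f}}.
\]
Then, for each $x$ in the support of $\mu_\theta$, the hypothesized $C$-approximate tensorization of $\phi$-entropy for $Q_{\theta\to 1}(x,\cdot)$ bounds the inner term by $C\sum_{i=1}^n \E[y \sim Q_{\theta\to 1}(x,\cdot)]{\PhiEnt[Q_{\theta\to 1}(x,\cdot)^{y_{-i}}]{f}}$. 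Substituting produces the intermediate bound
\[
\PhiEnt[\mu]{f} \;\leq\; CK \sum_{i=1}^n \E[x \sim \mu_\theta]{\E[y\sim Q_{\theta\to 1}(x,\cdot)]{\PhiEnt[Q_{\theta\to 1}(x,\cdot)^{y_{-i}}]{f}}}.
\]

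The crux of the argument is then to bound the double expectation on the right by $\E[y\sim\mu]{\PhiEnt[\mu^{y_{-i}}]{f}}$. I would fix a coordinate $i$ and consider the joint law $\pi$ of $(Y_\theta, Y_1)$ generated along the noising/denoising pipeline; the integrand depends only on $(x, y_{-i})$, so I can rewrite the double expectation by first sampling $y_{-i}$ from its marginal under $\mu$ and then sampling $x$ from the conditional law $\nu_{y_{-i}}$ of $Y_\theta$ given $Y_1|_{-i}=y_{-i}$. For fixed $y_{-i}$, viewing $f$ as a function of the single coordinate $y_i$, the conditional joint $(Y_\theta, Y_1|_i)\mid Y_1|_{-i}=y_{-i}$ has marginal $\mu^{y_{-i}}$ on $y_i$, and its further conditional on $Y_\theta = x$ is exactly $Q_{\theta\to 1}(x,\cdot)^{y_{-i}}$. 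Applying \cref{lem:total-ent} to this conditional joint and dropping the non-negative ``outer'' $\phi$-entropy term yields $\E[x \sim \nu_{y_{-i}}]{\PhiEnt[Q_{\theta\to 1}(x,\cdot)^{y_{-i}}]{f}} \leq \PhiEnt[\mu^{y_{-i}}]{f}$; integrating over $y_{-i}\sim \mu$ gives the desired inequality, and summing over $i$ concludes.

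I expect the main technical hurdle to lie entirely in Step~3: verifying that the conditional laws align as claimed, namely that $Q_{\theta \to 1}(x,\cdot)^{y_{-i}}$ coincides with the law of $Y_1|_i$ under the joint $\pi$ conditioned on both $Y_\theta = x$ and $Y_1|_{-i}=y_{-i}$. This is just conditional-probability bookkeeping, but one must be careful that the noising process $P_{1\to\theta}$ only needs to act on the full configuration $y$ --- it is \emph{not} required that $Y_\theta$ be independent of $Y_1|_i$ given $Y_1|_{-i}$ for the argument to go through, precisely because the non-negative ``outer'' term in the law of total entropy absorbs any such dependence. The remainder of the proof is a clean two-line chaining of the given inequalities.
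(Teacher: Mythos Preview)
Your proposal is correct and follows essentially the same approach as the paper's proof. The paper works in random-variable notation (setting $F=f(Y_1)$, $\mathcal{P}_i = Y_1|_{[n]\setminus\{i\}}$, and writing everything as conditional $\phi$-entropies like $\PhiEnt{F\mid \mathcal{P}_i, Y_\theta}$), which lets it swap the conditioning order by a one-line tower property $\E{\E{\,\cdot\mid Y_\theta}} = \E{\E{\,\cdot\mid \mathcal{P}_i}}$ before applying the law of total entropy to drop the non-negative outer term --- exactly your Step~3 rephrased, and your identification of $Q_{\theta\to 1}(x,\cdot)^{y_{-i}}$ with the joint conditional is precisely what makes their random-variable shorthand legitimate.
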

This ``lifting'' phenomenon has been observed in various contexts in prior work.
Notably, it includes the approximate tensorization of entropy through uniform block factorization of entropy \cite[Lemma 2.3]{chen2021optimal}, the boosting theorem of field dynamics~\cite[Theorem 1.9]{chen2021rapid}, and annealing via stochastic localization~\cite[Section 4]{CE22}.
All these  results can be unified within the abstract framework of localization schemes (see~\cite{CE22}) and are thereby  encompassed by \Cref{lem:annealing}.

\begin{proof}[Proof of \Cref{lem:annealing}]
We re-express the conditions of \Cref{lem:annealing} using the $\phi$-entropy for random variables defined in~\eqref{eq:def-phi-ent-variable} and~\eqref{eq:def-phi-ent-variable-cond}.

Let $f:\XX \to \mathbb{R}_{\geq 0}$ be a function, and let $(Y_\theta)_{\theta \in [0,1]}$ be the corresponding denoising process, which is the time-reversed process of the noising process $(X_t)_{t\in[0,1]}$.
Note that $Y_1 \sim \mu$.
We define $F := f(Y_1)$ and  $\+P_i := Y_1([n]\setminus\set{i})$ for each $i \in [n]$.
Thus, $F, \+P_i, Y_\theta$ are random variables on the same probability space.
In particular, we have $Y_\theta \sim \mu_\theta$ and $(Y_1 \mid Y_\theta) \sim Q_{\theta\to 1}(Y_\theta, \cdot)$.

Then, the approximate conservation of $\phi$-entropy, as required in \Cref{item:conservation} of \Cref{lem:annealing}, can be expressed as 
\begin{align} \label{eq:conservation-of-ent-new}
    \PhiEnt{F} &\leq K\cdot \E{\PhiEnt{F \mid Y_\theta}}.
\end{align}
Similarly, the approximate tensorization of $\phi$-entropy, as required in \Cref{item:tensorization} of \Cref{lem:annealing},  means that the following always holds:
\begin{align} \label{eq:tensorization-of-ent-new}
 \PhiEnt{F \mid Y_\theta} \leq C \sum_{i=1}^n \E{\PhiEnt{F \mid \+P_i, Y_\theta} | Y_\theta}.
\end{align}
    Plugging \eqref{eq:tensorization-of-ent-new} into \eqref{eq:conservation-of-ent-new}, we have
    \begin{align}
       \PhiEnt{F}
        \nonumber &\leq CK  \sum_{i=1}^n \E{\E{\PhiEnt{F \mid \+P_i, Y_\theta} | Y_\theta}} \\
\label{eq:AT-raw} &= CK \sum_{i=1}^n \E{\E{\PhiEnt{F \mid \+P_i, Y_\theta} \mid \+P_i}}, 
    \end{align}
    where the equality follows from the law of total expectation.
   
    We apply the law of total entropy (\Cref{lem:total-ent}) to obtain:
    \begin{align} \label{eq:AT-refined} 
        \E{\PhiEnt{F \mid Y_{\theta}, \+P_i} \mid \+P_i} 
         &= \PhiEnt{F \mid \+P_i} - \PhiEnt{\E{F \mid Y_{\theta}, \+P_i} \mid \+P_i}
        \leq \PhiEnt{F \mid \+P_i},
    \end{align}
    where the last inequality follows from the non-negativity of $\phi$-entropy.
    
    Finally, we conclude the proof by combining \eqref{eq:AT-raw} and \eqref{eq:AT-refined}.
\end{proof}

\subsubsection{Entropic stability}
\label{subsubsec:ent-stab}
The following concept of entropic stability quantifies the rate of local conservation of $\phi$-entropy in a noising/denoising process.
\begin{definition}[Entropic Stability]
Given a function  $C: [0,1] \to \R_{\ge 0}$,
we say the target distribution $\mu$ is \emph{$\phi$-entropically stable with rate $C$} with respect to a noising process $(X_t)_{t\in [0,1]}$ where $X_0 \sim \mu$ if, for any $\eta \in [0,1]$, $x \in \XX$ and $f: \XX \to \R_{\ge 0}$, 
\begin{align} \label{eq:def-ent-stability}
    \lim_{\theta \to \eta^+} \frac{1}{\theta-\eta} \PhiEnt[Q_{\eta \to \theta}(x,\cdot)]{f_\theta}
    \le \frac{C(\eta)}{1-\eta} \cdot \PhiEnt[Q_{\eta \to 1}(x,\cdot)]{f},
\end{align}
where $Q_{\cdot\to\cdot}$ represents the Markov kernel of the corresponding denoising process $(Y_\theta)_{\theta\in [0,1]}$, which is the time-reversed process of $(X_t)_{t\in [0,1]}$.
\end{definition}

\begin{remark}
    When $\phi = x^2$, $\mu$ is said to be \emph{spectrally stable} with rate $C$; when $\phi = x \log x$, $\mu$ is said to be \emph{entropically stable} with rate $C$.
\end{remark}

The notion of the entropic stability is closely related to the convexity of the $\phi$-entropy functional.
Let $\+E(\theta) = \PhiEnt[\mu_\theta]{f_\theta}$ for $\theta \in [0,1]$.
If we take expectation for $x\sim \mu_\eta$ on both side of \eqref{eq:def-ent-stability}, then, by the law of total entropy, we have
\begin{align*}
  \+E'(\eta) \leq C(\eta) \cdot \frac{\+E(1) - \+E(\eta)}{1 - \eta},
\end{align*}
which is equivalent to the convexity of function $\+E$ when $C(\eta) = 1$.
In this perspective, \eqref{eq:def-ent-stability} can be seen as a relaxed version of convexity.

\begin{theorem}
\label{thm:CTEI-FD}
If the target distribution $\mu$ is $\phi$-entropically stable with rate $C$ with respect to a noising process $(X_t)_{t\in [0,1]}$ where $X_0\sim \mu$,
then for every $\theta \in [0,1]$, 
it satisfies $K$-approximate conservation of $\phi$-entropy from time $\theta$,
where
$$K=\exp\left(\int_0^\theta \frac{C(\eta)}{1-\eta} \-d \eta\right).$$
That is,
for all $f: \XX \to \R_{\ge 0}$, it holds 
\begin{align}\label{eq:CTEI-FD}
\PhiEnt[\mu]{f} \le  
\exp\left( \int_0^\theta \frac{C(\eta)}{1-\eta} \dif \eta \right) 
\cdot\E[x \sim \mu_\theta]{ \PhiEnt[Q_{\theta \to 1}(x,\cdot)]{f} }.
\end{align}
where 
$Q_{\cdot\to\cdot}$ represents the Markov kernel of the corresponding denoising process $(Y_\theta)_{\theta\in [0,1]}$.
\end{theorem}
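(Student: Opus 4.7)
The plan is to follow a Gr\"onwall-type argument on the functional $\mathcal{E}(\theta) := \mathrm{Ent}^\phi_{\mu_\theta}[f_\theta]$ for $\theta \in [0,1]$. Since $\mu_1 = \mu$ and $f_1 = f$, the endpoint value is $\mathcal{E}(1) = \mathrm{Ent}^\phi_\mu[f]$; and since $\mu_0$ is the Dirac measure concentrated at $\*0$ (by the standing assumption on the noising process), the constant function $f_0$ relative to this point mass gives $\mathcal{E}(0) = 0$. So the conservation inequality \eqref{eq:CTEI-FD} is equivalent to showing $\mathcal{E}(1) - \mathcal{E}(\theta) \ge \mathcal{E}(1)/K$, i.e.\ enough mass of the initial entropy survives in the ``remaining'' term $\mathbb{E}_{x \sim \mu_\theta}[\mathrm{Ent}^\phi_{Q_{\theta\to 1}(x,\cdot)}[f]] = \mathcal{E}(1) - \mathcal{E}(\theta)$ (by the law of total entropy, \Cref{lem:total-ent}).

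The first step is to derive a right-differential inequality on $\mathcal{E}$. Applying the law of total entropy at an infinitesimally displaced time $\theta = \eta + h$ with $h \downarrow 0$ gives
\[
  \mathcal{E}(\eta+h) - \mathcal{E}(\eta) \;=\; \mathbb{E}_{x \sim \mu_\eta}\!\left[\mathrm{Ent}^\phi_{Q_{\eta \to \eta+h}(x,\cdot)}[f_{\eta+h}]\right].
\]
Dividing by $h$ and taking $h \to 0^+$, the pointwise entropic stability bound \eqref{eq:def-ent-stability} combined with the assumption that $\XX$ is finite (so that $\mathbb{E}_{x\sim\mu_\eta}[\cdot]$ is a finite sum and limits may be exchanged with it) yields
\[
  \mathcal{E}'_+(\eta) \;\le\; \frac{C(\eta)}{1-\eta} \cdot \mathbb{E}_{x\sim\mu_\eta}\!\left[\mathrm{Ent}^\phi_{Q_{\eta\to 1}(x,\cdot)}[f]\right] \;=\; \frac{C(\eta)}{1-\eta}\bigl(\mathcal{E}(1)-\mathcal{E}(\eta)\bigr),
\]
where the equality again uses the law of total entropy with the pair $(\eta,1)$.

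The second step is to integrate this inequality. Setting $g(\eta) := \mathcal{E}(1) - \mathcal{E}(\eta)$, the estimate above reads $g'_+(\eta) \ge -\frac{C(\eta)}{1-\eta} g(\eta)$. Multiplying by the integrating factor $\exp\!\bigl(\int_0^\eta \frac{C(s)}{1-s}\dif s\bigr)$ shows that $\eta \mapsto g(\eta)\exp\!\bigl(\int_0^\eta \frac{C(s)}{1-s}\dif s\bigr)$ is nondecreasing on $[0,\theta]$, so
\[
  g(\theta) \;\ge\; g(0)\,\exp\!\left(-\int_0^\theta \frac{C(\eta)}{1-\eta}\dif\eta\right) \;=\; \mathcal{E}(1)/K,
\]
using $g(0) = \mathcal{E}(1) - \mathcal{E}(0) = \mathcal{E}(1)$. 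Rewriting $g(\theta)$ through the law of total entropy gives exactly \eqref{eq:CTEI-FD}.

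The main obstacle I anticipate is the justification that the pointwise right-derivative estimate survives after taking expectation over $x \sim \mu_\eta$, and that $\mathcal{E}$ is well-behaved enough (absolutely continuous, or at least right-differentiable with a bounded upper Dini derivative) for the Gr\"onwall step to go through without extra assumptions. On a finite state space this is a routine computation because $\mathcal{E}$ becomes a finite linear combination of $\phi$-values of transition probabilities, which are smooth in $\eta$; nonetheless, one should state explicitly which version of Gr\"onwall is being used (e.g.\ the standard form for upper-right Dini derivatives, or a short direct proof via comparison of the derivative of $g(\eta)\exp\!\bigl(\int_0^\eta C/(1-s)\dif s\bigr)$). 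For applications beyond our discrete setting (such as the proximal sampler), a measure-theoretic version of the same commutation-of-limit step would be required, but this is not needed for the results in the present paper.
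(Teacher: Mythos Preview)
Your proposal is correct and follows essentially the same approach as the paper: define $\mathcal{E}(\theta)=\mathrm{Ent}^\phi_{\mu_\theta}[f_\theta]$, use the law of total entropy plus entropic stability to obtain the differential inequality $\mathcal{E}'(\eta)\le \frac{C(\eta)}{1-\eta}(\mathcal{E}(1)-\mathcal{E}(\eta))$, then apply Gr\"onwall. The paper presents the final step slightly differently, integrating $\frac{\mathcal{E}'(\eta)}{\mathcal{E}(1)-\mathcal{E}(\eta)}$ directly rather than via the integrating factor, but this is purely cosmetic.
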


\begin{proof}
Fix $f: \XX \to \R_{\ge 0}$ with $\E[\mu]{f} = 1$.
Let us define for $\theta \in [0,1]$ that 
\begin{align*}
    \EE(\theta) = \PhiEnt[\mu_\theta]{f_\theta}.
\end{align*}
In particular, $\EE(1) = \PhiEnt[\mu]{f}$ and $\EE(0) = 0$. 
Also, $\EE$ is monotone increasing by the data processing inequality (this can be easily derived from \cref{lem:total-ent}). 
For any $\eta \in [0,1)$ we deduce that
\begin{align*}
    \EE'(\eta)
    &= \lim_{\theta \to \eta^+} \frac{1}{\theta-\eta} \left( \PhiEnt[\mu_\theta]{f_\theta} - \PhiEnt[\mu_\eta]{f_\eta} \right) \\
    &= \lim_{\theta \to \eta^+} \frac{1}{\theta-\eta} \,\E[x \sim \mu_\eta]{ \PhiEnt[Q_{\eta \to \theta}(x,\cdot)]{f_\theta} } \tag{\cref{lem:total-ent} (Law of total entropy)} \\
    &= \E[x \sim \mu_\eta]{ \lim_{\theta \to \eta^+} \frac{1}{\theta-\eta} \PhiEnt[Q_{\eta \to \theta}(x,\cdot)]{f_\theta} } \\
    &\le \E[x \sim \mu_\eta]{ \frac{C(\eta)}{1-\eta} \PhiEnt[Q_{\eta \to 1}(x,\cdot)]{f} } \tag{$(C,\phi)$-Entropic Stability} \\
    &= \frac{C(\eta)}{1-\eta} \left( \PhiEnt[\mu_1]{f_1} - \PhiEnt[\mu_\eta]{f_\eta} \right) \tag{\cref{lem:total-ent} (Law of total entropy)} \\
    &= \frac{C(\eta)}{1-\eta} \left( \EE(1) - \EE(\eta) \right).
\end{align*}
We can apply Gr\"{o}nwall's inequality to obtain the theorem, or more directly
\begin{align*}
    \log\left( \frac{\EE(1)}{\EE(1)-\EE(\theta)} \right) 
    = \int_0^\theta \left( \frac{\dif}{\dif \eta} \log\left( \frac{1}{\EE(1)-\EE(\eta)} \right) \right) \dif \eta 
    = \int_0^\theta \frac{\EE'(\eta)}{\EE(1) - \EE(\eta)} \dif \eta 
    \le \int_0^\theta \frac{C(\eta)}{1-\eta} \dif \eta.
\end{align*}
Observe $\EE(1)-\EE(\theta) = \E[x \sim \mu_\theta]{ \PhiEnt[Q_{\theta \to 1}(x,\cdot)]{f} }$ by the law of total entropy. The theorem  follows.
\end{proof}

\subsection{Hardcore model and field dynamics}
\label{subsec:hardcore-upper}
We now apply the abstract framework developed in the previous sections to analyze the Glauber dynamics for the Gibbs distribution of the critical hardcore model.
The noising process $(X_t)_{t \in [0,1]}$ and the denoising process $(Y_\theta)_{\theta\in[0,1]}$ considered in this context is the \emph{continuous-time down/up walk} (see \Cref{definition-decreasing-process}).
In particular, for $\eta\in[0,1]$, recall the distribution $(1-\eta)*\mu$ as defined in \eqref{eq:def-mu-external-field}.

The following lemma establishes a connection between $\chi^2$-entropic (i.e., spectral) stability with respect to the continuous-time down walk and spectral independence. We refer the readers to a recent work \cite{CCCYZ25+} for more discussion on this topic.
\begin{lemma}\label{lem:variance-stability-SI}
	Let $\mu$  be a distribution over $2^{[n]}$.
    If there exists a function $C:[0,1]\to \mathbb{R}_{\ge 0}$ such that for every $\eta \in [0,1]$, the distribution $(1-\eta) * \mu$ is $C(\eta)$-spectrally independent for all pinning, 
    then $\mu$ is spectrally stable with rate $C$ with respect to the continuous-time down walk $(X_t)_{t \in [0,1]}$ where $X_0 \sim \mu$.
\end{lemma}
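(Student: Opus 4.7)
The plan is to verify the spectral stability condition \eqref{eq:def-ent-stability} with $\phi(t) = t^2$ by a first-order expansion of the up-walk kernel near time $\eta$, and then invoke the third equivalent form of spectral independence in \Cref{lem:SI-equiv}. Fix $\eta \in [0,1)$ and a state $x \subseteq [n]$ of the up walk at time $\eta$. Bayes' rule together with the definition of the continuous-time up walk shows that $Q_{\eta \to 1}(x,\cdot)$ is exactly $(1-\eta)*\mu$ conditioned on all elements of $x$ being included; viewing this as a distribution $\nu'$ on $2^U$ with $U := [n]\setminus x$, the hypothesis of the lemma gives that $\nu'$ is $C(\eta)$-spectrally independent. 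Writing $\tilde f(w) := f(x \cup w)$ for $w \subseteq U$, the right-hand side of \eqref{eq:def-ent-stability} becomes $\frac{C(\eta)}{1-\eta}\,\Var[\nu']{\tilde f}$.

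Next I would analyse the left-hand side of \eqref{eq:def-ent-stability} by asymptotic expansion. The continuous-time up walk from state $x$ at time $\eta$ satisfies $\Pr{Y_\theta = x\cup\{u\} \mid Y_\eta = x} = \frac{\theta-\eta}{1-\eta}\Pr[\nu']{u\in W} + O((\theta-\eta)^2)$ for each $u \in U$, while jumps to sets of size at least $|x|+2$ have total probability $O((\theta-\eta)^2)$. A short computation using $f_\theta = Q_{\theta\to 1} f$ also yields $f_\theta(x) \to \E[\nu']{\tilde f}$ and $f_\theta(x\cup\{u\}) \to \E[\nu']{\tilde f \mid u\in W}$ as $\theta \to \eta^+$. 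Since $Q_{\eta\to\theta}f_\theta = f_\eta$, the mean of $f_\theta$ under $Q_{\eta\to\theta}(x,\cdot)$ is exactly $\E[\nu']{\tilde f}$; expanding the variance to first order in $\theta - \eta$ therefore gives
\begin{align*}
  \lim_{\theta \to \eta^+} \frac{1}{\theta-\eta}\,\Var[Q_{\eta\to\theta}(x,\cdot)]{f_\theta}
  \;=\; \frac{1}{1-\eta}\sum_{u\in U}\Pr[\nu']{u \in W}\bigl(\E[\nu']{\tilde f \mid u \in W} - \E[\nu']{\tilde f}\bigr)^2.
\end{align*}

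To close the loop, I would observe that the sum on the right equals $\E[\nu']{\tilde f}^2$ times the ``$+1$-half'' of the SI sum in the third condition of \Cref{lem:SI-equiv}, applied to $\nu'$ and the normalized density $g := \tilde f / \E[\nu']{\tilde f}$. Dropping the non-negative ``$-1$-half'' and using $\Var[\nu']{\tilde f} = \E[\nu']{\tilde f}^2\,\Var[\nu']{g}$ then turns the spectral independence bound into
\[
  \sum_{u\in U}\Pr[\nu']{u\in W}\bigl(\E[\nu']{\tilde f\mid u\in W} - \E[\nu']{\tilde f}\bigr)^2 \;\le\; C(\eta)\,\Var[\nu']{\tilde f},
\]
which combined with the displayed limit proves \eqref{eq:def-ent-stability}. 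The main obstacle is the careful bookkeeping in the first-order expansion: one must verify both that the $y=x$ contribution is genuinely $O((\theta-\eta)^2)$ (because $f_\theta(x) - f_\eta(x) = O(\theta-\eta)$ and this difference enters the variance squared) and that multi-element jumps collectively contribute only $O((\theta-\eta)^2)$. Once these are in place, the remainder is a routine translation between conditional expectations under $\nu'$ and the covariance-style quantities appearing in spectral independence.
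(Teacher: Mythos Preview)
Your proposal is correct and follows essentially the same approach as the paper's proof: both compute the first-order expansion of $Q_{\eta\to\theta}(x,\cdot)$ to obtain the limit $\frac{1}{1-\eta}\sum_{u\notin x}p_u\bigl(f_\eta(x\cup\{u\})-f_\eta(x)\bigr)^2$, and then apply the third equivalent form of spectral independence (\Cref{lem:SI-equiv}) with the normalized test function $g=f/f_\eta(x)$, using only the ``$+1$-half'' of the inequality. The paper carries out the derivative computation a bit more explicitly via the closed-form expression for $Q_{\eta\to\theta}(S,T)$, but the structure and all key ideas match yours.
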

The proof of~\Cref{lem:variance-stability-SI} is deferred to~\Cref{sec:append-variance}.

The following bounds on spectral independence and the approximate tensorization of variance (i.e., a Poincar\'{e} inequality) have been  established for the Gibbs distribution of the hardcore model.

\begin{lemma}[\cite{chen2023rapid}]\label{lem:sub-SI-hardcore}
    For any $\delta\in(0,1)$ and $\Delta \ge 3$, 
    the Gibbs distribution $\mu$ of the hardcore model on any graph with a maximum degree at most $\Delta$ and fugacity $\lambda \le (1-\delta) \lambda_c(\Delta)$ is $\frac{4\e (1+\frac{1}{\Delta-2})}{\delta}$-spectrally independent for all pinning. Formally, for any pinning $\tau$, the influence matrix satisfies
    \begin{align*}
    \lambda_{\max}\tp{\Psi_{\mu^\tau}} \le \frac{4\e(1+\frac{1}{\Delta-2})}{\delta}.
    \end{align*}
\end{lemma}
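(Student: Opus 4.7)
The plan is to follow the by-now classical route of Weitz's self-avoiding walk (SAW) tree combined with a potential-function analysis of the tree recursion. Since the spectral radius of a matrix is at most its maximum absolute row sum, it suffices to show, uniformly over all feasible pinnings $\tau$ and all root vertices $v$, that
\begin{align*}
\sum_{u \neq v} \left| \Psi_{\mu^\tau}(v,u) \right| \leq \frac{4\e\bigl(1 + \tfrac{1}{\Delta-2}\bigr)}{\delta}.
\end{align*}

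First I would invoke Weitz's reduction (see \Cref{sec:def-SAW-tree}): the influence $\Psi_{\mu^\tau}(v,u)$ in $G$ equals a sum of signed root-to-copy influences in the SAW tree $\+T^{\mathrm{SAW}}_{G,v}$, where the pinning $\tau$ is absorbed into leaf pinnings and external fields. Thus the problem reduces to bounding the total absolute influence from the root on any rooted tree of maximum degree $\Delta$ with arbitrary boundary conditions, under fugacity $\lambda \le (1-\delta)\lambda_c(\Delta)$. On such a tree, the root-to-descendant influence factors as a telescoping product of one-step derivatives of the hardcore tree recursion $R_\lambda(x_1,\ldots,x_d) = \lambda/(1+\lambda \prod_i x_i)$ along the unique root-to-descendant path.

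Next I would introduce a potential function $\Phi$ of the form $\Phi'(x) = 1/\sqrt{h(x)}$ for a suitably chosen weight $h$, in whose coordinates (by a convexity/symmetrization argument of the Restrepo--Stefankovi\v{c}--Vera--Vigoda and Chen--Liu--Vigoda type) the sum of absolute one-step derivatives is maximized on the regular $(\Delta-1)$-ary subtree with all children at the symmetric fixed point of the recursion. At criticality $\lambda = \lambda_c(\Delta)$ this worst-case per-level contribution equals exactly $1$, and a careful Taylor expansion around the symmetric fixed point shows that reducing the fugacity by a factor $(1-\delta)$ drops the per-level contraction factor to at most $1 - \delta/(1 + \tfrac{1}{\Delta-2})$. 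Summing the resulting geometric series over levels yields a bound of order $\tfrac{1+1/(\Delta-2)}{\delta}$, and the remaining prefactor $4\e$ comes from bounding the first level separately using $\Delta \cdot \lambda_c(\Delta) \le \e\bigl(1 + \tfrac{1}{\Delta-2}\bigr)$.

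The main technical obstacle is the sharp quantitative analysis of the per-level contraction rate in $\Phi$-coordinates, uniformly in $\Delta$: one must retain enough of the Taylor expansion at the symmetric fixed point to see the exact $1 + \tfrac{1}{\Delta-2}$ dependence in the denominator and the $\e$ prefactor, rather than just the qualitative $O(1/\delta)$ behavior. Handling arbitrary pinnings poses no extra difficulty, since leaf pinnings and shifted external fields on the SAW tree cannot push the tree-recursion derivatives above the symmetric unpinned worst case, so the same bound transfers verbatim to every $\mu^\tau$.
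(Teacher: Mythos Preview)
Your high-level strategy (Weitz's SAW tree plus a potential-function contraction analysis of the tree recursion) is exactly the machinery underlying the cited result from \cite{chen2023rapid}, so the approach is sound. The paper, however, does not redo this analysis from scratch: it invokes as a black box the statement that $\delta^*$-uniqueness implies $\frac{2}{\delta^*}\cdot\frac{\Delta\lambda}{1+\lambda}$-spectral independence (this is the $(\alpha,c)$-potential conclusion of \cite{chen2023rapid}), and then proves two short calculus facts: (i) $\frac{\Delta\lambda}{1+\lambda}\le \e$ uniformly in $\Delta\ge 3$ and $\lambda\le\lambda_c(\Delta)$, and (ii) $\lambda\le(1-\delta)\lambda_c(\Delta)$ forces $\delta^*$-uniqueness with $\delta^*\ge \frac{\delta}{2(1+1/(\Delta-2))}$. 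Multiplying gives the constant $\frac{4\e(1+1/(\Delta-2))}{\delta}$ directly.

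Your accounting for the constant contains a concrete error: the inequality $\Delta\cdot\lambda_c(\Delta)\le \e\bigl(1+\tfrac{1}{\Delta-2}\bigr)$ is false. For $\Delta=3$ one has $\lambda_c=4$, so $\Delta\lambda_c=12$, while $\e(1+1)=2\e\approx 5.44$. The quantity that is actually bounded by $\e$ is $\frac{\Delta\lambda}{1+\lambda}$, not $\Delta\lambda$; the $1/(1+\lambda)$ denominator is precisely what the potential-function analysis produces for the root's contribution (it is the $c$ in the $(\alpha,c)$-potential). Relatedly, your decomposition of the constant into ``geometric series $=(1+1/(\Delta-2))/\delta$'' times ``first level $=4\e$'' does not match how the factors actually arise: the $4$ is two separate $2$'s, one from the contraction rate $\alpha=\delta^*/2$ in the potential analysis and one from the conversion between fugacity slack $\delta$ and uniqueness gap $\delta^*$; the $\e$ is the bound on $\frac{\Delta\lambda}{1+\lambda}$; and the $(1+\tfrac{1}{\Delta-2})$ comes from the $\delta\to\delta^*$ conversion, not from the geometric series itself. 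Once you replace $\Delta\lambda_c$ by $\frac{\Delta\lambda}{1+\lambda}$ and track these two factors of $2$ correctly, your outline becomes an unpacked version of the paper's proof.
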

We remark that a crude bound ($32/\delta$-spectral independence) was established in~\cite{chen2023rapid}.
Meanwhile, it is direct to get the bound stated in~\Cref{lem:sub-SI-hardcore} by going through the same proof with a more careful analysis.
We leave important details into~\Cref{sec:SI-hardcore}.

\begin{lemma}[\cite{chen1998trilogy}]\label{lem:sub-PI-hardcore}
For any $\Delta \ge 3$, 
the Gibbs distribution $\mu$ of the hardcore model on any graph $G=(V,E)$ with a maximum degree at most $\Delta$ and fugacity $\lambda \le \frac{1}{2\Delta}$ satisfies $C$-approximate tensorization of variance with constant $C=2$ (see \Cref{def:approximate-tensorization} for a formal definition).
\end{lemma}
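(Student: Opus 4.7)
The plan is to establish the approximate tensorization of variance via a classical path coupling argument on the Glauber dynamics. By \cref{lem:tensorization-implies-mixing}, the desired $2$-approximate tensorization of variance is equivalent to a Poincar\'e inequality for Glauber dynamics with constant $1/(2n)$, so it suffices to lower bound the spectral gap of Glauber dynamics on $\mu$ by $1/(2n)$.

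First, I would set up a path coupling on pairs of independent sets at Hamming distance~$1$. Consider $\sigma, \tau \in \mathcal{I}(G)$ differing at exactly one vertex $u$, say $\sigma_u = 1$ and $\tau_u = 0$, pick $v \in V$ uniformly, and greedily couple the two Glauber updates at~$v$. When $v = u$ (probability $1/n$), both chains see the same neighborhood of $u$, so the updates are identically distributed and the disagreement is perfectly removed. When $v \not\sim u$ and $v \neq u$, the conditional distributions at $v$ coincide and we couple perfectly. When $v$ is one of the at most $\Delta$ neighbors of $u$, a new disagreement may be created with probability at most $\lambda/(1+\lambda) \le \lambda$, the worst case being when $u$ is the unique occupied neighbor of $v$ in $\tau$, leaving $v$ unconstrained in $\tau$ while forced to $0$ in $\sigma$. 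Combining these cases, the expected change in Hamming distance is at most
\[
\frac{1}{n}\bigl(-1 + \Delta \lambda\bigr) \le \frac{1}{n}\bigl(-1 + \tfrac{1}{2}\bigr) = -\frac{1}{2n},
\]
using the hypothesis $\lambda \le 1/(2\Delta)$. This yields a Wasserstein-$1$ contraction of rate $1 - 1/(2n)$ between adjacent configurations, and by the path coupling lemma this extends to all pairs.

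Finally, I would invoke M.-F.\ Chen's theorem (from \cite{chen1998trilogy}) that upgrades a $W_1$-contraction rate into a spectral gap lower bound of the same rate for reversible chains, giving spectral gap $\ge 1/(2n)$ for the Glauber dynamics and hence the claimed $2$-approximate tensorization of variance. The main technical subtlety is precisely this last step: a naive application of path coupling (Bubley--Dyer) yields only a mixing-time bound of $O(n \log n)$, whereas obtaining a clean Poincar\'e inequality with constant exactly $1/(2n)$ requires the stronger coupling-to-spectral-gap principle. The constant $C = 2$ is essentially tight for this approach, emerging as the ratio between the $-1/n$ decrease term and the $-1/(2n)$ net contraction available at the boundary $\lambda = 1/(2\Delta)$.
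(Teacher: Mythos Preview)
Your proposal is correct and aligns with the intended approach: the paper does not give its own proof of this lemma but simply cites \cite{chen1998trilogy}, and you have accurately reconstructed how that citation applies---a one-step path coupling yielding a $W_1$-contraction of rate $1-\tfrac{1}{2n}$, then Chen's coupling-to-spectral-gap principle from \cite{chen1998trilogy} to convert this into the Poincar\'e constant $\tfrac{1}{2n}$, equivalently $C=2$ tensorization of variance via \cref{lem:tensorization-implies-mixing}. The coupling analysis and the resulting constant are both correct.
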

\begin{proof}[Proof of the upper bound in~\Cref{thm:hardcore}]
    Let $G=(V,E)$ be a graph with sufficiently many vertices, specifically $n=|V| \ge 1000$, and maximum degree  $\Delta \ge 3$. 
    Let $\mu$ be the Gibbs distribution of the hardcore model on $G$ with fugacity $\lambda=\lambda_c(\Delta)$.
    By~\Cref{thm:CTEI-FD}, for all $\theta \in [0,1]$, the distribution $\mu_\theta$  satisfies 
    \begin{align*}
    \forall f: \+I(G) \to \mathbb{R}_{\ge 0}, \quad \Var[\mu]{f} \le \exp\tp{\int_{0}^\theta \frac{C(\eta)}{1-\eta} \dif \eta} \cdot \E[x \sim \mu_\theta]{\Var[Q_{\theta \to 1}(x,\cdot)]{f}}.
    \end{align*}
    Note that for any independent set $I$ in $G$, the distribution $Q_{\eta \to 1}(I,\cdot)$ is the Gibbs distribution of hardcore model with fugacity $\lambda_\eta = (1-\eta) \lambda_c(\Delta)$. 
    Therefore, by~\Cref{lem:variance-stability-SI}, 
	$\mu$ is spectrally stable with rate $C$ with respect to the continuous-time down walk $(X_t)_{t \in [0,1]}$ where $X_0\sim\mu$, 
    where $C:[0,1] \to \mathbb{R}_{\ge 0}$ satisfies
    \begin{align*}
        \forall \eta \in [0,1],\quad C(\eta) = \min\tp{\frac{c^\star}{\eta},n},
    \end{align*}
    with $c^\star = 4\e (1+\frac{1}{\Delta-2})$. The bound $C(\eta) \le n$ follows from the crude upper bound $\lambda_{\max}(\Psi_{\mu}) \le n$ for spectral independence. Hence, when $\theta > c^\star / n$, the approximate conservation of variance holds with constant
    \begin{align*}
        \exp\tp{\int_{0}^\theta \frac{C(\eta)}{1-\eta} \dif \eta} &= \exp\tp{n \cdot \int_{0}^{c^\star/n} \frac{1}{1-\eta} \dif \eta + \int_{c^\star/n}^\theta \frac{c^\star}{\eta (1-\eta)} \dif \eta} \\
        &\le \exp\tp{2c^\star+  c^\star \log \frac{\theta}{1-\theta} + c^\star \log \frac{n}{c^\star}}.
    \end{align*}
    Fix $\theta = \frac{9}{10}$. It satisfies $\exp\tp{\int_0^{\theta} \frac{C(\eta)}{1-\eta} \dif \eta} = O\tp{n^{c^\star}}$. Combining~\Cref{lem:sub-PI-hardcore} and~\Cref{lem:annealing}, the distribution $\mu$ satisfies the approximate tensorization of variance with constant $O\tp{n^{c^\star}}$.
    
    According to~\Cref{lem:tensorization-implies-mixing}, this result gives an $\Omega(1/n^{1+c^\star})$ lower bound on the spectral gap of the Glauber dynamics, which implies an  $O\tp{n^{2+c^\star}}$ upper bound on the mixing time.
\end{proof}

\begin{remark}
    Although our proof of the mixing time for the critical hardcore model (\Cref{thm:hardcore}) is constructed within the general framework of the (information-theoretical view of the) localization scheme, 
    it is noteworthy that  a similar (though cruder) polynomial upper bound can also be obtained by applying the original machinery in~\cite{chen2021rapid} without resorting to localization schemes.
    
    Given a graph $G$ with maximum degree $\Delta$, let $\mu_{\lambda}$ be the Gibbs distribution of the hardcore model on $G$ with fugacity $\lambda$.
    Let $\-{gap}(\mu_\lambda)$ denote the spectral gap of the Glauber dynamics for $\mu_\lambda$.
    Applied to the hardcore model, \cite[Theorem 1.9]{chen2021rapid} guarantees that if $\lambda \leq (1-\delta)\lambda_c(\Delta)$, then for every $\theta \in (0,1)$, the following inequality holds\footnote{The base of the exponential in \cite[Theorem 1.9]{chen2021rapid} is $\theta/2$. But it can be improved to $\theta$ by going through the proof carefully.
      The extra factor $1/2$ arises primarily from \cite[Theorem 2.5]{chen2021rapid}, which provides a crude lower bound of $(\frac{\ell}{2n})^{2\ctp{\eta}+1}$ for the quantity $(\frac{\ell-\ctp{\eta}}{n - \ctp{\eta}})^{2\ctp{\eta} +1}$.
    From the setting of \cite[Theorem 1.9]{chen2021rapid}, we have $\ell/n = \theta$, and both $\ell$ and $n$ are made sufficiently large.
    Therefore, this bound can be improved.
    }
    \begin{align} \label{eq:boosting}
        \-{gap}(\mu_\lambda) \geq \theta^{C(\delta)} \min_\tau \-{gap}(\mu^\tau_{\theta\lambda}),
    \end{align}
    where $\mu_{\theta\lambda}$ is the Gibbs distribution of the hardcore model on $G$ with fugacity $\theta\lambda$; the minimum is taken over all possible pinnings $\tau$; and $C(\delta)$ is a function of $\delta$ (roughly $\min\set{n, c^\circ/\delta}$ for some universal constant $c^\circ > c^\star$).
    This implies that we can define a function $f(t) := \min_\tau \-{gap}(\mu^\tau_{(1-t)\lambda_c})$, and from \eqref{eq:boosting}, we have
    \begin{align*}
        \limsup_{h \to 0^+} \frac{\log f(t+h) - \log f(t)}{h} \leq \limsup_{h\to 0^+} \frac{C(t) \log \frac{1-t}{1-t-h}}{h} = \frac{C(t)}{1-t}.
    \end{align*}
    Let $\theta = 9/10$, by \Cref{lem:sub-PI-hardcore}, we have $f(\theta) \geq (2n)^{-1}$.
    By ``integrating'' w.r.t. $\log f$, we have
    \begin{align*}
        \-{gap}(\mu_{\lambda_c}) \geq f(0) \geq f(\theta) \exp\tp{- \int_0^\theta \frac{C(t)}{1-t}\-d t} = \frac{1}{\-{poly}(n)}.
    \end{align*}
\end{remark}

\subsection{Ising model and proximal sampler}
\label{subsec:Ising-upper}
We now move on to the critical Ising model.
The noising/denoising process considered in this context is the \emph{Brownian bridge process} and \emph{F{\"o}llmer process}, which are defined in \Cref{definition-brownian-bridge}.

The following theorem establishes the approximate tensorization of entropy for the Gibbs distribution of the Ising model via spectral independence.
\begin{theorem}[\text{\cite[Theorem 49]{CE22}}]
\label{thm:Ising-integral}
    Let $J \in \mathbb{R}^{n \times n}$ be a positive semi-definite matrix.
    If there exists a function $C:[0,1] \to \mathbb{R}$ such that for all external fields $\*h^\star \in \mathbb{R}^n$ and $t \in [0,1]$, 
    the distribution $\mu_{(1-t)J,\*h^\star}$ is $C(t)$-spectrally independent,
    then for any external fields $\*h \in \mathbb{R}^n$, 
    the Gibbs distribution $\mu = \mu_{J,\*h}$ satisfies the approximate tensorization of entropy with constant $\exp\tp{\norm{J}_{2} \int_0^1 C(t) \dif t}$.\footnote{We remark that the integral in~\cite[Theorem 49]{CE22} differs slightly from the one presented here. The distinction arises from the different definitions of the Ising model. In their work, they define the Ising model specified by an interaction matrix $J$ and external fields $\*h$ as
    $$\forall \*x \in \{-1,+1\}^n,\quad \mu(\*x) \propto \tp{\*x^\intercal J \*x + \*h^\intercal \*x}.$$} 
\end{theorem}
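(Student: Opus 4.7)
The plan is to instantiate the localization-scheme framework developed in \Cref{subsec:information-theoretic} using the Brownian bridge and F\"{o}llmer processes (\Cref{definition-brownian-bridge}) specialized to the Ising model. Since $J$ is positive semidefinite, we can write $J = L^\intercal L$ for some matrix $L$ (e.g., $L = J^{1/2}$), and use $L$ as the driving matrix in the proximal sampler (\Cref{def:proximal-sampler}) applied to $\mu = \mu_{J, \*h}$. The first step is to observe that, as computed in the example following \Cref{def:proximal-sampler}, the denoising kernel $Q_{\theta \to 1}(\*y, \cdot)$ is itself an Ising Gibbs distribution, with interaction matrix $\frac{1-2\theta}{1-\theta} J$ and effective external field $\*h + \frac{1}{1-\theta} L^\intercal \*y$. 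Parametrizing the process via $t = \theta/(1-\theta) \in [0,1]$ as $\theta$ traverses $[0, 1/2]$, the effective coupling is exactly $(1-t) J$, so the hypothesis of the theorem gives $C(t)$-spectral independence for $Q_{\theta \to 1}(\*y, \cdot)$ uniformly in $\*y$.

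Next, I would establish entropic stability of $\mu$ with respect to the Brownian bridge process at a rate proportional to $\|J\|_2 \, C(t)$. The heuristic is clean: the F\"{o}llmer process adds Gaussian noise only in the directions spanned by $L$, so the infinitesimal rate at which $\mathrm{Ent}_{\mu_\theta}[f_\theta]$ grows is controlled by $\mathrm{Cov}(Q_{\theta \to 1}(\*y,\cdot))$ projected through $L$. Converting spectral independence to a covariance bound as in \Cref{lem:SI-equiv} (using the trivial Ising bound $\mathrm{Var}[X_i] \le 1$) gives $\mathrm{Cov}(Q_{\theta \to 1}(\*y, \cdot)) \preceq C(t) \cdot I$, and the operator-norm inequality $\|L \, A \, L^\intercal\|_2 \le \|J\|_2 \|A\|_2$ then delivers the required rate in the definition \eqref{eq:def-ent-stability} of entropic stability with $\phi(x) = x \log x$.

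Given this entropic stability estimate, invoking the conservation-of-entropy inequality \Cref{thm:CTEI-FD} up to $\theta_\star = 1/2$ and changing variables to $t = \theta/(1-\theta)$ yields
\[
\mathrm{Ent}_\mu[f] \le \exp\!\Bigl( \|J\|_2 \int_0^1 C(t) \, dt \Bigr) \cdot \E[\*y \sim \mu_{\theta_\star}]{ \mathrm{Ent}_{Q_{\theta_\star \to 1}(\*y, \cdot)}[f] }.
\]
At the endpoint $t = 1$ (i.e., $\theta_\star = 1/2$), the effective interaction vanishes, so $Q_{\theta_\star \to 1}(\*y, \cdot)$ is a product distribution on $\{-1, +1\}^n$ (determined by the tilted external field). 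Such product measures trivially satisfy approximate tensorization of entropy with constant $1$. Applying the annealing lemma \Cref{lem:annealing} combines the conservation of entropy from the previous step with the trivial tensorization at the endpoint, yielding the claimed approximate tensorization of entropy for $\mu$ with constant $\exp(\|J\|_2 \int_0^1 C(t) \, dt)$.

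The main obstacle is the quantitative entropic stability computation in the second step: one must carefully track how spectral independence of the conditional measure $Q_{\theta \to 1}(\*y, \cdot)$ translates into a bound on $\frac{d}{d\theta} \mathrm{Ent}_{\mu_\theta}[f_\theta]$, with the correct $\|J\|_2$ prefactor and the correct $1/(1-\theta)$ weight so that the change of variables cleanly produces the integral $\int_0^1 C(t) \, dt$. This is essentially a heat-flow/de~Bruijn-identity calculation along the F\"{o}llmer process, closely paralleling the Bakry--\'{E}mery route to log-Sobolev inequalities, and is the technical core of \cite[Theorem~49]{CE22}.
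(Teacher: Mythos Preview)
Your proposal is correct and follows essentially the same approach as the paper. The paper also decomposes $J = L^\intercal L$, runs the Brownian bridge/F\"{o}llmer process with driving matrix $L$, establishes entropic stability with rate $\frac{\|J\|_2}{1-\eta} C^\star(\eta)$ where $C^\star(\eta) = C(\eta/(1-\eta))$ (this is packaged as a separate lemma, \Cref{lem:EI-entropic-stability}, whose proof in the appendix is exactly the de~Bruijn-type computation you anticipate, ultimately invoking \cite[Lemma~40]{CE22}), then applies \Cref{thm:CTEI-FD} up to $\theta=1/2$, changes variables $t=\eta/(1-\eta)$ to get the integral $\|J\|_2\int_0^1 C(t)\,dt$, observes that $Q_{1/2\to 1}(\*y,\cdot)$ is a product distribution, and concludes via \Cref{lem:annealing}.
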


\Cref{thm:Ising-integral} was originally proved in \cite{CE22} using the framework of stochastic localization, 
with the proof relying extensively on stochastic calculus. 
Here, we re-establish the result through our noising/denoising interpretation of localization schemes, 
providing a more elementary proof.
To this end, we introduce the following lemma, which connects entropic stability with spectral independence.

\begin{lemma}\label{lem:EI-entropic-stability}
    Let $J = L^\intercal L \in \mathbb{R}^{n \times n}$ be a positive semi-definite matrix. 
    If there exists a function $C:[0,1) \to \mathbb{R}_{\ge 0}$ such that for all external fields $\*h \in \mathbb{R}^{n}$ and $\eta \in [0,1)$,
    the distribution $\mu_{J_\eta,\*h}$, where $J_\eta = (1-\frac{\eta}{1-\eta}) J$,  is $C(\eta)$-spectrally independent, 
    then 
    for any $\*h \in \mathbb{R}^n$ the distribution $\mu_{J,\*h}$ is entropically stable with rate $\frac{\norm{J}_{2}}{1-\eta} \cdot C $ with respect to the Brownian bridge process $(X_t)_{t \in [0,1]}$ with driving matrix $L$ where $X_0 \sim \mu_{J,\*h}$.
\end{lemma}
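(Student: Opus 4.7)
The proof is the entropic analogue of \cref{lem:variance-stability-SI}, with the continuous-time down walk replaced by the Brownian bridge process and variance replaced by entropy. My plan is to work throughout in the equivalent representation \eqref{eq:tilde-Y}, i.e.\ $\tilde Y_t = tLY_1 + B_t$ for $t \in [0,\infty)$, where $B_t$ is a standard Brownian motion independent of $Y_1 \sim \mu$. Under this reparameterization, $\eta \in [0,1)$ corresponds to $t = \eta/(1-\eta) \in [0,\infty)$, and the denoising process becomes a linear observation of $Y_1$ through an additive Gaussian channel, which makes the stochastic calculus tractable.

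First, I would identify the posterior $Q_{\eta \to 1}(\*y, \cdot)$ via Bayes' rule. Combining the Gaussian likelihood $\tilde Y_t \mid Y_1 = \*z \sim \+{N}(tL\*z, tI_r)$ with the prior $\mu = \mu_{J,\*h}$, a short computation gives
\[
    Q_{\eta \to 1}(\*y, \*z) \propto \exp\tp{\tfrac{1}{2}(1-t)\,\*z^\intercal J\*z + (\*h + L^\intercal \tilde{\*y})^\intercal \*z},
\]
which is exactly the Ising distribution $\mu_{J_\eta,\*h_\star}$ with $J_\eta = (1-t)J$; under the time change $t = \eta/(1-\eta)$ this matches the proximal sampler kernel from \cref{def:proximal-sampler}. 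By hypothesis, $\mu' := Q_{\eta \to 1}(\*y, \cdot)$ is therefore $C(\eta)$-spectrally independent. The Markov property of $(\tilde Y_t)$ then gives a Gaussian-mixture description of $Q_{\eta \to \theta}$: for $u > t$, conditional on $\tilde Y_t = \tilde{\*y}$, we have $\tilde Y_u = \tilde{\*y} + (u-t) L Y_1 + \sqrt{u-t}\,\*\xi$ with $Y_1 \sim \mu'$ and $\*\xi \sim \+{N}(0, I_r)$ independent.

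Second, I would compute the infinitesimal rate appearing on the left-hand side of \eqref{eq:def-ent-stability} via It{\^o}'s formula. The function $f_t(\tilde{\*y}) := \E[Y_1 \mid \tilde Y_t = \tilde{\*y}]{f(Y_1)}$ is a martingale along $(\tilde Y_t)$, and differentiating the posterior density gives the key identity $\grad_{\tilde{\*y}} f_t(\tilde{\*y}) = \Cov_{\mu'}(f(Y_1), L Y_1)$. Applying It{\^o} to $\phi(x) = x \log x$ together with \cref{lem:total-ent} yields, at the infinitesimal level,
\[
    \lim_{\theta \to \eta^+} \frac{\Ent[Q_{\eta \to \theta}(\*y,\cdot)]{f_\theta}}{\theta-\eta}
    = \frac{1}{2(1-\eta)^2}\cdot \frac{\norm{\grad f_t(\tilde{\*y})}^2}{f_t(\tilde{\*y})},
\]
where $(1-\eta)^{-2}$ is the Jacobian of $t = \eta/(1-\eta)$ and the right-hand side is a Fisher-information-type quadratic form in the posterior covariance under $\mu'$.

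Third, I would bound this Fisher information by $\Ent[\mu']{f}$ using spectral independence. The covariance characterization of spectral independence (\cref{lem:SI-equiv}) gives $\Cov(\mu') \preceq C(\eta)\, I_n$, since $\Var[\mu']{Y_{1,i}} \le 1$ for $\pm 1$-valued spins; combined with $\norm{L}_2^2 = \norm{J}_2$, this yields the operator bound $L\,\Cov(\mu')\,L^\intercal \preceq \norm{J}_2\, C(\eta)\, I_r$. Translating this operator inequality into the bound $\norm{\grad f_t}^2/f_t \le 2\norm{J}_2 C(\eta)\cdot \Ent[\mu']{f}$ via an exponential-family tilt argument, and then multiplying by the Jacobian, gives exactly entropic stability with rate $\norm{J}_2\, C(\eta)/(1-\eta)$ as claimed. \textbf{The main obstacle} is this final step: passing from a squared-gradient (Fisher) bound to an $x\log x$ entropy bound. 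In the variance/spectral case of \cref{lem:variance-stability-SI} the analogous step is trivial because both sides are quadratic, whereas here the Gaussian channel structure is essential---it allows entropy production to factor through the posterior covariance alone, bypassing a modified log-Sobolev inequality for $\mu'$ (which is itself what the lifted \cref{thm:Ising-integral} is meant to yield). This is the technical heart of the stochastic localization analysis in \cite{CE22,EM22} that our noising/denoising framework aims to reproduce transparently.
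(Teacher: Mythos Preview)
Your proposal is correct and arrives at the same two ingredients as the paper: (i) the derivative of the conditional entropy equals $\frac{1}{2(1-\eta)^2}(\*b(\nu)-\*b(\mu'))^\intercal J (\*b(\nu)-\*b(\mu'))$ (your Fisher-information expression, once you unpack $\grad f_t = f_t\cdot L(\*b(\nu)-\*b(\mu'))$), and (ii) this quadratic form is bounded by $\norm{J}_2 C(\eta)\,\Ent_{\mu'}[f]$ via the tilt argument of \cite[Lemma~40]{CE22}. The route to (i), however, is genuinely different. You work in the $\tilde Y_t$ representation and invoke It\^o's formula on $\phi(f_t)$, which is clean and conceptual but relies on the stochastic-calculus machinery (de Bruijn's identity) that the paper deliberately sets out to avoid. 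The paper instead performs an explicit second-order Taylor expansion of $Q_{\eta\to\eta+t}(\*x,\cdot)$: it writes the short-time kernel as a Gaussian mixture over $\*z\sim\mu'$ and $\*e\sim\+N(0,I)$, expands the ratio $N_{\*z,\*e}(t)/D_{\*z,\*e}(t)$ and then $x\log x$ of this ratio to order $t$, and reads off the same quadratic form after averaging over $\*z,\*e$. This is computationally heavier but entirely elementary, in keeping with the paper's stated goal of an information-theoretic proof requiring minimal stochastic calculus. The paper in fact remarks (at the start of its proof in the appendix) that your de Bruijn/Fisher-information route is a valid alternative. For step (ii), both you and the paper defer to \cite[Lemma~40]{CE22}; your phrase ``exponential-family tilt argument'' is exactly right, and the crucial point---which you should make explicit---is that the hypothesis gives $\Cov(\*\lambda * \mu') \preceq C(\eta) I$ for \emph{all} tilts $\*\lambda$, not just for $\mu'$ itself, and this uniform-in-tilt covariance bound is what the lemma needs.
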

The proof of~\Cref{lem:EI-entropic-stability} is deferred to~\Cref{sec:append-entropy}.

\begin{proof}[Proof of~\Cref{thm:Ising-integral}]
Let $(X_t)_{t \in [0,1]}$ be the Brownian bridge process starting from the Gibbs distribution $\mu=\mu_{J,\*h}$ with driving matrix $L \in \R^{r \times n}$, where $J = L^\intercal L$. 
By~\Cref{lem:EI-entropic-stability}, $\mu$ is entropically stable with rate $\frac{\norm{J}_{2}}{1-\eta} \cdot C^\star$ with respect to $(X_t)_{t \in [0,1]}$, where the function $C^\star:[0,1/2] \to \mathbb{R}_{\ge 0}$ is given by:
\begin{align*}
C^\star(\eta) = C\tp{\frac{\eta}{1-\eta}}.
\end{align*}
Let $\theta = 1/2$. By~\Cref{thm:CTEI-FD}, we have the following conservation of entropy:
    \begin{align*}
        \forall f :\{-1,+1\}^n \to \mathbb{R}_{\ge 0}, \quad \Ent[\mu]{f} &\le \exp\tp{\norm{J}_2 \int_{0}^{\theta} \frac{C^\star(\eta)}{1-\eta} \dif \eta} \cdot \E[x \sim \mu_{\theta}]{\Ent[Q_{\theta \to 1}(x,\cdot)]{f}}\\
        &=\exp\tp{\norm{J}_{2}\int_{0}^{\theta} \frac{C\tp{\eta/(1-\eta)}}{(1-\eta)^2} \dif \eta} \cdot \E[x \sim \mu_{\theta}]{\Ent[Q_{\theta \to 1}(x,\cdot)]{f}}\\
        &= \exp\tp{\norm{J}_{2} \int_0^1 C(t) \dif t}\cdot \E[x \sim \mu_{\theta}]{\Ent[Q_{\theta \to 1}(x,\cdot)]{f}}.
    \end{align*}
Recall that for all $x \in \mathbb{R}^n$, the distribution $Q_{\theta \to 1}(x,\cdot)$ is a product distribution over $\{-1,+1\}^n$ when $\theta = 1/2$ (see the discussion after \cref{def:proximal-sampler}), which exhibits the approximate tensorization of entropy with constant $1$. Together with~\Cref{lem:annealing}, we conclude the proof of~\Cref{thm:Ising-integral}.
\end{proof}

\subsubsection{Mixing at the critical temperature}

\fixed{To prove the mixing time upper bound in~\Cref{thm:Ising-graphical}, we need the following spectral independence result for sub-critical Ising models.}

\begin{lemma}[\cite{chen2023rapid}]
	\label{lem:Ising-graphical-SI}
Let $G$ be a graph with $n$ vertices and maximum degree $\Delta \ge 3$, 
and let $\mu$ be the Gibbs distribution of the Ising model on $G$ with inverse temperature $\beta \in \mathbb{R}$ and external fields $\*h \in \mathbb{R}^V$. 
\fixed{If $(\Delta - 1) \tanh \abs{\beta} \le 1-\delta$ for some $\delta > 0$, then  $\mu$ is $\frac{\Delta}{\Delta-1}\frac{1}{\delta}$-spectrally independent.}
\end{lemma}


\fixed{We now proceed to prove the upper bound in \Cref{thm:Ising-graphical}.}

\fixed{
	\begin{proof}[Proof of the upper bound in~\Cref{thm:Ising-graphical}]
		 By symmetry, we only consider the $\beta > 0$ case.
  When $\beta < 0$, the theorem can be proved by essentially the same proof.
  Let $G=(V,E)$ be a graph with maximum degree $\Delta$ and $n$ vertices.
  Let $\mu$ be the Gibbs distribution of the Ising model on $G$ with inverse temperature satisfying $\beta = \beta_c$ and with arbitrary fields $\*h \in \mathbb{R}^n$. By definition, the interaction matrix $J$ is given by $J = \beta_c \tp{\Delta I + A_G} \succeq 0$, where $A_G$ is the adjacency matrix of $G$.
  By~\Cref{thm:Ising-integral} and~\Cref{lem:Ising-graphical-SI},  and applying a change of variables, the distribution $\mu$ satisfies the approximate tensorization of entropy with constant $\exp\tp{\norm{J}_{2} \int_0^{1} C(t) \dif t}$, where $C:[0,1] \to \mathbb{R}_{\ge 0}$ is given by
    \begin{align*}
        \forall t \in [0,1],\quad C(t) = 
            \min\tp{\frac{\Delta}{\Delta-1} \cdot \frac{1}{1-(\Delta-1) \tanh \tp{\beta_c t}}, n}.
    \end{align*}
    By a straightforward calculations, we have
    \begin{align}\label{eq:integral-partial}
        \nonumber \frac{1}{1 - (\Delta-1) \tanh \tp{\beta_c t}} \cdot \frac{\Delta}{\Delta-1} 
        &= \frac{\exp\tp{2 \beta_c t} + 1}{\Delta - (\Delta-2) \exp\tp{2\beta_c t } } \cdot \frac{\Delta}{\Delta-1} \\
        \nonumber &= \frac{1}{\Delta-1} + \frac{2 \exp\tp{2 \beta_c t }}{\Delta - (\Delta-2) \exp\tp{2 \beta_c t}}\\
        &= \frac{1}{\Delta-1} + \frac{2}{\Delta-2} \cdot \frac{1}{\exp\tp{2\beta_c (1-t)} -1},
    \end{align}
    where in the last equation, we use the fact that $\exp(2\beta_c) = \frac{\Delta}{\Delta-2}$ from the definition of the critical inverse temperature.
    Thus, the integral $\norm{J}_{2} \int_0^{1} C(t) \dif t $ can be bounded by
    \begin{align*}
    \norm{J}_{2} \int_0^{1} C(t) \dif t & \le 2\beta_c \Delta \int_0^{1} C(t) \dif t \le 2\beta_c \Delta + 2\beta_c \Delta \int_0^{1-1/n} C(t) \dif t\\
    \tp{\text{change of variable and~\eqref{eq:integral-partial}}} \quad &\le 2\beta_c \Delta + \frac{2\beta_c \Delta}{\Delta-1} + \frac{4\beta_c \Delta}{\Delta-2} \cdot \int_{1/n}^{1} \frac{1}{\exp \tp{2 \beta_c t} - 1} \dif t\\
    &\overset{(\star)}{\le} 2\beta_c \Delta + \frac{2\beta_c \Delta}{\Delta-1}\\
    &+ \frac{4 \beta_c \Delta}{\Delta-2} \cdot \frac{\log \tp{\exp(2\beta_c) - 1} - \log \tp{\exp\tp{2\beta_c/n} - 1}}{2\beta_c}\\
    &\overset{(*)}{\le} 2\beta_c \Delta+ \frac{2\beta_c \Delta}{\Delta-2} + \frac{4\Delta}{\Delta-2} + \frac{2 \Delta \log n}{\Delta-2}\\
    (\beta_c \Delta \le 3) \quad &\le 24 + \frac{2 \Delta \log n}{\Delta-2}.
    \end{align*}
    Here, the inequality $(\star)$ follows from $\int \frac{1}{\exp\tp{2\beta t}-1} \dif t = \frac{\log \tp{\exp\tp{2\beta t}-1}}{2\beta} - t + C$ and $(*)$ follows from the inequalities $\log \tp{\exp(2\beta_c)-1} \le \log(2 \e^2 \beta_c)$ where we use $0 \le \beta_c \le 1$, and $- \log \tp{\exp\tp{2\beta_c/n} - 1} \le -\log \tp{2\beta_c/n} = \log n - \log (2\beta_c)$.
    Therefore, the Gibbs distribution $\mu$ satisfies the approximate tensorization of entropy with constant
    \begin{align*}
        \exp\tp{\norm{J}_{2} \int_0^{1} C(t) \dif t} \le \exp\tp{24 +\frac{2\Delta \log n}{\Delta-2}} = O\tp{n^{2 + \frac{4}{\Delta-2}}}.
    \end{align*}
    This indicates that the Glauber dynamics on the Gibbs distribution of the critical Ising model mixes within time $O\tp{n^{3 + \frac{4}{\Delta-2}} \log n}$.
\end{proof}}

\begin{remark}\label{rmk:FD-Ising}
Despite of the success in analyzing Glauber dynamics for the critical hardcore model, 
the current approach based on
field dynamics may not be suitable for the critical Ising model. 
Specifically, the Ising model with the critical inverse temperature $\beta_c(\Delta)$ and a slightly biased external field $\exp\tp{h} = 1-\delta$ exhibits a spectral independence bound of $\Theta\tp{\frac{1}{\delta^2}}$, in contrast to the $O\tp{\frac{1}{\delta}}$-spectral independence as in the hardcore model. This results in a sub-exponential mixing time upper bound by simply applying a similar integration 
but in the context of field dynamics.
\end{remark}


\subsubsection{Mixing at the critical interaction norm}

We now consider the Gibbs distribution $\mu$ of the Ising model specified by an interaction matrix $J \in \mathbb{R}^{n \times n}$, 
where $J$ is symmetric and positive semi-definite.
Let $\*h \in \mathbb{R}^n$ denote the external fields. 
The Gibbs distribution $\mu=\mu_{J,\*h}$ is defined as follows:
\begin{align}\label{eq:Ising-model-interaction-matrix-2}
\forall \*x \in \{-1,+1\}^n, \quad \mu(\*x) \propto \exp\tp{\frac{1}{2} \*x^\intercal J \*x + \*h^\intercal \*x}.
\end{align}

\fixed{We present a stronger upper bound than that stated in \Cref{thm:Ising-interaction}, extending the same $O\tp{n^{3/2} \log n}$ mixing time bound beyond criticality by a margin of  $O\tp{1/\sqrt{n}}$.}

\begin{theorem}\label{thm:upper-Ising-interaction-strong}
For any  symmetric matrix $J \in \mathbb{R}^{n \times n}$ such that $0 \preceq J \preceq \tp{1+\frac{\alpha}{\sqrt{n}}} I$ for some constant $\alpha \ge 0$, and for any $\*h \in \mathbb{R}^n$,
the mixing time of the Glauber dynamics for  $\mu_{J,\*h}$
is $O(n^{3/2} \log n)$.
\end{theorem}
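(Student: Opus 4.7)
The plan is to invoke \Cref{thm:Ising-integral}, which reduces the problem to establishing a spectral independence bound $C(t)$ for the annealed family $\{\mu_{(1-t)J,\*h^\star}\}_{t\in[0,1],\,\*h^\star \in \mathbb{R}^n}$ and then computing $\norm{J}_2\int_0^1 C(t)\dif t$. Since $0\preceq J\preceq (1+\alpha/\sqrt{n})I$, the matrix $(1-t)J$ has spectral norm $(1-t)(1+\alpha/\sqrt{n})$, which lies strictly below $1$ once $t>\alpha/(\sqrt{n}+\alpha)$. So only a $\Theta(1/\sqrt{n})$-neighborhood of $t=0$ requires a critical treatment, and the rest of $[0,1]$ is subcritical.

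To cover both regimes, I would combine two spectral independence bounds that hold uniformly in the external fields for an Ising model $\mu_{J',\*h'}$ with a PSD interaction matrix $J'$: first, a \emph{subcritical} bound saying that if $\norm{J'}_2\le 1-s$ then $\mu_{J',\*h'}$ is $\frac{1+o(1)}{s}$-spectrally independent (following the standard covariance estimates from \cite{EKZ22,anari2022entropic,CE22}); and second, a \emph{critical-window} bound saying that if $\norm{J'}_2\le 1+\alpha/\sqrt{n}$ then $\mu_{J',\*h'}$ is $O_\alpha(\sqrt{n})$-spectrally independent, the general-matrix counterpart of the critical statement in \Cref{lem:Ising-graphical-SI}. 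Setting the threshold $t^\star=2\alpha/\sqrt{n}$, I apply the subcritical bound to $J'=(1-t)J$ on $[t^\star,1]$ and the critical-window bound on $[0,t^\star]$, so that
\begin{align*}
    \norm{J}_2 \int_0^1 C(t)\dif t \le \tp{1+\tfrac{\alpha}{\sqrt{n}}}\tp{O_\alpha(\sqrt{n})\cdot t^\star + \int_{t^\star}^1\frac{1+o(1)}{t-\alpha/\sqrt{n}}\dif t} = \tfrac{1}{2}\log n + O_\alpha(1),
\end{align*}
with constants tracked as in the proof of \Cref{thm:upper-Ising-graph-strong}. By \Cref{thm:Ising-integral}, $\mu_{J,\*h}$ then satisfies approximate tensorization of entropy with constant $\exp\tp{\tfrac{1}{2}\log n+O_\alpha(1)}=O(\sqrt{n})$, and the entropy case of \Cref{lem:tensorization-implies-mixing}, combined with $\log(1/\mu_{\min})=O(n)$ for the Ising model with bounded external fields, yields mixing time $O(\sqrt{n}\cdot n\cdot\log n)=O(n^{3/2}\log n)$.

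The hard part will be establishing the $O_\alpha(\sqrt{n})$ critical spectral independence for an arbitrary PSD interaction matrix. The percolation argument behind the graph version (\Cref{sec:Ising-graph-SI}) is built on the bounded-degree combinatorial structure --- self-avoiding walks and local topology --- and does not immediately extend to a general $J'$. A natural substitute is the Hubbard--Stratonovich representation: writing $J'=L^\intercal L$ with $\*y\sim\+N(\*0,I_r)$, we have $\mu_{J',\*h'}(\*x)\propto \E[\*y]{\exp\tp{(\*h'+L^\intercal\*y)^\intercal\*x}}$, expressing $\mu_{J',\*h'}$ as a Gaussian mixture of product distributions on $\{\pm 1\}^n$. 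The law of total covariance then splits $\Cov(\mu_{J',\*h'})$ into an in-component (product) term, dominated by the diagonal matrix of single-spin variances, and a between-component (pushforward) term, whose spectral norm we would bound by $O_\alpha(\sqrt{n})$ via Gaussian concentration of $\tanh(\*h'+L^\intercal\*y)$ along the top eigenspace of $J'$ near criticality. This mirrors the $\Theta(n^{3/4})$ fluctuation of the magnetization in the critical mean-field Ising model, which is precisely the phenomenon responsible for the matching $\Omega(n^{3/2})$ lower bound of \cite{LLP10}.
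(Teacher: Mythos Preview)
Your overall framework---invoking \Cref{thm:Ising-integral} and splitting the integral into a critical window near $t=0$ of width $O(1/\sqrt{n})$ plus a subcritical bulk---matches the paper's, and the integral computation is correct. However, the paper takes a different route precisely to avoid the step you flag as hard. Rather than establishing $O_\alpha(\sqrt{n})$ spectral independence for an \emph{arbitrary} PSD interaction matrix at criticality, the paper first restricts to rank-one $J=\*u\*u^\intercal$ (\Cref{lem:Ising-interaction-SI}). In the rank-one case the SAW-tree influence bound reduces to a sum over self-avoiding walks in the complete graph weighted by $u_i^2$, and a birthday-paradox estimate (\Cref{lem:birthday-paradox}) on the length of such walks gives the $O(\sqrt{n})$ bound directly. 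Applying \Cref{thm:Ising-integral} to each rank-one model then yields approximate tensorization of entropy with constant $O(\sqrt{n})$, hence an $\Omega(n^{-1/2})$ modified log-Sobolev constant, for every rank-one Ising model with $\norm{u}_2^2\le 1+\alpha/\sqrt{n}$. The passage to general $J$ is \emph{not} by re-running the localization/integration argument, but by the needle decomposition of \cite[Corollary~30]{anari2021entropic} (see also \cite{EKZ22}), which writes $\mu_{J,\*h}$ as a mixture of rank-one Ising models with $\norm{u}_2^2\le\norm{J}_2$ and transfers the MLSI bound from the components to the mixture.

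Your Hubbard--Stratonovich suggestion is a reasonable heuristic, but as stated it is not a proof. Bounding the spectral norm of the between-component covariance $\Cov_{\*y}\bigl[\tanh(\*h'+L^\intercal\*y)\bigr]$ by $O(\sqrt{n})$ does not by itself yield spectral independence, which requires $\Cov(\mu)\preceq C\cdot\diag\{\Var[X_i]\}$; under arbitrary external fields the diagonal entries can be arbitrarily small, so a pure spectral-norm bound on the covariance is insufficient. Making this route rigorous for general $J$---if it can be done at all---would require substantially more than what you sketch. The rank-one reduction plus needle decomposition sidesteps this entirely and is what you are missing.
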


\fixed{In the mean-field Ising model, it has been shown in~\cite{ding2009meanfield} that the Glauber dynamics mixes in time $\Theta(n^{3/2})$ when the inverse temperature $\beta$ lies in a $\Theta(1/\sqrt{n})$-length window centered at its critical threshold. \Cref{thm:upper-Ising-interaction-strong} indicates a similar phenomenon for Ising models with general interaction matrices.}

To prove \cref{thm:upper-Ising-interaction-strong}, it suffices to consider the critical Ising model specified by a rank-one interaction matrix,
and the general case follows from the needle decomposition~\cite[Corollary 30]{anari2021entropic} and \cite{EKZ22} which decomposes the Ising model of interest into a mixture of rank-one Ising models satisfying nice properties.
\fixed{First}, we introduce the following lemma that establishes spectral independence for the critical rank-one Ising model.

\begin{lemma}\label{lem:Ising-interaction-SI}
    Let $J = \*u \*u^\intercal \in \mathbb{R}^{n \times n}$ be a rank-one matrix, and let $\*h \in \mathbb{R}^n$.
    Consider the Gibbs distribution $\mu = \mu_{J,\*h}$ of the Ising model with interaction matrix $J$ and external fields $\*h$.
    \begin{enumerate}
        \item \label{item:lem:Ising-interaction-SI-1}
        \textnormal{(\cite[Proposition 31]{anari2021entropic})} If $\norm{u}_2^2 \le 1-\delta$ for some $\delta > 0$, then $\mu$ is $\frac{1}{\delta}$-spectrally independent.
        \item \label{item:lem:Ising-interaction-SI-2}
        If $\norm{u}_2^2 \le 1+\frac{\alpha}{\sqrt{n}}$ for some constant $\alpha \ge 0$, then  $\mu$ is $O_\alpha(\sqrt{n})$-spectrally independent.
    \end{enumerate}
\end{lemma}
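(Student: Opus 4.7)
The plan is to reduce the spectral independence of $\mu$ to a one-dimensional Poincar\'e inequality via Hubbard--Stratonovich. The rank-one structure $J = \*u \*u^\intercal$ enables the decoupling
\[
    \exp\tp{\tfrac{1}{2}(\*u^\intercal \*x)^2} = \E[Y \sim \mathcal{N}(0,1)]{\exp\tp{Y\, \*u^\intercal \*x}},
\]
which expresses $\mu$ as a mixture of product measures over a scalar auxiliary variable $Y$ whose marginal is $\pi(y) \propto \exp(g(y))$ with
\[
    g(y) := -\tfrac{y^2}{2} + \textstyle\sum_{i=1}^n \log \cosh(u_i y + h_i).
\]
Conditionally on $Y = y$, the coordinates $X_i$ are independent with $\E{X_i \mid Y = y} = \tanh(u_i y + h_i)$. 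The law of total covariance then gives $\Cov(\mu) = D + M$, where $D$ is diagonal with $D_{ii} = \E[\pi]{1 - \tanh^2(u_i Y + h_i)}$ and $M_{ij} = \Cov_\pi\tp{\tanh(u_i Y + h_i),\, \tanh(u_j Y + h_j)}$. Jensen's inequality applied to $x \mapsto x^2$ yields $D_{ii} \le \Var[X \sim \mu]{X_i}$, so $D \preceq \mathrm{diag}\{\Var[X \sim \mu]{X_i}\}$.

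Letting $C_P = C_P(\pi)$ denote the Poincar\'e constant of $\pi$, I will bound $M$ in terms of $D$: for any $\*z \in \R^n$,
\begin{align*}
    \*z^\intercal M \*z = \Var[\pi]{\textstyle\sum_i z_i \tanh(u_i Y + h_i)} &\le C_P\, \E[\pi]{\Bigl(\textstyle\sum_i z_i u_i \tp{1 - \tanh^2(u_i Y + h_i)}\Bigr)^2} \\
    &\le C_P \norm{\*u}_2^2 \, \E[\pi]{\textstyle\sum_i z_i^2 \tp{1 - \tanh^2(u_i Y + h_i)}} = C_P \norm{\*u}_2^2\, \*z^\intercal D \*z,
\end{align*}
where the second inequality applies Cauchy--Schwarz together with $\sum_i u_i^2 (1 - \tanh^2) \le \norm{\*u}_2^2$. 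Combining with the bound on $D$, $\mu$ is $(1 + C_P \norm{\*u}_2^2)$-spectrally independent. Part~1 then follows immediately: when $\norm{\*u}_2^2 \le 1 - \delta$, the potential satisfies $-g''(y) \ge \delta$ uniformly, so $\pi$ is $\delta$-strongly log-concave and $C_P \le 1/\delta$ by Brascamp--Lieb, yielding the spectral independence bound $1 + (1-\delta)/\delta = 1/\delta$ and recovering \cite[Proposition~31]{anari2021entropic}.

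For Part~2 with $\norm{\*u}_2^2 \le 1 + \alpha/\sqrt{n}$, the remaining task is to show $C_P(\pi) = O_\alpha(\sqrt{n})$. The potential is no longer strictly convex near its mode $y^*$ (solving $g'(y^*) = 0$), since $-g''(y) = 1 - \sum_i u_i^2 (1 - \tanh^2(u_i y + h_i))$ can be as negative as $-\alpha/\sqrt{n}$, but $\pi$ remains nearly log-concave with only a small ``bump'' around $y^*$. Taylor-expanding $g$ at $y^*$ shows the dominant confinement is quartic with coefficient of order $\norm{\*u}_4^4$; by the power-mean inequality $\norm{\*u}_4^2 \ge \norm{\*u}_2^2/\sqrt{n} \ge 1/\sqrt{n}$, so the effective bulk has width $O(n^{1/4})$ and $\Var[\pi]{Y} = O(\sqrt{n})$. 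Outside the bulk, $g$ is dominated by its Gaussian part and decays exponentially. Patching the two regimes via a one-dimensional Poincar\'e criterion (e.g., Muckenhoupt's) gives $C_P(\pi) = O_\alpha\tp{\Var[\pi]{Y}} = O_\alpha(\sqrt{n})$.

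The main obstacle is this bulk analysis for Part~2: even the tiny $O(1/\sqrt{n})$ anti-convexity of $-g''$ near the mode obstructs a direct application of log-concave Poincar\'e bounds such as Bobkov's. For $\*h = \*0$ symmetry places $y^* = 0$ and the Taylor expansion is clean; for general $\*h$ one must expand at the shifted mode and check that the quartic coefficient stays $\Omega(\norm{\*u}_4^4)$. A robust alternative is to bound $\E[\pi]{Y^4}$ via the integration-by-parts identity $\E[\pi]{Y g'(Y)} = -1$ together with the Taylor expansion $\tanh t = t - t^3/3 + O(t^5)$, giving $\E[\pi]{Y^4} = O(n)$ and hence $\Var[\pi]{Y} = O(\sqrt{n})$; the Poincar\'e bound then follows from a one-dimensional comparison argument, bypassing direct analysis of the log-concavity of $\pi$ and making the dependence on $\*h$ transparent.
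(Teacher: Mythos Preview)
Your approach is genuinely different from the paper's. The paper bounds $\lambda_{\max}(\Psi_\mu)$ combinatorially: it passes to the self-avoiding walk tree, uses $|\Psi_{\mu_{\mathcal{T}}}(e_{k-1},e_k)| \le |\tanh(u_{e_{k-1}}u_{e_k})| \le |u_{e_{k-1}}||u_{e_k}|$ along each path, and after reweighting rows/columns by $|u_i|$ obtains
\[
\lambda_{\max}(\Psi_\mu) \le \sum_{\ell \ge 0} \Pr{T \ge \ell}\,\|\*u\|_2^{2\ell},
\]
where $T$ is the first repeat time in an i.i.d.\ sequence on $[n]$ with law $\propto u_k^2$. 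Part~1 then drops $\Pr{T\ge\ell}$ and sums a geometric series; Part~2 invokes the birthday bound $\Pr{T\ge\ell}\le e^{-\ell(\ell-1)/(2n)}$ to get $O_\alpha(\sqrt{n})$. The argument is entirely insensitive to $\*h$, since the edge-influence bound $|\tanh(u_iu_j)|\le|u_i||u_j|$ is.

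Your Hubbard--Stratonovich route is correct and clean for Part~1 (it is essentially the argument in \cite{anari2021entropic} that the statement already cites), and the reduction $\Cov(\mu)\preceq(1+C_P\|\*u\|_2^2)\,\mathrm{diag}\{\Var{X_i}\}$ is valid. The weakness is Part~2: your sketch of $C_P(\pi)=O_\alpha(\sqrt{n})$ is not yet a proof. When $\|\*u\|_2^2>1$ the density $\pi$ can be genuinely bimodal (two local maxima of $g$), so ``the mode $y^*$'' is ambiguous and the quartic expansion you describe must be done at each well plus a barrier estimate; you note the barrier is $O_\alpha(1)$ in the mean-field case, but for general $\*u,\*h$ you would need to show the quartic coefficient at each well stays $\Omega(\|\*u\|_4^4)$ (it can vanish, e.g.\ when $\tanh^2(u_iy^*+h_i)=1/3$), or fall back to higher-order terms. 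The ``robust alternative'' via $\E[\pi]{Yg'(Y)}=-1$ and $\tanh t = t - t^3/3 + O(t^5)$ tacitly assumes $|u_iY+h_i|$ is small, which need not hold, and getting from moment bounds to $C_P$ via Muckenhoupt still requires pointwise control of $\pi$. None of this looks insurmountable in one dimension, but it is real work that the paper's combinatorial argument sidesteps entirely: the SAW/birthday proof is a few lines and uniform in $\*h$. Conversely, your route is more conceptual and, if the 1D Poincar\'e estimate is nailed down, would likely extend to low-rank $J$ where the paper's path-counting does not obviously help.
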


The sub-critical case (\Cref{item:lem:Ising-interaction-SI-1}) has already been established in \cite[Proposition 31]{anari2021entropic}.
We provide an alternative proof for this case, which is also generalizable to the critical point (\Cref{item:lem:Ising-interaction-SI-2}). 
This proof of \Cref{lem:Ising-interaction-SI} is detailed in~\Cref{sec:SI-criticality}.

We are now ready to prove~\Cref{thm:upper-Ising-interaction-strong}. The proof follows a structure analogous to that of \fixed{\Cref{thm:Ising-graphical}}, so we will omit certain calculations that are similar to those previously discussed.
\begin{proof}[Proof of~\Cref{thm:upper-Ising-interaction-strong}]
    Let $n$ be sufficiently large, specifically $n \ge 10 \alpha^2$. Let $\mu$ be the Gibbs distribution of the Ising model specified by the rank-one interaction matrix $J=\*u \*u^\intercal \in \mathbb{R}^{n \times n}$ with $\norm{u}_2^2 \le 1+\frac{\alpha}{\sqrt{n}}$ and external fields $\*h \in \mathbb{R}^n$. By~\Cref{thm:Ising-integral} and \Cref{lem:Ising-interaction-SI}, and applying a change of variables, the distribution $\mu$ satisfies the approximate tensorization of entropy with constant $\exp\tp{\int_0^{1+\alpha/\sqrt{n}} C(t) \dif t}$, where $C:[0,1+\alpha/\sqrt{n}] \to \mathbb{R}_{\ge 0}$ is given by
    \begin{align*}
        C(t) = \begin{cases}
            \min\tp{\frac{1}{1-t}, K \sqrt{n}} & t \in [0,1],\\
            K \sqrt{n} & t \in [1,1+3\alpha/\sqrt{n}].
        \end{cases}
    \end{align*}
    Here, $K \ge 0$ is a sufficiently large constant. By straightforward calculations, we have
    \begin{align*}
        \exp\tp{\int_0^{1+\alpha/\sqrt{n}} C(t) \dif t} \le \exp\tp{(\alpha+1)K + \frac{1}{2}\log n} = O\tp{n^{1/2}}.
    \end{align*}
    This implies that the Glauber dynamics of any rank-one Ising model with $\norm{u}^2_2 \le 1+\frac{\alpha}{\sqrt{n}}$ has an $\Omega(n^{-1/2})$ modified log-Sobolev constant. Together with the needle decomposition~\cite[Corollary 30]{anari2021entropic}\footnote{We remark that $\norm{u}_2^2 \le \norm{J}_2$ should be the correct criteria in the statement of the needle decomposition.}, the Glauber dynamics of any Ising model with $\norm{J}_{2} \le 1 + \frac{\alpha}{\sqrt{n}}$ has an $\Omega(n^{-1/2})$ modified log-Sobolev constant, thereby implying an $O\tp{n^{3/2} \log n}$ mixing time. 
\end{proof}

\section{\texorpdfstring{$O(\sqrt{n})$-Spectral Independence at Criticality}{O(sqrt(n))-Spectral Independence at Criticality}}\label{sec:SI-criticality}
\fixed{In this section, we consider the Ising model where criticality is characterized by the norm of the rank-one interaction matrix $J = \*u \*u^\intercal \in \mathbb{R}^{n \times n}$. 
In this context, we prove \Cref{lem:Ising-interaction-SI}.}

\begin{proof}[Proof of~\Cref{lem:Ising-interaction-SI}]
    Without loss of generality, we assume $u_i \neq 0$ for all $1 \le i \le n$.
    Let $D = \-{diag}\set{u_i}_{i=1}^n$ be the diagonal matrix with $D_{ii} = u_i$.
    Since $u_i \neq 0$ for all $1 \leq i \leq n$, it holds that $\lambda_{\max}(\Psi_\mu) = \lambda_{\max}(D^{-1}\Psi_\mu D)$; and we will bound $\lambda_{\max}(\Psi_\mu) = \lambda_{\max}(D^{-1}\Psi_\mu D)$ by
    \begin{align} \label{eq:row-sum-inf-mat}
      \norm{D^{-1}\Psi_\mu D}_\infty &= \max_{1 \le i \le n} \abs{u_i}^{-1} \sum_{1 \le j \le n} \abs{u_j} \abs{\Psi_{\mu}(i,j)}.
    \end{align}
    Let $\+T^{\mathrm{SAW}}_i$ be the self-avoiding walk tree (recall its definition at~\Cref{sec:def-SAW-tree}) rooted at element $1 \le i \le n$, and let $\mu_{\+T_i}$ be the corresponding Gibbs distribution on $\+T^{\mathrm{SAW}}_i$.
    By~\eqref{eq:row-sum-inf-mat} and the preservation of total influence in the self-avoiding walk tree \cite[Lemma 8]{chen2023rapid}, we have
    \begin{align}\label{eq:Ising-interaction-SI-bound}
        \lambda_{\max}(\Psi_{\mu}) \le \max_{1 \le i \le n} \abs{u_i}^{-1} \sum_{1 \le j \le n} \abs{u_j} \abs{\Psi_{\mu}(i,j)} \le \max_{1 \le i \le n} \abs{u_i}^{-1} \sum_{v \in V\tp{\+T^{\mathrm{SAW}}_i}} \abs{u_v} \cdot \abs{\Psi_{\mu_{\+T_i}}(i,v)}.
    \end{align}
    Fixing the starting element $1 \le i \le n$, the vertices $v$ in the self-avoiding walk tree $\+T_i^{\mathrm{SAW}}$ correspond one-to-one to tuples $(e_0,e_1,\ldots,e_\ell)$ with $e_0,e_1,\ldots,e_\ell$ being pairwise distinct and $e_0 = i$. Furthermore, by the conditional independence of influence \cite[Lemma 15]{chen2023rapid}, we have
    \begin{align*}
    \abs{\Psi_{T_i^{\mathrm{SAW}}}(i,v)} = \prod_{i=1}^\ell \abs{\Psi_{T_i^{\mathrm{SAW}}}(e_{i-1},e_i)} \le \prod_{i=1}^\ell \abs{\tanh (u_{e_{i-1}} u_{e_i})} \le \prod_{i=1}^{\ell} \abs{u_{e_{i-1}}} \abs{u_{e_i}}.
    \end{align*}
    Together with~\eqref{eq:Ising-interaction-SI-bound}, we can bound the maximum eigenvalue of the influence matrix as follows:
    \begin{align}\label{eq:pre-interaction}
        \lambda_{\max}(\Psi_{\mu}) 
 \le \sum_{\substack{(e_0,e_1,\ldots,e_\ell) \in V^\ell\\ e_0,e_1,\ldots,e_\ell \text{ are pairwise distinct, }e_0 = i}} \prod_{i=1}^\ell \abs{u_{e_i}}^2. 
    \end{align}
    Let $x_0,x_1,\ldots,x_n$ be i.i.d.~random variables on $[n]$ such that $x_i$ takes value $k$ with probability proportional to $u_{k}^2$, and $T$ is the first index with $e_T = e_j$ for some $0 \le j < T$. 
    Then, by \eqref{eq:pre-interaction},
    \begin{align} \label{eq:interaction}
      \lambda_{\max}(\Psi_{\mu})
      \leq \sum_{\ell = 0}^{+\infty} \Pr[]{T \ge \ell} \cdot \norm{u}_2^{2\ell}.
    \end{align}
    Now, we prove  \Cref{item:lem:Ising-interaction-SI-1} of \Cref{lem:Ising-interaction-SI}. Assume $\norm{u}_2^2 \le 1-\delta$. From~\eqref{eq:interaction}, we have
    \begin{align*}
    \lambda_{\max}(\Psi_{\mu}) \le \sum_{\ell=0}^{+\infty} \norm{u}_2^{2\ell} \le \frac{1}{\delta}.
    \end{align*}
    Next, we prove \Cref{item:lem:Ising-interaction-SI-2} of \Cref{lem:Ising-interaction-SI}. Assume $\norm{u}_2^2 \le 1+\frac{\alpha}{\sqrt{n}}$ for some constant $\alpha \ge 0$.
    By~\Cref{lem:birthday-paradox}, the right-hand side of~\eqref{eq:interaction} satisfies:
    \begin{align}\label{eq:integration-approx}
    \nonumber\sum_{\ell=0}^{+\infty} \Pr[]{T \ge \ell}\cdot \norm{u}_2^{2\ell} &\le 1 + \sum_{\ell=1}^{+\infty} \exp\tp{-\frac{(\ell-1)^2}{2n}} \cdot \tp{1+\frac{\alpha}{\sqrt{n}}}^\ell\\
    &\le 1+2\sqrt{n} \cdot \underbrace{\sum_{k\cdot\sqrt{n} \in \mathbb{N}} \frac{1}{\sqrt{n}} \cdot \exp\tp{-\frac{k^2}{2}} \cdot \exp\tp{\alpha k}}_{I_n}.
    \end{align}
    Note that $I_n$ converges to $\int_{0}^{+\infty} \exp\tp{\alpha t} \exp\tp{-\frac{t^2}{2}}\dif t < +\infty$ as $n\to\infty$. Thus, from~\eqref{eq:interaction} and~\eqref{eq:integration-approx}, the maximum eigenvalue of the influence matrix $\Psi_{\mu}$ is $O(\sqrt{n})$. This concludes the proof.
\end{proof}


\section{Lower Bounds}
\label{sec:LB}

In this section, we prove the lower bounds on the mixing time of Glauber dynamics for both the critical hardcore and Ising models, as stated in \Cref{thm:hardcore} and \Cref{thm:Ising-graphical}.

These lower bounds on mixing time are derived from the lower bounds on spectral independence. 
First, we present such a lower bound for the critical hardcore model.

\begin{theorem}\label{thm:lower-bound-hardcore}
  Let $\Delta \ge 3$ be a constant.
  For all  $n\ge 1$, there exists a bipartite graph $G=(L,R,E)$ with $\abs{L}=\abs{R}=2n$ and maximum degree at most $\Delta$ 
  such that the Gibbs distribution $\mu$ of the hardcore model on $G$ with critical fugacity $\lambda=\lambda_c(\Delta)= \frac{(\Delta-1)^{\Delta-1}}{(\Delta-2)^\Delta}$ satisfies 
  \begin{align*}
    \lambda_{\max} \tp{\Psi_\mu} = \Omega(n^{1/3}),
  \end{align*}
  where $\Psi_\mu$ is the influence matrix of  $\mu$.
\end{theorem}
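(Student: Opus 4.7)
My plan is to witness large spectral independence through the covariance characterization in \Cref{lem:SI-equiv}(\ref{item:SI-2}): $C$-spectral independence is equivalent to $\mathrm{Cov}(\mu) \preceq C \cdot \mathrm{diag}\{\Var[X\sim \mu]{X_i}\}$, so it suffices to exhibit a test vector $w$ and a bipartite graph $G$ for which the quadratic form $w^{\intercal} \mathrm{Cov}(\mu)\, w$ dominates $\sum_i w_i^2\, \Var[\mu]{X_i}$ by a factor of $\Omega(n^{1/3})$. I take $w = \mathbf{1}_L - \mathbf{1}_R$, so that the quadratic form equals the variance of the \emph{staggered magnetization} $M := \sum_{v\in L} X_v - \sum_{v\in R} X_v$, while the normalizing sum is at most $|V| = 4n$. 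The task thus reduces to exhibiting a bipartite graph of maximum degree $\Delta$ on which, at critical fugacity, $\Var[\mu]{M} = \Omega(n^{4/3})$.

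For Step~1 I would build a template graph whose critical Gibbs measure has $M$ as a single soft order parameter. A natural candidate is a ``thick path'' of length $\ell = \Theta(n^{2/3})$ whose rungs are identical bipartite gadgets of size $\Theta(n^{1/3})$, glued so that the conditional law of the signed spin sum at one rung given its two neighbouring rungs is described by a one-dimensional recursion --- a bipartite analogue of the tree recursion driving the uniqueness threshold. I would then invoke \Cref{lem:probabilistic-method} to upgrade an expectation-level bound over a random realization of the gadget layout into the existence of a concrete graph achieving the desired variance.

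For Step~2 I would use the fact that at $\lambda = \lambda_c(\Delta)$ the tree recursion has a \emph{neutral} fixed point (Jacobian of modulus~$1$), so the recursion for the rung magnetization across the path behaves like an unbiased random walk driven by centered noise rather than a mean-reverting one. Writing $\mu(M=k)$ as a sum over rung profiles and applying the local limit theorem (\Cref{lem:llt}) to the per-rung increments, I expect $\mu(M = k)$ to have a Gaussian-type envelope of width $\Theta(n^{2/3})$ instead of the sub-critical $\Theta(\sqrt{n})$ scale, which yields $\Var[\mu]{M} = \Omega(n^{4/3})$ and, combined with Step~1, the desired $\lambda_{\max}(\Psi_\mu) = \Omega(n^{1/3})$.

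The main obstacle will be Step~2: turning the heuristic ``neutral fixed point gives a random walk of width $\sqrt{\ell}$'' picture into a rigorous LLT-type estimate on a finite graph. I must verify that the per-rung increments are non-degenerate so that $\det \Sigma > 0$ in \Cref{lem:llt}, that they are sufficiently independent across rungs once the relevant Markov structure along the path is exposed, and that the leading nonlinear correction around the neutral fixed point has the sign that keeps the walk from collapsing to a $O(\sqrt{\ell})$-wide profile. These are delicate quantitative features of the hardcore tree recursion at criticality, and nailing them down is where the bulk of the work would lie.
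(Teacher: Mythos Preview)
Your high-level strategy---use the covariance characterization from \Cref{lem:SI-equiv}(\ref{item:SI-2}), take the test vector $w=\mathbf 1_L-\mathbf 1_R$, and reduce the claim to $\Var[\mu]{M}=\Omega(n^{4/3})$ for the staggered magnetization---is exactly the framework the paper uses (see \Cref{lem:hardcore-SI-partition-function} and the proof of \Cref{thm:lower-bound-hardcore}). Your intention to pass through a random construction and then extract a single instance via \Cref{lem:probabilistic-method} also matches.

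Where you diverge is in the construction and the analysis. The paper does \emph{not} build a thick path of gadgets. It takes a random \emph{symmetric} bipartite graph (union of $\Delta$ random perfect matchings in $K_{2n}$, reflected across the bipartition) and studies the expected weights $\alpha_{A,B,C}=\sum \E[G]{w_G(\*1_{S,T})}$ indexed by $A=|S\setminus T|$, $B=|T\setminus S|$, $C=|S\cap T|$. The key move is to write $\alpha_{A,B,C}$ as a ratio of generating-function coefficients, interpret both numerator and denominator as hitting probabilities of explicit i.i.d.\ lattice random walks, and apply the LLT (\Cref{lem:llt}) to each. The crucial computation is that at $\lambda=\lambda_c(\Delta)$ the resulting quadratic form $\Sigma=2\Delta\Sigma_N^{-1}-(\Delta-1)\Sigma_D^{-1}$ degenerates: it depends only on $(\theta_A+\theta_B,\theta_C)$ and is \emph{flat} in the $A-B$ direction. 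That degeneracy is what gives the $n^{2/3}$ anti-concentration window for $|A-B|$ (\Cref{lem:center-alpha-hardcore}), and it is a clean algebraic consequence of criticality rather than a dynamical neutral-fixed-point statement.

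Your thick-path plan, by contrast, has a real gap at Step~1 already: the assertion that ``the conditional law of the signed spin sum at one rung given its two neighbouring rungs is described by a one-dimensional recursion'' is not justified and is generally false---conditioning on neighbouring rungs fixes a full boundary configuration, not a single scalar, so the transfer operator lives on an exponentially large state space. You would need a very carefully engineered gadget to collapse this to one dimension, and you have not explained how to do so while simultaneously arranging that its effective recursion becomes neutral exactly at the tree threshold $\lambda_c(\Delta)$ for maximum degree~$\Delta$. The paper sidesteps all of this by averaging over random bipartite graphs and computing with generating functions; the ``criticality $\Rightarrow$ degeneracy'' step becomes a short matrix calculation rather than a gadget design problem.
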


According to the the universality of spectral independence~\cite[Theorem 3.1]{anari2024universality},
this lower bound of $\Omega(n^{1/3})$ on spectral independence implies a corresponding lower bound of $\Omega(n^{4/3})$ on the relaxation time of Glauber dynamics on the same instances. 
Consequently, this proves the lower bound of $\Omega(n^{4/3})$  on the mixing time of Glauber dynamics in \Cref{thm:hardcore}.

Next, we present a lower bound for the critical anti-ferromagnetic Ising model. 

\begin{theorem}\label{thm:lower-bound-Ising}
  Let $\Delta \ge 3$ be a constant.
For all $n \ge 1$, there exists a bipartite graph $G=(L,R,E)$ with $\abs{L} = \abs{R} = n$ and maximum degree at most $\Delta$ 
such that the Gibbs distribution $\mu$ of the anti-ferromagnetic Ising model on graph $G$ with zero external field and  critical inverse temperature $\beta=-\beta_c(\Delta)=-\frac{1}{2} \log \frac{\Delta}{\Delta-2}$  satisfies
  \begin{align}
    \lambda_{\max} \tp{\Psi_\mu} \ge \frac{\*s^\intercal \Psi_\mu \*s}{\*s^\intercal \*s} = \frac{\*s^\intercal \Psi_{\mu} \*s}{2n} = \Omega\tp{\sqrt{n}},\label{eq:thm:lower-bound-Ising}
  \end{align}
  where $\Psi_\mu$ is the influence matrix of  $\mu$, and $\*s \in \mathbb{R}^{L \cup R}$ is defined as $\*s_{u} = \begin{cases}
    1 & u \in L;\\
    -1 & u \in R.
  \end{cases}$
\end{theorem}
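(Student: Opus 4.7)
The plan is to construct an explicit bipartite graph $G$ of maximum degree $\Delta$ on $2n$ vertices and then evaluate $\mathbf{s}^\intercal\Psi_\mu\mathbf{s}$ directly. The first step would be to observe that, because the external field is zero and $\mu$ is invariant under the global spin flip $X\mapsto -X$, we have $\E[\mu]{X_i}=0$ and $\Var[\mu]{X_i}=1$ for every $i$. The covariance identity $\Psi_\mu(i,j)=\Cov_\mu(X_i,X_j)/\Var[\mu]{X_i}$ from \Cref{lem:SI-equiv} then collapses the Rayleigh quotient to a magnetization variance,
\begin{align*}
\mathbf{s}^\intercal \Psi_\mu \mathbf{s} \;=\; \sum_{i,j} s_i s_j\, \Cov_\mu(X_i,X_j) \;=\; \Var[\mu]{\mathbf{s}^\intercal X}.
\end{align*}
Since $G$ is bipartite and $\beta=-\beta_c(\Delta)$, the gauge transformation $Y_u := X_u$ for $u\in L$ and $Y_u:=-X_u$ for $u\in R$ sends $\mu$ to the Gibbs measure $\tilde\mu$ of the \emph{ferromagnetic} critical Ising model on $G$ and maps $\mathbf{s}^\intercal X$ to the total magnetization $M(Y):=\sum_v Y_v$. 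Thus the theorem reduces to exhibiting a bipartite graph $G$ of maximum degree $\Delta$ on $2n$ vertices with $\Var[\tilde\mu]{M}=\Omega(n^{3/2})$.

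For the construction of $G$ I would follow a Sly/Sly--Sun style approach: combine ``tree-like'' bipartite gadgets, drawn from a random $\Delta$-regular bipartite graph model, so that the neighborhood of every vertex is (with high probability) a $\Delta$-regular tree. This preserves $\beta_c(\Delta)$ as the true critical temperature of the ferromagnetic Ising on $G$, while the global cyclic structure of the graph forces mean-field-type fluctuations of the order parameter $M$. Since the statement of the theorem is worst-case over graphs, I would use \Cref{lem:probabilistic-method} to derandomize the averaged estimate and pull out a single deterministic instance attaining the bound.

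To prove $\Var[\tilde\mu]{M}=\Omega(n^{3/2})$ I would mimic the Curie--Weiss analysis at criticality. The idea is to apply the local limit theorem in \Cref{lem:llt} to a conditional sum-of-spins expression that is close to a sum of nearly-i.i.d.\ variables, and to read off that at $\beta_c(\Delta)$ the quadratic curvature of the effective free energy along the constant mode vanishes, leaving the quartic term dominant. This produces a bounded density for $M/n^{3/4}$ on a macroscopic interval, from which $\E[\tilde\mu]{M^2}\ge c\cdot n^{3/2}$ follows by direct integration.

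The main obstacle is reconciling two opposing requirements on $G$: the local tree-like structure needed for $\beta_c(\Delta)$ to be the genuine critical point of the ferromagnetic Ising, and the global mean-field behavior needed for $M$ to exhibit $n^{3/4}$-scale fluctuations. On the infinite $\Delta$-regular tree itself the covariance decays as $(\Delta-1)^{-d(u,v)}$, giving only $\Var[\tilde\mu]{M}=O(N\log N)$---far short of $n^{3/2}$, so the ``global'' cycle structure must do real work. Making the local-limit analysis rigorous in the presence of the finitely many long cycles that close up the tree, and controlling the resulting non-tree corrections to the two-point function, is where the technical heart of the argument lies.
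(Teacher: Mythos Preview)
Your opening reduction is correct and matches the paper: zero field plus global flip symmetry gives $\Var[\mu]{X_i}=1$, so $\mathbf{s}^\intercal\Psi_\mu\mathbf{s}=\Var[\mu]{\mathbf{s}^\intercal X}$, and the bipartite gauge sends this to the magnetization variance under the ferromagnetic critical Ising. Your choice of instance (random $\Delta$-regular bipartite graphs) and the use of \Cref{lem:probabilistic-method} to derandomize are also exactly what the paper does.

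The gap is in how you plan to get the $n^{3/2}$ bound. You propose to apply \Cref{lem:llt} to a ``conditional sum-of-spins expression that is close to a sum of nearly-i.i.d.\ variables'' on a fixed graph, and you identify the main obstacle as controlling ``non-tree corrections to the two-point function'' coming from long cycles. This is the wrong framing, and the paper sidesteps it entirely. The paper never analyzes the two-point function on any fixed graph. Instead it works with the \emph{annealed} quantities
\[
\widetilde\alpha_{s,t}=\sum_{\sigma\in\Omega_{s,t}}\E[G\sim\widetilde{\mathcal G}(n,\Delta)]{w_G(\sigma)},
\]
i.e.\ the expectation of the weight is taken over the random graph \emph{before} summing over configurations. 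Because $\widetilde{\mathcal G}(n,\Delta)$ is a union of $\Delta$ independent perfect matchings, this expectation factorizes as a $\Delta$-th power, and $\widetilde\alpha_{s,t}$ becomes an explicit ratio of coefficients of two product-form generating polynomials, $([x^sy^t]N)^\Delta/([x^sy^t]D)^{\Delta-1}$. It is \emph{these coefficients}, not spin sums on a graph, to which \Cref{lem:llt} is applied: each is literally $\Pr{\sum_{i=1}^n \mathbf{Y}_i=(s,t)}$ for genuinely i.i.d.\ lattice random variables. The criticality $\beta=\beta_c(\Delta)$ enters algebraically as the condition that makes the combined quadratic form $\Delta\Sigma_N^{-1}-(\Delta-1)\Sigma_D^{-1}$ degenerate in the $s-t$ direction, which is exactly what produces the $n^{3/4}$ window (\Cref{lem:center-alpha-Ising}). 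Away from the center a separate large-deviations estimate (\Cref{lem:distant-alpha-Ising}) handles the tails, and a short coupling argument removes parallel edges to pass from $\widetilde\alpha_{s,t}$ to $\alpha_{s,t}$. The probabilistic method is then applied to the ratio $\E_G[\sum_{s,t}(s-t)^2\cdot(\text{weight})]/\E_G[Z_G]$, not to $\Var[\tilde\mu]{M}$ on a single $G$.

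So your worry about reconciling local tree structure with global cycles is a red herring: the annealing-plus-generating-function trick is precisely what dissolves that tension, and without it your plan as stated does not go through.
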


For each bipartite graph $G=(L,R,E)$ constructed in \Cref{thm:lower-bound-Ising}, flipping the spins on one side (say $R$) transforms the model into a ferromagnetic Ising model with zero external field. 
Formally, let $\mu$ be the  Gibbs distribution of the anti-ferromagnetic Ising model on $G$ defined in~\Cref{thm:lower-bound-Ising}, and let $\widetilde{\mu}$ be the Gibbs distribution of the ferromagnetic Ising model on the same graph $G$ with opposite inverse temperature $\beta = \beta_c(\Delta)$ and zero external field. 
It is straightforward to verify that 
    \begin{align}
        \forall u,v \in L \cup R, \quad \Psi_{\widetilde{\mu}}(u,v) = \begin{cases}
            \Psi_{\mu}(u,v) & \text{$u,v$ are on the same side};\\
            -\Psi_{\mu}(u,v) & \text{otherwise.}           
        \end{cases}
    \end{align}
Therefore, we have $\*1^\intercal \Psi_{\widetilde{\mu}} \*1=\*s^\intercal \Psi_{\mu} \*s$, which leads to the following lower bound for the critical ferromagnetic Ising model.

\begin{corollary}\label{cor:lower-bound-Ising}
Let $\Delta \ge 3$ be a constant.
For all $n \ge 1$, there exists a bipartite graph $G=(L,R,E)$ with $\abs{L} = \abs{R} = n$ and maximum degree at most $\Delta$ 
such that the Gibbs distribution $\mu$ of the ferromagnetic Ising model on graph $G$ with zero external field and critical inverse temperature $\beta=\beta_c(\Delta)=\frac{1}{2} \log \frac{\Delta}{\Delta-2}$ satisfies
  \begin{align*}
    \lambda_{\max} \tp{\Psi_\mu} \ge \frac{\*1^\intercal \Psi_{\mu} \*1}{\*1^\intercal \*1} = \frac{\*1^\intercal \Psi_\mu \*1}{2n} = \Omega\tp{\sqrt{n}},
  \end{align*}
  where $\Psi_\mu$ is the influence matrix of $\mu$, and $\*1 \in \mathbb{R}_{L \cup R}$ denotes the all-one vector.
\end{corollary}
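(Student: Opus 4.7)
The plan is to derive the ferromagnetic lower bound from the antiferromagnetic lower bound of \Cref{thm:lower-bound-Ising} via the standard bipartite spin-flip reduction, essentially unpacking the identity $\*1^\intercal \Psi_{\widetilde\mu} \*1 = \*s^\intercal \Psi_\mu \*s$ stated in the paragraph preceding the corollary.

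First, I would take the bipartite graph $G=(L,R,E)$ and the anti-ferromagnetic Gibbs distribution $\mu$ produced by \Cref{thm:lower-bound-Ising}, and introduce the bijection $\Phi:\{-1,+1\}^{L\cup R}\to\{-1,+1\}^{L\cup R}$ defined by $(\Phi(\*x))_v = s_v \*x_v$, i.e., negate the spins on $R$. Because $G$ is bipartite, every edge $\{u,v\}\in E$ has endpoints with $s_u s_v = -1$, so $x_u x_v$ flips sign under $\Phi$; this exactly cancels the change of sign between $\beta = -\beta_c(\Delta)$ (antiferromagnetic) and $\beta = +\beta_c(\Delta)$ (ferromagnetic). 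Since the external fields vanish on both sides, I obtain $\widetilde\mu(\*x) = \mu(\Phi(\*x))$ for every $\*x\in\{-1,+1\}^{L\cup R}$.

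Next, I would transfer this bijection to the influence matrix. Zero external field and symmetry give $\E[\widetilde\mu]{X_i}=\E[\mu]{X_i}=0$ and $\Var[\widetilde\mu]{X_i}=\Var[\mu]{X_i}=1$, so by \eqref{eq:SI-ratio}, $\Psi_{\widetilde\mu}$ and $\Psi_\mu$ coincide with the respective covariance matrices $\mathrm{Cov}(\widetilde\mu)$ and $\mathrm{Cov}(\mu)$, both of which are symmetric. A direct calculation from $\widetilde\mu = \mu\circ\Phi$ yields
\begin{align*}
\mathrm{Cov}(\widetilde\mu)_{u,v} \;=\; \E[\widetilde\mu]{X_u X_v} \;=\; \E[\mu]{(s_u X_u)(s_v X_v)} \;=\; s_u s_v\,\mathrm{Cov}(\mu)_{u,v},
\end{align*}
so $\Psi_{\widetilde\mu}(u,v) = s_u s_v\,\Psi_\mu(u,v)$ as noted in the excerpt. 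Summing over all pairs then gives
\begin{align*}
\*1^\intercal \Psi_{\widetilde\mu}\, \*1 \;=\; \sum_{u,v\in L\cup R} s_u s_v\, \Psi_\mu(u,v) \;=\; \*s^\intercal \Psi_\mu\, \*s.
\end{align*}

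Finally, since $\Psi_{\widetilde\mu}$ is symmetric, the Rayleigh quotient bound applies and yields
\begin{align*}
\lambda_{\max}(\Psi_{\widetilde\mu}) \;\ge\; \frac{\*1^\intercal \Psi_{\widetilde\mu}\, \*1}{\*1^\intercal \*1} \;=\; \frac{\*s^\intercal \Psi_\mu\, \*s}{2n} \;=\; \Omega(\sqrt n),
\end{align*}
where the last equality is precisely \eqref{eq:thm:lower-bound-Ising} from \Cref{thm:lower-bound-Ising}. There is no real obstacle here beyond careful sign bookkeeping in the spin flip; the content of the corollary is entirely supplied by \Cref{thm:lower-bound-Ising} and the bipartite structure of $G$.
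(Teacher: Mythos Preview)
Your proposal is correct and matches the paper's argument essentially line for line: the paper also derives the corollary from \Cref{thm:lower-bound-Ising} by flipping the spins on $R$, obtaining the sign relation $\Psi_{\widetilde\mu}(u,v)=s_us_v\,\Psi_\mu(u,v)$ and hence $\*1^\intercal \Psi_{\widetilde\mu}\*1=\*s^\intercal \Psi_\mu\*s$. Your version is slightly more explicit about why $\Psi_{\widetilde\mu}$ coincides with the covariance (zero field forces unit variances) and hence is symmetric, which justifies the Rayleigh-quotient step cleanly.
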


Again, by the universality of spectral independence,
the lower bounds of $\Omega(\sqrt{n})$ on spectral independence, as established in \Cref{thm:lower-bound-Ising} and \Cref{cor:lower-bound-Ising}, imply corresponding lower bounds of $\Omega(n^{3/2})$ on the relaxation time (and consequently, the mixing time) of Glauber dynamics on the same instances.
This proves the lower bound in \Cref{thm:Ising-graphical}.

    This section is organized as follows. In~\Cref{sec:lower-proof-overview}, we outline the proof of mixing time lower bounds by introducing the anti-concentration phenomenon in the critical models. In the remaining sections (\Cref{sec:lower-hardcore} and \Cref{sec:lower-Ising}), we prove~\Cref{thm:lower-bound-hardcore} and~\Cref{thm:lower-bound-Ising} respectively.

\subsection{Proof overview for the mixing lower bounds}\label{sec:lower-proof-overview}
The hard instances used to prove the lower bounds in \Cref{thm:lower-bound-hardcore,thm:lower-bound-Ising} are well studied in the literature: 
random regular bipartite graphs.
The torpid mixing of local Markov chains on these instances in the super-critical (non-uniqueness) regime is well known~\cite{mossel2009hardness},
and such random graphs have been widely utilized as gadgets in complexity reductions to establish computational hardness in this regime~\cite{sly2010computational,sly2012computational,galanis2015inapproximability,GSV16}.

The slow mixing and computational hardness arise from the fact that,
in the super-critical regime, the probability mass of the magnetization (represented by the difference in average sign of spins between the two sides of the bipartition) becomes bimodal.
This bimodal property  creates bottlenecks in the state space and leads to long-range correlations in the Gibbs measure, 
resulting in slow mixing of local Markov chains and supporting gadget reductions for computational hardness.

This situation changes significantly at criticality: 
the bimodality of the Gibbs measure no longer holds due to the uniqueness of Gibbs measure,
yet the Gibbs measure lacks the strong concentration observed in the sub-critical (uniqueness) regime.
Instead, at criticality, the Gibbs measure exhibits an \emph{anti-concentration} phenomenon,
which prevents Glauber dynamics from  mixing ideally within $\widetilde{O}(n)$  steps. 
Specifically, the anti-concentration is characterized by the property
that the magnetization of a sample from the Gibbs distribution is $\Omega(n^{\alpha})$ for some $\alpha > -1/2$ 
with large probability, 
rather than concentrating around $\Theta(n^{-1/2})$ as in a product distribution. 
This critical behavior, akin to that observed in the mean-field Ising model~\cite{LLP10,ding2009meanfield}, is illustrated in \Cref{fig:Ising-large-deviation}.

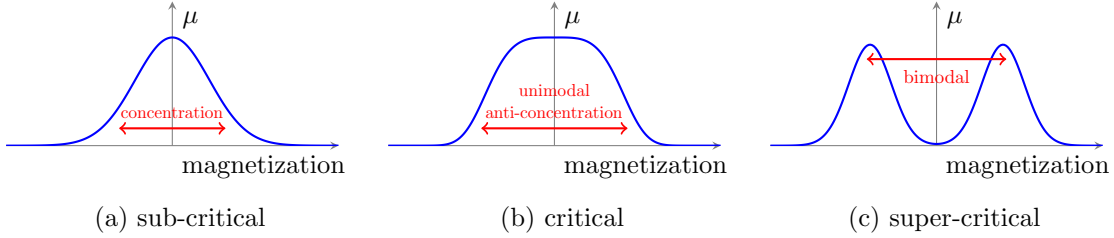
\begin{figure}
    \centering
    \begin{minipage}{0.3\textwidth}
    \centering
    \begin{tikzpicture}
        \begin{axis}[
            axis lines=middle,
            axis line style={gray},
            xlabel=magnetization,
            ylabel=$\mu$,
            xlabel style={at={(ticklabel* cs:1)}, anchor=north,font=\small,xshift=-1cm},
            ylabel style={font=\small},
            samples=200,
            domain=-5:5,
            ymin=0,
            ymax=1.0,
            xtick=\empty,
            ytick=\empty,
            height=3.5cm,
            width=6cm
        ]
        
        \addplot[blue, thick] {0.75*exp(-0.4*(x)^2)};
        \addplot[<->, red, thick] coordinates{(-1.6,0.12) (1.6,0.12)} node[midway, above] {\scalebox{0.6}{\centering concentration}};
        \end{axis}
    \end{tikzpicture}
    \subcaption{sub-critical}
    \end{minipage}
    \begin{minipage}{0.3\textwidth}
    \centering
    \begin{tikzpicture}
        \begin{axis}[
            axis lines=middle,
            axis line style={gray},
            xlabel=magnetization,
            ylabel=$\mu$,
            xlabel style={at={(ticklabel* cs:1)}, anchor=north,font=\small,xshift=-1cm},
            ylabel style={font=\small},
            samples=200,
            domain=-5:5,
            ymin=0,
            ymax=1.0,
            xtick=\empty,
            ytick=\empty,
            height=3.5cm,
            width=6cm
        ]
        
        \addplot[blue, thick] {0.75*exp(-0.04*(x)^(4))};
        \addplot[<->, red, thick] coordinates{(-2.2,0.12) (2.2,0.12)} node[midway, above] {\scalebox{0.6}{\parbox{4cm}{\centering unimodal\\anti-concentration}}};
        \end{axis}
    \end{tikzpicture}
    \subcaption{critical}
    \end{minipage}
    \begin{minipage}{0.3\textwidth}
    \centering
    \begin{tikzpicture}
        \begin{axis}[
            axis lines=middle,
            axis line style={gray},
            xlabel=magnetization,
            ylabel=$\mu$,
            xlabel style={at={(ticklabel* cs:1)}, anchor=north,font=\small,            xshift=-1cm},
            ylabel style={font=\small},
            samples=200,
            domain=-5:5,
            ymin=0,
            ymax=1.0,
            xtick=\empty,
            ytick=\empty,
            height=3.5cm,
            width=6cm,
        ]
        \addplot[blue, thick] {0.7*exp(-1.2*(x-2)^2) + 0.7*exp(-1.2*(x+2)^2)};
        \addplot[<->, red, thick] coordinates{(-2.1,0.6) (2.1,0.6)} node[midway, below] {\scalebox{0.65}{bimodal}};
        \end{axis}
    \end{tikzpicture}
    \subcaption{super-critical}
    \end{minipage}
    \caption{Density of magnetization in hardcore/Ising models}\label{fig:Ising-large-deviation}
\end{figure}

To leverage the anti-concentration in the Gibbs measure $\mu$ to prove a lower bound on the mixing time, 
we examine the quadratic form $\*s^\intercal\Psi_\mu\*s$ of the influence matrix $\Psi_\mu$ using a signed vector $\*s$ with opposite signs on the two sides of bipartition. 
This quadratic form, up to a constant factor, is approximately  $n^2$ times the variance of the magnetization.
By the anti-concentration property, this variance is lower bounded as $\Omega\tp{n^{2\alpha}}$ for some $\alpha > -1/2$.
From this, we can derive a lower bound on spectral independence, as the largest eigenvalue of the influence matrix is bounded by
$\lambda_{\max}(\Psi_\mu)\ge \frac{\*s^\intercal\Psi_\mu\*s}{\*s^\intercal\*s}=\Omega\tp{n^{1+2\alpha}}$.
By the universality of spectral independence, as established in \cite{anari2024universality},
this translates into a lower bound of $\Omega\tp{n^{2+2\alpha}}$ on the mixing time of the Glauber dynamics.

The spectral independence approach for deriving a mixing time lower bound offers a notable advantage over the traditional conductance-based method, such as that in~\cite{mossel2009hardness}, by generalizing the notion of cuts in the state space.
From a functional analysis perspective, the spectral independence method allows for the use of more general testing functions, beyond the Boolean functions typically employed in conductance-based bounds.
This flexibility enables us to identify a lower bound on the mixing time that is polynomially bounded.
Consider the Gibbs distribution $\mu$,  supported on $\Omega \subseteq \{-1,+1\}^{L \cup R}$,
induced by the hardcore model on a random $\Delta$-regular bipartite graph $G=(L,R,E)$ with critical fugacity $\lambda=\lambda_c(\Delta)$. 
The approximate tensorization of variance with constant $C$, which is equivalent to the Poincar\'{e} inequality for Glauber dynamics with constant $\frac{1}{Cn}$, is given by:
\begin{align}\label{eq:AT-variance}
    \Var[\mu]{f} \le C \sum_{v \in V} \E[]{\Var[v]{f}},
\end{align}
where $f \in \mathbb{R}^{\Omega}$ is a testing function. 
In contrast to conductance-based lower bounds, which are derived using the Boolean testing function $\*X \mapsto \*1[\sum_{u \in L} X_u > \sum_{v \in R} X_v]$ to identify a cut in the state space $\Omega$, 
the spectral independence approach uses a linear testing function $\*X \mapsto \*s^\intercal \*X$, where $\*s\in\mathbb{R}^{L\cup R}$ is a signed vector defined by letting $s_u = 1$ for $u \in L$; and $s_v = -1$ for $v \in R$.
This generalization of cuts allows for a more refined analysis of the Poincar\'{e} constant and nicely describes the anti-concentration phenomenon discussed above.
In particular, the variance of this testing function $f$ with $f(\*X) = \*s^\intercal \*X$, is given by:
    \begin{align*}
        \Var[\mu]{f} = \Var[\mu]{\*s^\intercal \*X} = \*s^\intercal \mathrm{Cov}(\mu) \*s.
    \end{align*}
Meanwhile, the sum of local variances of $f$ is bounded as:
    \begin{align*}
        \sum_{v \in V} \E[]{\Var[v]{f}} = \sum_{v \in V} \E[]{\Var[v]{\*s^\intercal \*X}} = \sum_{v \in V} \E[]{\Var[v]{X_v}} \cdot s_v^2 \le \*s^\intercal \mathrm{diag}\set{\Var[\mu]{X_v}}_{v \in L \cup R} \*s.
    \end{align*}
By leveraging the anti-concentration in the critical Gibbs measure, we obtain:
    \begin{align*}
        \*s^\intercal \mathrm{Cov}(\mu) \*s = \Omega\tp{n^{1+2\alpha}} \*s^\intercal \mathrm{diag}\set{\Var[\mu]{X_v}}_{v \in L \cup R} \*s.
    \end{align*}
Thus, combining these observations, the Poincar\'{e} constant $C$ in the approximate tensorization of variance~\eqref{eq:AT-variance} is lower bounded by $\Omega\tp{n^{1+2\alpha}}$ for some $\alpha>-\frac{1}{2}$. 
This rules out the possibility of achieving a near-optimal  $\tilde{O}(n)$  relaxation time for the Glauber dynamics.

Finally, to establish the anti-concentration in the critical Gibbs measure
which is essential for deriving the spectral independence lower bound, 
we face a significant technical challenge: 
the absence of bimodality in the critical regime. 
This lack of bimodality prevents us from leveraging the optimization-based techniques for approximating the log-partition function, 
which have been primary tools for analyzing the density of magnetization, especially in the context of torpid mixing and computational hardness in the super-critical regime.
To overcome this obstacle, we shift our focus from the log-partition function to the partition function itself. 
By employing a local limit theorem for large deviations, we can analyze the partition function's behavior in the critical regime. 
This approach not only helps us establish the necessary anti-concentration property but also offers a novel perspective on the analysis of critical Gibbs measures, which may be of independent interest.

\subsection{Critical hardcore model on  random symmetric bipartite graphs}\label{sec:lower-hardcore}
In this section, we will prove~\Cref{thm:lower-bound-hardcore}. We first show the spectral independence lower bound via anti-concentration in~\Cref{sec:hardcore-spectral-anticoncentrate}, and then establish the anti-concentration in~\Cref{sec:anti-concentration-hardcore}.

The hard instances we consider are the hardcore model on random symmetric bipartite graphs with a maximum degree $\Delta$ and critical fugacity $\lambda=\lambda_c(\Delta) = \frac{(\Delta-1)^{\Delta-1}}{(\Delta-2)^\Delta}$.
The random symmetric bipartite graph $G=(L,R,E)$ is constructed as follows:

\begin{itemize}
\item Fix a constant $\Delta \ge 3$. Let $L=\{\ell_1,\ell_2,\ldots,\ell_{2n}\}$ and $R=\{r_1,r_2,\ldots,r_{2n}\}$ be the vertex sets.

\item Sample $\Delta$ perfect matchings $M_1,M_2,\ldots,M_\Delta$ in the complete graph $K_{2n}$ on  vertices $[2n]$ uniformly and independently at random. 
For each $(u,v) \in M_k$ for some $1 \le k \le \Delta$, add edges $(\ell_u,r_v)$ and $(\ell_v,r_u)$ to  the bipartite graph $G=(L,R,E)$.
\end{itemize}
We use $\+G(n,\Delta)$ to denote the law of the random symmetric bipartite graph $G=(L,R,E)$. 
This construction of symmetric bipartite graphs has been used as a gadget for establishing computational hardness for super-critical anti-ferromagentic two-spin systems (such as in~\cite{sly2012computational}).

\subsubsection{Spectral correlation via anti-concentration}\label{sec:hardcore-spectral-anticoncentrate}
The lower bound in \Cref{thm:lower-bound-hardcore} establishes a spectral correlation property for the critical hardcore Gibbs measure.
This lower bound can be  implied by the anti-concentration of the Gibbs measure on the random instances $G=(L,R,E)\sim\+G(n,\Delta)$.

Recall $L=\{\ell_1,\ell_2,\ldots,\ell_{2n}\}$ and $R=\{r_1,r_2,\ldots,r_{2n}\}$.
For $S,T \subseteq [2n]$, let $\*1_{S,T}\subseteq L\cup R$ be the vertex set define as $\*1_{S,T}:=\{\ell_i,r_j\mid i\in S,j\in T\}$.
For integers $0 \le A,B,C \le 2n$, 
we define:
\begin{align*}
    \alpha_{A,B,C} := \sum_{\substack{S, T  \subseteq [2n]\text{ with}\\ |S \setminus T| = A, |T \setminus S| = B, |S \cap T| = C}} \E[G \sim \+G(n,\Delta)]{w_G(\*1_{S,T})},
\end{align*}
where $w_G(I)$ represents the weight of the vertex set $I$ in the hardcore model on $G$. 
Specifically, $w_G(I)=\lambda^{|I|}$ with $\lambda=\lambda_c(\Delta)$ if $I$ is an independent set in $G$, and $w_G(I)=0$ otherwise.
For a given bipartite graph $G = (L\cup R, E)$, let $\mu_G$ be a distribution over $2^{L\cup R}$ be the Gibbs distribution of the hardcore model defined as
\begin{align*}
  \forall \text{ independent set } I, \quad \mu_G(I) \propto w_G(I) = \lambda^{\abs{I}}.
\end{align*}
We may also consider $\mu_G$ as a distribution over $\set{0,1}^{L\cup R}$ (i.e., use $1$ to indicate a vertex is in the independent set; and use $0$ to indicate that it is not) for technical convenience.

The following lemma establishes an anti-concentration property, which provides a key to proving the spectral correlation described in \Cref{thm:lower-bound-hardcore}.
\begin{lemma}\label{lem:distant-vs-all-hardcore} 
 Fix any constant $\Delta \ge 3$.
    There exist constants $\epsilon,\eta \in (0,1)$ such that
    \begin{align}\label{eq:large-deviation-hardcore}
      \sum_{\substack{|A-B|>\eta\cdot  n^{2/3}}} \alpha_{A,B,C} > \epsilon\cdot \sum_{0 \le A,B,C \le 2n} \alpha_{A,B,C}
    \end{align}
    holds for all sufficiently large $n>0$ and all integers $0 \le A,B,C \le 2n$. 
\end{lemma}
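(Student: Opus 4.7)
The plan is to derive an explicit combinatorial formula for $\alpha_{A,B,C}$ and then carry out a saddle-point analysis to extract the anti-concentration of $A-B$. Setting $D := 2n - A - B - C$ and partitioning $[2n]$ by the type of each index (in $S\setminus T$, $T\setminus S$, $S\cap T$, or neither), one obtains
\[
\alpha_{A,B,C} = \binom{2n}{A,B,C,D}\,\lambda^{A+B+2C}\,P(A,B,C)^\Delta,
\]
where $\lambda = \lambda_c(\Delta)$ and $P(A,B,C)$ is the probability that a uniformly random perfect matching on $[2n]$ contains no pair of type $\{A,B\}$, $\{A,C\}$, $\{B,C\}$, or $\{C,C\}$; these are exactly the pair types that would destroy independence of $\*1_{S,T}$ in the graph induced by one such matching (an edge $(u,v)$ in a matching adds both $(\ell_u, r_v)$ and $(\ell_v, r_u)$ to $G$, which is bad iff $\{u,v\}$ crosses $S$ and $T$). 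The $\Delta$-th power reflects the independence of the $\Delta$ matchings in $\+G(n,\Delta)$, and $P(A,B,C)$ itself is computable from the exponential generating function $\exp\bigl(\frac{a^2+b^2+d^2}{2} + ad + bd + cd\bigr)$ encoding matchings on allowed edges.

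Next, using Stirling's formula together with saddle-point asymptotics for $P$, I would study the normalized log-density
\[
\phi(\alpha,\beta,\gamma) := \lim_{n\to\infty}(2n)^{-1}\log \alpha_{A,B,C}, \qquad (\alpha,\beta,\gamma) = (A,B,C)/(2n),
\]
which is smooth and $A\leftrightarrow B$-symmetric. The tree-uniqueness characterization of $\lambda_c(\Delta)$ via the Bethe recursion should force $\phi$ to have a unique maximizer at a symmetric point $(\alpha^*,\alpha^*,\gamma^*)$ whose Hessian is negative-definite along $\alpha+\beta$ and $\gamma$ but degenerates (zero eigenvalue) along $\alpha-\beta$. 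Because of the $A \leftrightarrow B$ symmetry, $\phi$ has no odd-order terms in $\alpha-\beta$, so a fourth-order expansion yields
\[
\phi(\alpha,\beta,\gamma) - \phi^* = -c_1(\alpha+\beta-2\alpha^*)^2 - c_2(\gamma-\gamma^*)^2 - c_3(\alpha-\beta)^4 + \text{higher-order terms},
\]
with constants $c_1,c_2,c_3 > 0$ depending only on $\Delta$.

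With this critical profile, I would apply the local limit theorem \cref{lem:llt}, or a refinement tailored to a degenerate Gaussian--quartic profile, to approximate $\alpha_{A,B,C}/\sum_{A',B',C'}\alpha_{A',B',C'}$ by an explicit density. The typical scales of fluctuation are $\sqrt{n}$ in the $A+B$ and $C$ directions and $n^{3/4}$ in the $A-B$ direction, reflecting the quartic rather than quadratic decay. Integrating over $|A-B|\le \eta n^{2/3}$ then gives a mass of order $O(\eta n^{-1/12})\to 0$; choosing $\eta > 0$ small and $n$ sufficiently large makes the complementary mass exceed any fixed $\epsilon < 1$, which is stronger than what the lemma demands.

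The main obstacle is rigorously matching the tree-uniqueness threshold $\lambda = \lambda_c(\Delta) = (\Delta-1)^{\Delta-1}/(\Delta-2)^\Delta$ with the exact vanishing of the quadratic coefficient of $\phi$ along $\alpha-\beta$ and verifying strict negativity of the quartic coefficient $c_3$. This requires a careful saddle-point extraction from the matching EGF above, together with either a direct extension of the local limit theorem to a direction with quartic rather than quadratic decay or a Laplace-type integral estimate bypassing \cref{lem:llt}. Since Richter's theorem gives a Gaussian approximation only in windows of size $n^{2/3}$ (or $n^{3/4}$ in the symmetric case), handling the quartic direction globally, and in particular integrating the approximating density across the full $n^{3/4}$-scale, is the most delicate step.
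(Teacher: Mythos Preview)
Your combinatorial formula for $\alpha_{A,B,C}$ is correct and equivalent to the paper's expression. Your central structural claim---that the quadratic part of $\log\alpha$ is degenerate in the $\alpha-\beta$ direction at $\lambda=\lambda_c(\Delta)$---is also correct and is verified explicitly in the paper. But your execution plan overcommits: you aim to pin down the quartic coefficient $c_3>0$, push the approximation out to scale $n^{3/4}$, and integrate a Gaussian--quartic density across that window. You yourself flag all three of these as unresolved obstacles (the matching of $\lambda_c$ to the vanishing, the sign of $c_3$, and extending beyond Richter's $n^{2/3}$ window). Since the lemma only demands anti-concentration at scale $n^{2/3}$, not $n^{3/4}$, this extra machinery is not needed, and your proposal as written has a genuine gap at precisely the steps you identify as delicate.

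The paper's route is much simpler and sidesteps everything past quadratic order. It writes $\alpha_{A,B,C}$ as a ratio of coefficients of two generating polynomials, interprets each coefficient as $\Pr[\sum_i \*X_i = (A,B,C)]$ for suitable i.i.d.\ lattice random variables, and applies the local limit theorem (\Cref{lem:llt}) in its $n^{2/3}$ window to each. The resulting approximation is $\alpha_{A,B,C}\asymp_\gamma Z\exp\bigl(-\tfrac{1}{n}Q(\theta_A+\theta_B,\theta_C)\bigr)$ for a positive-definite binary quadratic form $Q$; the explicit computation of $\Sigma = 2\Delta\Sigma_N^{-1}-(\Delta-1)\Sigma_D^{-1}$ shows the quadratic form depends only on $\theta_A+\theta_B$ and $\theta_C$, hence is flat along $A-B$. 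Consequently, for fixed $A+B=k$ and $C$ near the center, the values $\alpha_{A,B,C}$ over $|A-B|\le\eta n^{2/3}$ and over $\eta n^{2/3}<|A-B|\le 2\eta n^{2/3}$ agree up to the fixed factor $\gamma$, and there are equally many lattice points in each strip; hence the latter strip alone carries at least a $1/(\gamma+1)$ fraction of the near-center mass. The far-from-center contribution is dispatched by a standard large-deviation bound (\Cref{lem:distant-alpha-hardcore}), showing $\alpha_{A,B,C}$ is exponentially small once $|A+B-(m_1+m_2)|>n^{2/3}$ or $|C-m_3|>n^{2/3}$ while $|A-B|\le\eta n^{2/3}$.

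In short: your degeneracy observation is exactly the right idea, but you should stop at the quadratic level. The $n^{2/3}$ in the lemma is precisely the LLT window, and within that window flatness in $A-B$ plus a counting comparison already gives the result---no quartic term, no $c_3$, no Laplace integral.
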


\begin{remark}\label{remark:hardcore-anticoncentrate}
\Cref{lem:distant-vs-all-hardcore}  captures an anti-concentration of the Gibbs measure on the random instance $\+G(n,\Delta)$.
Suppose a random vertex set $I \subseteq L \cup R$ is drawn with a probability proportional to the expected weight $\E[G \sim \+G(n,\Delta)]{w_G(I)}$, representing a ``typical'' sample of  configurations in a random instance $G \sim \+G(n,\Delta)$.
Then, the inequality \eqref{eq:large-deviation-hardcore} guarantees that, with a non-negligible probability, the difference between the numbers of occupied vertices in $L$ and $R$ will be substantial, specifically bounded as $\Omega(n^{2/3})$. This phenomenon is as illustrated in \Cref{fig:Ising-large-deviation}.
\end{remark}

\Cref{thm:lower-bound-hardcore} follows from \Cref{lem:distant-vs-all-hardcore}. To see this, we first need to prove the following lemma.

\begin{lemma}\label{lem:hardcore-SI-partition-function}
For any bipartite graph $G=(L,R,E)$ with positive measure in $\+G(n,\Delta)$, it holds 
\[
\lambda_{\max}(\Psi_{\mu_G})  
\ge \frac{4\*s^\intercal \mathrm{Cov}(\mu_G) \*s}{\*s^\intercal \*s} =
\frac{2}{n} \cdot \sum_{1 \le i,j \le 2n} \tp{\Pr[I \sim \mu_G]{\ell_i, \ell_j \in I} - \Pr[I \sim \mu_G]{\ell_i, r_j \in I}}.
\]
where $\mathrm{Cov}(\mu_G)$\footnote{Here, we consider $\mu_G$ as a distribution over $\set{0,1}^{L\cup R}$, so that its covariance matrix can be defined. We refer to~\Cref{rem:SI-domain} for a detailed discussion.} denotes the covariance matrix for $\mu_G$ and the vector $\*s\in \mathbb{R}^{L\cup R}$ is defined as:
$$\*s_u = 
\begin{cases}
    1 & u \in L,\\
    -1 & u \in R.
\end{cases}$$
\end{lemma}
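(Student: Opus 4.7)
The plan is to break the lemma into two independent steps: first establish the inequality $\lambda_{\max}(\Psi_{\mu_G}) \ge \frac{4\*s^\intercal \mathrm{Cov}(\mu_G)\*s}{\*s^\intercal\*s}$ from the spectral independence characterization, then derive the closed form on the right-hand side using the $L\leftrightarrow R$ symmetry of instances in the support of $\+G(n,\Delta)$.

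For the inequality, I would invoke the equivalence in \Cref{lem:SI-equiv} (combined with \Cref{rem:SI-domain}, since $\mu_G$ is naturally a distribution on $\set{0,1}^{L\cup R}$), which gives $\mathrm{Cov}(\mu_G) \preceq \lambda_{\max}(\Psi_{\mu_G})\cdot \mathrm{diag}\set{\Var[\mu_G]{X_v}}_{v\in L\cup R}$. Since each $X_v$ is $\set{0,1}$-valued, $\Var[\mu_G]{X_v} = p_v(1-p_v) \le 1/4$, so $\mathrm{diag}\set{\Var[\mu_G]{X_v}} \preceq \tfrac14 I$. Applying the resulting PSD inequality to the test vector $\*s$ yields $\*s^\intercal \mathrm{Cov}(\mu_G)\*s \le \tfrac14\lambda_{\max}(\Psi_{\mu_G})\*s^\intercal \*s$, which rearranges to the desired bound. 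This is essentially the standard Rayleigh-quotient lower bound on the largest eigenvalue, tightened by the variance bound for Bernoulli marginals.

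For the closed-form expression, the key observation is that every $G$ in the support of $\+G(n,\Delta)$ is invariant under the involution $\sigma: \ell_i \leftrightarrow r_i$: by construction of the symmetric bipartite graph, for each pair $(u,v)$ appearing in some matching $M_k$ both edges $(\ell_u,r_v)$ and $(\ell_v,r_u)$ are added, so the edge set is preserved by $\sigma$. Hence $\sigma$ is a graph automorphism, which lifts to a measure-preserving map on independent sets, giving the marginal identities
\begin{align*}
\Pr[I\sim\mu_G]{\ell_i\in I} &= \Pr[I\sim\mu_G]{r_i\in I},\\
\Pr[I\sim\mu_G]{\ell_i,\ell_j\in I} &= \Pr[I\sim\mu_G]{r_i,r_j\in I},\\
\Pr[I\sim\mu_G]{\ell_i,r_j\in I} &= \Pr[I\sim\mu_G]{r_i,\ell_j\in I}.
\end{align*}
Expanding $\*s^\intercal \mathrm{Cov}(\mu_G)\*s$ and grouping by which side each coordinate lies on, these symmetries collapse the four blocks to a factor of $2$ times the $L$-$L$ minus $L$-$R$ block. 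The product-of-marginals terms $\Pr{\ell_i\in I}\Pr{\ell_j\in I}$ and $\Pr{\ell_i\in I}\Pr{r_j\in I}$ then cancel by the first marginal identity, leaving $\*s^\intercal \mathrm{Cov}(\mu_G)\*s = 2\sum_{i,j}(\Pr{\ell_i,\ell_j\in I}-\Pr{\ell_i,r_j\in I})$. Dividing by $\*s^\intercal\*s = 4n$ and multiplying by $4$ gives the stated prefactor $\frac{2}{n}$.

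I do not anticipate any real obstacle: the first step is a direct application of \Cref{lem:SI-equiv} with the Bernoulli variance bound, and the second step is a symmetry-induced cancellation in a finite sum. The only mild subtlety is keeping track of the distinction between $\set{0,1}$-valued and $\set{-1,+1}$-valued formulations of spectral independence when applying \Cref{lem:SI-equiv}, which is handled by the scaling identities in \Cref{rem:SI-domain}.
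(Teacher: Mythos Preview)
Your proposal is correct and follows essentially the same approach as the paper: the paper also bounds $\lambda_{\max}(\Psi_{\mu_G}) \ge 4\lambda_{\max}(\mathrm{Cov}(\mu_G))$ via $\Pi = \mathrm{diag}\{\Var{X_v}\} \preceq \tfrac14 I$ (using the identity $\Psi_{\mu_G} = \Pi^{-1}\mathrm{Cov}(\mu_G)$ rather than invoking \Cref{lem:SI-equiv}), applies Courant--Fischer with the test vector $\*s$, and then uses the $\ell_i \leftrightarrow r_i$ automorphism of $G$ to simplify the variance of $\sum_i X_i - \sum_i Y_i$. Your write-up is in fact slightly more explicit about which pairwise symmetry identities are needed in the second step, but the argument is the same.
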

\begin{proof}
Consider the hardcore Gibbs distribution $\mu_G$ on graph $G=(L,R,E)$.
The influence matrix $\Psi_{\mu_G}$ is related to the covariance matrix $\mathrm{Cov}(\mu_G)$ through the equation \eqref{eq:SI-ratio}:
\[
\Psi_{\mu_G} = \Pi^{-1} \mathrm{Cov}(\mu_G),
\]
where $\Pi = \mathrm{diag} \set{\Pr[]{u}(1-\Pr[]{u})}_{u \in L \cup R}$ is the diagonal matrix of variances.
Here, $\Pr[]{u}$ represents the marginal probability that vertex $u$ is occupied in an independent set sampled from $\mu_G$.
Since $\Pi \preceq \frac{1}{4} I$, it follows that 
\begin{align}\label{eq:psi-cov}
    \lambda_{\max}(\Psi_{\mu_G}) \ge 4 \lambda_{\max}(\mathrm{Cov}(\mu_G)).
\end{align}

Let $(X,Y)\in\{0,1\}^{L\cup R}$ be the indicator vector for the vertex set $I\sim\mu_G$, 
where $X$ corresponds to the vertices in $L$ and $Y$ corresponds to those in $R$.
By the Courant-Fischer theorem, we have:
\begin{align}\label{eq:psi-max-two-spin}
    \lambda_{\max}(\mathrm{Cov}(\mu_G)) \ge \frac{\*s^\intercal \mathrm{Cov}(\mu_G) \*s}{\*s^\intercal \*s} = \frac{1}{4n} \Var[(X,Y) \sim \mu_G]{\sum_{i=1}^{2n} X_i - \sum_{i=1}^{2n} Y_i}.
\end{align}

By the symmetry of the bipartite graph $G$ with positive measure in $\+G(n,\Delta)$, the marginal probability that vertex $\ell_i\in L$ are occupied is equal to that of the corresponding vertex $r_i\in R$, for each $1 \leq i \leq 2n$. 
Therefore, the variance in~\eqref{eq:psi-max-two-spin} can be expressed as:
\begin{align}
    \Var[(X,Y) \sim \mu_G]{\sum_{i=1}^{2n} X_i - \sum_{i=1}^{2n} Y_i} &=  2\sum_{1 \le i,j \le 2n} \tp{\E[(X,Y) \sim \mu_G]{X_i X_j} -  \E[(X,Y) \sim \mu_G]{X_i Y_j}}\notag\\
    &= 
    2\sum_{1 \le i,j \le 2n} \tp{\Pr[I \sim \mu_G]{\ell_i, \ell_j \in I} - \Pr[I \sim \mu_G]{\ell_i, r_j \in I}}
    \label{eq:var-two-spin}
\end{align}
The lemma follows by combining \eqref{eq:psi-cov}, \eqref{eq:psi-max-two-spin}, and \eqref{eq:var-two-spin}.
\end{proof}

\begin{proof}[Proof of~\Cref{thm:lower-bound-hardcore} (via~\Cref{lem:distant-vs-all-hardcore})]
Let $Z_G=\sum_{I\subseteq L\cup R}w_G(I)$ denote the partition function, and for $u,v\in L\cup R$, define
\[
Z_G^{u \leftarrow +, v \leftarrow +}:=\sum_{\substack{I\subseteq L\cup R\\ I\supseteq \{u,v\}}}w_G(I),
\]
which represents the portion of the partition function $Z_G$ contributed by the independent sets $I$ that contain both $u$ and $v$.

We observe that the anti-concentration inequality~\eqref{eq:large-deviation-hardcore} stated in \Cref{lem:distant-vs-all-hardcore} translates to the following bound:
  \begin{align}\label{eq:target-hardcore-lb}
    \frac{\E[G \sim \+G(n,\Delta)]{\sum_{1 \le i, j \le 2n}\left(Z_G^{\ell_i \leftarrow +,\ell_j \leftarrow +}-Z_G^{\ell_i \leftarrow +,r_j \leftarrow +}\right)} }{\E[G \sim \+G(n,\Delta)]{Z_G}} = \Omega(n^{4/3}).
  \end{align}
%
    To prove this, we first express the expected value $\sum_{1 \le i,j \le 2n} \E[G \sim \+G(n,\Delta)]{Z_G^{\ell_i \leftarrow +, \ell_j \leftarrow +}}$ as follows:
    \begin{align*}
      \sum_{1 \le i,j \le 2n} \E[G \sim \+G(n,\Delta)]{Z_G^{\ell_i \leftarrow +, \ell_j \leftarrow +}}
      &= \sum_{1 \le i,j \le 2n} \sum_{\substack{S \subseteq [2n], T \subseteq [2n]\\ i,j \in S}} \E[G \sim \+G(n,\Delta)]{w_G(\*1_{S,T})}\\ 
      &= \sum_{S \subseteq [2n], T \subseteq [2n]} |S|^2 \E[G \sim \+G(n,\Delta)]{w_G(\*1_{S,T})}\\
    (\text{by symmetry of }\+G(n,\Delta))\qquad  &= \sum_{S \subseteq [2n], T \subseteq [2n]} \frac{|S|^2+|T|^2}{2} \E[G \sim \+G(n,\Delta)]{w_G(\*1_{S,T})}.
    \end{align*}
    where recall $\*1_{S,T}:=\{\ell_i,r_j\mid i\in S,j\in T\}$ for $S,T \subseteq [2n]$. 
    Similarly, we have
    \begin{align*}
      \sum_{1 \le i,j \le 2n} \E[G \sim \+G(n,\Delta)]{Z_G^{\ell_i\leftarrow +,r_j\leftarrow +}} = \sum_{S \subseteq [2n], T \subseteq [2n]} |S||T| \E[G \sim \+G(n,\Delta)]{w_G(\*1_{S,T})}.
    \end{align*}
    Thus, the numerator of ~\eqref{eq:target-hardcore-lb} can be expressed as
     \begin{align*}
     &\sum_{1 \le i, j \le 2n}\E[G \sim \+G(n,\Delta)]{Z_G^{\ell_i \leftarrow +,\ell_j \leftarrow +}} - \sum_{1 \le i,j \le 2n}\E[G \sim \+G(n,\Delta)]{Z_G^{\ell_i \leftarrow +,r_j \leftarrow +}} \\
     = & \sum_{S \subseteq [2n], T \subseteq [2n]} \frac{|S|^2+|T|^2}{2} \E[G \sim \+G(n,\Delta)]{w_G(\*1_{S,T})}-\sum_{S \subseteq [2n], T \subseteq [2n]} |S||T| \E[G \sim \+G(n,\Delta)]{w_G(\*1_{S,T})}\\
    =  & \frac{1}{2} \sum_{S \subseteq [2n], T \subseteq [2n]} (|S|-|T|)^2 \E[G \sim \+G(n,\Delta)]{w_G(\*1_{S,T})} \\
      = &\frac{1}{2} \sum_{0\le A,B,C \le 2n} (A-B)^2 \alpha_{A,B,C}.
    \end{align*}
    Now, applying the anti-concentration inequality \eqref{eq:large-deviation-hardcore} from \Cref{lem:distant-vs-all-hardcore}, we obtain:
    \begin{align*}
        \sum_{0\le A,B,C \le 2n} (A-B)^2 \alpha_{A,B,C}  \ge n^{4/3} \eta^2 \epsilon \sum_{0 \le A,B,C \le 2n} \alpha_{A,B,C} = n^{4/3} \eta^2 \epsilon \E[G \sim \+G(n,\Delta)]{Z_G}.
    \end{align*}
    Thus, inequality~\eqref{eq:target-hardcore-lb} holds by assuming \Cref{lem:distant-vs-all-hardcore}.

Next, applying the probabilistic method as described in \Cref{lem:probabilistic-method} to inequality \eqref{eq:target-hardcore-lb},
it ensures the existence of a graph $G=(L,R,E)$ such that:
\[
    \sum_{1 \le i,j \le 2n} \tp{\Pr[I \sim \mu_G]{\ell_i, \ell_j \in I} - \Pr[I \sim \mu_G]{\ell_i, r_j \in I}}
    = \frac{\sum_{1 \le i , j \le 2n} \tp{Z_G^{\ell_i \leftarrow +,\ell_j \leftarrow +} - Z_G^{\ell_i \leftarrow +,r_j \leftarrow +}}}{Z_G}=\Omega(n^{4/3}).
\]
Finally, by \Cref{lem:hardcore-SI-partition-function}, this implies $\lambda_{\max}(\Psi_{\mu_G})=\Omega(n^{1/3})$, completing the proof of the theorem.
\end{proof}

\subsubsection{A local limit theorem for large deviation}\label{sec:anti-concentration-hardcore}

It remains to prove the anti-concentration claimed in \Cref{lem:distant-vs-all-hardcore}, 
which is the main technical lemma for establishing the lower bound on the mixing time. 
The key idea of the proof is as follows:

\begin{itemize}
    \item \textbf{Near the center}: We express $\alpha_{A,B,C}$ as a ratio of probabilities that a random walk reaches the lattice point $(A,B,C)$ at some point in time. 
    This allows us to leverage a local limit theorem to obtain a tight approximation of $\alpha_{A,B,C}$ around the ``center of mass'' at $$(A,B,C) \approx \*m := \tp{\frac{2(\Delta-1)n}{\Delta^2},\frac{2(\Delta-1)n}{\Delta^2},\frac{2n}{\Delta^2}}.$$
\item \textbf{Far from the center}: For values of $(A,B,C)$ that deviate significantly from this center, we show that $\alpha_{A,B,C}$ decays exponentially.
This phenomenon is well studied and is critical in proving computational hardness in the super-critical regime~\cite{sly2012computational,GSV16}.
\end{itemize}
The following lemmas formalize  these intuitions.

\begin{lemma}\label{lem:center-alpha-hardcore}
Fix any constant $\Delta\ge 3$. There exist a finite $\gamma=\gamma(\Delta)>1$ and a positive-definite quadratic form $Q(\cdot,\cdot) = Q_\Delta(\cdot,\cdot)$ such that the following holds for all sufficiently large $n$.
There exists a factor $Z=Z(\Delta,n)$, depending only on $\Delta$ and $n$, 
such that for any integers $0 \le A,B,C \le 2n$ satisfying 
\[
\left\|(A,B,C) - \*m\right\|_{\infty} \le 2n^{2/3}, 
\]
where $\*m := \tp{\frac{2(\Delta-1)n}{\Delta^2}, \frac{2(\Delta-1)n}{\Delta^2}, \frac{2n}{\Delta^2}}$, the following holds:
\begin{align*}
    Z \le \frac{\alpha_{A,B,C}}{\exp\tp{-\frac{1}{n} Q(\theta_A+\theta_B,\theta_C) }} \le \gamma \cdot Z,
  \end{align*}
where $\*\theta=(\theta_A,\theta_B,\theta_C):=(A,B,C) - \*m$ represents the deviation from the ``center of mass'' $\*m$.
\end{lemma}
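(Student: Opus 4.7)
My plan is to combine Stirling's formula with the multivariate local limit theorem (\Cref{lem:llt}) applied to a tilted multinomial random walk, in the spirit of a saddle-point analysis. The first step is to produce a closed-form expression for $\alpha_{A,B,C}$. By the exchangeability of $\+G(n,\Delta)$, for any fixed pair $(S, T)$ with type counts $(A, B, C, D)$, where $D = 2n-A-B-C$, the expectation $\E[G \sim \+G(n,\Delta)]{w_G(\*1_{S,T})}$ depends only on $(A,B,C)$. Classifying each vertex of $[2n]$ by its membership in $(S, T)$ into one of four types $00, 10, 01, 11$, an edge $(u,v)$ in a single matching is forbidden exactly when its unordered type-pair lies in $\{\{10,01\},\{10,11\},\{01,11\},\{11,11\}\}$; equivalently, each $11$-vertex must be matched with a $00$-vertex, each $10$-vertex with a $10$ or a $00$, and each $01$-vertex with a $01$ or a $00$. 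Counting good perfect matchings on $[2n]$ under this constraint yields an explicit factorial expression $G(A,B,C)$, and since the $\Delta$ matchings are independent,
\begin{align*}
    \alpha_{A,B,C} \;=\; \frac{(2n)!}{A!\,B!\,C!\,D!} \cdot \omega(A,B,C), \qquad \omega(A,B,C) \;:=\; \lambda^{A+B+2C}\tp{\frac{G(A,B,C)}{(2n-1)!!}}^{\!\Delta}.
\end{align*}

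Next I would set up a saddle-point tilt. Let $q=(q_{00}, q_{10}, q_{01}, q_{11})$ be a distribution on the four types, and let $X_1,\ldots,X_{2n}$ be i.i.d.~vectors in $\mathbb{Z}^3$ with $\Pr{X_i = (1,0,0)}=q_{10}$, $\Pr{X_i=(0,1,0)}=q_{01}$, $\Pr{X_i=(0,0,1)}=q_{11}$, and $\Pr{X_i=\mathbf{0}}=q_{00}$; let $S_{2n}:=\sum_{i=1}^{2n} X_i$. Then
\begin{align*}
    \alpha_{A,B,C} \;=\; \frac{\omega(A,B,C)}{q_{10}^A q_{01}^B q_{11}^C q_{00}^D} \cdot \Pr{S_{2n}=(A,B,C)}.
\end{align*}
I would choose $q$ so that $\E{X_1} = \*m/(2n) = (\tfrac{\Delta-1}{\Delta^2}, \tfrac{\Delta-1}{\Delta^2}, \tfrac{1}{\Delta^2})$ and so that the logarithmic derivative of the first factor above vanishes at $\*m$. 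Applying Stirling's formula to $G(A,B,C)$---whose summation over sub-matchings can be evaluated by a secondary Laplace argument, or more slickly interpreted as the partition function of a pair of independent $\Delta$-regular tree recursions---yields $\log\omega(A,B,C) = 2n\,\Phi(A/(2n), B/(2n), C/(2n))+O(\log n)$ for a smooth function $\Phi$ on the simplex. The vanishing-gradient condition then reduces to the Bethe fixed-point equation for the hardcore model on the $\Delta$-regular tree at density $1/\Delta$, which is precisely the defining equation of $\lambda_c(\Delta)$, and so pins down a unique admissible $q$.

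With the saddle point fixed, I would Taylor-expand to second order around $\*m$. Writing $\*\theta = (\theta_A, \theta_B, \theta_C)$, the linear term cancels by construction, and the first factor above equals $\exp(\mathrm{const} + \tfrac{1}{4n}\*\theta^\intercal H\*\theta + O(|\*\theta|^3/n^2))$. Combining with the Gaussian approximation $\Pr{S_{2n}=(A,B,C)} \asymp n^{-3/2}\exp(-\tfrac{1}{4n}\*\theta^\intercal \Sigma^{-1}\*\theta)$ supplied by \Cref{lem:llt}, the total quadratic form becomes $-\tfrac{1}{n}\*\theta^\intercal M\*\theta$ with $M = \tfrac{1}{4}(\Sigma^{-1}-H)$. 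The crux of the argument is that \emph{at criticality} $M$ has a one-dimensional kernel spanned by $(1,-1,0)$ and is strictly positive definite on its orthogonal complement; this is the algebraic manifestation of the critical slowing-down of the anti-symmetric order parameter $|S|-|T|$. The surviving form $Q(\theta_A+\theta_B, \theta_C)$ is then a positive-definite quadratic on $\mathbb{R}^2$, as claimed. In the window $\|\*\theta\|_\infty \le 2n^{2/3}$, the cubic Taylor remainder in the non-degenerate directions is $O(|\*\theta|^3/n^2)=O(1)$, while in the degenerate direction the leading quartic term $O(|\theta_A-\theta_B|^4/n^3) = O(n^{-1/3})$ is asymptotically negligible; these errors, together with uniformly bounded Stirling remainders, are absorbed into a finite $\gamma=\gamma(\Delta)$, while the prefactor $Z=Z(\Delta,n)$ collects the saddle-point value of the first factor at $\*m$ together with the universal Gaussian constant $(2\pi)^{-3/2}(\det\Sigma)^{-1/2}n^{-3/2}$.

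The principal obstacle is the Hessian-degeneracy verification: off criticality $H$ would be non-degenerate (stable sub-critical, or hyperbolic saddle super-critical), whereas at $\lambda=\lambda_c$ the derivative of the tree recursion $R\mapsto \lambda(1+R)^{-(\Delta-1)}$ at its symmetric fixed point equals $-1$, and this spectral coincidence is precisely what kills the $(1,-1,0)$-eigenvalue of $M$. Making this link rigorous---either by explicit computation of the Stirling asymptotic $\Phi$ and its Hessian, or via a transfer-matrix / tree-recursion interpretation of the first factor---will constitute the bulk of the technical work.
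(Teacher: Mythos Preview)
Your proposal is correct in outline, but it takes a harder road than the paper.

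The paper's key simplification is a \emph{randomness swap}: rather than computing $G(A,B,C)$ (the number of good perfect matchings for a fixed $(S,T)$-type profile) and then running Stirling/Laplace on that sum, the paper observes that, by exchangeability, the single-matching probability $\Pr[G\sim\+G(n,1)]{\*1_{S,T}\in\+I(G)}$ equals the probability over uniformly random $(S',T')$ of the same type profile that $\*1_{S',T'}$ is independent in a \emph{fixed} structured matching $G_0$, chosen so that the $4n$ vertices split into $n$ disjoint $4$-cycles. The independent-set count then factors over these $n$ blocks, so the numerator becomes $[x^Ay^Bz^C]\,(1+2x+2y+2z+x^2+y^2)^n$ and the denominator $\binom{2n}{A,B,C,D}=[x^Ay^Bz^C]\,(1+x+y+z)^{2n}$. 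After rescaling by $\lambda_c(\Delta)$, both coefficients are probabilities that sums of i.i.d.\ lattice random variables hit $(A,B,C)$, and the local limit theorem applies directly to \emph{both} factors. No Stirling, no secondary Laplace on $G(A,B,C)$, and no need to choose a tilt $q$: the critical fugacity arranges that both random walks have the \emph{same} mean $\*m/(2n)$. The final quadratic form is then simply $\Sigma=2\Delta\Sigma_N^{-1}-(\Delta-1)\Sigma_D^{-1}$, an explicit $3\times 3$ matrix whose first two rows coincide identically, so the $(1,-1,0)$-degeneracy you correctly anticipated falls out of a one-line determinant check rather than a tree-recursion Hessian argument.

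What your approach buys is conceptual transparency about \emph{why} the degeneracy appears (the tree-recursion derivative hitting $-1$), and it would port more directly to models lacking the symmetric-bipartite matching structure. What the paper's approach buys is that every step is a mechanical computation: there is no ``principal obstacle,'' because the positive-definiteness of $Q$ reduces to verifying that a single polynomial in $\Delta$ is positive for $\Delta\ge 3$.
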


\begin{lemma}\label{lem:distant-alpha-hardcore}
Fix any constant $\Delta\ge 3$.
    There exist constants $\epsilon,\eta > 0$ such that the following holds for all sufficiently large~$n$.
    For integers $0 \le A,B,C \le 2n$ satisfying both the following conditions:
    \begin{enumerate}
        \item $\abs{A+B-(m_1+m_2)} > n^{2/3}$ or $\abs{C-m_3} > n^{2/3}$;
        \item $\abs{A-B} \le \eta\cdot  n^{2/3}$,
    \end{enumerate}
   where $\*m =(m_1,m_2,m_3)= \tp{\frac{2(\Delta-1)n}{\Delta^2}, \frac{2(\Delta-1)n}{\Delta^2}, \frac{2n}{\Delta^2}}$,
    it holds that
    \begin{align*}
        \alpha_{A,B,C} \le \exp\tp{-\epsilon\cdot n^{1/3}}\cdot \max_{0 \le a,b,c \le 2n} \alpha_{a,b,c}.
    \end{align*}
\end{lemma}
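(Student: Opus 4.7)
The plan is to write $\alpha_{A,B,C}$ explicitly and analyze its asymptotic behavior via a rate function. By the symmetry of the random symmetric bipartite graph construction, one has
\[
  \alpha_{A,B,C} = \binom{2n}{A,\,B,\,C,\,2n-A-B-C} \cdot \lambda^{A+B+2C} \cdot p(A,B,C)^\Delta,
\]
where $p(A,B,C)$ is the probability that a single uniform random perfect matching of $K_{2n}$ contains no ``forbidden pair'' under a labeling of the $2n$ vertices into four types $\{00,10,01,11\}$ (indicating membership in $S\setminus T$, $T\setminus S$, $S\cap T$, or neither), with multiplicities $(2n{-}A{-}B{-}C, A, B, C)$. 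A case analysis shows the forbidden (type, type) pairs are precisely $\{(10,01),(10,11),(01,11),(11,11)\}$. The quantity $p(A,B,C)$ admits an explicit closed form as a ratio of restricted matching counts (products of double factorials), and Stirling's approximation then yields $\tfrac{1}{n}\log \alpha_{A,B,C} = F(A/n,B/n,C/n) + O\tp{\tfrac{\log n}{n}}$ uniformly on compact subsets of the interior of the simplex $\{(a,b,c):a,b,c\geq 0,\,a+b+c\leq 2\}$, for a smooth function $F$.

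Next, I would verify via the Lagrange conditions $\nabla F=0$, using the defining identity $\lambda = (\Delta-1)^{\Delta-1}/(\Delta-2)^\Delta$, that $F$ has its unique interior maximum at $\*m/n$. The key computation is then the Hessian $H = \nabla^2 F(\*m/n)$: I would show that at criticality $H$ has a one-dimensional kernel in the ``magnetization'' direction $(1,-1,0)$, while on the two-dimensional orthogonal complement $\mathrm{span}\{(1,1,0),(0,0,1)\}$ it is strictly negative definite, with eigenvalues bounded above by some $-c_\Delta<0$. This is the analytic signature of criticality: for $\lambda<\lambda_c(\Delta)$, $H$ is strictly negative in all three directions (sub-critical concentration); for $\lambda>\lambda_c(\Delta)$ it becomes positive along $(1,-1,0)$ (super-critical bimodality); and at $\lambda=\lambda_c(\Delta)$ the magnetization direction is exactly null, which is what forces the unimodal anti-concentration picture.

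To conclude, decompose $\*v := (A-m_1, B-m_2, C-m_3)$ into its projection $\*v_\parallel$ along $(1,-1,0)$ and $\*v_\perp$ in the orthogonal plane. The hypothesis $|A-B|\le \eta n^{2/3}$ forces $\|\*v_\parallel\|=O(\eta n^{2/3})$, controlling the degenerate direction, while the hypothesis that $|A+B-m_1-m_2|>n^{2/3}$ or $|C-m_3|>n^{2/3}$ forces $\|\*v_\perp\|\geq c_0 n^{2/3}$ for some absolute $c_0=c_0(\Delta)>0$. Taylor expansion around $\*m/n$ gives
\[
  nF(A/n,B/n,C/n) \le nF(\*m/n) - \frac{c_\Delta}{2n}\|\*v_\perp\|^2 + O\!\tp{\frac{\|\*v_\parallel\|^4 + \|\*v_\perp\|^3}{n^3}},
\]
where the leading contribution from the magnetization direction is quartic (since the quadratic term vanishes) and is harmlessly bounded by $O(\eta^4/n^{1/3})$ under $\|\*v_\parallel\|=O(\eta n^{2/3})$. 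Combining with $\|\*v_\perp\|\geq c_0 n^{2/3}$ yields the desired decay $\alpha_{A,B,C}\le \exp(-\epsilon n^{1/3})\cdot \alpha_{\*m}$ in the Taylor regime, for $\eta$ small enough; and for $\*v$ outside this regime a crude global concavity argument gives the even stronger bound $\exp(-\Omega(n))$. Finally, a comparison with $\max_{a,b,c}\alpha_{a,b,c}$ via \Cref{lem:center-alpha-hardcore} completes the proof.

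\textbf{Main obstacle.} The principal technical burden is the explicit Hessian computation and verification of its signature, which is exactly what encodes the critical fugacity and produces the anomalous $n^{2/3}$ window rather than the usual $\sqrt{n}$ scaling. A secondary subtlety is the uniform control of the Taylor remainder up to the edge of the Taylor regime $\|\*v\|\asymp n^{5/6}$, where it must be matched with a ``far-field'' concavity estimate; some extra care is also needed near the boundary of the simplex, where $F$ diverges, to ensure that the rate-function comparison is not spoiled by boundary maxima.
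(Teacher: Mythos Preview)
Your rate-function strategy is sound and matches the paper's at a high level. The paper organizes the argument differently in two ways. First, it \emph{symmetrizes} before analyzing the rate function: since $|A-B|\le\eta n^{2/3}$ forces $\rho_2\approx\rho_3$, one bounds $U(\*\rho)\le U^{\mathrm{sym}}\bigl((\rho_2+\rho_3)/2,\rho_4\bigr)+O(\eta^2 n^{-2/3})$ using concavity of the entropy terms, and then works with the two-variable function $U^{\mathrm{sym}}$, whose unique interior critical point is identified by reducing the Lagrange conditions to (a two-variable version of) the hardcore tree-recursion fixed-point system. Second---and this is the main shortcut---the paper never differentiates the rate function to extract the Hessian signature. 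Instead it observes that \Cref{lem:center-alpha-hardcore}, proved independently via local limit theorems and the explicit covariance matrices $\Sigma_N,\Sigma_D$, already produces the positive-definite quadratic form $Q(\theta_A+\theta_B,\theta_C)$ with precisely the $(1,-1,0)$ null direction you identify; combining this with the $O(\log n)$ rate-function approximation immediately gives $U^{\mathrm{sym}}\le U^{\mathrm{sym}}(\rho^\star)-\epsilon' n^{-2/3}$ on $\partial B(\rho^\star,n^{-1/3})$, without any implicit differentiation of the constrained optimizer. Your direct Hessian computation would reach the same conclusion but with considerably more labor; the paper's route effectively amortizes that computation against work already done for \Cref{lem:center-alpha-hardcore}.

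One caution on your far-field step: you invoke ``global concavity,'' but the rate function is not globally concave, and concavity alone will not give $\exp(-\Omega(n))$. What you actually need---and what the paper establishes---is that $U^{\mathrm{sym}}$ has a \emph{unique interior critical point} (hence unique interior maximum) together with a separate check that the maximum is not attained on the boundary of the simplex (the paper does this by showing $\partial U^{\mathrm{sym}}/\partial\rho_j\to\pm\infty$ as $\*\rho$ approaches each boundary face). Once that is in hand, compactness yields a uniform gap $U^{\mathrm{sym}}\le U^{\mathrm{sym}}(\rho^\star)-\delta$ outside any fixed neighborhood of $\rho^\star$, which is what your far-field argument should rest on.
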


With  \Cref{lem:center-alpha-hardcore,lem:distant-alpha-hardcore}, we can now prove the anti-concentration claimed in \Cref{lem:distant-vs-all-hardcore}.

\begin{proof}[Proof of \Cref{lem:distant-vs-all-hardcore} (via \Cref{lem:center-alpha-hardcore,lem:distant-alpha-hardcore})]
It is sufficient to establish the anti-concentration property as described in~\eqref{eq:large-deviation-hardcore}.

    Note that if~\Cref{lem:distant-alpha-hardcore} holds with parameters $\epsilon,\eta >0$, then it automatically holds for all smaller positive parameters $\epsilon' < \epsilon$ and $\eta'<\eta$. 
    Therefore, we can assume without loss of generality that \Cref{lem:distant-alpha-hardcore} holds with parameters $\epsilon, \eta \in \tp{0,0.5}$. 

    Assume $n$ is sufficiently large.
    By~\Cref{lem:center-alpha-hardcore}, for any integer tuples $(A,B,C)$ and $(A',B',C)$ with $A+B=A'+B'$,
    if $\norm{(A,B,C)-\*m}_{\infty} \le 2n^{2/3}$ and $\norm{(A',B',C)-\*m}_{\infty} \le 2n^{2/3}$,
    then it holds
    \begin{align}\label{eq:alpha-same-hardcore}
    \alpha_{A,B,C} \le {\gamma}\cdot \alpha_{A',B',C},
    \end{align}
    where $\gamma=\gamma(\Delta)$ is the constant factor fixed in~\Cref{lem:center-alpha-hardcore}.
    
    Thus, for any integers $k$ and $C$ with $\abs{k - (\*m_1+\*m_2)} \le n^{2/3}$ and $\abs{C-\*m_3} \le n^{2/3}$, we have
    \begin{align}\label{eq:compare-1-hardcore}
        \sum_{\substack{A+B=k\\ \abs{A-B} \le \eta\cdot n^{2/3}}} \alpha_{A,B,C} \overset{(\star)}{\le} \gamma\cdot \sum_{\substack{A+B=k\\  \eta\cdot n^{2/3} < \abs{A-B} \le 2\eta\cdot n^{2/3}}} \alpha_{A,B,C} \le
        \gamma\cdot\sum_{\substack{A+B=k \\ \abs{A-B} > \eta\cdot n^{2/3}}} \alpha_{A,B,C},
    \end{align}
    where inequality $(\star)$ follows from~\eqref{eq:alpha-same-hardcore} and the fact that both $A$ and $B$ are $2n^{2/3}$-close to the center $\*m_1=\*m_2$ when $\abs{A+B-(\*m_1+\*m_2)} \le n^{2/3}$ and $\abs{A-B} \le 2\eta n^{2/3}$ are both satisfied.
    
    Therefore, by~\eqref{eq:compare-1-hardcore} and~\Cref{lem:distant-alpha-hardcore}, we have 
    \begin{align*}
    \sum_{\abs{A-B} \le \eta \cdot n^{2/3}} \alpha_{A,B,C} &\le \sum_{\substack{(A,B,C) \in \Omega\\ \abs{A-B} \le \eta\cdot n^{2/3}}} \alpha_{A,B,C} + \sum_{\substack{(A,B,C) \not\in \Omega\\ \abs{A-B} \le \eta\cdot n^{2/3}}} \alpha_{A,B,C}\\
    &\le \tp{\frac{\gamma}{\gamma+1} + (2n)^3 \exp\tp{-\epsilon n^{1/3}}} \sum_{0 \le A,B,C \le 2n} \alpha_{A,B,C},
    \end{align*}
    where $\Omega$ denotes the set of tuples $(A,B,C)$ with $\abs{A+B-(\*m_1+\*m_2)} \le n^{2/3}$ and $\abs{C-\*m_3} \le n^{2/3}$.
    
    Therefore,~\eqref{eq:large-deviation-hardcore} holds with parameters $\epsilon' = \frac{1}{2(\gamma+1)}$ and $\eta'=\eta$, when $n$ is sufficiently large.
\end{proof}

To complete the lower bound proof, we need to prove \Cref{lem:center-alpha-hardcore,lem:distant-alpha-hardcore}.
As discussed earlier, \Cref{lem:distant-alpha-hardcore} handles the case where the parameters $(A,B,C)$ are far from the center.
Specifically, it roughly says that $\alpha_{A,B,C}$ achieves its maximum approximately at $\*m=\tp{\frac{2(\Delta-1)n}{\Delta^2},\frac{2(\Delta-1)n}{\Delta^2},\frac{2n}{\Delta^2}}$, and decays exponentially as  the parameters move away from this center.
This phenomenon of exponential decay away from the center has been well established  in prior work on computational hardness in the super-critical regime, such as in~\cite{GSV16}.
The proof of \Cref{lem:distant-alpha-hardcore} follows a similar approach, and is thereby deferred to~\Cref{sec:append-hardcore}.

Next, it remains to prove~\Cref{lem:center-alpha-hardcore}, which addresses the case where the parameters $(A,B,C)$ are close to the center. 
This can be established by 
the local limit theorem for large deviation.

\begin{proof}[Proof of~\Cref{lem:center-alpha-hardcore}]
By definition of $\alpha_{A,B,C}$ and $\+G(n,\Delta)$, it holds that
\begin{align*}
\alpha_{A,B,C} &= \sum_{\substack{S, T  \subseteq [2n]\\ |S \setminus T| = A, |T \setminus S| = B, |S \cap T| = C}} \E[G \sim \+G(n,\Delta)]{w_G(\*1_{S,T})}\\
&= \sum_{\substack{S, T  \subseteq [2n]\\ |S \setminus T| = A, |T \setminus S| = B, |S \cap T| = C}} \lambda^{A+B+2C} \Pr[G \sim \+G(n,\Delta)]{\*1_{S,T} \in \+I(G)}\\
&=\sum_{\substack{S, T  \subseteq [2n]\\ |S \setminus T| = A, |T \setminus S| = B, |S \cap T| = C}} \lambda^{A+B+2C} \tp{\Pr[G \sim \+G(n,1)]{\*1_{S,T} \in \+I(G)}}^\Delta,
\end{align*}
where $\+I(G)$ denotes the collection of independent sets in $G$.

Let $\+R_{A,B,C}$ be the uniform distribution over subsets $S,T \subseteq [2n]$ such that $\abs{S \setminus T}=A$, $\abs{T \setminus S}=B$, and $\abs{S \cap T}=C$. 
Given the construction of $\+G(2n,1)$, the probability that $\*1_{S,T}$ forms an independent set in $G \sim \+G(2n,1)$ depends only on the sizes of  $S \setminus T$, $T \setminus S$ and $S \cap T$. 
Therefore, 
\begin{align}\label{eq:expected-weight-hardcore}
    \Pr[G \sim \+G(2n,1)]{\*1_{S,T} \in \+I(G)} = \Pr[\substack{G \sim \+G(2n,1)\\(S',T') \sim \+R_{A,B,C}}]{\*1_{S',T'} \in \+I(G)}.
\end{align}
Note that for any $G\in\+G(2n,1)$, 
the probability $\Pr[(S',T') \sim \+R_{A,B,C}]{\*1_{S',T'} \in \+I(G)}$ remains the same, since $\+R_{A,B,C}$ is a uniform distribution over possible subsets. Thus, we can reformulate \eqref{eq:expected-weight-hardcore} as:
 \begin{align}\label{eq:hardcore-alpha-formula}
 \Pr[G \sim \+G(2n,1)]{\*1_{S,T} \in \+I(G)} 
 &=\Pr[(S',T') \sim \+R_{A,B,C}]{\*1_{S',T'} \in \+I(G_0)} \notag\\
 &=
\frac{1}{\binom{2n}{A,B,C,2n-A-B-C}} \sum_{\substack{S,T \subseteq [2n]\\ \abs{S \setminus T} = A,\abs{T \setminus S} = B,\abs{S \cap T} = C}} \*1[\*1_{S,T} \in \+I(G_0)], 
 \end{align}
where $G_0=(L,R,E) \in \+G(2n,1)$ with edges defined such that $(\ell_{2i-1},r_{2i}), (\ell_{2i},r_{2i-1}) \in E$ for all $1 \le i \le n$, and the multinomial coefficient $\binom{2n}{A,B,C,2n-A-B-C} = \frac{(2n)!}{A!B!C!(2n-A-B-C)!}$ counts the number of possible subsets $S,T \subseteq [2n]$ in $\+R_{A,B,C}$.

Next, we express both the numerator and denominator in~\eqref{eq:hardcore-alpha-formula} as coefficients of generating polynomials.
Specifically, the denominator, the multinomial coefficient $\binom{2n}{A,B,C,2n-A-B-C}$, corresponds to the coefficient of the term $x^A y^B z^C$ in the generating polynomial $D = (1+x+y+z)^{2n}$:
\begin{align*}
    \binom{2n}{A,B,C,2n-A-B-C} = [x^Ay^Bz^C] \underbrace{\tp{1+x+y+z}^{2n}}_{=:D}.
\end{align*}
Similarly, the numerator in~\eqref{eq:hardcore-alpha-formula} can also be expressed as the coefficient of the term $x^A y^B z^C$  in another generating polynomial $N=\tp{1+2x+2y+2z+x^2+y^2}^n$:
\begin{align}\label{eq:num-hardcore}
\sum_{\substack{S,T \subseteq [2n]\\ \abs{S \setminus T} = A,\abs{T \setminus S} = B,\abs{S \cap T} = C}} \*1[\*1_{S,T} \in \+I(G_0)] = [x^A y^B z^C] \underbrace{\tp{1+2x+2y+2z+x^2+y^2}^n}_{=:N}.
\end{align}
To see this, note that the left-hand-side of~\eqref{eq:num-hardcore} counts the number of independent sets $\*1_{S,T}$ in $G_0$ with $\abs{S \setminus T} = A$, $\abs{T \setminus S} = B$ and $\abs{S \cap T} = C$. For all $1 \le i \le n$, the number of independent sets $\*1_{S,T} \subseteq \{\ell_{2i-1},\ell_{2i},r_{2i-1},r_{2i}\}$ satisfying $\abs{S \setminus T} = a$, $\abs{T \setminus S} = b$ and $\abs{S \cap T} = c$ is given by the coefficient of the term $x^a y^b z^c$ in the generating polynomial $N^{1/n} := 1+2x+2y+2z+x^2+y^2$. 
Then, equation~\eqref{eq:num-hardcore} follows from the product rule for generating polynomials.

Thus, $\alpha_{A,B,C}$ can be expressed as:
\begin{align}\label{eq:hardcore-alpha-coefficients}
    \alpha_{A,B,C} = \frac{\lambda^{A+B+2C}}{\tp{\left[x^A y^B z^C\right] D}^{\Delta-1}} \tp{\left[x^A y^B z^C\right] N}^{\Delta} = \frac{\tp{\left[x^A y^B z^C\right] N^\star}^{\Delta}}{\tp{\left[x^A y^B z^C\right] D^\star }^{\Delta-1}},
\end{align}
where $D^\star(x,y,z) = D \tp{\frac{x}{\Delta-1},\frac{y}{\Delta-1}, \frac{z}{(\Delta-1)^2}}$ and $N^\star(x,y,z) = N \tp{\frac{x}{\Delta-2},\frac{y}{\Delta-2},\frac{z}{(\Delta-2)^2}}$ are the re-weighted polynomials,
and the last equality follows from critical point $\lambda = \lambda_c(\Delta) = \frac{(\Delta-1)^{\Delta-1}}{(\Delta-2)^\Delta}$.

For the denominator, the coefficient $\left[x^A y^B z^C\right] D^\star$ can be interpreted as:
\[
\left[x^A y^B z^C\right] D^\star=F_D\cdot \Pr[]{\sum_{i=1}^{2n} \*{X}_i = (A,B,C)},
\]
where  $F_D$ is some factor depending only on $\Delta$ and $n$ but independent of $(A,B,C)$, 
and  $\*{X}_1,\ldots,\*{X}_{2n}$  are independent and identically distributed random variables defined as:
\begin{align*}
  \*X_i =
  \begin{cases}
    (0,0,0) & \text{with probability } \frac{(\Delta-1)^2}{\Delta^2}, \\
    (1,0,0) & \text{with probability } \frac{\Delta-1}{\Delta^2}, \\
    (0,1,0) & \text{with probability } \frac{\Delta-1}{\Delta^2}, \\
    (0,0,1) & \text{with probability } \frac{1}{\Delta^2}.
  \end{cases}
\end{align*}
By a straightforward calculation, we have:
\begin{align*}
  \E[]{\*X_i} &= \tp{\frac{\Delta-1}{\Delta^2}, \frac{\Delta-1}{\Delta^2}, \frac{1}{\Delta^2}}, \\
  \Sigma_{D} := \mathrm{Cov}[\*X_i] &= 
  \begin{bmatrix}
    u - u^2 & -u^2 & -uv\\
    -u^2 & u-u^2 & -uv\\
    -uv & -uv & v-v^2
  \end{bmatrix},
  \quad \text{where $u = \frac{\Delta-1}{\Delta^2}$ and $v = \frac{1}{\Delta^2}$.}
\end{align*}
By~\Cref{lem:llt} with $d = 3$, there exist constants $C_{D,\mathrm{LB}}, C_{D,\mathrm{UB}} > 0$ such that
\begin{align}\label{eq:gaussian-den-hardcore}
  C_{D,\mathrm{LB}} \le \frac{\Pr[]{\sum_{i=1}^{2n} \*X_i = (A,B,C)}}{n^{-3/2}\exp\tp{-\frac{1}{4n} \*\theta^T \Sigma_{D}^{-1} \*\theta}} \le C_{D,\mathrm{UB}},
\end{align}
where $\*\theta = (A,B,C) - 2n \E[]{\*X_1} =(A,B,C)-\*m$.

Similarly, for the numerator,  the coefficient $\left[x^A y^B z^C\right] N^\star$ can be interpreted as:
\[
\left[x^A y^B z^C\right] N^\star= F_N\cdot \Pr[]{\sum_{i=1}^{n} \*Y_i = (A,B,C)},
\]
where $F_N$ is some factor depending only on $\Delta$ and $n$ but independent of $(A,B,C)$, 
and $\*Y_1,\ldots,\*Y_n$ are independent and identically distributed random variables defined as:
\begin{align*}
  \*Y_i =
  \begin{cases}
    (0,0,0) & \text{with probability } \frac{(\Delta-2)^2}{\Delta^2},\\
    (1,0,0) & \text{with probability } \frac{2(\Delta-2)}{\Delta^2},\\
    (0,1,0) & \text{with probability } \frac{2(\Delta-2)}{\Delta^2},\\
    (0,0,1) & \text{with probability } \frac{2}{\Delta^2},\\
    (2,0,0) & \text{with probability } \frac{1}{\Delta^2},\\
    (0,2,0) & \text{with probability } \frac{1}{\Delta^2},
  \end{cases}
\end{align*}
whose expectation and covariance are given by:
\begin{align*}
  \E[]{\*Y_i} &= \tp{\frac{2(\Delta-1)}{\Delta^2}, \frac{2(\Delta-1)}{\Delta^2}, \frac{2}{\Delta^2}}, \\
  \Sigma_{N} := \mathrm{Cov}[\*Y_i] 
  &= 
    \begin{bmatrix}
      s - u^2 & -u^2 & -uv\\
      -u^2 & s-u^2 & -uv\\
      -uv & -uv & v-v^2
    \end{bmatrix},\\
    &\qquad \text{where $s = \frac{2}{\Delta}$, $u = \frac{2(\Delta - 1)}{\Delta^2}$, and $v = \frac{2}{\Delta^2}$.}
\end{align*}
By~\Cref{lem:llt}, there exist constants $C_{N,\mathrm{LB}}, C_{N,\mathrm{UB}} > 0$ such that
\begin{align}\label{eq:Gaussian-num-hardcore}
  C_{N,\mathrm{LB}} \le \frac{\Pr[]{\sum_{i=1}^n \*Y_i = (A,B,C)}}{n^{-3/2} \exp\tp{-\frac{1}{2n} \*\theta^T \Sigma_{N}^{-1} \*\theta}} \le C_{N,\mathrm{UB}},
\end{align}
where $\*\theta  = (A,B,C) - n \E[]{\*Y_1}=(A,B,C)-\*m$. 

Let $\Sigma := 2 \Delta \Sigma_{N}^{-1} - (\Delta-1) \Sigma_{D}^{-1}$. 
Combining~\eqref{eq:hardcore-alpha-coefficients}, \eqref{eq:gaussian-den-hardcore}, and~\eqref{eq:Gaussian-num-hardcore}, we have:
\begin{align}\label{eq:approx-hardcore}
   \frac{C_{N,\mathrm{LB}}^\Delta}{C_{D,\mathrm{UB}}^{\Delta-1}} \frac{F_{N}^\Delta}{F_D^{\Delta-1}} \le \frac{\alpha_{A,B,C}}{\exp\tp{-\frac{1}{4n} \*\theta^T \Sigma \*\theta}} \le \frac{C_{N,\mathrm{UB}}^\Delta }{C_{D,\mathrm{LB}}^{\Delta-1}} \frac{F_N^\Delta}{F_D^{\Delta-1}},
\end{align}

Therefore, we can choose parameters $Z =  \frac{C_{N,\mathrm{LB}}^\Delta}{C_{D,\mathrm{UB}}^{\Delta-1}} \frac{F_{N}^\Delta}{F_D^{\Delta-1}}$ and $\gamma = \frac{C^\Delta_{N,\mathrm{UB}} C^{\Delta-1}_{D,\mathrm{UB}}}{C^\Delta_{N,\mathrm{LB}} C^{\Delta-1}_{D,\mathrm{LB}}}$. 
To conclude the proof of~\Cref{lem:center-alpha-hardcore}, it remains to show that $\*\theta^{T} \Sigma \*\theta=Q(\theta_A+\theta_B,\theta_C)$, where $\*\theta=(\theta_A,\theta_B,\theta_C)$, for a positive-definite quadratic form $Q(\cdot,\cdot)$. 
It is  straightforward to verify:
\begin{align*}
\nonumber & \Sigma = 2\Delta \Sigma_{N}^{-1} - (\Delta-1)\Sigma_{D}^{-1}\\
&= 
\frac{\Delta^2}{(\Delta-1)(\Delta^3 - 4 \Delta^2 + 6 \Delta - 4)}
\begin{bmatrix}
   2 - 2 \Delta^2 + \Delta^3 &  2 - 2 \Delta^2 + \Delta^3 &  4-4\Delta +\Delta^3\\
   2 - 2 \Delta^2 + \Delta^3 &  2 - 2 \Delta^2 + \Delta^3 &  4-4\Delta +\Delta^3\\
  4-4\Delta +\Delta^3 &  4-4\Delta +\Delta^3 &  8 - 16  \Delta + 12 \Delta^2 - 4 \Delta^3 + \Delta^4
\end{bmatrix}.
\end{align*}
Let  $\kappa_1 =(2 - 2 \Delta^2 + \Delta^3)\kappa$, $\kappa_2 = (4 - 4 \Delta + \Delta^3)\kappa$, and $\kappa_3 =(8 - 16  \Delta + 12 \Delta^2 - 4 \Delta^3 + \Delta^4)\kappa$, where $\kappa=\frac{\Delta^2}{(\Delta-1)(\Delta^3 - 4 \Delta^2 + 6 \Delta - 4)}$. 
Therefore, the quadratic form $\*\theta^T \Sigma \*\theta$ can be expressed as
\begin{align*}
    \*\theta^T \Sigma \*\theta = \underbrace{\kappa_1 \cdot (\theta_A+\theta_B)^2 + 2\kappa_2\cdot (\theta_A+\theta_B) \theta_C + \kappa_3 \theta_C^2}_{=:Q(\theta_A+\theta_B,\theta_C)}.
\end{align*}
The positive-definiteness of $Q(\cdot,\cdot)$ follows from:
\begin{align*}
\kappa_1\cdot\kappa_3-\kappa_2^2
&=\left[(2-2\Delta^2+\Delta^3)(8 - 16  \Delta + 12 \Delta^2 - 4 \Delta^3 + \Delta^4) - (4-4\Delta+\Delta^3)^2\right]\cdot \kappa^2 \\
&=\Delta^2 (\Delta-1) (\Delta-2)^2 \tp{\Delta^2 - 2\Delta + 2}\cdot \frac{\Delta^4}{(\Delta-1)^2(\Delta^3 - 4 \Delta^2 + 6 \Delta - 4)^2}
> 0,
\end{align*}
for  $\Delta \ge 3$.
\end{proof}

\subsection{Critical Ising model on random regular bipartite graphs}\label{sec:lower-Ising}

    In this section, we will prove~\Cref{thm:lower-bound-Ising}. The structure of this section is similar to~\Cref{sec:lower-hardcore}. We first show the spectral independence lower bound via anti-concentration in~\Cref{sec:Ising-spectral-anticoncentrate}, and then establish the anti-concentration in~\Cref{sec:anti-concentration-Ising,sec:multi-edges}.

We focus on  critical instances of the anti-ferromagnetic Ising model on random bipartite graph with maximum degree at most $\Delta$ at the critical inverse temperature $\beta = -\beta_c(\Delta) = -\frac{1}{2} \log \frac{\Delta}{\Delta-2}$, with zero external field.
The random bipartite graph $G=(L,R,E)$ is constructed as follows:

\begin{itemize}
\item Fix a constant $\Delta \ge 3$. Let $L=\{\ell_1,\ell_2,\ldots,\ell_n\}$ and $R=\{r_1,r_2,\ldots,r_n\}$ be the vertex sets.
 \item Sample $\Delta$ perfect matchings $M_1,M_2,\ldots,M_\Delta$ between $L$ and $R$ uniformly and independently at random, and construct the graph $G=(L,R,E)$ where $E=\cup_{i=1}^\Delta M_i$.
\end{itemize}
We use $\+G(n,\Delta)$ to denote the law of the random bipartite graph $G=(L,R,E)$.

\subsubsection{Spectral correlation via anti-concentration}\label{sec:Ising-spectral-anticoncentrate}
Similar to the proof of~\Cref{thm:lower-bound-hardcore}, the lower bound in~\Cref{thm:lower-bound-Ising} can be  implied by the anti-concentration of the Gibbs measure on the random instances $G=(L,R,E)\sim\+G(n,\Delta)$.

Recall $L=\{\ell_1,\ell_2,\ldots,\ell_{n}\}$ and $R=\{r_1,r_2,\ldots,r_{n}\}$. For any subsets of indices $S,T \subseteq [n]$, the configuration $\*1_{S,T}$ takes $+1$ spins on $\{\ell_i,r_j \mid i \in S, j\in T\}$, and $-1$ otherwise. Formally, the configuration $\*1_{S,T}$ satisfies
\begin{align*}
\forall u \in L \cup R,\quad \*1_{S,T}(u)=
\begin{cases}
+1,& u = \ell_i \text{ for some } i \in S, \text{ or } r_j \text{ for some } j \in T\\
-1, & \text{otherwise.}
\end{cases}
\end{align*}
For integers $0 \le s,t \le n$, we define:
\begin{align*}
\alpha_{s,t}
&:=
\sum_{\sigma\in \Omega_{s,t}}\E[G \sim \+G(n,\Delta)]{w_G(\sigma)},
\end{align*}
where $\Omega_{s,t}$ is the collection of configurations $\*1_{S,T} \in \{-1,+1\}^{L \cup R}$ with subsets $S,T \subseteq [n]$ of indices satisfying $\abs{S}=s$ and $\abs{T} = t$. Here, $w_G(\sigma)$ represents the weight of configuration $\sigma$ in the Ising model on graph $G=(V,E)$. Specifically: $w_G(\sigma) = \exp\tp{2\beta \cdot m_G(\sigma)}$ with critical inverse temperature $\beta = -\beta_c(\Delta) = -\frac{1}{2} \log \frac{\Delta}{\Delta-2}$, where $m_G(\sigma)$ is the number of monochromatic edges, i.e. the number of edges $uv \in E$ with $\sigma_u = \sigma_v$.

The following lemma establishes an anti-concentration property, which provides a key to proving the spectral correlation described in \Cref{thm:lower-bound-Ising}.
\begin{lemma}\label{lem:distant-vs-all-Ising} 
 Fix any constant $\Delta \ge 3$.
    There exist constants $\epsilon,\eta \in (0,1)$ such that
    \begin{align}\label{eq:large-deviation-Ising}
      \sum_{\substack{|s-t|>\eta\cdot  n^{3/4}}} \alpha_{s,t} > \epsilon\cdot \sum_{0 \le s,t \le n} \alpha_{s,t}
    \end{align}
    holds for all sufficiently large $n$ and all integers $0 \le s,t \le n$. 
\end{lemma}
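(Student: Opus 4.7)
The plan is to follow the two-lemma template used for \Cref{lem:distant-vs-all-hardcore}: establish an analogue of \Cref{lem:center-alpha-hardcore} giving a sharp local limit estimate for $\alpha_{s,t}$ near the center $(n/2, n/2)$, together with an analogue of \Cref{lem:distant-alpha-hardcore} giving exponential decay outside the window $\|(s,t) - (n/2, n/2)\|_\infty \le 3 n^{3/4}$, and then combine them exactly as in the proof of \Cref{lem:distant-vs-all-hardcore}. The relevant scale is $n^{3/4}$ rather than $n^{2/3}$ because the Ising model at zero external field is symmetric under the global spin flip, so the i.i.d.\ lattice random variables arising in the generating-function representation will be symmetric about their means; the extended symmetric regime of Richter's local limit theorem (\Cref{lem:llt}) then yields Gaussian approximation on the enlarged window of radius $n^{3/4}$.

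To set up the local limit theorem, I would first express $\alpha_{s,t}$ as a ratio of generating-polynomial coefficients. Because a random graph $G \sim \+G(n, \Delta)$ is the edge-disjoint union of $\Delta$ independent uniformly random perfect matchings, and because $w_G(\sigma)$ factorizes over edges,
\begin{align*}
    \E[G \sim \+G(n, \Delta)]{w_G(\*1_{S,T})} = \psi(s, t, n)^\Delta, \qquad \psi(s, t, n) := \E[M]{q^{\#\mathrm{mono}(M,\, \*1_{S,T})}},
\end{align*}
where $q = e^{2\beta} = (\Delta-2)/\Delta$ and $M$ is a uniformly random perfect matching between $L$ and $R$. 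A short hypergeometric calculation gives $\binom{n}{s}\,\psi(s,t,n) = q^{n - s - t}\,[z^s](1 + q^2 z)^t (1 + z)^{n - t}$, and consequently $\alpha_{s,t}$ can be written as a ratio of coefficients of two explicit bivariate generating polynomials in $(x,y)$. By the classical transfer from polynomial coefficients to probabilities of i.i.d.\ sums, these coefficients are proportional to $\Pr{\sum_{i=1}^n \*Y_i = (s,t)}$ and $\Pr{\sum_{i=1}^n \*X_i = (s,t)}$ for explicit i.i.d.\ lattice random variables $\*X_i, \*Y_i \in \=Z^2$ that are symmetric about $(1/2, 1/2)$. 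Applying the extended symmetric regime of \Cref{lem:llt} then yields, on $\|(s,t) - (n/2, n/2)\|_\infty \le 3 n^{3/4}$,
\begin{align*}
    Z \le \frac{\alpha_{s,t}}{\exp\left(-\tfrac{1}{n}\, Q(s - t,\, s + t - n)\right)} \le \gamma Z,
\end{align*}
for some finite $\gamma = \gamma(\Delta) > 1$, some factor $Z = Z(\Delta, n)$ depending only on $\Delta$ and $n$, and a \emph{positive-semidefinite} quadratic form $Q$. The key observation is that at the critical temperature, i.e., when $(\Delta - 1)\tanh\beta_c = 1$, the coefficient of $(s - t)^2$ in $Q$ vanishes while the coefficient of $(s + t - n)^2$ remains strictly positive. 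Thus $Q$ is degenerate in the staggered-magnetization direction, and within the window $|s - t| \le n^{3/4}$ the measure $\alpha_{s,t}$ is essentially flat in $(s - t)$ while Gaussian of width $\Theta(\sqrt n)$ in $(s + t)$. This is precisely the anti-concentration we need.

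For the exponential-decay lemma, I would follow the large-deviation strategy of \cite{sly2012computational, GSV16}: setting $s = xn$, $t = yn$ and applying Stirling, the normalized log-measure $\frac{1}{n}\log \alpha_{xn, yn}$ converges to an explicit rate function $F_\Delta(x, y)$ whose unique maximizer on $[0,1]^2$ is $(1/2, 1/2)$. The Hessian of $F_\Delta$ at $(1/2, 1/2)$ has a strictly negative eigenvalue in the $(x + y - 1)$ direction and a zero eigenvalue in the $(x - y)$ direction --- the latter being the analytical content of criticality --- so a local Taylor expansion yields $F_\Delta(x, y) \le F_\Delta(1/2, 1/2) - c\left((x + y - 1)^2 + (x - y)^4\right)$ near the center, with $F_\Delta$ strictly bounded below $F_\Delta(1/2, 1/2)$ outside. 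This gives the decay $\alpha_{s,t} \le \exp(-\Omega(\sqrt n)) \cdot \max_{s', t'} \alpha_{s', t'}$ whenever $|s - t| \le \eta n^{3/4}$ and $|s + t - n| > n^{3/4}$. Combining the two lemmas as in the proof of \Cref{lem:distant-vs-all-hardcore} then yields \Cref{lem:distant-vs-all-Ising}. The main technical obstacle is the algebraic verification that the coefficient of $(s - t)^2$ in $Q$ vanishes precisely at $q = (\Delta - 2)/\Delta$: this reduces to a linear-algebra identity relating the inverse covariance matrices of $\*X_i, \*Y_i$ to the tree-uniqueness criterion $(\Delta - 1)\tanh \beta_c = 1$, and is the place where all the critical behavior enters the proof.
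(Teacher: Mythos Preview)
Your overall architecture---local limit theorem near the center plus GSV16-style exponential decay away from it, combined exactly as in the hardcore proof---matches the paper, and your identification of the degenerate direction (the coefficient of $(s-t)^2$ in the effective quadratic form vanishes precisely at $(\Delta-1)\tanh\beta_c=1$) is correct and is indeed the heart of the argument.

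However, there is a genuine gap at the very first step. You write that $G\sim\+G(n,\Delta)$ is the ``edge-disjoint union of $\Delta$ independent uniformly random perfect matchings'' and conclude $\E[G]{w_G(\*1_{S,T})}=\psi(s,t,n)^\Delta$. The matchings are independent but \emph{not} edge-disjoint: the expected number of edges shared by some pair of matchings is $\binom{\Delta}{2}=\Theta(1)$. In the simple-graph model $\+G(n,\Delta)$ the union $E=\cup_i M_i$ collapses repeated edges, so $m_G(\sigma)$ is not the sum of the monochromatic counts in the individual matchings, and the factorization into a $\Delta$-th power fails. Your generating-function representation and everything downstream is therefore only valid for the \emph{multigraph} model $\widetilde{\+G}(n,\Delta)$ (multiset union $E=\uplus_i M_i$), i.e.\ for the quantities $\widetilde{\alpha}_{s,t}$ rather than $\alpha_{s,t}$.

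The paper handles this by proving the anti-concentration first for $\widetilde{\alpha}_{s,t}$ (exactly along the lines you outline), and then establishing a separate comparison lemma $\exp(-10^4\Delta^3)\,\alpha_{s,t}\le\widetilde{\alpha}_{s,t}\le\alpha_{s,t}$ via a coupling that removes one shuffled edge at a time and bounds the multiplicative change in the expected weight. This ``removing parallel edges'' step is not difficult but it is not automatic either, and it is the piece your proposal is missing.
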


Similar to the argument in~\Cref{remark:hardcore-anticoncentrate},~\Cref{lem:distant-vs-all-Ising} captures an anti-concentration of the Ising Gibbs measure on the random instance $\+G(n,\Delta)$.

\Cref{thm:lower-bound-Ising} follows from \Cref{lem:distant-vs-all-Ising}. To see this, we first need to prove the following lemma.

\begin{lemma}\label{lem:Ising-SI-partition-function}
For any bipartite graph $G=(L,R,E)$ with positive measure in $\+G(n,\Delta)$, it holds 
\begin{align*}
\lambda_{\max}(\Psi_{\mu_G})  
&\ge 
\frac{2}{n} \cdot \sum_{1 \le i,j \le n} \tp{\Pr[\sigma \sim \mu_G]{\sigma_{\ell_i}= \sigma_{\ell_j} = +1} - \Pr[\sigma \sim \mu_G]{\sigma_{r_i} = \sigma_{r_j} = +1}}\\
& - \frac{2}{n} \sum_{1 \le i,j \le n} \tp{\Pr[\sigma \sim \mu_G]{\sigma_{\ell_i} = \sigma_{r_j} = +1 } + \Pr[\sigma \sim \mu_G]{\sigma_{r_i} = \sigma_{\ell_j} = +1}},
\end{align*}
where the vector $\*s\in \mathbb{R}^{L\cup R}$ is defined as:
$$\*s_u = 
\begin{cases}
    1 & u \in L,\\
    -1 & u \in R.
\end{cases}$$
\end{lemma}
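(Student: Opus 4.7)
The plan is to mirror the argument used for \Cref{lem:hardcore-SI-partition-function}, adapted to Ising spins. First I would invoke the characterization of spectral independence given by \Cref{lem:SI-equiv}, item~\ref{item:SI-2}: $C$-spectral independence is equivalent to $\mathrm{Cov}(\mu_G)\preceq C\,\Pi$, where $\Pi=\mathrm{diag}\{\Var_{\mu_G}[\sigma_u]\}_{u\in L\cup R}$. Since $\Psi_{\mu_G}$ is similar to the symmetric matrix $\Pi^{-1/2}\mathrm{Cov}(\mu_G)\Pi^{-1/2}$, Courant--Fischer applied to this symmetric form with the generalized eigenvector test $\*x=\Pi^{1/2}\*s$ gives the Rayleigh-type lower bound
\[
\lambda_{\max}(\Psi_{\mu_G}) \;\ge\; \frac{\*s^{\intercal}\mathrm{Cov}(\mu_G)\,\*s}{\*s^{\intercal}\Pi\,\*s}.
\]

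Second, I would exploit the zero-external-field symmetry. For any graph $G$ drawn from $\+G(n,\Delta)$ the Hamiltonian is invariant under the global sign flip $\sigma\mapsto -\sigma$, so every single-site marginal is uniform, $\E_{\mu_G}[\sigma_u]=0$, and therefore $\Var_{\mu_G}[\sigma_u]=1$ for every $u$. This immediately yields the clean denominator $\*s^{\intercal}\Pi\,\*s=\sum_u s_u^2=2n$.

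Third, I would expand the numerator: since $\E[\sigma_u]=0$ and $\sigma_u\in\{-1,+1\}$, we have $\mathrm{Cov}(\sigma_u,\sigma_v)=\E[\sigma_u\sigma_v]=4\Pr[\sigma_u=\sigma_v=+1]-1$, using that $\Pr[\sigma_u=\sigma_v=+1]=\Pr[\sigma_u=\sigma_v=-1]$ by the same sign-flip symmetry. Decomposing $\*s^{\intercal}\mathrm{Cov}(\mu_G)\,\*s=\Var[\sum_{u\in L}\sigma_u-\sum_{v\in R}\sigma_v]$ into the four blocks $L\times L$, $R\times R$, $L\times R$, $R\times L$ (the last two being equal by symmetry of covariance, and carrying the minus sign arising from $s_u s_v=-1$ across sides), and substituting the identity, the four $n^2$ constant terms cancel exactly ($-n^2-n^2+2n^2=0$), leaving a clean linear combination of the four blockwise sums of $\Pr[\sigma_u=\sigma_v=+1]$.

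Finally, dividing by $\*s^{\intercal}\Pi\,\*s=2n$ and regrouping the $L\times R$ contribution as a symmetric average of $\Pr[\sigma_{\ell_i}=\sigma_{r_j}=+1]$ and $\Pr[\sigma_{r_i}=\sigma_{\ell_j}=+1]$ produces the claimed expression with coefficient $2/n$. The main technical hurdle is not a conceptual one but purely bookkeeping: one must carefully track the signs arising from $\*s$ across $L$ and $R$, verify that the constants coming from the $-1$ in $\E[\sigma_u\sigma_v]=4\Pr[\sigma_u=\sigma_v=+1]-1$ cancel, and confirm that $\Var[\sigma_u]=1$ holds uniformly so that the denominator collapses to $2n$ and the prefactor $4/(2n)=2/n$ emerges. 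Once the arithmetic is arranged, the inequality follows immediately from the Rayleigh bound.
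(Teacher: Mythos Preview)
Your proposal is correct and follows essentially the same argument as the paper. The paper works directly with $\Psi_{\mu_G}$ (which equals $\mathrm{Cov}(\mu_G)$ here since all single-site variances are $1$) and applies Courant--Fischer with the same test vector $\*s$; you route through the generalized Rayleigh quotient $\*s^{\intercal}\mathrm{Cov}(\mu_G)\,\*s/\*s^{\intercal}\Pi\,\*s$ and then observe $\Pi=I$, which amounts to the same computation.
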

\begin{proof}
Let $\mu=\mu_G$ denote the Gibbs distribution of the critical anti-ferromagnetic Ising model on $G=(L,R,E)$.
By the symmetry of $\pm 1$ spins, we have $\E[\sigma \sim \mu]{\sigma_u} = 0$ for all $u \in L \cup R$. Hence,
  \begin{align*}
  \forall u,v \in L \cup R, \quad \Psi_{\mu}(u,v) = 4\Pr[\sigma \sim \mu]{\sigma_u = \sigma_v = +1} - 1. 
  \end{align*}
  The quadratic form $\*s^\intercal \Psi_{\mu} \*s$ is  thus  expressed as:
  \begin{align}\label{eq:quadratic-form-Ising-2}
    \nonumber \*s^\intercal \Psi_{\mu} \*s &= 4 \sum_{1 \le i,j \le n} \tp{\Pr[\sigma \sim \mu]{\sigma_{\ell_i} = \sigma_{\ell_j} = +1} + \Pr[\sigma \sim \mu]{\sigma_{r_i} = \sigma_{r_j} = +1}}\\ 
    &- 4\sum_{1 \le i,j \le n}\tp{\Pr[\sigma \sim \mu]{\sigma_{\ell_i} = \sigma_{r_j} = +1} + \Pr[\sigma \sim \mu]{\sigma_{r_i} = \sigma_{\ell_j} = +1}}.
  \end{align}
  By Courant-Fischer theorem, the maximum eigenvalue of the influence matrix can be bounded by
  \begin{align}\label{eq:courant-fischer-Ising}
  \lambda_{\max}\tp{\Psi_{\mu}} \ge \frac{\*s^\intercal \Psi_{\mu} \*s}{\*s^\intercal \*s} = \frac{1}{2n} \*s^\intercal \Psi_{\mu} \*s.
  \end{align}
  \Cref{lem:Ising-SI-partition-function} then follows from~\eqref{eq:quadratic-form-Ising-2} and~\eqref{eq:courant-fischer-Ising}.
\end{proof}

\begin{proof}[Proof of~\Cref{thm:lower-bound-Ising} (via~\Cref{lem:distant-vs-all-Ising})]
Let $Z_G=\sum_{\sigma \in \{-1,+1\}^{L \cup R} }w_G(\sigma)$ denote the partition function, and for vertices $u,v\in L\cup R$, define
\[
Z_G^{u \leftarrow +, v \leftarrow +}:=\sum_{\substack{\sigma \in \{-1,+1\}^{L \cup R} \\ \sigma_u = \sigma_v = +1}}w_G(\sigma),
\]
which represents the portion of the partition function $Z_G$ contributed by the configuration $\sigma$ that both $\sigma_u$ and $\sigma_v$ are assigned with $+1$ spins.

First, we observe that the anti-concentration inequality~\eqref{eq:large-deviation-Ising} stated in \Cref{lem:distant-vs-all-Ising} translates to the following bound:
  \begin{align}\label{eq:target-Ising-lb}
    \frac{\sum_{1 \le i, j \le n} \E[G \sim \+G(n,\Delta)]{Z_G^{\ell_i \gets +,\ell_j \gets +} + Z_G^{r_i \gets +,r_j \gets +}- Z_G^{\ell_i \gets +,r_j \gets +} - Z_G^{r_i \gets +,\ell_j \gets +}}}{\E[G \sim \+G(n,\Delta)]{Z_G}} = \Omega(n^{3/2}).
  \end{align}
%
    To prove this, we first express the expected value $\sum_{1 \le i,j \le n} \E[G \sim \+G(n,\Delta)]{Z_G^{\ell_i \leftarrow +, \ell_j \leftarrow +}}$ as follows:
    \begin{align*}
      \sum_{1 \le i,j \le n} \E[G \sim \+G(n,\Delta)]{Z_G^{\ell_i \leftarrow +, \ell_j \leftarrow +}}
      &= \sum_{1 \le i,j \le n} \sum_{\substack{S \subseteq [n], T \subseteq [n]\\ i,j \in S}} \E[G \sim \+G(n,\Delta)]{w_G(\*1_{S,T})}\\ 
      &= \sum_{S \subseteq [n], T \subseteq [n]} |S|^2 \E[G \sim \+G(n,\Delta)]{w_G(\*1_{S,T})},
    \end{align*}
    where $\*1_{S,T}$ is a configuration that takes $+1$ spin on $\{\ell_i,r_j\mid i\in S,j\in T\}$ for $S,T \subseteq [2n]$. 
    Similarly, 
    \begin{align*}
      \sum_{1 \le i,j \le n} \E[G \sim \+G(n,\Delta)]{Z_G^{r_i\leftarrow +,r_j\leftarrow +}} = \sum_{S \subseteq [n], T \subseteq [n]} |T|^2 \E[G \sim \+G(n,\Delta)]{w_G(\*1_{S,T})},\\
      \sum_{1 \le i,j \le n} \E[G \sim \+G(n,\Delta)]{Z_G^{\ell_i\leftarrow +,r_j\leftarrow +}} = \sum_{S \subseteq [n], T \subseteq [n]} |S||T| \E[G \sim \+G(n,\Delta)]{w_G(\*1_{S,T})},\\
    \sum_{1 \le i,j \le n} \E[G \sim \+G(n,\Delta)]{Z_G^{r_i\leftarrow +,\ell_j\leftarrow +}} = \sum_{S \subseteq [n], T \subseteq [n]} |S||T| \E[G \sim \+G(n,\Delta)]{w_G(\*1_{S,T})}.
    \end{align*}
    Thus, the numerator of~\eqref{eq:target-Ising-lb} can be expressed as
     \begin{align*}
     &\sum_{1 \le i, j \le n} \E[G \sim \+G(n,\Delta)]{Z_G^{\ell_i \gets +,\ell_j \gets +} + Z_G^{r_i \gets +,r_j \gets +}- Z_G^{\ell_i \gets +,r_j \gets +} - Z_G^{r_i \gets +,\ell_j \gets +}} \\
     = & \sum_{S \subseteq [n], T \subseteq [n]} \tp{\abs{S}-\abs{T}}^2 \E[G \sim \+G(n,\Delta)]{w_G(\*1_{S,T})}\\
      = &\sum_{0\le s,t \le n} (s-t)^2 \alpha_{s,t}.
    \end{align*}
    Now, applying the anti-concentration inequality \eqref{eq:large-deviation-Ising} from \Cref{lem:distant-vs-all-Ising}, we obtain:
    \begin{align*}
        \sum_{0\le s,t \le n} (s-t)^2 \alpha_{s,t}  \ge n^{3/2} \eta^2 \epsilon \sum_{0 \le s,t \le n} \alpha_{s,t} = n^{3/2} \eta^2 \epsilon \E[G \sim \+G(n,\Delta)]{Z_G}.
    \end{align*}
    Thus, inequality~\eqref{eq:target-Ising-lb} holds by assuming \Cref{lem:distant-vs-all-Ising}.

Next, applying the probabilistic method as described in \Cref{lem:probabilistic-method} to inequality \eqref{eq:target-Ising-lb},
it ensures the existence of a graph $G=(L,R,E)$ such that:
\begin{align*}
&\sum_{1 \le i,j \le n} \tp{\Pr[\sigma \sim \mu_G]{\sigma_{\ell_i}= \sigma_{\ell_j} = +1} - \Pr[\sigma \sim \mu_G]{\sigma_{r_i} = \sigma_{r_j} = +1}}\\
 - &\sum_{1 \le i,j \le n} \tp{\Pr[\sigma \sim \mu_G]{\sigma_{\ell_i} = \sigma_{r_j} = +1 } + \Pr[\sigma \sim \mu_G]{\sigma_{r_i} = \sigma_{\ell_j} = +1}}\\
=  &\frac{\sum_{1 \le i, j \le n} \E[G \sim \+G(n,\Delta)]{Z_G^{\ell_i \gets +,\ell_j \gets +} + Z_G^{r_i \gets +,r_j \gets +}- Z_G^{\ell_i \gets +,r_j \gets +} - Z_G^{r_i \gets +,\ell_j \gets +}}}{\E[G \sim \+G(n,\Delta)]{Z_G}}
= \Omega(n^{3/2}).
\end{align*}
Finally, by \Cref{lem:Ising-SI-partition-function}, this implies $\lambda_{\max}(\Psi_{\mu_G})=\Omega(n^{1/2})$.
\end{proof}

\subsubsection{A local limit theorem for large deviation}\label{sec:anti-concentration-Ising}

It remains to prove the anti-concentration claimed in \Cref{lem:distant-vs-all-Ising}, 
which is the main technical lemma for establishing the lower bound on the mixing time. Instead of establishing~\Cref{lem:distant-vs-all-Ising}, we first prove an anti-concentration property for a variant of $\alpha_{s,t}$.

Alternatively, we consider the distribution $\widetilde{\+G}(n,\Delta)$ over $\Delta$-regular bipartite (multi-)graphs $G=(L,R,E)$, 
which differs slightly from $\+G(n,\Delta)$ by allowing parallel edges. 
Here, the set of edges $E=\uplus_{i=1}^\Delta M_i$ is the multiset union of the $\Delta$ perfect matchings $M_1,M_2,\ldots,M_\Delta$ between $L$ and $R$, sampled uniformly and independently at random.

We also similarly define $\widetilde{\alpha}_{s,t}$ for the distribution $\widetilde{\+G}(n,\Delta)$ of multigraph. Formally,
\begin{align*}
\widetilde{\alpha}_{s,t} = \sum_{\sigma \in \Omega_{s,t}} \E[G \sim \widetilde{\+G}(n,\Delta)]{w_G(\sigma)},
\end{align*}
where $\Omega_{s,t}$ is the collection of configurations $\*1_{S,T} \in \{-1,+1\}^{L \cup R}$ with subsets $S,T \subseteq [n]$ of indices satisfying $\abs{S}=s$ and $\abs{T} = t$. 

The following lemma describes the anti-concentration property for $\widetilde{\alpha}_{s,t}$.

\begin{lemma}\label{lem:distant-vs-all-Ising-multi}
 Fix any constant $\Delta \ge 3$.
    There exist constants $\epsilon,\eta \in (0,1)$ such that
    \begin{align}\label{eq:large-deviation-Ising-multi}
      \sum_{\substack{|s-t|>\eta\cdot  n^{3/4}}} \widetilde{\alpha}_{s,t} > \epsilon\cdot \sum_{0 \le s,t \le n} \widetilde{\alpha}_{s,t}
    \end{align}
    holds for all sufficiently large $n$ and all integers $0 \le s,t \le n$. 

\end{lemma}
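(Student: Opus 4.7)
The plan is to follow the same two-step strategy used for the hardcore case in Lemma~\ref{lem:distant-vs-all-hardcore}, namely to prove (a) a local limit theorem approximation of $\widetilde{\alpha}_{s,t}$ near the ``center of mass'' $(n/2,n/2)$ and (b) an exponential decay estimate far from this center, then combine them in the same structural way. Concretely, we aim to show: (a) there exist a constant $\gamma=\gamma(\Delta)>1$, a normalizing factor $Z=Z(\Delta,n)$, and a positive-definite quadratic form $\widetilde{Q}(\cdot)$ in a \emph{single} variable such that for any $(s,t)$ with $\|(s,t)-(n/2,n/2)\|_\infty\le 2n^{3/4}$,
\[
Z \le \frac{\widetilde{\alpha}_{s,t}}{\exp\!\bigl(-\widetilde{Q}(s+t-n)/n\bigr)} \le \gamma Z;
\]
and (b) if $|s+t-n|>n^{3/4}$ and $|s-t|\le \eta n^{3/4}$, then $\widetilde{\alpha}_{s,t}\le \exp(-\epsilon n^{1/2})\cdot \max_{s',t'} \widetilde{\alpha}_{s',t'}$. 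The crucial feature mirroring the hardcore proof is that the quadratic form $\widetilde{Q}$ depends only on $s+t-n$ and \emph{not} on $s-t$.

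To establish (a), we exploit that the $\Delta$ matchings defining $\widetilde{\+G}(n,\Delta)$ are independent and identically distributed. Hence
\[
\widetilde{\alpha}_{s,t} \;=\; \binom{n}{s}\binom{n}{t}\Bigl(\E_{M}\bigl[w_{(L,R,M)}(\*1_{S,T})\bigr]\Bigr)^{\Delta},
\]
where $M$ is a uniform random perfect matching. Since the number of $S$-to-$T$ edges in $M$ is hypergeometric, the inner expectation can be written as a coefficient of a bivariate generating polynomial, and the whole quantity $\widetilde{\alpha}_{s,t}$ can be expressed as a coefficient extraction from a polynomial in the style of \eqref{eq:hardcore-alpha-coefficients}. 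Applying the local limit theorem (\Cref{lem:llt}) to the associated lattice random variables produces a Gaussian approximation with an explicit quadratic form. The zero external field makes these underlying random variables \emph{symmetric}, so the symmetric case of \Cref{lem:llt} applies, extending the Gaussian regime to $\|\cdot\|_\infty\le 3n^{3/4}$. Moreover, a Taylor expansion of $\log \widetilde{\alpha}_{s,t}$ around $(n/2,n/2)$ shows that the criticality condition $(\Delta-1)\tanh|\beta|=1$ is precisely the equation that forces the Hessian to vanish along the $(s-t)$ direction (the staggered magnetization mode), leaving a nondegenerate quadratic form only in the $s+t-n$ coordinate. This is the direct analog of $Q$ in \Cref{lem:center-alpha-hardcore} depending only on $\theta_A+\theta_B$ and $\theta_C$. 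Step (b) reduces to a standard large-deviation estimate for the log-density, which on the slab $|s-t|\le \eta n^{3/4}$ attains its strict maximum at $s+t=n$ and decays exponentially away from it; this argument is essentially identical to the proof of \Cref{lem:distant-alpha-hardcore} and will be carried out in an analog that may be deferred to the appendix.

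Given (a) and (b), the derivation of \Cref{lem:distant-vs-all-Ising-multi} proceeds exactly as the derivation of \Cref{lem:distant-vs-all-hardcore} from \Cref{lem:center-alpha-hardcore,lem:distant-alpha-hardcore}. Because $\widetilde{Q}$ is a function of $s+t-n$ only, the approximation in (a) implies that within the central window, $\widetilde{\alpha}_{s,t}$ is essentially constant (up to factor $\gamma$) along each line of fixed $s+t$. Hence, for each $k$ with $|k|\le n^{3/4}$,
\[
\sum_{\substack{s+t=n+k\\ |s-t|\le \eta n^{3/4}}} \widetilde{\alpha}_{s,t}
\;\le\; \gamma \!\!\!\sum_{\substack{s+t=n+k\\ \eta n^{3/4}<|s-t|\le 2\eta n^{3/4}}}\!\!\! \widetilde{\alpha}_{s,t}
\;\le\; \gamma \!\!\sum_{\substack{s+t=n+k\\ |s-t|>\eta n^{3/4}}}\!\! \widetilde{\alpha}_{s,t}.
\]
Combining this with the tail bound from (b) for $|k|>n^{3/4}$ yields \eqref{eq:large-deviation-Ising-multi} with $\epsilon=1/(2(\gamma+1))$ and $\eta$ sufficiently small.

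The main obstacle is the computation verifying that the Hessian of $\log \widetilde{\alpha}_{s,t}$ at $(n/2,n/2)$ is exactly degenerate in the $(s-t)$ direction \emph{precisely} at the critical temperature, and that the subleading quartic correction is strictly negative along that direction so the natural fluctuation scale is $n^{3/4}$. This requires carefully expanding the log-moment generating function of the hypergeometric distribution appearing in $\E_M[w(\*1_{S,T})]$ to fourth order and identifying $(\Delta-1)\tanh|\beta|=1$ as the criterion for vanishing of the quadratic coefficient; this is the analog of the calculation in \Cref{lem:center-alpha-hardcore} that produced the degenerate form $Q(\theta_A+\theta_B,\theta_C)$. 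A secondary technical point is to justify invoking \Cref{lem:llt} in its symmetric form for the particular lattice variables arising from the generating polynomial, which is needed to push the Gaussian approximation all the way out to $n^{3/4}$ rather than only $n^{2/3}$.
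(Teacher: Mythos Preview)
Your proposal is correct and follows essentially the same approach as the paper: the paper proves exactly your steps (a) and (b) as \Cref{lem:center-alpha-Ising} and \Cref{lem:distant-alpha-Ising}, with the quadratic form computed explicitly as $\Sigma=\Delta\Sigma_N^{-1}-(\Delta-1)\Sigma_D^{-1}=\tfrac{4(\Delta-1)}{\Delta-2}\bigl[\begin{smallmatrix}1&1\\1&1\end{smallmatrix}\bigr]$, which is rank one with kernel the $(s-t)$ direction, and then combines them identically to the hardcore case. One simplification relative to what you anticipate: the paper does \emph{not} need the fourth-order expansion---the $n^{3/4}$ window comes purely from the symmetric case of \Cref{lem:llt}, and within that window the (quadratic) Gaussian approximation is constant along lines of fixed $s+t$, which is all the comparison argument uses.
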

The key idea of the proof is similar to the critical hardcore model, which is as follows: 

\begin{itemize}
    \item \textbf{Near the center}: We express $\widetilde{\alpha}_{s,t}$ as a ratio of probabilities that a random walk reaches the lattice point $(s,t)$ at some point in time. 
    This allows us to leverage a local limit theorem to obtain a tight approximation of $\widetilde{\alpha}_{s,t}$ around the ``center of mass'' at $$(s,t) \approx \*m := \tp{\frac{n}{2},\frac{n}{2}}.$$
\item \textbf{Far from the center}: For values of $(s,t)$ that deviate significantly from this center, we show that $\widetilde{\alpha}_{s,t}$ decays exponentially.
This phenomenon is well studied and is critical in proving computational hardness in the super-critical regime~\cite{sly2012computational,GSV16}.
\end{itemize}
The following lemmas formalize  these intuitions.

\begin{lemma}\label{lem:center-alpha-Ising}
Fix any constant $\Delta\ge 3$. There exist a finite $\gamma=\gamma(\Delta)>1$ and a constant $Q=Q(\Delta) > 0$ such that the following holds for all sufficiently large $n$.
There exists a factor $Z=Z(\Delta,n)$, depending only on $\Delta$ and $n$, 
such that for any integers $0 \le s,t \le n$ satisfying 
\[
\left\|(s,t) - \*m\right\|_{\infty} \le 2n^{3/4}, 
\]
where $\*m := \tp{\frac{n}{2}, \frac{n}{2}}$, the following holds:
\begin{align*}
    Z \le \frac{\widetilde{\alpha}_{s,t}}{\exp\tp{-\frac{Q}{n} \cdot (\theta_s+\theta_t)^2 }} \le \gamma \cdot Z,
  \end{align*}
where $\*\theta=(\theta_s,\theta_t) := (s,t) - \*m$ represents the deviation from the ``center of mass'' $\*m$.
\end{lemma}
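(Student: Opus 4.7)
My approach closely parallels the proof of \Cref{lem:center-alpha-hardcore}, with an algebraic simplification that emerges from the bipartite Ising structure at criticality. Writing $x:=\e^{2\beta}=(\Delta-2)/\Delta$, I first establish the exact formula
\[\widetilde\alpha_{s,t} \;=\; \binom{n}{s}\binom{n}{t}^{1-\Delta}\tilde P(s,t)^\Delta, \qquad \tilde P(s,t) := [u^t](1+xu)^s(x+u)^{n-s}.\]
This follows from the independence of the $\Delta$ matchings together with the hypergeometric count of $(+,+)$ edges under a uniform random bipartite matching, which gives $\tilde P(s,t) = \sum_k \binom{s}{k}\binom{n-s}{t-k}x^{n-s-t+2k}$. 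The $u \to 1/u$ involution (equivalently, the spin-flip symmetry of the zero-field Ising model) yields $\tilde P(s,t)=\tilde P(n-s,n-t)$, hence $\log \widetilde\alpha_{s,t}$ is an even function of $(\theta_s,\theta_t)$ and its Taylor expansion at $(n/2,n/2)$ contains no odd-order terms.

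Next I apply the local limit theorem (\Cref{lem:llt}) in two places. For $\binom{n}{s}$ and $\binom{n}{t}$, the symmetric form (via $2X-1$ where $X\sim\mathrm{Ber}(1/2)$) yields $\binom{n}{n/2+\theta} = \frac{2^n}{\sqrt n}\e^{-2\theta^2/n}\e^{O(1)}$ uniformly for $|\theta|\le 2n^{3/4}$. For $\tilde P(s,t)$ I perform a saddle-point tilt: choose $u_0=u_0(s,t)$ solving $\frac{s x u_0}{1+xu_0}+\frac{(n-s)u_0}{x+u_0}=t$, yielding the exact identity $\tilde P(s,t) = (1+xu_0)^s(x+u_0)^{n-s}u_0^{-t}\cdot \Pr{\tilde W=t}$, where $\tilde W$ is a sum of independent Bernoullis designed so that $\E{\tilde W}=t$. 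At the center, $u_0=1$; implicit differentiation gives $u_0 = 1+O(n^{-1/4})$ throughout the window. Since $\Pr{\tilde W=t}$ is evaluated at the mean of $\tilde W$, the non-symmetric form of \Cref{lem:llt} suffices and yields $\Pr{\tilde W=t}\asymp n^{-1/2}$, so $\log\tilde P(s,t) = \tilde G(s,t) - \frac{1}{2}\log n + O(1)$ with $\tilde G(s,t) := s\log(1+xu_0) + (n-s)\log(x+u_0) - t\log u_0$.

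The decisive algebraic computation is the Hessian of $\tilde G$ at $(n/2,n/2)$. By the envelope theorem, $\partial_s\tilde G = \log\frac{1+xu_0}{x+u_0}$ and $\partial_t\tilde G = -\log u_0$ both vanish at $u_0=1$, so $\tilde G$ has no linear term. Implicit differentiation gives $\partial_s u_0 = (1-x^2)/(nx)$ and $\partial_t u_0 = (1+x)^2/(nx)$, producing the rank-one Hessian $H_{\tilde G} = -(nx)^{-1}\*v\*v^\intercal$ with $\*v=(1-x,1+x)^\intercal$; and the identity $u_0(n-s,n-t)=1/u_0(s,t)$ implies $\tilde G(n-s,n-t)=\tilde G(s,t)$, so all odd-order terms in $\tilde G$ vanish. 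Combining the quadratic contributions from the three factors of $\log\widetilde\alpha_{s,t}$ and substituting $x=(\Delta-2)/\Delta$, the coefficients of $\theta_s^2$ and $\theta_t^2$ each equal $-\frac{2(\Delta-1)}{n(\Delta-2)}$ while the coefficient of $\theta_s\theta_t$ equals $-\frac{4(\Delta-1)}{n(\Delta-2)}$; the quadratic form collapses to $-\frac{Q}{n}(\theta_s+\theta_t)^2$ with $Q = 2(\Delta-1)/(\Delta-2)$. This cancellation---the equality of the $\theta_s^2$ and $\theta_t^2$ coefficients with half the cross term---is precisely the analytic signature of criticality. All remaining corrections are even in $\theta$ and contribute at most $O(\theta^4/n^3)=O(1)$ (quartic) or $O(n^{1-k/2})=o(1)$ (degree $2k\ge 6$) within the window, so exponentiation yields the lemma. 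The main technical obstacle is uniformity of the $O(1)$ bound across the entire window, which reduces to showing that $u_0$ stays in a compact neighborhood of $1$ via $|u_0-1|=O(n^{-1/4})$, so that the Bernoulli variances, the LLT prefactor constants, and the higher partial derivatives of $\tilde G$ all remain bounded above and below throughout.
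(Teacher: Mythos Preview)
Your approach is correct and recovers the same constant $Q=2(\Delta-1)/(\Delta-2)$, but it takes a detour that the paper avoids. The paper keeps the numerator intact and interprets $[u^sv^t]N$ directly as $F_N\cdot\Pr{\sum_{i=1}^n\*Y_i=(s,t)}$, where the $\*Y_i$ are i.i.d.\ two-dimensional random variables with law proportional to the monomials of $e^{2\beta}uv+u+v+e^{2\beta}$; likewise $[u^sv^t]D=F_D\cdot\Pr{\sum\*X_i=(s,t)}$. Both $\*X_1$ and $\*Y_1$ are symmetric about $(1/2,1/2)$---this is exactly the spin-flip symmetry you noticed---so the symmetric clause of \Cref{lem:llt} applies to each in the full $n^{3/4}$ window with constants depending only on $\Delta$. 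The rank-one quadratic form then falls out of the single matrix computation $\Delta\Sigma_N^{-1}-(\Delta-1)\Sigma_D^{-1}=\tfrac{4(\Delta-1)}{\Delta-2}\bigl(\begin{smallmatrix}1&1\\1&1\end{smallmatrix}\bigr)$. No tilt, no envelope theorem, no higher-order Taylor bookkeeping, no uniformity-in-$(s,t)$ argument.

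Your route, by factoring out $\binom{n}{s}$, reduces $\tilde P(s,t)$ to $\Pr{\tilde W=t}$ where $\tilde W=\sum_{i\le s}B_i+\sum_{j\le n-s}C_j$ is a sum of \emph{non-identically} distributed Bernoullis (parameters $xu_0/(1+xu_0)$ versus $u_0/(x+u_0)$, with both the parameters and the split point $s$ moving across the window). \Cref{lem:llt} as stated is for i.i.d.\ summands, so you cannot invoke it here; you need a triangular-array local limit theorem together with uniformity of its constants over the family indexed by $u_0$ and $s/n$. This is doable---you flag it yourself---but it is a genuine gap relative to what is cited, and it is precisely what the paper's two-dimensional i.i.d.\ viewpoint sidesteps: by not peeling off $\binom{n}{s}$, the i.i.d.\ structure and the symmetry needed for the $n^{3/4}$ window are both preserved automatically.
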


\begin{lemma}\label{lem:distant-alpha-Ising}
Fix any constant $\Delta\ge 3$.
    There exist constants $\epsilon,\eta > 0$ such that the following holds for all sufficiently large~$n$.
    For integers $0 \le s,t \le 2n$ satisfying both the following conditions:
    \begin{enumerate}
        \item $\abs{s+t-(m_1+m_2)} > n^{3/4}$;
        \item $\abs{s-t} \le \eta\cdot  n^{3/4}$,
    \end{enumerate}
   where $\*m =(m_1,m_2)= \tp{\frac{n}{2},\frac{n}{2}}$,
    it holds that
    \begin{align*}
        \widetilde{\alpha}_{s,t} \le \exp\tp{-\epsilon\cdot n^{1/2}}\cdot \max_{0 \le s',t' \le 2n} \widetilde{\alpha}_{s',t'}.
    \end{align*}
\end{lemma}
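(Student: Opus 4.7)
The plan is to obtain a clean closed-form expression for $\widetilde{\alpha}_{s,t}$ and then analyze it via the local limit theorem in \Cref{lem:llt}, supplemented by a Cram\'er-type bound for far deviations. By the independence of the $\Delta$ random matchings and the relabeling symmetry of a single random matching, one has $\widetilde{\alpha}_{s,t}=\binom{n}{s}\binom{n}{t}\phi(s,t)^{\Delta}$, where $\phi(s,t):=\E[M]{\exp(2\beta\, m_M(\sigma))}$ depends on $\sigma\in\Omega_{s,t}$ only through $(s,t)$. Enumerating per position via a generating polynomial yields
\[
\binom{n}{s}\binom{n}{t}\phi(s,t)=[y^s z^t]\,P(y,z)^n,\qquad P(y,z):=e^{2\beta}(1+yz)+y+z,
\]
so that
\[
\widetilde{\alpha}_{s,t}=\frac{\bigl([y^s z^t]\,P(y,z)^n\bigr)^{\Delta}}{\bigl(\binom{n}{s}\binom{n}{t}\bigr)^{\Delta-1}}.
\]

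Next I would read both the numerator and denominator as probabilities of sums of i.i.d.\ random vectors in $\=Z^2$. With $P(1,1)=2(1+e^{2\beta})$, the numerator equals $P(1,1)^{n\Delta}\cdot\Pr{\sum_{i=1}^{n}W_i=(s,t)}^{\Delta}$, where the $W_i$ are i.i.d.\ on $\{(0,0),(1,1),(1,0),(0,1)\}$ with probability $p:=e^{2\beta}/P(1,1)$ on the monochromatic pairs and $q:=1/P(1,1)$ on the bichromatic pairs; similarly $\binom{n}{s}\binom{n}{t}=4^{n}\cdot\Pr{\sum_{i=1}^{n}U_i=(s,t)}$ for $U_i$ i.i.d.\ uniform on $\{0,1\}^2$. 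Both families are symmetric about the common mean $(1/2,1/2)$, so the symmetric case of \Cref{lem:llt} applies in the enlarged range $\|\mathbf{x}-n\boldsymbol{\mu}\|_\infty\le 3n^{3/4}$. At the critical temperature $e^{2\beta}=(\Delta-2)/\Delta$ the covariances work out to
\[
\mathrm{Cov}(W_i)=\tfrac{1}{4}\begin{pmatrix}1 & -1/(\Delta-1)\\ -1/(\Delta-1) & 1\end{pmatrix},\qquad \mathrm{Cov}(U_i)=\tfrac{1}{4}I_2,
\]
and combining the Gaussian factors in the coordinates $u:=s+t-n$, $v:=s-t$ gives
\[
\log\widetilde{\alpha}_{s,t}-\log\widetilde{\alpha}_{n/2,n/2}=-\frac{2(\Delta-1)}{\Delta-2}\cdot\frac{u^2}{n}+0\cdot\frac{v^2}{n}+R_n(u,v).
\]
The vanishing of the $v^2$-coefficient is precisely the algebraic identity equivalent to the criticality condition $(\Delta-1)\tanh\beta_c=1$, and is the analytic imprint of criticality that forbids Gaussian contraction along $v$. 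Because $W_i$ and $U_i$ are symmetric the cubic $Q_3$-term in \Cref{lem:llt} vanishes, so $R_n(u,v)=O(1)+O((|u|^4+|v|^4)/n^3)=O(1)$ throughout the LLT range. In the sub-regime $n^{3/4}<|u|\le 3n^{3/4}$ with $|v|\le\eta n^{3/4}$ this already yields $\log(\widetilde{\alpha}_{s,t}/\widetilde{\alpha}_{n/2,n/2})\le-\frac{2(\Delta-1)}{\Delta-2}\,n^{1/2}+O(1)\le-\epsilon n^{1/2}$.

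For the complementary regime $|u|>3n^{3/4}$ that lies outside \Cref{lem:llt}, I would combine the Chernoff upper bound $[y^s z^t]P^n\le P(e^{\lambda_1},e^{\lambda_2})^n\,e^{-\lambda_1 s-\lambda_2 t}$ with the standard Stirling lower bound $\binom{n}{s}\binom{n}{t}\ge e^{n(H(s/n)+H(t/n))}/\poly(n)$ to obtain the uniform estimate $\log\widetilde{\alpha}_{s,t}\le nJ(s/n,t/n)+O(\log n)$, where $J(x,y):=\Delta\log P(1,1)-\Delta I_W(x,y)-(\Delta-1)(H(x)+H(y))$ and $I_W$ is the Cram\'er rate function of $W_i$; \Cref{lem:llt} at the center then supplies the matching lower bound $\log\widetilde{\alpha}_{n/2,n/2}\ge nJ(1/2,1/2)-O(\log n)$. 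The main obstacle is a global unimodality statement: by symmetry of $W_i$ it suffices to prove that $J(x,x)$ has a unique maximizer at $x=1/2$ with at-least-quadratic decay, which reduces to the one-variable comparison $\Delta I_Y(2x)\ge 2(\Delta-1)(\log 2-H(x))$ for $Y:=W_{i,1}+W_{i,2}\in\{0,1,2\}$. Both sides vanish and have matching derivatives at $x=1/2$; the second derivative of the left side exceeds that of the right by exactly $16(\Delta-1)/(\Delta-2)>0$, which is the same covariance computation as above; and the inequality is strict at the endpoints $x\in\{0,1\}$ (a short computation using $\log\Pr{Y=0}=\log p$). Continuity of $I_Y$ and $H$ together with these three anchors yields the required global bound, and extending from $x=y$ to $|x-y|\le\eta$ follows by uniform continuity in the second variable. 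Combining both regimes completes the proof.
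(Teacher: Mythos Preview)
Your plan is essentially the paper's plan: the closed form $\widetilde{\alpha}_{s,t}=\bigl([y^sz^t]P^n\bigr)^{\Delta}/\bigl(\tbinom{n}{s}\tbinom{n}{t}\bigr)^{\Delta-1}$, the probabilistic reading of both factors, the symmetric LLT for the inner window, and a rate-function argument for the outer window all match the paper's \Cref{lem:center-alpha-Ising} and \Cref{lem:approx-Ising}. Your function $J$ is the paper's $U$ up to reparametrization, and your inner-window computation $\log\widetilde{\alpha}_{s,t}-\log\widetilde{\alpha}_{n/2,n/2}=-\frac{2(\Delta-1)}{\Delta-2}\,u^2/n+O(1)$ is exactly the content of \Cref{lem:center-alpha-Ising}.

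The genuine gap is in the outer window, at the step ``continuity of $I_Y$ and $H$ together with these three anchors yields the required global bound.'' The three anchors --- equality with matching first derivative at $x=1/2$, strict second-derivative dominance there, and strict inequality at the endpoints --- do \emph{not} imply the global inequality $\Delta I_Y(2x)\ge 2(\Delta-1)(\log 2-H(x))$. A function of the form $(x-\tfrac12)^2-30(x-\tfrac12)^4+200(x-\tfrac12)^6$ satisfies all three anchors (zero value and derivative at $1/2$, positive second derivative there, positive at $0$ and $1$) yet is negative near $x=0.2$. Since $g(x)=\Delta I_Y(2x)-2(\Delta-1)(\log 2-H(x))$ is a difference of two convex functions, no convexity is available to rescue the argument. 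What is actually needed is that $g$ has \emph{no other critical point} on $(0,1)$; the paper obtains this (as its \Cref{lem:critical-Ising}) by writing the critical-point equation via Lagrange multipliers and recognizing it as the Ising tree recursion $Q=\bigl(\frac{1+e^{2\beta}Q}{e^{2\beta}+Q}\bigr)^{\Delta-1}$, which at criticality has the unique fixed point $Q=1$, forcing $\rho=1/2$. You would need either that argument or an explicit computation showing $g'$ vanishes only at $1/2$.

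A smaller issue: ``extending from $x=y$ to $|x-y|\le\eta$ by uniform continuity'' gives first-order control $O(\eta n^{-1/4})$, which swamps the $\Theta(n^{-1/2})$ gain. You need second-order control $O(\eta^2 n^{-1/2})$, which the paper gets from the concavity bound $2H(\tfrac{x+y}{2})-H(x)-H(y)=O(|x-y|^2)$ applied to the entropy part of $J$ (the $I_W$ part moves in the favorable direction by convexity and the swap symmetry of $W$). This is an easy fix, but ``uniform continuity'' is not the right reason.
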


With  \Cref{lem:center-alpha-Ising,lem:distant-alpha-Ising}, we can now prove the anti-concentration claimed in \Cref{lem:distant-vs-all-Ising}.

\begin{proof}[Proof of \Cref{lem:distant-vs-all-Ising-multi} (via \Cref{lem:center-alpha-Ising,lem:distant-alpha-Ising})]
It is sufficient to show the anti-concentration property as described in~\eqref{eq:large-deviation-Ising-multi}.

    Note that if~\Cref{lem:distant-alpha-Ising} holds with parameters $\epsilon,\eta >0$, then it automatically holds for all smaller positive parameters $\epsilon' < \epsilon$ and $\eta'<\eta$. 
    Therefore, we can assume without loss of generality that \Cref{lem:distant-alpha-Ising} holds with parameters $\epsilon, \eta \in \tp{0,0.5}$. 

    Assume $n$ is sufficiently large.
    By~\Cref{lem:center-alpha-Ising}, for any integer tuples $(s,t)$ and $(s',t')$ with $s+t=s'+t'$,
    if $\norm{(s,t)-\*m}_{\infty} \le 2n^{3/4}$ and $\norm{(s',t')-\*m}_{\infty} \le 2n^{3/4}$,
    then it holds
    \begin{align}\label{eq:alpha-same-Ising}
    \widetilde{\alpha}_{s,t} \le {\gamma}\cdot \widetilde{\alpha}_{s',t'},
    \end{align}
    where $\gamma=\gamma(\Delta)$ is the constant factor fixed in~\Cref{lem:center-alpha-hardcore}.
    
    Thus, for any integer $k$ with $\abs{k - (m_1+m_2)} \le n^{3/4}$, we have
    \begin{align}\label{eq:compare-1-Ising}
        \sum_{\substack{s+t=k\\ \abs{s-t} \le \eta\cdot n^{3/4}}} \widetilde{\alpha}_{s,t} \overset{(\star)}{\le} \gamma\cdot \sum_{\substack{s+t=k\\  \eta\cdot n^{3/4} < \abs{s-t} \le 2\eta\cdot n^{3/4}}} \widetilde{\alpha}_{s,t} \le
        \gamma\cdot\sum_{\substack{s+t=k \\ \abs{s-t} > \eta\cdot n^{3/4}}} \widetilde{\alpha}_{s,t},
    \end{align}
    where inequality $(\star)$ follows from~\eqref{eq:alpha-same-Ising} and the fact that both $s$ and $t$ are $2n^{3/4}$-close to the center $m_1=m_2$ when $\abs{s+t-(m_1+m_2)} \le n^{3/4}$ and $\abs{s-t} \le 2\eta \cdot n^{3/4}$ are both satisfied.
    
    Therefore, by~\eqref{eq:compare-1-Ising} and~\Cref{lem:distant-alpha-Ising}, we have 
    \begin{align*}
    \sum_{\abs{s-t} \le \eta \cdot n^{3/4}} \widetilde{\alpha}_{s,t} &\le \sum_{\substack{(s,t) \in \Omega\\ \abs{s-t} \le \eta\cdot n^{3/4}}} \widetilde{\alpha}_{s,t} + \sum_{\substack{(s,t) \not\in \Omega\\ \abs{s-t} \le \eta\cdot n^{3/4}}} \widetilde{\alpha}_{s,t}\\
    &\le \tp{\frac{\gamma}{\gamma+1} + n^2 \exp\tp{-\epsilon n^{1/2}}} \sum_{0 \le s,t \le n} \widetilde{\alpha}_{s,t},
    \end{align*}
    where $\Omega$ denotes the set of pairs $(s,t)$ with $\abs{s+t-(m_1+m_2)} \le n^{2/3}$.
    
    Therefore,~\eqref{eq:large-deviation-Ising-multi} holds with parameters $\epsilon' = \frac{1}{2(\gamma+1)}$ and $\eta'=\eta$, when $n$ is sufficiently large.
\end{proof}

To complete the lower bound proof, we need to prove \Cref{lem:center-alpha-hardcore,lem:distant-alpha-hardcore}.
As discussed earlier, \Cref{lem:distant-alpha-hardcore} handles the case where the parameters $(s,t)$ are far from the center.
Specifically, it says that $\widetilde{\alpha}_{s,t}$ achieves its maximum approximately at its ``center of mass'' $\*m=\tp{\frac{n}{2},\frac{n}{2}}$, and decays exponentially as  the parameters move away from this center.
This phenomenon of exponential decay away from the center has been well established  in prior work on computational hardness in the super-critical regime, such as in~\cite{GSV16}.
The proof of \Cref{lem:distant-alpha-hardcore} follows a similar approach, and is thereby deferred to~\Cref{sec:append-Ising}.

Next, it remains to prove~\Cref{lem:center-alpha-Ising}, which addresses the case where the parameters $(s,t)$ are close to the center. 
This can be established by 
the local limit theorem for large deviation.

\begin{proof}[Proof of~\Cref{lem:center-alpha-Ising}]
By the definition of $\widetilde{\alpha}_{s,t}$, the expected contribution $\widetilde{\alpha}_{s,t}$ satisfies
\begin{align}\label{eq:alpha-Ising-formula}
\nonumber\widetilde{\alpha}_{s,t} = \sum_{\substack{S,T \subseteq[n] \\ \abs{S}=s, \abs{T}=t }} \E[G\sim \widetilde{\+G}(n,\Delta)]{w_G(\*1_{S,T})}
&= \sum_{\substack{S,T \subseteq[n] \\ \abs{S}=s, \abs{T}=t }} \E[G \sim \widetilde{\+G}(n,\Delta)]{\exp\tp{2\beta \cdot m_G(\*1_{S,T})}}\\
&=\sum_{\substack{S,T \subseteq[n] \\ \abs{S}=s, \abs{T}=t }} \tp{\E[G \sim \+G(n,1)]{\exp\tp{2\beta \cdot m_G(\*1_{S,T})} }}^{\Delta}
\end{align}
where $m_G(\*1_{S,T})$ is the number of monochromatic edges.

Let $\+R_{s,t}$ be the uniform distribution over pairs of subsets $S \in \binom{[n]}{s}$ and $T \in \binom{[n]}{t}$. By symmetry, the expected value of $\exp\tp{2\beta \cdot m_G(\*1_{S,T})}$ over $G \sim \+G(n,1)$ only relies on $s$, $t$. Hence,
\begin{align}\label{eq:expected-weight-Ising}
\E[G \sim \+G(n,1)]{\exp\tp{2\beta \cdot m_G\tp{\*1_{S,T}}}} &= \E[G \sim \+G(n,1), (S',T') \sim \+R_{s,t}]{\exp\tp{2\beta \cdot m_G\tp{\*1_{S',T'}}}}.
\end{align}
Note that for any fixed graph $G \sim \+G(n,1)$, the expected value of $\exp\tp{2\beta \cdot m_G(\*1_{S',T'})}$ over distribution $(S',T') \sim \+R_{s,t}$ is the same as $\+R_{s,t}$ is a uniform distribution over fixed-size subsets $S$ and $T$. Therefore,~\eqref{eq:expected-weight-Ising} can be further expressed as
 \begin{align}\label{eq:alpha-formula-Ising}
 \E[G \sim \+G(n,1)]{\exp\tp{2\beta \cdot m_G(\*1_{S,T})} } = \tp{\binom{n}{s} \binom{n}{t}}^{-1} \sum_{\substack{S,T \subseteq[n] \\ \abs{S}=s, \abs{T}=t }} \exp\tp{2\beta \cdot m_{G_0}(\*1_{S,T})}
 \end{align}
where $G_0=(L,R,E)$ is a perfect matching with $(\ell_i,r_i) \in E$ for all $1 \le i \le n$.

Next, we express both the numerator and denominator in~\eqref{eq:alpha-formula-Ising} as coefficients of generating polynomials.
Specifically, the denominator, the binomial coefficients $\binom{n}{s} \binom{n}{t}$, corresponds to the coefficient of the term $x^s y^t$ in the generating polynomial $D = (1+x)^{n}(1+y)^n$:
\begin{align}\label{eq:den-Ising}
    \binom{n}{s}  \binom{n}{t} = [x^sy^t] \underbrace{\tp{1+x}^{n}\tp{1+y}^n}_{=:D}.
\end{align}
Similarly, the numerator in~\eqref{eq:alpha-formula-Ising} can also be expressed as the coefficient of the term $x^s y^t$ in another generating polynomial $N=\tp{\exp\tp{2\beta} x y + x + y + \exp\tp{2\beta}}^n$:
\begin{align}\label{eq:num-Ising}
\sum_{\substack{S,T \subseteq [n]\\ \abs{S} = s,\abs{T}=t}} \exp\tp{2\beta \cdot m_{G_0}(\*1_{S,T})} = [x^s y^t] \underbrace{\tp{\exp\tp{2\beta} x y + x + y + \exp\tp{2\beta}}^n}_{=:N}.
\end{align}
This follows from a similar argument as in the proof of~\eqref{eq:num-hardcore}.
Thus, by~\eqref{eq:alpha-Ising-formula},~\eqref{eq:alpha-formula-Ising},~\eqref{eq:den-Ising} and~\eqref{eq:num-Ising}, $\widetilde{\alpha}_{s,t}$ can be expressed as:
\begin{align}\label{eq:alpha-express-Ising}
\widetilde{\alpha}_{s,t} = \frac{\tp{[x^sy^t] N}^{\Delta}}{\tp{[x^s y^t] D}^{\Delta-1}}.
\end{align}

For the denominator, the coefficient $[x^s y^t] D$ can be interpreted as: 
\begin{align*}
[x^s y^t] D = F_D \cdot \Pr[]{\sum_{i=1}^{n} \*{X}_i = (s,t)},
\end{align*} 
where $F_D$ is some factor depending on $\Delta$ and $n$ but independent of $(s,t)$, and $\*{X}_1,\*{X}_2,\ldots,\*{X}_{n}$ are independent and identically distributed random variables defined as:
\begin{align*}
  \*X_i =
  \begin{cases}
    (0,0) & \text{with probability } 1/4 \\
    (1,0) & \text{with probability } 1/4 \\
    (0,1) & \text{with probability } 1/4 \\
    (1,1) & \text{with probability } 1/4.
  \end{cases}
\end{align*}
By a straightforward calculation, we have
\begin{align*}
  \E[]{\*X_i} = \tp{1/2,1/2} \quad \text{and} \quad
  \Sigma_{D} := \mathrm{Cov}[\*X_i] = 
  \begin{bmatrix}
    1/4 & 0\\
    0 & 1/4
  \end{bmatrix}
  .
\end{align*}
By~\Cref{lem:llt} with $d = 2$, there exist constants $C_{D,\mathrm{LB}}, C_{D,\mathrm{UB}} > 0$ such that
\begin{align}\label{eq:gaussian-den-Ising}
  C_{D,\mathrm{LB}} \le \frac{\Pr[]{\sum_{i=1}^{n} \*X_i = (s,t)}}{n^{-1} \exp\tp{-\frac{1}{2n} \*\theta^T \Sigma_{D}^{-1} \*\theta}} \le C_{D,\mathrm{UB}},
\end{align}
where $\*\theta := (s,t) - n \E[]{\*X_1} = \tp{s-m_1,t-m_2}$.

Similarly, the coefficient $[x^s y^t] N$  can be interpreted as 
\begin{align*}
[x^s y^t] N = F_N \cdot \Pr[]{\sum_{i=1}^n Y_i = (s,t)},
\end{align*} 
where $F_N$ is some factor depending on $\Delta$ and $n$ but independent of $(s,t)$, and $\*Y_1,\*Y_2,\ldots,\*Y_n$ are independent and identically distributed random variables defined as
\begin{align*}
\*Y_i=
\begin{cases}
(0,0) & \text{with probability } \frac{\exp\tp{2\beta}}{2(1+\exp\tp{2\beta})}\\
(0,1) & \text{with probability } \frac{1}{2(1+\exp\tp{2\beta})}\\
(1,0) & \text{with probability } \frac{1}{2(1+\exp\tp{2\beta})}\\
(1,1) & \text{with probability } \frac{\exp\tp{2\beta}}{2(1+\exp\tp{2\beta})}.
\end{cases}
\end{align*}
Recall that $\beta = -\beta_c(\Delta) = - \frac{1}{2} \log \frac{\Delta}{\Delta-2}$. A straightforward calculation yields
\begin{align*}
\E[]{\*Y_i}=\tp{1/2,1/2} \text{ and } \Sigma_{N} = \mathrm{Cov}[\*Y_i] = 
\begin{bmatrix}
1/4 & - 1/(4\Delta-4)\\ 
-1/(4\Delta-4)& 1/4
\end{bmatrix}
.
\end{align*}
By~\Cref{lem:llt}, there exist constants $C_{N,\mathrm{LB}},C_{N,\mathrm{UB}} > 0$ satisfying
\begin{align}\label{eq:guassian-num-Ising}
    C_{N,\mathrm{LB}}\le \frac{\Pr[]{\sum_{i=1}^n \*Y_i = (s,t)}}{n^{-1} \exp \tp{-\frac{1}{2n} \*\theta^T \Sigma_{N}^{-1} \*\theta}} \le C_{N,\mathrm{UB}},
\end{align}
where $\*\theta=(s,t) - n \E[]{\*Y_1} = \tp{s-m_1,t-m_2}$.
Combining~\eqref{eq:alpha-express-Ising},~\eqref{eq:gaussian-den-Ising} and~\eqref{eq:guassian-num-Ising}, it holds that
\begin{align}\label{eq:approx-Ising}
   \frac{C_{N,\mathrm{LB}}^\Delta}{C_{D,\mathrm{UB}}^{\Delta-1}} \frac{F_{N}^\Delta}{F_D^{\Delta-1}} \le \frac{\widetilde{\alpha}_{s,t}}{\exp\tp{-\frac{1}{2n} \*\theta^T \Sigma \*\theta}} \le \frac{C_{N,\mathrm{UB}}^\Delta }{C_{D,\mathrm{LB}}^{\Delta-1}} \frac{F_N^\Delta}{F_D^{\Delta-1}},
\end{align}
where $\Sigma = \Delta \Sigma_{N}^{-1} - (\Delta-1) \Sigma_{D}^{-1} = \frac{4(\Delta-1)}{\Delta-2}  
\begin{bmatrix}
    1 & 1\\
    1 & 1
\end{bmatrix}$. Hence,~\Cref{lem:center-alpha-Ising} holds with $Q = \frac{2(\Delta-1)}{\Delta-2}$, $Z =  \frac{C_{N,\mathrm{LB}}^\Delta}{C_{D,\mathrm{UB}}^{\Delta-1}} \frac{F_{N}^\Delta}{F_D^{\Delta-1}}$ and $\gamma = \frac{C^\Delta_{N,\mathrm{UB}} C^{\Delta-1}_{D,\mathrm{UB}}}{C^\Delta_{N,\mathrm{LB}} C^{\Delta-1}_{D,\mathrm{LB}}}$.
\end{proof}

\subsubsection{Removing parallel edges}\label{sec:multi-edges}

Finally, we prove that the expected contribution $\alpha_{s,t}$ and $\widetilde{\alpha}_{s,t}$ are close in value.
\begin{lemma}\label{lem:comparison-Ising}
  For sufficiently large $n$ and all pairs of integers $0 \le s,t \le n$, it holds that
  \begin{align*}
    \exp(-10^4 \Delta^3) \alpha_{s,t} \le \widetilde{\alpha}_{s,t} \le \alpha_{s,t}.
  \end{align*}
\end{lemma}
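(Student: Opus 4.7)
The proof couples the two random graph models $\+G(n,\Delta)$ and $\widetilde{\+G}(n,\Delta)$ by sampling the same $\Delta$ uniformly random perfect matchings $M_1,\ldots,M_\Delta$ between $L$ and $R$, and then defining $G$ as their set union $(L,R,\bigcup_i M_i)$ and $\widetilde G$ as their multiset union $(L,R,\biguplus_i M_i)$. Let $\rho:=|\biguplus_i M_i|-|\bigcup_i M_i|=\sum_e(\mathrm{mult}_{\widetilde G}(e)-1)$ count the excess (duplicated) edges; by the classical fact that $\E{|M_i\cap M_j|}=1$ for any two independent uniform random perfect matchings, we have $\E{\rho}\le\binom{\Delta}{2}$.

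For the upper bound, since $\beta=-\beta_c(\Delta)<0$, every monochromatic parallel edge of $\widetilde G$ multiplies the weight by an additional factor $e^{2\beta}<1$ beyond its contribution in $G$. Thus $w_{\widetilde G}(\sigma)\le w_G(\sigma)$ pointwise for every configuration $\sigma$, and $\widetilde\alpha_{s,t}\le\alpha_{s,t}$ follows by summing over $\sigma\in\Omega_{s,t}$ and taking expectation over the matchings.

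For the lower bound, since the number of excess monochromatic edges is at most $\rho$ regardless of $\sigma$, the same coupling gives the cruder pointwise estimate $w_{\widetilde G}(\sigma)\ge w_G(\sigma)\cdot(e^{2\beta})^\rho$. Summing and taking expectation yields $\widetilde\alpha_{s,t}\ge \E{(e^{2\beta})^\rho\cdot Z_{G,s,t}}$, where $Z_{G,s,t}:=\sum_{\sigma\in\Omega_{s,t}}w_G(\sigma)$. The plan is then to invoke the Cauchy--Schwarz inequality $\alpha_{s,t}^2=\E{Z_{G,s,t}}^2\le \E{(e^{2\beta})^\rho Z_{G,s,t}}\cdot \E{(e^{-2\beta})^\rho Z_{G,s,t}}$, which reduces the goal to the upper bound $\E{(e^{-2\beta})^\rho Z_{G,s,t}}\le \exp(10^4\Delta^3)\cdot\alpha_{s,t}$.

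This last estimate combines two ingredients: first, an exponential tail bound $\Pr{\rho\ge K}\le \exp(-cK)$ for $K$ much larger than $\E{\rho}=O(\Delta^2)$, obtained by dominating $\rho$ by $\sum_{i<j}|M_i\cap M_j|$, a sum of $\binom{\Delta}{2}$ random variables with bounded exponential moments (via Poisson approximation to the number of common edges between two random matchings, or via negative association of random permutations); and second, a decoupling argument showing that reweighting by $Z_{G,s,t}$ does not appreciably distort the tail of $\rho$, since $\rho$ is determined by edge coincidences across different matchings while $Z_{G,s,t}$ depends on which specific edges appear and their mono/bichromatic status. The main technical difficulty is the decoupling step: the two quantities are coupled through the shared matching ensemble, so a clean second-moment bound on $\E{Z_{G,s,t}^2}$ is needed, which can be extracted from the explicit generating-function representation derived in the proof of \Cref{lem:center-alpha-Ising}. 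The generous constant $10^4\Delta^3$ leaves ample slack to absorb polynomial-in-$\Delta$ losses in this step.
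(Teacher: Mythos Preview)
Your upper bound $\widetilde\alpha_{s,t}\le\alpha_{s,t}$ is correct and matches the paper's argument verbatim.

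For the lower bound, your approach is genuinely different from the paper's, and the ``decoupling step'' you flag as the main difficulty is a real gap that your sketch does not close. After Cauchy--Schwarz you need $\E{e^{-2\beta\rho}Z_{G,s,t}}\le C\,\E{Z_{G,s,t}}$. You propose to get this from a second-moment bound on $Z_{G,s,t}$, but the generating-function identity in \Cref{lem:center-alpha-Ising} computes \emph{first} moments only; a second moment would require analyzing pairs of configurations on the same random graph, a higher-dimensional optimization that the paper never carries out. More seriously, $\rho$ and $Z_{G,s,t}$ are positively correlated (more edge collisions means fewer simple edges, which raises the antiferromagnetic weight by up to $e^{-2\beta}$ per removed edge), so the size-biased law of $\rho$ is genuinely shifted; the Poisson tail for $\rho$ under the \emph{unweighted} matching law does not by itself control the weighted expectation. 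One can try to push this through via H\"older across the $\binom{\Delta}{2}$ pairwise intersections and then analyze two independent \emph{reweighted} matchings, but this is a nontrivial additional argument, and it must hold uniformly over all $0\le s,t\le n$, including degenerate values. As written, the proposal is a plausible outline but not a proof.

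The paper avoids second moments entirely. It writes $\widetilde\alpha_{s,t}/\alpha_{s,t}$ as a telescoping product over the $\Delta$ matchings, and for each factor fixes one matching $H$ and peels off the $n(\Delta-1)$ edges of $G\sim\+G(n,\Delta-1)$ one at a time in random order. For each edge $e_i$, a simple switching coupling (swap two endpoints inside one matching) between the events $e_i\in H$ and $e_i\notin H$ changes at most four edges, so the relevant ratio is at least $1-e^{-16\beta}/n$. Telescoping over all edges and all $\Delta$ layers gives the constant $\exp(-10^4\Delta^3)$ directly, with no concentration or moment estimates needed.
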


With this comparison lemma, we can verify~\eqref{eq:target-Ising-lb} in~\Cref{lem:distant-vs-all-Ising} for $\alpha_{s,t}$.

\begin{proof}[Proof of~\Cref{lem:distant-vs-all-Ising}]
    Let $\eta$ and $\epsilon$ be the constants in~\Cref{lem:distant-vs-all-Ising-multi}. 
    By~\Cref{lem:distant-vs-all-Ising-multi,lem:comparison-Ising},
    \begin{align*}
        \sum_{\abs{s-t} > \eta n^{3/4}} \alpha_{s,t} \ge \sum_{\abs{s-t} > \eta n^{3/4}} \widetilde{\alpha}_{s,t} > \epsilon \sum_{0 \le s,t\le n} \widetilde{\alpha}_{s,t} \ge \epsilon \exp\tp{-10^4 \Delta^3} \sum_{0 \le s,t \le n} \alpha_{s,t}.
    \end{align*}
    Thus,~\Cref{lem:distant-vs-all-Ising} holds with $\epsilon' = \exp\tp{-10^4\Delta^3} \epsilon$ and $\eta' = \eta$.
\end{proof}

It only remains to prove~\Cref{lem:comparison-Ising}, which is obtained by a coupling argument.
\begin{proof}[Proof of~\Cref{lem:comparison-Ising}]
  For any (multi)graph $\widetilde{G}=(V,E)$ and configuration $\*1_{S,T}$, let $G$ be the graph obtained by removing parallel edges in $\widetilde{G}$. By the definition of $w_G(\*1_{S,T}) = \exp\tp{2\beta \cdot m_G(\*1_{S,T})}$ and $\beta = -\beta_c(\Delta) < 0$, it holds that $w_{\widetilde{G}}(\*1_{S,T}) \le w_{G}(\*1_{S,T})$, implying $\widetilde{\alpha}_{s,t} \le \alpha_{s,t}$ by the definition of $\widetilde{\alpha}_{s,t}$ and $\alpha_{s,t}$. We now focus on proving the lower bound of $\widetilde{\alpha}_{s,t}$. 
  
  Fix subsets $S, T \subseteq [n]$ with $\abs{S} = s$ and $\abs{T} = t$. It holds that
  \begin{align*}
    \frac{\widetilde{\alpha}_{s,t}}{\alpha_{s,t}} &= \frac{\E[G \sim \widetilde{\+G}(n,\Delta)]{w_G(\*1_{S,T})}}{\E[G \sim \+G(n,\Delta)]{w_G(\*1_{S,T} )}}
    = \frac{\tp{\E[G \sim \+G(n,1)]{w_G(\*1_{S,T})}}^{\Delta}}{\E[G \sim \+G(n,\Delta)]{w_G(\*1_{S,T})}} \\
    &= \prod_{i=1}^{\Delta} \frac{\E[G_{i-1} \sim \+G(n,i-1)]{w_{G_{i-1}}(\*1_{S,T})}\E[H \sim \+G(n,1)]{w_H(\*1_{S,T})}}{\E[G_i \sim \+G(n,i)]{w_{G_i}(\*1_{S,T})}},
  \end{align*}
  Therefore, it suffices to prove that \begin{align*}
    \frac{\E[G_{i-1} \sim \+G(n,i-1)]{w_{G_{i-1}}(\*1_{S,T})}\E[H \sim \+G(n,1)]{w_H(\*1_{S,T})}}{\E[G_i \sim \+G(n,i)]{w_{G_i}(\*1_{S,T})}} \ge \exp\tp{-10^4\Delta^2}
  \end{align*} for all $1 \le i \le \Delta$.
  Without loss of generality, we only consider the case where $i = \Delta$. Note that
  \begin{align*}
    \E[G \sim \+G(n,\Delta)]{w_{G}(\*1_{S,T})} = \E[\substack{H \sim \+G(n,1)\\ G \sim \+G(n,\Delta-1)}]{w_G(\*1_{S,T}) w_{H \setminus G}(\*1_{S,T})}. 
  \end{align*}
  Therefore, it suffices to prove that for any fixed perfect matching $H \sim \+G(n,1)$,
  \begin{align}\label{eq:target-2-multi}
    \E[G \sim \+G(n,\Delta-1)]{w_G(\*1_{S,T})} w_H(\*1_{S,T}) \ge \exp\tp{- 10^4 \Delta^2}\E[G \sim \+G(n,\Delta-1)]{w_G(\*1_{S,T}) w_{H \setminus G}(\*1_{S,T})}.
  \end{align}
  Without loss of generality, we assume that the matching $H=(L,R,\{(\ell_i,r_i)\}_{1 \le i \le n})$. 
  Recall that $G=(L,R,E) \sim \+G(n,\Delta-1)$ is the union of $\Delta-1$ independent random perfect matchings.
  Let $e_{(i-1)n+1},e_{(i-1)n+2},\ldots,e_{in}$ be the edges in the $i$-th matching shuffled uniformly at random. 
  Denote the graph $(L,R,E\setminus \{e_1,e_2,\ldots,e_i\})$ by $G^{-i}$ (in particular, we use the convention $G^{-0} = G$).
  We now claim that for all $1 \le i \le n(\Delta-1)$,
  \begin{align}\label{eq:target-multi}
    \frac{\E[G \sim \+G(n,\Delta-1)]{w_G(\*1_{S,T}) w_{H \setminus G^{-(i-1)}}(\*1_{S,T})}}{\E[G \sim \+G(n,\Delta-1)]{w_G(\*1_{S,T}) w_{H \setminus G^{-i}}(\*1_{S,T})}} \ge 1 - \frac{\exp\tp{-16\beta}}{n} \ge \exp\tp{-\frac{10^4}{n}},
  \end{align}
  where the last inequality follows from $\exp\tp{-2\beta} = \exp\tp{2\beta_c(\Delta)} \le 3$ and $n$ being sufficiently large.
  Assuming the correctness of~\eqref{eq:target-multi}, the inequality~\eqref{eq:target-2-multi} holds by telescoping.

  To simplify the notations, we will use
  \begin{align*}
    A_{i,j} &:= \E[G \sim \+G(n,\Delta-1)]{w_G(\*1_{S,T}) w_{H \setminus G^{-i}}(\*1_{S,T}) \mid e_j \in H} \\
    B_{i,j} &:= \E[G \sim \+G(n,\Delta-1)]{w_G(\*1_{S,T}) w_{H \setminus G^{-i}}(\*1_{S,T}) \mid e_j \not\in H}.
  \end{align*}
  Note that when $e_{i} \not\in H$, $H\setminus G^{-(i-1)} = H\setminus G^{-i}$ and we have $w_{H\setminus G^{-(i-1)}}(\*1_{S,T}) = w_{H\setminus G^{-i}}(\*1_{S,T})$.
  After taking the expectation, this implies that
  \begin{align} \label{eq:ABi-relationship}
    B_{i-1,i} = B_{i,i}.
  \end{align}
  Moreover, we claim that the following inequality holds:
  \begin{align} \label{eq:target-3-multi}
    A_{i,i} \leq \exp(-16\beta)B_{i,i},
  \end{align}
  which implies that
  \begin{align} \label{eq:cor-3-multi}
    \nonumber A_{i,i} &\leq \exp(-16\beta) \tp{\frac{1}{n} A_{i,i} + \frac{n-1}{n} B_{i,i}} \\
    &= \exp(-16\beta) \E[G \sim \+G(n,\Delta-1)]{w_G(\*1_{S,T}) w_{H \setminus G^{-i}}(\*1_{S,T})}.
  \end{align}
  In order to prove~\eqref{eq:target-multi}, by the law of total expectation, we have
  \begin{align*}
    &\E[G \sim \+G(n,\Delta-1)]{w_G(\*1_{S,T}) w_{H \setminus G^{-(i-1)}}(\*1_{S,T})} \\
    &\quad = \frac{1}{n} A_{i-1,i} + \frac{n-1}{n} B_{i-1,i} \\
    (\text{by \eqref{eq:ABi-relationship}}) &\quad \geq \frac{n-1}{n} B_{i,i} = \E[G \sim \+G(n,\Delta-1)]{w_G(\*1_{S,T}) w_{H \setminus G^{-i}}(\*1_{S,T})} - \frac{1}{n} A_{i,i} \\
    (\text{by \eqref{eq:cor-3-multi}}) &\quad \geq \tp{1 - \frac{\exp(-16\beta)}{n}} \E[G \sim \+G(n,\Delta-1)]{w_G(\*1_{S,T}) w_{H \setminus G^{-i}}(\*1_{S,T})}.
  \end{align*}

  Now, we only left to prove~\eqref{eq:target-3-multi}.
  We finish the proof by constructing a coupling $\+C$ between $\+G(n,\Delta-1)$ condition on the event $e_i \in H$ and the event $e_i \not \in H$, such that $(G_1=(L,R,E_1),G_2=(L,R,E_2)) \sim \+C$ satisfying $|E_1 \oplus E_2| \le 4$:
  we sample $G_2 = (L,R,E_2)$ from $\+G(n,\Delta-1)$ condition on the event $e_i \not\in H$. In this case, suppose $e_i = (\ell_u, r_v)$ (since $e_i \not\in H$) and $e_j=(\ell_w,r_u)$ is the edge that origins from the same matching, i.e. $\lfloor \frac{i-1}{n}\rfloor = \lfloor \frac{j-1}{n} \rfloor$. We construct $E_1$ by substituting $e_i$ with $(\ell_u,r_u)$ and $e_j$ with $(\ell_w,r_v)$.

  With this coupling $\+C$, by definition, $A_{i,i} - \exp(-16\beta)B_{i,i}$ equals to 
  \begin{align*}
    \E[(G_1,G_2) \sim \+C]{w_{G_1}(\*1_{S,T}) w_{H \setminus G_1^{-i}}(\*1_{S,T}) - \exp\tp{-16\beta} w_{G_2}(\*1_{S,T}) w_{H \setminus G_2^{-i}}(\*1_{S,T})} \le 0,
  \end{align*}
  as the size of symmetric difference between edge sets of $G_1$ and $G_2$ is at most $4$.
  This finishes the proof of \eqref{eq:target-3-multi} and hence the whole proof.
\end{proof}


\section{Deterministic Algorithm}
\label{sec:deterministic}
In this section, we show that the noising/denoising framework can be applied to deterministic counting.
This results in sub-exponential time deterministic algorithms for approximating the partition function for the critical hardcore and Ising models, 
thereby proving \Cref{thm:main-deterministic}.

Let $\-{wt}:2^{[n]} \to \=R_{\geq 0}$ be a non-negative weight function over all subsets of $[n]$.
We normalize $\-{wt}(\cdot)$ into a probability distribution $\mu$ as follow
\begin{align*}
    \forall T \subseteq [n], \quad \mu(T) = \frac{\-{wt}(T)}{Z},
\end{align*}
where $Z = \sum_T \-{wt}(T)$ is the partition function for $\mu$.
For $\theta \in (0,1)$ and $S \subseteq [n]$, we define 
\begin{align} \label{eq:def-Z-S-theta}
    Z_{S, \theta} := \sum_{T\supseteq S} \-{wt}(T) (1 - \theta)^{\abs{T}}.
\end{align}
This quantity $Z_{S,\theta}$ corresponds to the partition function for the distribution $((1-\theta)*\mu)(\cdot\mid S)$, which is defined as $(1-\theta)*\mu$ (as described in \eqref{eq:def-mu-external-field}) conditioned on $S$ being occupied.

\begin{lemma} \label{lem:subexp-counting}
Fix $\theta \in (0, 1)$ and $\epsilon_0 > 0$.
Suppose for every $S\subseteq [n]$, $\hat{Z}_{S,\theta}$ is an approximation of $Z_{S,\theta}$ within relative error $(1\pm \epsilon_0)$.
Then, for any $\epsilon > 0$, let $k = \ctp{\e^2 n \cdot \frac{\theta}{1 - \theta} + \log \frac{2}{\epsilon}}$. The quantity:
$$\hat{Z}=\sum_{S:\abs{S} < k}\tp{\frac{\theta}{1-\theta}}^{\abs{S}} \hat{Z}_{S,\theta}$$
approximates the partition function $Z$ within relative error $(1\pm (\epsilon_0 + \epsilon))$.
\end{lemma}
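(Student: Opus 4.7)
\medskip

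\noindent\textbf{Proof proposal for Lemma~\ref{lem:subexp-counting}.}
The plan is to first derive an exact identity expressing $Z$ as a full sum of the terms $(\theta/(1-\theta))^{|S|} Z_{S,\theta}$, then bound the mass of the tail $|S| \ge k$ as a small fraction of $Z$, and finally combine this truncation error with the per-term multiplicative error $\epsilon_0$.

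First I would establish the identity
\[
 Z \;=\; \sum_{S \subseteq [n]} \left(\frac{\theta}{1-\theta}\right)^{|S|} Z_{S,\theta}
\]
by interchanging the order of summation in \eqref{eq:def-Z-S-theta}: for each $T \subseteq [n]$ the coefficient of $\mathrm{wt}(T)$ equals $(1-\theta)^{|T|} \sum_{S \subseteq T}(\theta/(1-\theta))^{|S|} = (1-\theta)^{|T|} \cdot (1-\theta)^{-|T|} = 1$. The same calculation, restricted to $S \subseteq T$ with $|S| \ge k$, shows that the truncation error satisfies
\[
 \sum_{S:\,|S|\ge k} \left(\tfrac{\theta}{1-\theta}\right)^{|S|} Z_{S,\theta}
 \;=\; \sum_{T} \mathrm{wt}(T)\,\Pr\!\left[\mathrm{Bin}(|T|,\theta) \ge k\right],
\]
since $\binom{m}{s}(\theta/(1-\theta))^s (1-\theta)^m = \binom{m}{s}\theta^s(1-\theta)^{m-s}$ is exactly the binomial p.m.f.

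Next I would bound the tail uniformly over $|T| \le n$. Writing $\lambda := |T|\theta/(1-\theta) \le n\theta/(1-\theta)$ and using $\binom{m}{s} \le m^s/s! \le (em/s)^s$ together with $s! \ge (s/e)^s$, one gets
\[
 \sum_{s \ge k} \binom{m}{s}\!\left(\tfrac{\theta}{1-\theta}\right)^{s}
 \;\le\; \sum_{s \ge k} \frac{\lambda^s}{s!}
 \;\le\; \sum_{s \ge k} (e\lambda/s)^s.
\]
Because $k \ge e^{2}\lambda$ by our choice, each factor $e\lambda/s \le e\lambda/k \le e^{-1}$, so the sum is dominated by a geometric series with ratio $1/e$, giving $\sum_{s\ge k}(e\lambda/s)^s \le 2e^{-k}$. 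Multiplying by $(1-\theta)^{|T|} \le 1$ and summing over $T$ therefore yields truncation error at most $2Z e^{-k} \le \epsilon Z$ by the choice $k \ge \log(2/\epsilon)$.

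Finally, writing $Z_{\mathrm{trunc}} := \sum_{S:|S|<k}(\theta/(1-\theta))^{|S|} Z_{S,\theta}$, I would combine the two error sources: the identity gives $Z_{\mathrm{trunc}} \in [(1-\epsilon)Z, Z]$, and the per-term approximation gives $\hat Z \in [(1-\epsilon_0)Z_{\mathrm{trunc}}, (1+\epsilon_0)Z_{\mathrm{trunc}}]$, so $\hat Z \in [(1-\epsilon_0)(1-\epsilon)Z,(1+\epsilon_0)Z] \subseteq (1 \pm (\epsilon_0+\epsilon))Z$. The only delicate step is the tail estimate; the main thing to verify carefully is that the factor $e^2$ in the choice of $k$ is exactly what is needed to make the geometric tail bound $\sum_{s\ge k}(e\lambda/s)^s \le 2e^{-k}$ go through for all $|T|\le n$ simultaneously, so that the truncation loss can be absorbed into a clean additive $+\epsilon$ in the relative error.
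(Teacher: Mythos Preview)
Your proposal is correct and follows essentially the same approach as the paper: establish the identity $Z=\sum_S (\theta/(1-\theta))^{|S|}Z_{S,\theta}$, bound the tail $|S|\ge k$ by a geometric series using $\binom{m}{s}\le (em/s)^s$ together with $k\ge e^2 n\theta/(1-\theta)$, and then combine the truncation error $\epsilon$ with the per-term error $\epsilon_0$. The only cosmetic differences are that the paper derives the identity via the continuous-time down walk rather than a direct interchange of sums, and bounds the tail by replacing each $Z_{S,\theta}$ with $Z_{\emptyset,\theta}$ and counting the $\binom{n}{\ell}$ subsets of size $\ell$, whereas you rewrite the tail as $\sum_T \mathrm{wt}(T)\Pr[\mathrm{Bin}(|T|,\theta)\ge k]$; both routes lead to the same $2e^{-k}\le \epsilon$ estimate with the same constants.
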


\begin{proof}
Let $(X_t)_{t\in [0,1]}$ be the continuous-time down walk described in \Cref{definition-decreasing-process}.
Fix any subset $S \subseteq [n]$. 
According to the law of total probability applied to $(X_t)_{t\in [0,1]}$, we have:
\begin{align*}
    \Pr{X_\theta = S}
    &= \sum_{T\subseteq[n]} \Pr{X_0 = T} \Pr{X_\theta = S \mid X_0 = T} \\
    &= \sum_{T\supseteq S} \mu(T) \theta^{\abs{S}} (1 - \theta)^{\abs{T}-\abs{S}}.
\end{align*}
As a well-defined probability space, we know that:
\begin{align*}
1 = \sum_{S\subseteq[n]} \Pr{X_\theta = S} = \sum_{S\subseteq[n]} \sum_{T\supseteq S} \mu(T) \theta^{\abs{S}} (1 - \theta)^{\abs{T} - \abs{S}}.
\end{align*}
This implies
\begin{align*}
    Z = \sum_{S\subseteq[n]} \sum_{T\supseteq S} \-{wt}(T) \theta^{\abs{S}} (1 - \theta)^{\abs{T} - \abs{S}} = \sum_{S\subseteq[n]} \tp{\frac{\theta}{1-\theta}}^{\abs{S}} Z_{S,\theta}.
\end{align*}
By definition, $Z_{S,\theta} \leq Z_{\emptyset, \theta}\le Z$. 
Now, let $k = \ctp{\e^2 n \cdot \frac{\theta}{1 - \theta} + \log 2/\epsilon}$. We consider the contribution of sets $S\subseteq[n]$ with $\abs{S}\ge k$ in the partition function:
\begin{align*}
    \frac{\sum_{S:\abs{S}\geq k}\tp{\frac{\theta}{1-\theta}}^{\abs{S}} Z_{S,\theta}}{Z_{\emptyset, \theta}} 
    &\leq \sum_{\ell=k}^n \binom{n}{\ell} \tp{\frac{\theta}{1 - \theta}}^\ell 
    \leq \sum_{\ell=k}^n \tp{\frac{\e n}{\ell} \cdot \frac{\theta}{1 - \theta}}^\ell \\
    &\le \tp{\frac{\e n}{k} \frac{\theta}{1 - \theta}}^k \cdot \frac{1}{1 - \frac{\e n}{k} \frac{\theta}{1 - \theta}} \leq \epsilon.
\end{align*}
This implies that $\sum_{S:\abs{S} < k}\tp{\frac{\theta}{1-\theta}}^{\abs{S}} Z_{S,\theta}$ approximates $Z$ as follows:
\begin{align*}
   Z\geq \sum_{S:\abs{S} < k}\tp{\frac{\theta}{1-\theta}}^{\abs{S}} Z_{S,\theta} \geq Z - \epsilon Z_{\emptyset, \theta} \geq (1 - \epsilon)Z.
\end{align*}
Now, suppose we have an approximation $\hat{Z}_{S,\theta}\in [(1-\epsilon_0)Z_{S,\theta},(1+\epsilon_0)Z_{S,\theta}]$ of $Z_{S,\theta}$. 
Then,
\[
\hat{Z}=\sum_{S:\abs{S} < k}\tp{\frac{\theta}{1-\theta}}^{\abs{S}} \hat{Z}_{S,\theta}\ge (1-\epsilon_0)\sum_{S:\abs{S} < k}\tp{\frac{\theta}{1-\theta}}^{\abs{S}} Z_{S,\theta}\ge (1-\epsilon_0)(1 - \epsilon)Z\ge (1-\epsilon_0-\epsilon)Z,
\]
and 
\[
\hat{Z}=\sum_{S:\abs{S} < k}\tp{\frac{\theta}{1-\theta}}^{\abs{S}} \hat{Z}_{S,\theta} \le (1+\epsilon_0)\sum_{S:\abs{S} < k}\tp{\frac{\theta}{1-\theta}}^{\abs{S}} Z_{S,\theta}\le (1+\epsilon_0)Z. \qedhere
\]
\end{proof}

According to \Cref{lem:subexp-counting},
to estimate $Z$, 
we can enumerate every subset $S\subseteq[n]$ with $\abs{S} < k$ and estimate $Z_{S,\theta}$, which is typically more sub-critical and thus  expected to be easier to estimate.
The overhead of this enumeration in the running time  is bounded by:
\begin{align*}
     O(n^k) & = \exp\tp{O\left(  \frac{\theta }{1-\theta}\cdot n\log n + \log\frac{1}{\epsilon}\cdot \log n\right)}. 
\end{align*}
This is formalized in the following corollary of \Cref{lem:subexp-counting}.

\begin{corollary} \label{cor:subexp-deterministic-counting}
    Suppose there exist $\theta \in (0, 1)$, $\epsilon_0 > 0$, and $T > 0$ such that for every $S\subseteq [n]$, there is a deterministic algorithm $\+A$ which can approximate $Z_{S,\theta}$ in time $T$ within relative error $(1\pm \epsilon_0)$.
    Then, for any $\epsilon > 0$, there exists a deterministic algorithm that can approximate the partition function $Z$ within relative error $(1\pm (\epsilon_0 + \epsilon))$ in time $T\cdot\exp\tp{O\left(  \frac{\theta }{1-\theta}\cdot n\log n + \log\frac{1}{\epsilon}\cdot \log n\right)}$.
\end{corollary}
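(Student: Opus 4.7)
The plan is a direct unpacking of Lemma~\ref{lem:subexp-counting}. Given the algorithm $\mathcal{A}$ and parameters $\theta, \epsilon_0$, I set $k = \lceil \mathrm{e}^2 n \cdot \frac{\theta}{1-\theta} + \log\frac{2}{\epsilon}\rceil$ and define the deterministic algorithm which enumerates every $S \subseteq [n]$ with $|S| < k$, invokes $\mathcal{A}$ on each such $S$ to obtain $\hat{Z}_{S,\theta}$, and outputs
\[
    \hat{Z} = \sum_{S:\,|S|<k} \left(\frac{\theta}{1-\theta}\right)^{|S|} \hat{Z}_{S,\theta}.
\]
By Lemma~\ref{lem:subexp-counting}, this $\hat{Z}$ approximates $Z$ within relative error $1 \pm (\epsilon_0 + \epsilon)$, which is exactly the desired accuracy.

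For the running time, the number of subsets $S\subseteq[n]$ with $|S| < k$ is at most $\sum_{\ell=0}^{k-1}\binom{n}{\ell} \leq k \cdot n^{k-1} \leq n^{k+1}$. Each invocation of $\mathcal{A}$ costs $T$, and the final weighted summation over these subsets adds only lower-order overhead. Substituting the value of $k$, the running time is
\[
    T \cdot n^{k+1} = T \cdot \exp\bigl((k+1)\log n\bigr) = T \cdot \exp\left(O\left(\frac{\theta}{1-\theta} \cdot n \log n + \log\frac{1}{\epsilon} \cdot \log n\right)\right),
\]
which matches the claim.

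There is no genuine obstacle here: the corollary is essentially a restatement of Lemma~\ref{lem:subexp-counting} packaged as an algorithmic reduction, so the only thing to verify is the bookkeeping on the enumeration cost and the substitution of $k$ into the exponent. The conceptual content lives entirely in Lemma~\ref{lem:subexp-counting}, whose proof already shows that truncating the sum at $|S| < k$ loses only a $(1-\epsilon)$ factor because the tail is controlled by a simple binomial estimate $\sum_{\ell\geq k}\binom{n}{\ell}(\theta/(1-\theta))^\ell \leq \epsilon$ once $k \geq \mathrm{e}^2 n\theta/(1-\theta) + \log(2/\epsilon)$.
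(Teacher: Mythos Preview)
Your proposal is correct and matches the paper's approach exactly: the paper also derives the corollary directly from Lemma~\ref{lem:subexp-counting} by enumerating all $S$ with $|S|<k$ and bounding the enumeration cost by $O(n^k)=\exp\bigl(O(\frac{\theta}{1-\theta}\,n\log n+\log\frac{1}{\epsilon}\,\log n)\bigr)$. Your write-up is in fact more detailed than the paper's, which gives only the one-line enumeration bound as informal justification before stating the corollary.
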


\subsection{Applications to critical hardcore and Ising models}

It is well established that, within the $\delta$-uniqueness regime, the correlation decay method provides an approximation deterministically  for the partition function $Z_{S, \theta}$, with a running time of $(\frac{n}{\epsilon_0})^{O(\frac{\log \Delta}{\delta})}$.

\begin{theorem}[\text{\cite{li2012correlation}}] \label{lem:deterministic-easy-regime}
    Let $G = ([n], E)$ be a graph with $n$ vertices and maximum degree $\Delta$, and $\mu$ be the Gibbs distribution for either the hardcore model or the Ising model on $G$.
    Moreover, let $\tau_\Lambda$ be a pinning on $\Lambda \subseteq [n]$, and $u \in [n] \setminus \Lambda$ be a free vertex.
    Given any $\epsilon_1 > 0$, there exists a deterministic algorithm $\+M(\mu, \tau_\Lambda, u)$ that can approximate the marginal probability $\mu^{\tau_\Lambda}_u$ within relative error $(1 \pm \epsilon_1)$, with the following guarantees on the running time:
    \begin{itemize}
        \item \textbf{Hardcore model}: If the model is the hardcore model and the parameter satisfies $\lambda \leq (1 - \theta) \lambda_c(\Delta)$, the algorithm $\+M(\mu, \tau_\Lambda, u)$ halts in time $(1/\epsilon_1)^{O(\frac{\log \Delta}{\theta})}$.
        \item \textbf{Ising model}: If the model is the critical Ising model, the algorithm $\+M((1-\theta) *\mu, \tau_\Lambda, u)$ halts in time $(1/\epsilon_1)^{O(\frac{\log \Delta}{\theta^2})}$.
    \end{itemize}
\end{theorem}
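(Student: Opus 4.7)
The plan is to follow Weitz's self-avoiding walk tree method \cite{weitz2006counting} combined with strong spatial mixing (SSM) established for two-spin systems in the stated regime. Given $(G,\tau_\Lambda,u)$, the algorithm $\+M$ reduces the computation of $\mu^{\tau_\Lambda}_u$ to the computation of the root marginal on the self-avoiding walk tree $\+T^{\mathrm{SAW}}_{G,u}$ (constructed in \Cref{sec:def-SAW-tree}), equipped with the pinning $\tau_\Lambda$ together with the boundary pinnings on leaves created by returning walks. For two-spin systems, in particular the hardcore and Ising models, this reduction is exact, so it suffices to approximate the root marginal on $\+T^{\mathrm{SAW}}_{G,u}$.

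On the tree, the root marginal admits a simple bottom-up recursion $R(x_1,\dots,x_d)$ expressing the odds ratio at a vertex in terms of those at its children. Evaluating it on the full SAW tree takes time up to $\Delta^n$, so the key step is to truncate $\+T^{\mathrm{SAW}}_{G,u}$ at some depth $L$ and assign arbitrary boundary pinnings to the truncated leaves, running the dynamic program on the resulting truncated tree in time $\Delta^{O(L)}$. The error introduced by the truncation is controlled by SSM on the tree, which one establishes by showing that $R$ is a contraction after a change of variable $y = \varphi(x)$ with a suitable potential $\varphi$. In the $\theta$-uniqueness regime for hardcore ($\lambda \le (1-\theta)\lambda_c(\Delta)$), this yields a per-level contraction factor $1-\Omega(\theta/\log\Delta)$ (as in \cite{li2012correlation}), so depth $L = O(\log(1/\epsilon_1)\log\Delta/\theta)$ suffices for relative error $\epsilon_1$, giving running time $(1/\epsilon_1)^{O(\log\Delta/\theta)}$. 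For the Ising model magnetized by $(1-\theta)$ at the critical temperature, the analogous per-level contraction factor degrades to $1-\Omega(\theta^2/\log\Delta)$, so $L = O(\log(1/\epsilon_1)\log\Delta/\theta^2)$ suffices and the running time is $(1/\epsilon_1)^{O(\log\Delta/\theta^2)}$.

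The main technical step is establishing the per-level contraction rates, $\Theta(\theta)$ for hardcore and $\Theta(\theta^2)$ for the magnetized critical Ising model. For hardcore this follows from the potential-function analysis of \cite{li2012correlation}, which reduces contraction on the SAW tree to an analytic inequality on the single-variable recursion that holds with the optimal constant across the entire uniqueness window. For Ising at criticality with magnetization $(1-\theta)$, the critical temperature puts the unperturbed recursion exactly at the boundary of contraction, and the magnetization shifts the attracting fixed point by $\Theta(\theta)$ away from the symmetric point; the derivative of the recursion at this shifted fixed point equals $1-\Theta(\theta^2)$, which is the source of the slower rate. Making this rigorous uniformly over all pinnings on $\+T^{\mathrm{SAW}}_{G,u}$ requires a careful Taylor expansion of the Ising tree recursion around its critical fixed point together with a potential function $\varphi$ tailored to the external-field perturbation. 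This critical perturbative analysis, rather than the tree-reduction or truncation-error arguments (which are routine), is the main obstacle to carrying out the proof.
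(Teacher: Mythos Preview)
The paper does not prove this statement; it is cited from \cite{li2012correlation} and used as a black box. Your sketch is exactly the standard correlation-decay approach (Weitz's SAW-tree reduction plus a potential-function contraction argument) that underlies that result, so in spirit you are reproducing the intended proof rather than offering an alternative.

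One arithmetic slip worth fixing: the per-level contraction rates you state, $1-\Omega(\theta/\log\Delta)$ and $1-\Omega(\theta^2/\log\Delta)$, do not yield the claimed running times. With contraction $1-c\theta/\log\Delta$ you need depth $L=\Theta(\log(1/\epsilon_1)\log\Delta/\theta)$, hence $\Delta^{O(L)}=(1/\epsilon_1)^{O((\log\Delta)^2/\theta)}$, one $\log\Delta$ too many. The point of the potential function in \cite{li2012correlation} is precisely that the amortized contraction is $1-\Omega(\theta)$ (respectively $1-\Omega(\theta^2)$), \emph{independent of $\Delta$}; then $L=O(\log(1/\epsilon_1)/\theta)$ and $\Delta^{O(L)}=(1/\epsilon_1)^{O(\log\Delta/\theta)}$ as stated. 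Your heuristic for the Ising exponent (the external field $e^h=1-\theta$ shifts the critical fixed point by $\Theta(\theta)$, making the recursion derivative $1-\Theta(\theta^2)$) is correct and is exactly what the paper relies on elsewhere; see \Cref{rmk:FD-Ising}, which records the matching $\Theta(1/\theta^2)$ spectral-independence bound for the critical Ising model under this tilt.
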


Given $S \subseteq [n]$ and $\theta \in (0, 1)$, let $\nu = \theta * \mu$. It holds that
\begin{align*}
  \frac{(1-\theta)^{\abs{S}} \-{wt}(S)}{Z_{\emptyset, \theta}} = \nu(\*1_S \uplus \*0_{V\setminus S}) = \nu^{\*1_S}_{V\setminus S}(\*0) \cdot \nu_S(\*1) = \nu^{\*1_S}_{V\setminus S}(\*0) \cdot \frac{Z_{S,\theta}}{Z_{\emptyset, \theta}},
\end{align*}
which implies that $Z_{S,\theta} = {(1-\theta)^{\abs{S}} \-{wt}(S)}/{\nu^{\*1_S}_{V\setminus S}(\*0)}$.
Thus, to approximate $Z_{S,\theta}$ with relative error $(1 \pm \epsilon_0)$, we only need to approximate $\nu^{\*1_S}_{V\setminus S}(\*0)$ with relative error $(1 \pm \epsilon_0)$, which can be achieved by using \Cref{lem:deterministic-easy-regime} with $\epsilon_1 = \Theta(\epsilon_0/n)$.

This leads to the following consequences for the deterministic approximation of $Z_{S,\theta}$, given an arbitrary subset $S \subseteq [n]$, parameter $\theta \in (0, 1)$, and error bound $\epsilon_0 > 0$:
\begin{itemize}
    \item \textbf{Hardcore model}: For the critical hardcore model, the value of $Z_{S,\theta}$ can be approximated within relative error $(1 \pm \epsilon_0)$ in deterministic time $(n/\epsilon_0)^{O\left(\frac{\log \Delta}{\theta}\right)}$.
    \item \textbf{Ising model}: For the critical Ising model, the value of $Z_{S,\theta}$ can be approximated within relative error $(1 \pm \epsilon_0)$ in deterministic time $(n/\epsilon_0)^{O\left(\frac{\log \Delta}{\theta^2}\right)}$.
\end{itemize}

We can then apply \Cref{cor:subexp-deterministic-counting} 
to obtain sub-exponential time deterministic approximation algorithms for the partition function of the critical hardcore and Ising models.

By taking $\theta = \Theta(n^{-1/2})$ for the  hardcore model and $\theta = \Theta(n^{-1/3})$ for the  Ising model, we obtain the following results.

\begin{theorem}
  For any $\epsilon > 0$, there exists a deterministic algorithm that approximates  
  the partition function $Z$ of the critical hardcore model (where $\lambda = \lambda_c(\Delta)$) 
  on graphs with $n$ vertices and maximum degree $\Delta$ 
  within relative error $(1 \pm \epsilon)$. 
  This algorithm runs in time
  \begin{align*}
    \exp\tp{O(\sqrt{n} \log^2(n/\epsilon))}.
  \end{align*}
\end{theorem}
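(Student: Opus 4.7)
The plan is to invoke \Cref{cor:subexp-deterministic-counting} with a carefully tuned noising parameter $\theta$, exploiting that the shifted quantities $Z_{S,\theta}$ sit in the sub-critical uniqueness regime where correlation decay gives a deterministic FPTAS. The optimization over $\theta$ balances the enumeration cost $\exp(O(\theta n \log n))$ against the per-subset cost $\exp(O(\log\Delta \cdot \log(n/\epsilon_0)/\theta))$.

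First I would set $\theta := 1/\sqrt{n}$ and $\epsilon_0 := \epsilon/2$. By definition \eqref{eq:def-Z-S-theta} applied to the critical hardcore weight $\wt(T) = \lambda_c(\Delta)^{|T|}\mathbbm{1}[T\in\+I(G)]$, the quantity $Z_{S,\theta}$ is exactly the partition function of the hardcore model with fugacity $\lambda_\theta := (1-\theta)\lambda_c(\Delta)$ restricted to independent sets containing $S$. Since $\lambda_\theta \le (1-\theta)\lambda_c(\Delta)$, this instance lies in the $\theta$-uniqueness regime, so the correlation decay algorithm of \Cref{lem:deterministic-easy-regime} applies. Using the standard self-reducibility reduction, as sketched just before \Cref{cor:subexp-deterministic-counting}, computing $Z_{S,\theta}$ within relative error $(1\pm\epsilon_0)$ reduces to approximating at most $n$ conditional marginals of the fugacity-$\lambda_\theta$ hardcore model; taking each marginal to precision $\epsilon_1 = \Theta(\epsilon_0/n)$ and telescoping the errors gives
\[
    T \;=\; (n/\epsilon_0)^{O(\log \Delta / \theta)} \;=\; \exp\bigl(O(\sqrt{n}\,\log\Delta\,\log(n/\epsilon))\bigr).
\]

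Then I would apply \Cref{cor:subexp-deterministic-counting} with these parameters and target additional error $\epsilon/2$. The resulting total running time is
\[
    T \cdot \exp\!\left( O\!\left( \tfrac{\theta}{1-\theta} n \log n + \log(1/\epsilon)\log n \right) \right)
    \;=\; \exp\bigl(O(\sqrt{n}\,\log\Delta\,\log(n/\epsilon) + \sqrt{n}\log n + \log(1/\epsilon)\log n)\bigr),
\]
and the output is within relative error $(1\pm(\epsilon_0 + \epsilon/2)) = (1\pm\epsilon)$ of $Z$. Bounding $\log\Delta \le \log n \le \log(n/\epsilon)$ collapses the exponent to $O(\sqrt{n}\log^2(n/\epsilon))$, as claimed.

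Since every ingredient is already available in the paper (\Cref{cor:subexp-deterministic-counting}, \Cref{lem:deterministic-easy-regime}, and the self-reducibility identity $Z_{S,\theta} = (1-\theta)^{|S|}\wt(S)/\nu^{\*1_S}_{V\setminus S}(\*0)$ displayed right before \Cref{cor:subexp-deterministic-counting}), the proof is essentially a calculation; there is no genuine obstacle. The only point that requires mild care is confirming that the correlation decay FPTAS of \Cref{lem:deterministic-easy-regime} for fugacity $\lambda_\theta$ depends on $1/\theta$ rather than $1/\theta^2$ for the hardcore model (it does), which is precisely what enables the choice $\theta = n^{-1/2}$ to produce the exponent $\sqrt{n}$ on both sides of the trade-off.
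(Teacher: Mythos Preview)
Your proposal is correct and follows essentially the same approach as the paper: both set $\theta = \Theta(n^{-1/2})$, invoke \Cref{lem:deterministic-easy-regime} to approximate each $Z_{S,\theta}$ in time $(n/\epsilon_0)^{O(\log\Delta/\theta)}$ via the self-reducibility identity, and then apply \Cref{cor:subexp-deterministic-counting} to assemble the estimate of $Z$. Your final simplification of the exponent using $\log\Delta \le \log n \le \log(n/\epsilon)$ is exactly how the paper arrives at $\exp(O(\sqrt{n}\log^2(n/\epsilon)))$.
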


\begin{theorem}
  For any $\epsilon > 0$, there exists a deterministic algorithm that approximates
  the partition function $Z$ of the critical Ising model (where $(\Delta-1) \tanh \abs{\beta} = 1$, and $\lambda = 1$)
  on graphs with $n$ vertices and maximum degree $\Delta$ 
  within relative error $(1 \pm \epsilon)$. 
  This algorithm runs in time
  \begin{align*}
    \exp\tp{O(n^{2/3} \log^2(n/\epsilon))}.
  \end{align*}
\end{theorem}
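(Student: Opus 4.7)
The plan is to apply Corollary~\ref{cor:subexp-deterministic-counting} with the parameter $\theta$ tuned to balance the enumeration overhead against the per-subset approximation cost for the critical Ising model. Concretely, given a graph $G=([n],E)$ of maximum degree $\Delta$ at critical inverse temperature $\beta$ with $(\Delta-1)\tanh|\beta|=1$, I would identify spin configurations $\sigma\in\{-1,+1\}^V$ with subsets $T=\{v:\sigma_v=+1\}\subseteq[n]$, so that the Gibbs weights give a weight function $\mathrm{wt}:2^{[n]}\to\mathbb{R}_{\ge 0}$ and the partition function $Z$ in the sense of Section~\ref{sec:deterministic} coincides with the Ising partition function up to a known constant.

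The main computational subroutine is a deterministic approximation of $Z_{S,\theta}$ for each $S\subseteq[n]$ with $|S|<k$, where $k=\lceil \mathrm{e}^2 n\cdot\theta/(1-\theta)+\log(2/\epsilon)\rceil$ as in Lemma~\ref{lem:subexp-counting}. Following the reduction written out just before Corollary~\ref{cor:subexp-deterministic-counting}, I would use the identity $Z_{S,\theta}=(1-\theta)^{|S|}\mathrm{wt}(S)/\nu^{\mathbf{1}_S}_{V\setminus S}(\mathbf{0})$ with $\nu=\theta*\mu$, and then estimate $\nu^{\mathbf{1}_S}_{V\setminus S}(\mathbf{0})$ by the usual self-reduction: writing it as a telescoping product of at most $n$ conditional marginals and approximating each marginal by the correlation decay algorithm of Theorem~\ref{lem:deterministic-easy-regime} with relative error $\epsilon_1=\Theta(\epsilon_0/n)$. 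Since $(1-\theta)*\mu$ corresponds to the critical Ising model perturbed by a small external field $h\approx -\theta/2$, the resulting system lies in the uniqueness regime with a slack of order $\theta^2$ from the critical curve; the Ising branch of Theorem~\ref{lem:deterministic-easy-regime} then yields an algorithm for each marginal in time $(n/\epsilon_0)^{O(\log\Delta/\theta^2)}$, and hence an algorithm for $Z_{S,\theta}$ within relative error $(1\pm\epsilon_0)$ in time $T=(n/\epsilon_0)^{O(\log\Delta/\theta^2)}$.

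Plugging this into Corollary~\ref{cor:subexp-deterministic-counting} with $\epsilon_0=\epsilon/2$ produces a total running time of
\[
\exp\!\left(O\!\left(\tfrac{\log\Delta\,\log(n/\epsilon)}{\theta^2}+\tfrac{\theta}{1-\theta}\cdot n\log n+\log(1/\epsilon)\log n\right)\right).
\]
Choosing $\theta=n^{-1/3}$ balances the first two summands: both become $O(n^{2/3}\log(n/\epsilon)\log\Delta)$, and the third is subsumed, yielding the claimed bound $\exp\!\left(O(n^{2/3}\log^2(n/\epsilon))\right)$ (treating $\log\Delta$ as $O(\log n)$).

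The only genuinely delicate point I expect is the verification that the magnetized distribution $(1-\theta)*\mu$ for the critical Ising model indeed falls under the Ising branch of Theorem~\ref{lem:deterministic-easy-regime} with the stated $\theta^2$-dependence; this amounts to checking that a uniform external field shift of magnitude $\Theta(\theta)$ pushes the critical Ising model into the strong spatial mixing regime with slack $\Theta(\theta^2)$, which is a standard but tedious expansion of the Weitz SAW-tree recursion at criticality. Everything else---the enumeration bound, the chain-rule error accumulation, and the balancing of exponents---is a direct instantiation of Lemma~\ref{lem:subexp-counting} and Corollary~\ref{cor:subexp-deterministic-counting}.
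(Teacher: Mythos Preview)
Your proposal is correct and follows essentially the same approach as the paper: apply Corollary~\ref{cor:subexp-deterministic-counting} with the Ising branch of Theorem~\ref{lem:deterministic-easy-regime} (which the paper states as a black box from \cite{li2012correlation} with the $\theta^{2}$ dependence) to approximate each $Z_{S,\theta}$, and then set $\theta=\Theta(n^{-1/3})$ to balance the enumeration cost against the per-subset cost. The paper does not elaborate on the ``delicate point'' you flag about why $(1-\theta)*\mu$ incurs the $\theta^{2}$ slack---it simply cites \cite{li2012correlation} and is consistent with the $\Theta(1/\delta^{2})$ spectral independence bound noted in Remark~\ref{rmk:FD-Ising}---so your caution there is appropriate but not something the paper works out explicitly.
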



\section{Conclusion}
\label{sec:conclusion}
This paper explores both upper and lower bounds on the mixing time of Glauber dynamics for the critical hardcore and Ising models. The upper bounds are derived from an information-theoretic reformulation of localization schemes. 
\fixed{To obtain sharper bounds for the critical Ising model with interaction norm $1$, we further establish an $O(\sqrt{n})$-spectral independence bound at criticality.}
The lower bounds are obtained by the universality of spectral independence and the \emph{anti-concentration} phenomenon of the Gibbs distribution of critical hardcore/Ising model on typical random symmetric regular bipartite graphs.

To conclude, we list several open problems that remain to be addressed:
\begin{itemize}
\item What is the mixing time for general anti-ferromagnetic two-spin systems at criticality? The current approach for proving upper bounds on the mixing time handles the hardcore model and the Ising model separately (see also the discussion in \cref{rmk:FD-Ising}). A potential direction is to design a new noising-denoising localization process that unifies the field dynamics and proximal sampler.
\item How to close the gap between the current upper and lower bounds on the mixing time for hardcore and graphical Ising models at criticality? 
\item Can we design an efficient deterministic algorithm for approximate counting at criticality, with or even without the bounded-degree assumption? 
\end{itemize}

\section*{Acknowledgments}
\fixed{The authors thank Tianhui Jiang for helpful discussions on the error that appeared in the previous version of the paper, as summarized in \Cref{rmk:bug}.}


\bibliographystyle{alpha}
\bibliography{critical-Dec-2025.bib}

\appendix

\section{Spectral/Entropic Stability from Spectral Independence}
\label{app:spe-ent-stab}

In this section, we formally prove \Cref{lem:variance-stability-SI,lem:EI-entropic-stability}, 
establishing spectral/entropic stability from spectral independence under the respective noising processes.
We refer the readers to a recent work \cite{CCCYZ25+} for more discussion on spectral/entropic stability with respect to the continuous-time down walk.

\subsection{\texorpdfstring{Spectral stability from SI under field dynamics (Proof of~\Cref{lem:variance-stability-SI})}{Spectral stability from SI under field dynamics (Proof of Lemma~\ref{lem:variance-stability-SI})}}\label{sec:append-variance}
\begin{proof}
We slightly abuse the derivative notation  $\frac{\dif}{\dif \theta}$ to denote the limit operator $\lim_{\theta \to \eta^{+}} \frac{1}{\theta - \eta}$. 
We first claim that for any $S \in \+X \subseteq 2^{[n]}$, the derivative of $\Var[Q_{\eta \to \theta}(S,\cdot)]{f_{\theta}}$ satisfies:
\begin{align}\label{eq:derivative-hardcore}
    \frac{\dif \Var[Q_{\eta \to \theta}(S,\cdot)]{f_\theta}}{\dif \theta} \bigg|_{\theta = \eta^+} =  \frac{1}{1-\eta}\cdot \sum_{v \in [n] \setminus S} \Pr[R \sim Q_{\eta \to 1}(S,\cdot)]{v \in R} \cdot \tp{f_{\eta}(S \cup \{v\}) - f_{\eta}(S)}^2.
\end{align}
We first examine the derivative of the probability $Q_{\eta \to \theta}(S,T)$ for $S,T \in \+X\subseteq 2^{[n]}$ ($S \subseteq T$):
\begin{align}\label{eq:hardcore-Q-formula}
Q_{\eta \to \theta}(S,T) = \frac{\Pr[]{X_{1-\eta} = S \text{ and } X_{1-\theta} = T}}{\Pr[]{X_{1-\eta} = S}} = \frac{\sum_{T \subseteq R} \mu(R) (1-\theta)^{\abs{R} - \abs{T}} (\theta-\eta)^{\abs{T}-\abs{S}}}{\sum_{S \subseteq R} \mu(R) (1-\eta)^{\abs{R}-\abs{S}}}.
\end{align}
The derivative of the numerator in~\eqref{eq:hardcore-Q-formula} satisfies
\begin{align*}
& \left.\frac{\dif}{\dif \theta} \tp{\sum_{T \subseteq R} \mu(R) (1-\theta)^{\abs{R}-\abs{T}}(\theta-\eta)^{\abs{T}-\abs{S}}  } \right|_{\theta = \eta^+} \\
={}&
\begin{cases}
\sum\limits_{S \subseteq R} -\tp{\abs{R}-\abs{S}} \mu(R) (1-\eta)^{\abs{R}-\abs{S}-1}, & \text{if } S = T,\\
\sum\limits_{T \subseteq R} \mu(R) (1-\eta)^{\abs{R}-\abs{S}-1}, & \text{if } \abs{T \setminus S} = 1,\\
0, & \text{otherwise.}
\end{cases}
\end{align*}
Thus, the derivative of $Q_{\eta \to \theta}(S,T)$ is
\begin{align*}
\frac{\dif Q_{\eta \to \theta}(S,T)}{\dif \theta} \Bigg|_{\theta = \eta^+} =
\begin{cases}
\frac{1}{1-\eta} \E[R \sim Q_{\eta \to 1}(S,\cdot)]{\abs{S}-\abs{R}} & \text{if } S=T,\\
\frac{1}{1-\eta} \Pr[R \sim Q_{\eta \to 1}(S,\cdot)]{T \subseteq R} & \text{if } \abs{T\setminus S}=1,\\
0&\text{otherwise}.
\end{cases}
\end{align*}
Consequently, the transition rule of the Markov kernel $Q_{\eta \to \eta + h}(\cdot,\cdot)$ can be expressed as:
\begin{align}\label{eq:field-kernel}
    Q_{\eta \to \eta + h}(S,T) =
    \begin{cases}        
    1-\frac{1}{1-\eta}\sum_{v \in [n] \setminus S}\Pr[R \sim Q(S,\cdot)]{v \in R} \cdot h + o(h) & \text{if }S = T,\\
        \frac{1}{1-\eta} \Pr[R \sim Q_{\eta \to 1}(S,\cdot)]{v \in R} \cdot h + o(h) & \text{if }\abs{T\setminus S}=1,\\
        o(h) & \text{otherwise.}
    \end{cases}
\end{align}
Similarly, there exists $m(S,\theta) \neq \infty$ such that the derivative of $f_\theta(S) = \E[Q_{\theta \to 1}(S,\cdot)]{f}$ satisfies
\begin{align*}
\frac{\dif f_\theta(S)}{\dif \theta} = \frac{\dif}{\dif \theta} \tp{\sum_{S \subseteq T} \frac{\mu(T)(1-\theta)^{\abs{T}-\abs{S}}}{\sum_{S \subseteq R} \mu(R) (1-\theta)^{\abs{R}-\abs{S}}} f(T)} = m(S,\theta).
\end{align*}
Therefore, $f_{\eta + h}(S)$ satisfies
\begin{align}\label{eq:field-expected}
    f_{\eta + h}(S) = f_\eta(S) + m(S,\eta)  \cdot h + o(h).
\end{align}
Thus, from~\eqref{eq:field-kernel},~\eqref{eq:field-expected}, and the fact that $\E[Q_{\eta \to \eta+h}(S,\cdot)]{f_{\eta + h}}=f_{\eta}(S)$, the variance $\Var[Q_{\eta \to \eta + h}(S,\cdot)]{f}$ is bounded as:
\begin{align*}
\nonumber\Var[Q_{\eta \to \eta + h}(S,\cdot)]{f_{\eta + h}} 
&= \sum_{S \subseteq T} Q_{\eta \to \eta+h}(S,T)\tp{f_{\eta+h}(T) - f_{\eta}(S)}^2\\
&= \frac{1}{1-\eta}\cdot \sum_{v \in [n] \setminus S} \Pr[R \sim Q_{\eta \to 1}(S,\cdot)]{v \in R} \cdot \tp{f_{\eta}(S \cup \{v\}) - f_{\eta}(S)}^2 \cdot h + o(h).
\end{align*}
This proves~\eqref{eq:derivative-hardcore}.

Assuming $C(\eta)$-spectral independence of the distribution $Q_{\eta \to 1}(S,\cdot)$, we have the following for all test functions $g = \frac{\widetilde{\nu}}{Q_{\eta \to 1}(S,\cdot)}$, where $\widetilde{\nu}$ is absolutely continuous with respect to $Q_{\eta \to 1}(S,\cdot)$:
\begin{align}\label{eq:SI-hardcore}
    \sum_{v \in [n] \setminus S} p_v \tp{\frac{q_v}{p_v}-1}^2 \le \sum_{v \in [n] \setminus S} p_v \tp{\frac{q_v}{p_v}-1}^2 + (1-p_v) \tp{\frac{1-q_v}{1-p_v}-1}^2 \le C \cdot \Var[Q_{\eta \to 1}(S,\cdot)]{g},
\end{align}
where $p_v = \Pr[R \sim Q_{\eta \to 1}(S,\cdot)]{v \in R}$ and $q_v = \Pr[R \sim \widetilde{\nu}]{v \in R}$. 
Assuming the test function $g = f/f_\eta(S)$ in~\eqref{eq:SI-hardcore}, the marginal probability $q_v$ is given by:
\begin{align}\label{eq:q-hardcore}
q_v = \sum_{S \cup \{v\} \subseteq R} Q_{\eta \to 1}(S,R) \cdot \frac{f(R)}{f_\eta(S)} \overset{(\star)}{=} \sum_{S \cup \{v\} \subseteq R} p_v \cdot Q_{\eta \to 1}(S \cup \{v\},R) \cdot \frac{f(R)}{f_\eta(S)} = p_v \cdot \frac{f_\eta(S \cup \{v\})}{f_\eta(S)},
\end{align}
where equality $(\star)$ follows the definition of $Q_{\eta \to 1}(S,\cdot)$. Specifically, $Q_{\eta \to 1}(S,\cdot)$ is the distribution that samples $R$ according to the law of $(1-\eta) * \mu$ conditioned on $S \subseteq R$.

Combining~\eqref{eq:derivative-hardcore},~\eqref{eq:SI-hardcore} and~\eqref{eq:q-hardcore}, we have
\begin{align*}
    \frac{\dif \Var[Q_{\eta \to \theta}(S,\cdot)]{f_\theta}}{\dif \theta} \bigg|_{\theta = \eta^+} \le \frac{C(\eta)}{1-\eta} \cdot \Var[Q_{\eta \to 1}(S,\cdot)]{f}.
\end{align*}
This establishes the spectral stability claimed by the lemma.
\end{proof}

\subsection{\texorpdfstring{Entropic stability from SI under proximal sampler (Proof of~\Cref{lem:EI-entropic-stability})}{Entropic stability from SI under proximal sampler (Proof of Lemma~\ref{lem:EI-entropic-stability})}}\label{sec:append-entropy}

We remark that de Bruijn’s
identity (see~\cite{stam1959fisher} and~\cite[Lemma 4]{mitra2024fast}) relates the derivative of KL divergence $\frac{\dif}{\dif t}D_{\mathrm{KL}}(\nu_t \parallel \mu_t)$ with the Fisher information $\mathsf{FI}(\nu_t \parallel \mu_t)$, and \Cref{lem:EI-entropic-stability} can be shown by calculating the Fisher information $\mathsf{FI}(\nu_t \parallel \mu_t)$ (to obtain \eqref{eq:derivative-Ising} below) and applying~\cite[Lemma 40]{CE22}. For completeness, we present the following self-contained proof for \Cref{lem:EI-entropic-stability}.

\begin{proof}[Proof of~\Cref{lem:EI-entropic-stability}]
Let $\mu=\mu_{J,\*h}$ be the Gibbs distribution of Ising model specified by interaction matrix $J = L^{\intercal} L \in \mathbb{R}^{n \times n}$ and external fields $\*h \in \mathbb{R}^n$. Fix $\eta \in [0,1)$ and $\*x \in \mathbb{R}^r$ be a feasible solution. Let $\widetilde{\mu} := Q_{\eta \to 1}(\*x,\cdot)$ and $f = \frac{\nu}{\widetilde{\mu}}$ be the test function, where $\nu$ is a distribution absolutely continuous with respect to $\widetilde{\mu}$, which is also absolutely continuous with respect to $\mu_{J,\*h}$.
We first claim that for any $\*x \in \mathbb{R}^n$, the derivative of $\Ent[Q_{\eta \to \theta}(\*x,\cdot)]{f_\theta}$ satisfies:
\begin{align}\label{eq:derivative-Ising}
    \frac{\dif \Ent[Q_{\eta \to \theta}(\*x,\cdot)]{f_\theta}}{\dif \theta}\bigg|_{\theta = \eta^+} = \frac{1}{2(1-\eta)^2}(\*b(\widetilde{\mu})-\*b(\nu))^{\intercal} J (\*b(\widetilde{\mu})-\*b(\nu)),
\end{align}
where $\*b(\mu):=\E[\*x \sim \mu]{\*x}$ denotes the mass center of distribution $\mu$.

We first obtain the law of the denoising process $Q_{\eta \to \eta + t}(\*x,\cdot)$ for all $0 < t < 1-\eta$. By the definition of denoising process, for any measurable set $A \subseteq \mathbb{R}^r$ we have
\begin{align*}
    Q_{\eta \to \eta+t}(\*x,A) &= \Pr[]{X_{1-\eta-t} \in A \mid X_{1-\eta} = \*x}\\
    (\text{law of total probability})\quad &= \E[\*z \sim \widetilde{\mu}]{\Pr[]{X_{1-\eta - t} \in A \mid X_{1-\eta} = \*x \text{ and } X_0 = \*z}}.
\end{align*}
By Bayes' formula, i.e. $\Pr[]{Y \mid X,Z} = \frac{\Pr[]{X \mid Y,Z} \Pr[]{Y \mid Z}}{\Pr[]{X \mid Z}}$, the density function $p_{\eta,\*x,\*z,t}$ for the conditional distribution $(X_{1-\eta-t} \mid X_{1-\eta} = \*x \text{ and } X_0 = \*z)$ satisfies
\begin{align*}
    p_{\eta,\*x,\*z,t}(\*y) \propto &\exp\tp{-\frac{1}{2(\eta + t)(1-\eta-t)} (\*y- (\eta + t)L \*z)^{\intercal}(\*y- (\eta + t)L \*z)} \\
    &\exp\tp{ - \frac{1}{2 \eta t (\eta+t)} ((\eta+t) \*x - \eta \*y)^{\intercal} ((\eta+t) \*x - \eta \*y)}\\
    &\exp\tp{\frac{1}{2\eta(1-\eta)}(\*x - \eta L \*z)^{\intercal} (\*x-\eta L \*z)}\\
    \propto& \exp\tp{-\frac{1-\eta}{2(1-\eta-t)t} \tp{\*y - \frac{t}{1-\eta} L \*z - \frac{1-\eta-t}{1-\eta} \*x}^{\intercal} \tp{\*y - \frac{t}{1-\eta} L \*z - \frac{1-\eta-t}{1-\eta} \*x}}.
\end{align*}
Thus, the distribution $(X_{1-\eta-t} \mid X_{1-\eta} = \*x \text{ and } X_0 = \*z)$ satisfies a normal distribution with mean $\frac{t}{1-\eta} L \*z + \frac{1-\eta-t}{1-\eta} \*x$ and covariance $\frac{(1-\eta-t)t}{1-\eta} I_r$.
Therefore, the entropy satisfies
\begin{align}\label{eq:entropy-f}
    \Ent[Q_{\eta \to \eta+t}(\*x,\cdot)]{f_{\eta + t}} = \E[\substack{\*z \sim \widetilde{\mu}\\ \*e \sim \+N(\*0,I)}]{f_{\eta + t}\tp{\*\gamma} \log f_{\eta + t}\tp{\*\gamma}},
\end{align}
where $\*\gamma := \*x + \frac{t}{1-\eta}\tp{L \*z - \*x} - \sqrt{\frac{t(1-\eta-t)}{1-\eta}} \cdot \*e$.
Similarly, for any fixed $\*x$ and $\*\gamma$, the distributions $Q_{\eta +t \to 1}(\*\gamma,\cdot)$ and $\widetilde{\mu}:= Q_{\eta \to 1}(\*x,\cdot)$ satisfies:
\begin{align*}
    \frac{Q_{\eta +t \to 1}(\*\gamma,\*y)}{Q_{\eta \to 1}(\*x,\*y)} &\propto \frac{\exp\tp{-\frac{1}{2(\eta+t)(1-\eta-t)} (\*\gamma-(\eta+t) L \*y)^{\intercal}(\*\gamma-(\eta+t) L \*y) }}{\exp\tp{-\frac{1}{2\eta(1-\eta)} (\*x-\eta L \*y)^{\intercal}(\*x-\eta L \*y) }} \\
    &\propto \exp\tp{-\frac{1}{2t(1-\eta-t)(1-\eta)} (\*\gamma^\star)^{\intercal} (\*\gamma^\star) },
\end{align*}
where $\*\gamma^\star = t L \*y + (1-\eta-t) \*x - (1-\eta) \*\gamma = t L (\*y-\*z) + \sqrt{t(1-\eta)(1-\eta-t)} \*e$. Thus, we have
\begin{align*}
f_{\eta+t}(\*\gamma) &= \frac{\sum_{\*y}{Q_{\eta+t \to 1}(\*\gamma,\*y)} f(y)}{\sum_{\*y} Q_{\eta+t \to 1}(\*\gamma,\*y)}
= \frac{\sum_{\*y} Q_{\eta \to 1}(\*x,\*y) \exp\tp{-\frac{1}{2t(1-\eta-t)(1-\eta)} (\*\gamma^\star)^{\intercal} (\*\gamma^\star) } f(y) }{\sum_{\*y} Q_{\eta \to 1}(\*x,\*y) \exp\tp{-\frac{1}{2t(1-\eta-t)(1-\eta)} (\*\gamma^\star)^{\intercal} (\*\gamma^\star) } }\\
&= \frac{\E[\*y \sim \nu]{\exp\tp{-\frac{1}{2t(1-\eta-t)(1-\eta)} (\*\gamma^\star)^{\intercal} (\*\gamma^\star) }}}{\E[\*y \sim \widetilde{\mu}]{\exp\tp{-\frac{1}{2t(1-\eta-t)(1-\eta)} (\*\gamma^\star)^{\intercal} (\*\gamma^\star) }}},
\end{align*}
Hence, the entropy of $f$ in~\eqref{eq:entropy-f} can be further expressed as
\begin{align}\label{eq:entropy-f-new}
\Ent[Q_{\eta \to \eta+t}(\*x,\cdot)]{f_{\eta + t}} = \E[\substack{\*z \sim \widetilde{\mu} \\\*e \sim \+N(0,I)}]{\frac{N_{\*z,\*e}\tp{g(t)}}{D_{\*z,\*e}\tp{g(t)}} \log \frac{N_{\*z,\*e}\tp{g(t)}}{D_{\*z,\*e}\tp{g(t)}}},
\end{align}
where $g(t) = \frac{t}{(1-\eta)(1-\eta-t)}$, and $N_{\*z,\*e}(t)$ is defined by
\begin{align*}
N_{\*z,\*e}(t) := \E[\*x \sim \nu]{\exp\tp{-\frac{1}{2} (\*x-\*z)^{\intercal} L^{\intercal} L (\*x-\*z) \cdot t - (\*x-\*z)^{\intercal} L^{\intercal} \*e \cdot \sqrt{t}}},
\end{align*}
and $D_{\*z,\*e}(t)$ is defined accordingly by taking expectation over $\mu$. By expanding $N_{\*z,\*e}$ around the origin, we have
\begin{align}\label{eq:expand-N}
    \nonumber N_{\*z,\*e}(t) &= 1- \E[\*x \sim \nu]{(\*x-\*z)^{\intercal} L^{\intercal} \*e} \cdot \sqrt{t} \\
    &- \frac{1}{2} \E[\*x \sim \nu]{ (\*x-\*z)^{\intercal} L^{\intercal} L (\*x-\*z) - (\*x - \*z)^{\intercal} L^{\intercal} \*e \*e^{\intercal} L (\*x-\*z)} \cdot t + o(t)
\end{align}
Similarly, We can also expand $D_{\*z,\*e}(t)$ around the origin to obtain an analogue of~\eqref{eq:expand-N}. Thus, 
\begin{align}\label{eq:expand-N-D}
    \nonumber\frac{N_{\*z,\*e}(t)}{D_{\*z,\*e}(t)} &= 1 + \tp{\E[\*x \sim \widetilde{\mu}]{(\*x-\*z)^{\intercal} L^{\intercal} \*e} - \E[\*x \sim \nu]{(\*x-\*z)^{\intercal} L^{\intercal} \*e} } \cdot \sqrt{t}\\
    \nonumber &-\frac{1}{2}\E[\*x \sim \nu]{ (\*x-\*z)^{\intercal} L^{\intercal} L (\*x-\*z) - (\*x - \*z)^{\intercal} L^{\intercal} \*e \*e^{\intercal} L (\*x-\*z)} \cdot t\\
    \nonumber &+\frac{1}{2}\E[\*x \sim \nu]{ (\*x-\*z)^{\intercal} L^{\intercal} L (\*x-\*z) - (\*x - \*z)^{\intercal} L^{\intercal} \*e \*e^{\intercal} L (\*x-\*z)} \cdot t\\
    \nonumber &-\frac{1}{2}\E[\*x \sim \widetilde{\mu}, \*y \sim \nu]{(\*x-\*z)^{\intercal} L^{\intercal} \*e \*e^{\intercal} L (\*y-\*z)} \cdot t\\
    &+\frac{1}{2}\E[\*x \sim \widetilde{\mu}, \*y \sim \mu]{(\*x-\*z)^{\intercal} L^{\intercal} \*e \*e^{\intercal} L (\*y - \*z)} \cdot t + o(t).
\end{align}
Let $A_{\*z,\*e}$ denote the coefficients of $\sqrt{t}$ in~\eqref{eq:expand-N-D}, and $B_{\*z,\*e}$ denote the coefficients of $t$ in~\eqref{eq:expand-N-D}. Then,
\begin{align}\label{eq:derivative-f-Ising}
   \frac{N_{\*z,\*e}(t)}{D_{\*z,\*e}(t)} \log \frac{N_{\*z,\*e}(t)}{D_{\*z,\*e}(t)} =  A \sqrt{t} + B t + \frac{1}{2} A^2 t + o(t).
\end{align}
By a straightforward calculation\footnote{We remark that the $o(t)$ in~\eqref{eq:derivative-f-Ising} relies on $\*e$ (denoted by $o_{\*e}(t)$). Though, the remainder term can be bounded by $t$ times $R(\*e) =\exp\tp{C\cdot \norm{\*e}_1} \cdot \mathrm{poly}(\norm{\*e}_1)$, where $C$ is some universal constant. The expectation $\E[]{R(\*e)}$ is finite, implying that the expectation of $o_{\*e}(t)$ is indeed $o(t)$.}, we have
\begin{align*}
\E[\substack{\*z \sim \widetilde{\mu}\\ \*e \sim \+N(0,I)}]{A_{\*z,\*e}} = \E[\substack{\*z \sim \widetilde{\mu}\\ \*e \sim \+N(0,I)}]{B_{\*z,\*e}} = 0 \quad \text{and} \quad \E[\substack{\*z \sim \widetilde{\mu}\\ \*e \sim \+N(0,I)}]{A^2_{\*z,\*e}} = (\*b(\widetilde{\mu})-\*b(\nu))^{\intercal} J (\*b(\widetilde{\mu})-\*b(\nu)),
\end{align*}
where $\*b(\mu) := \E[\*x \sim \mu]{\*x}$ denotes the center of $\mu$ and $J = L^{\intercal} L$. Together with~\eqref{eq:entropy-f-new}, we have
\begin{align*}
\frac{\dif \Ent[Q_{t \to 1}(x,\cdot)]{f_t}}{\dif t} \bigg|_{t = \eta^+} = \frac{1}{2(1-\eta)^2}(\*b(\widetilde{\mu})-\*b(\nu))^\intercal J (\*b(\widetilde{\mu})-\*b(\nu)).
\end{align*}
This proves~\eqref{eq:derivative-Ising}.

Recall that $\widetilde{\mu}:=Q_{\eta \to 1}(\*x,\cdot)$ is $C\tp{\eta}$-spectrally independent under all external fields, i.e.
\begin{align*}
    \forall \*\lambda \in \mathbb{R}_{\ge 0}^n,\quad \mathrm{Cov}(\*\lambda * \widetilde{\mu}) \preceq C(\eta)\cdot I.
\end{align*}
By~\cite[Lemma 40]{CE22}, we have
\begin{align}\label{eq:CE-stability}
\frac{1}{2} (\*b(\nu) - \*b(\widetilde{\mu})) J (\*b(\nu) - \*b(\widetilde{\mu})) \le \alpha \Ent[\widetilde{\mu}]{f},
\end{align} where $\alpha = C(\eta) \norm{J}_{2}$.~\Cref{lem:EI-entropic-stability} then follows from~\eqref{eq:derivative-Ising} and~\eqref{eq:CE-stability}.
\end{proof}

\section{\texorpdfstring{Missing Proofs from  \Cref{sec:LB}}{Missing Proofs from  Section \ref{sec:LB}}}
\label{app:missing-LB}
In this section, we present the formal proofs of  \Cref{lem:distant-alpha-hardcore} and \Cref{lem:distant-alpha-Ising}.
We start with the proof of \Cref{lem:distant-alpha-Ising} for the Ising model in \Cref{sec:append-Ising}, as it follows standard methods and is comparatively straightforward. 
We then address \Cref{lem:distant-alpha-hardcore} for the hardcore model in \Cref{sec:append-hardcore},  where the proof is more intricate and technically challenging due to the need for symmetrization.

\subsection{\texorpdfstring{Proof of \Cref{lem:distant-alpha-Ising}}{Proof of Lemma~\ref{lem:distant-alpha-Ising}}}\label{sec:append-Ising}
As in the proof of~\cite{GSV16}, we approximate $\log \widetilde{\alpha}_{s,t}$ via an optimization problem. For convenience, let $H(x) = - x \log x$ be the entropy function.
\begin{lemma}[\cite{GSV16}]\label{lem:approx-Ising}
    For any integers $0 \le s,t \le n$, the logarithm of $\widetilde{\alpha}_{s,t}$ satisfies
    \begin{align*}
    \abs{\log \widetilde{\alpha}_{s,t} - n\underbrace{\tp{\Delta \max_{\*\theta} f_N(\*\theta) - (\Delta-1) f_D(\*\rho)}}_{=: U(\*\rho)}} = O_{\Delta}\tp{\log n},
    \end{align*}
    where $\*\rho = \tp{\frac{s}{n},\frac{t}{n}}$, $f_N(\*\theta) = \sum_{i=1}^4 H(\theta_i) + 2\beta (\theta_1+\theta_4)$, $f_D(\*\rho) = \sum_{i=1}^2 H(\rho_i)+H(1-\rho_i)$. Furthermore, the maximum is taken over all $\*\theta \in [0,1]^4$ satisfying
    \begin{align*}
            \theta_1 + \theta_2 + \theta_3 + \theta_4 = 1, \quad
            \theta_2 + \theta_4 = \rho_1, \quad
            \theta_3 + \theta_4 = \rho_2.
    \end{align*}
\end{lemma}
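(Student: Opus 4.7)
The plan is to start from the generating-function identity \eqref{eq:alpha-express-Ising}, namely
\[
    \widetilde{\alpha}_{s,t} = \frac{\tp{[x^sy^t]N}^{\Delta}}{\tp{[x^sy^t]D}^{\Delta-1}},
\]
take logarithms, and approximate the numerator and denominator separately via Stirling's formula.

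First, for the denominator $D = (1+x)^n(1+y)^n$, one has $[x^sy^t]D = \binom{n}{s}\binom{n}{t}$, so the standard Stirling estimate gives
\[
    \log[x^sy^t]D \;=\; n\bigl(H(\rho_1)+H(1-\rho_1)+H(\rho_2)+H(1-\rho_2)\bigr) + O(\log n) \;=\; n f_D(\*\rho) + O(\log n).
\]

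For the numerator, I would expand $N = (e^{2\beta}xy + x + y + e^{2\beta})^n$ by the multinomial theorem, indexing the four monomials $e^{2\beta}, x, y, e^{2\beta}xy$ by exponents $a_1,a_2,a_3,a_4$ with $\sum_i a_i = n$. Picking out the coefficient of $x^sy^t$ enforces $a_2+a_4 = s$ and $a_3+a_4 = t$, hence
\[
    [x^sy^t]N \;=\; \sum_{\substack{a_1+a_2+a_3+a_4=n\\ a_2+a_4=s,\ a_3+a_4=t}} \binom{n}{a_1,a_2,a_3,a_4} \, e^{2\beta(a_1+a_4)}.
\]
Applying the multinomial Stirling bound $\log\binom{n}{a_1,\dots,a_4} = n\sum_i H(a_i/n) + O(\log n)$ term-by-term, each summand equals $\exp\bigl(nf_N(\*\theta) + O(\log n)\bigr)$ with $\theta_i = a_i/n$. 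Since the number of feasible tuples is $O(n)$ (the system of linear constraints leaves one free integer parameter), a standard Laplace-type argument sandwiches the sum between one term and $O(n)$ copies of its maximum, so
\[
    \log[x^sy^t]N \;=\; n \max_{\*\theta} f_N(\*\theta) + O(\log n),
\]
with the maximum taken over the simplex cut out by the constraints listed in the statement.

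Combining the two estimates,
\[
    \log\widetilde{\alpha}_{s,t} \;=\; \Delta\log[x^sy^t]N - (\Delta-1)\log[x^sy^t]D \;=\; n U(\*\rho) + O_{\Delta}(\log n),
\]
which is exactly the claim. The only real care needed is in bookkeeping the $O(\log n)$ error: Stirling contributes an $O(\log n)$ additive error per coefficient, the Laplace-type pooling of the multinomial sum contributes another $O(\log n)$ factor, and multiplication by the constants $\Delta$ and $\Delta-1$ absorbs into $O_\Delta(\log n)$. One minor subtlety is to check feasibility (nonemptiness of the optimization domain) whenever $0\le s,t\le n$, which is immediate since $\theta_4 = \min(\rho_1,\rho_2)$ and $\theta_1 = 1 - \rho_1 - \rho_2 + \theta_4$ give an explicit feasible point; otherwise one could simply adopt the convention that an infeasible maximum equals $-\infty$ in which case $\widetilde{\alpha}_{s,t} = 0$ as well.
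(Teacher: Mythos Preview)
Your proposal is correct and follows essentially the same route as the paper: start from the identity $\widetilde{\alpha}_{s,t} = ([x^sy^t]N)^{\Delta}/([x^sy^t]D)^{\Delta-1}$, handle $[x^sy^t]D=\binom{n}{s}\binom{n}{t}$ directly by Stirling, expand $[x^sy^t]N$ via the multinomial theorem under the linear constraints, and sandwich the resulting sum between its largest term and a polynomial number of copies of it. The paper uses the crude count $n^4$ rather than your sharper $O(n)$, and is a bit more explicit than you are about one step you subsume into ``Laplace-type'': it passes from the maximum over integer tuples $\*\zeta$ to the maximum over the continuous simplex by rounding the continuous maximizer $n\*\theta_{\*\rho}$ to a nearby feasible integer point and bounding $|f_N(\*\theta_{\*\rho})-f_N(\*\zeta/n)|=O(\log n/n)$; you should make this rounding step explicit in a full write-up, but it is routine.
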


\begin{proof}
    The coefficient $[x^sy^t]N$ of the generating function $N$ in~\eqref{eq:num-Ising} can be bounded by
    \begin{align*}
    \max_{\*\zeta} \binom{n}{\*\zeta} \exp\tp{2\beta \cdot \tp{\zeta_1+\zeta_4}} \le [x^s y^t] N \le n^4 \max_{\*\zeta} \binom{n}{\*\zeta} \exp\tp{2\beta \cdot (\zeta_1+\zeta_4)},
    \end{align*}
    where $\binom{n}{\*\zeta} = \frac{n!}{\prod_{i=1}^4 \zeta_i!}$ is the multinomial coefficient, and the maximum is taken over all $\*\zeta \in \mathbb{Z}^4$ with
    \begin{align}\label{eq:req-zeta}
        \zeta_1 + \zeta_2 + \zeta_3 + \zeta_4 = n, \quad
        \zeta_2 + \zeta_4 = s, \quad
        \zeta_3 + \zeta_4 = t.
    \end{align}
    By the definition of $\widetilde{\alpha}_{s,t}$ in~\eqref{eq:alpha-express-Ising}, the upper bound for $\log \widetilde{\alpha}_{s,t}$ follows immediately from Stirling's approximation and that $\*\theta = \*\zeta/n$. For the lower bound, suppose $\max_{\*\theta} f_N(\theta)$ achieves the maximum at $\*\theta = \*\theta_{\*\rho}$ for $\*\rho=\tp{\frac{s}{n},\frac{t}{n}}$. We may round $n \*\theta_{\*\rho}$ to the nearest integer point $\*\zeta$ satisfying~\eqref{eq:req-zeta}. Hence, the difference between $f_N(\*\theta_{\*\rho})$ and $ f_N\tp{\*\zeta/n}$ can be bounded by $O(\log n/n)$, and the lower bound for $\log \widetilde{\alpha}_{s,t}$ can be established by Stirling's approximation.
\end{proof}

\begin{lemma}[\cite{GSV16}]\label{lem:critical-Ising}
    $U(\*\rho)$ (see \Cref{lem:approx-Ising}) achieves maximum value at its unique critical point $\*\rho^\star = \tp{\frac{1}{2},\frac{1}{2}}$. Furthermore, the function $U(\rho,\rho)$ has a unique critical point at $\rho = \frac{1}{2}$.
\end{lemma}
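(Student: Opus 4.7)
The plan is to reduce the critical-point equations of $U$ to the antiferromagnetic Ising tree recursion on the $\Delta$-regular tree and then invoke uniqueness at $\beta=-\beta_c(\Delta)$. I would first solve the inner maximization explicitly. Using the constraints to eliminate three of the $\theta_i$ in favour of $\theta_4$ alone, the function $f_N$ becomes strictly concave in $\theta_4$, and its first-order condition rearranges to
\[
\frac{\theta_2\theta_3}{\theta_1\theta_4} \;=\; e^{-4\beta} \;=\; \tp{\frac{\Delta}{\Delta-2}}^2,
\]
so a unique interior maximizer $\*\theta^*(\*\rho)$ exists.

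Next, writing $\phi(\*\rho):=\max_{\*\theta}f_N(\*\theta)$ and applying the envelope theorem, the partials reduce to $\partial_{\rho_1}\phi=\log(\theta_1^*/\theta_2^*)-2\beta$ and $\partial_{\rho_2}\phi=\log(\theta_1^*/\theta_3^*)-2\beta$. Setting $a=\theta_1^*/\theta_2^*$, $b=\theta_1^*/\theta_3^*$ and using the constraints to express $\rho_1=(b+e^{4\beta})/(ab+a+b+e^{4\beta})$ and similarly for $\rho_2$, the stationarity condition $\nabla U(\*\rho)=0$ becomes
\[
a \;=\; g(b), \qquad b \;=\; g(a), \qquad g(x) \;=\; e^{2\Delta\beta}\tp{\frac{x+1}{x+e^{4\beta}}}^{\!\Delta-1}.
\]
This is exactly the one-step Ising tree recursion, and hence $a$ must be a fixed point of $g\circ g$.

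The crux is then uniqueness of the fixed point of $g\circ g$ at criticality. The map $g$ is strictly decreasing on $(0,\infty)$, and at the symmetric fixed point $x^*=e^{2\beta}=(\Delta-2)/\Delta$ a direct computation gives
\[
g'(x^*) \;=\; (\Delta-1)\tanh\beta \;=\; -(\Delta-1)\tanh\beta_c(\Delta) \;=\; -1,
\]
so $(g\circ g)'(x^*)=1$. This is precisely the critical tree-uniqueness condition. I would argue uniqueness of the fixed point of $g\circ g$ by a higher-order Taylor analysis at $x^*$: since $g$ is a M\"obius-type function raised to power $\Delta-1$, one can verify that $g\circ g - \mathrm{id}$ has an odd-order vanishing at $x^*$ with $(g\circ g-\mathrm{id})'''(x^*)<0$, ruling out any additional fixed points. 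Unravelling, $a=b=x^*$ and hence $\*\rho^\star=(1/2,1/2)$ is the unique interior critical point of $U$. For the diagonal restriction $U(\rho,\rho)$, the same calculation (with $a=b$ forced by symmetry) reduces the system to the single equation $a=g(a)$, and $g$ being strictly decreasing yields a unique fixed point, recovering $\rho=1/2$.

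Finally, to promote $\*\rho^\star$ from critical point to global maximum, I would combine: (i) continuity of $U$ on the compact square $[0,1]^2$, (ii) uniqueness of the interior critical point from above, and (iii) a direct check that on the boundary (where some $\rho_i\in\{0,1\}$ forces some $\theta_j$ to vanish) the value of $U$ is strictly smaller than $U(1/2,1/2)$. The technically most delicate step is the uniqueness of the $g\circ g$ fixed point at criticality, because the usual contraction argument fails at $|g'(x^*)|=1$; the higher-order analysis (or equivalently, invocation of the known fact that antiferromagnetic Ising uniqueness holds \emph{at} $\beta=-\beta_c(\Delta)$) is where the real work lies.
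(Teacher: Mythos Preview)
Your approach is correct and essentially the same as the paper's. The paper cites \cite{GSV16} for the first statement (that $U(\*\rho)$ has its global maximum at the unique critical point $(1/2,1/2)$) and only proves the diagonal claim explicitly, via the very reduction you carry out: Lagrange multipliers on the inner maximization, then identification of the first-order condition for $U(\rho,\rho)$ with the Ising tree-recursion fixed-point equation, which has a unique solution since the recursion map is strictly decreasing. Your variables $a,b$ and recursion $g$ are a reparametrization of the paper's $Q$ and its recursion $Q=\bigl(\tfrac{1+e^{2\beta}Q}{e^{2\beta}+Q}\bigr)^{\Delta-1}$.

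One caveat on your more ambitious treatment of the full (non-diagonal) case: checking $(g\circ g-\mathrm{id})'''(x^*)<0$ (equivalently, negativity of the Schwarzian of $g$ at $x^*$) gives only \emph{local} uniqueness of the fixed point of $g\circ g$ at criticality; it does not by itself rule out fixed points far from $x^*$. To complete that route you would need a global argument (e.g., showing $g\circ g(x)-x$ changes sign exactly once, or a Schwarzian-based global uniqueness result), or, as you note, simply invoke the known fact that antiferromagnetic Ising uniqueness holds at $\beta=-\beta_c(\Delta)$, which is exactly what the paper does by citing \cite{GSV16}.
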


\begin{proof}[Proof of~\Cref{lem:critical-Ising}]
The former part was proved in~\cite[Lemma 11, Lemma 12]{GSV16}. We now aim to prove the latter. By definition, $U(\rho,\rho)$ corresponds to the following optimization program:

\begin{equation}\label{eq:optimization-Ising}
\begin{aligned}
\max_{\*\theta} \quad & \Delta f_N(\*\theta) - (\Delta-1) f_D(\*\rho)\\
\textrm{s.t.} \quad & 
\theta_1+\theta_2+\theta_3+\theta_4 = 1\\
&\theta_2 + \theta_4 = \rho\\
&\theta_3 + \theta_4 = \rho
\end{aligned}
\end{equation}
Applying the method of Lagrange multiplier, there exist functions $P(\rho),Q(\rho) \ge 0$ satisfying
\begin{align*}
    \theta_1 = \exp\tp{2\beta} P,\quad \theta_2=\theta_3=PQ, \quad \theta_4=\exp\tp{2\beta} PQ^2, \quad \text{ where } Q^\Delta = \tp{\frac{\rho}{1-\rho}}^{\Delta-1}.
\end{align*}
Then, combining~\eqref{eq:optimization-Ising}, we have the following equations 
\begin{align*}
  Q^\Delta = \tp{\frac{\rho}{1-\rho}}^{\Delta-1} = \tp{\frac{\theta_2 + \theta_4}{\theta_1 + \theta_3}}^{\Delta-1} = \tp{\frac{Q + \exp\tp{2\beta} Q^2}{\exp\tp{2\beta} + Q}}^{\Delta-1} = Q^{\Delta-1} \tp{\frac{1 + \exp\tp{2\beta} Q}{\exp\tp{2\beta} + Q}}^{\Delta-1}.
\end{align*}
In particular, we have
\begin{align*}
  Q = \tp{\frac{1 + \exp\tp{2\beta} Q}{ \exp\tp{2\beta} + Q}}^{\Delta-1}
  \quad \text{and} \quad
  \frac{\rho}{1-\rho} = Q^{\frac{\Delta}{\Delta-1}}.
\end{align*}
Thus, $Q$ should be the fixed point of critical Ising model, implying $Q=1$ and $\rho = \frac{1}{2}$.
\end{proof}
\begin{proof}[Proof of~\Cref{lem:distant-alpha-Ising}]
By~\Cref{lem:approx-Ising}, it suffices to show that there exist $\epsilon,\eta > 0$ satisfying
\begin{align*}
    U(\*\rho) \le -\epsilon n^{-1/2} + U\tp{\*\rho^\star},
\end{align*}
where $\*\rho$ satisfies $\abs{\rho_1+\rho_2-1}>n^{-1/4}$ and $\abs{\rho_1-\rho_2} \le \eta n^{-1/4}$.
By~\Cref{lem:critical-Ising}, we may assume without loss of generality\footnote{ \Cref{lem:critical-Ising} indicates that the maximum value only achieves at the point $(1/2,1/2)$, implying the values $U(\rho_1,\rho_2)$ outside the $(0.25,0.75)^2$ box is less than $U(1/2,1/2)-\epsilon$ for some constant $\epsilon > 0$.} that $0.25 < \rho_1,\rho_2 < 0.75$. Note that for any $0.25 < x_1,x_2 < 0.75$,
\begin{align*}
    0 \le 2H\tp{\frac{x_1+x_2}{2}}-H(x_1)-H(x_2) \le \min_{0.25 \le x \le 0.75} \abs{H''(x)} \cdot \abs{x_1-x_2}^2 \le 4 \abs{x_1-x_2}^2.
\end{align*}
Therefore, $U(\rho_1,\rho_2) \le U(\rho,\rho) + 8(\Delta-1)\eta n^{-1/2}$, where $\rho$ is the average of $\rho_1$ and $\rho_2$.
Since $\abs{\rho_1+\rho_2-1}>n^{-1/4}$, we have $\abs{2\rho - 1} > n^{-1/4}$.
Let $\rho_{l/r} = \frac{1}{2}(1 \pm n^{-1/4})$, by~\Cref{lem:critical-Ising}, we have
\begin{align*}
    U(\rho,\rho) \le \max \set{U\tp{\rho_l,\rho_l},U\tp{\rho_r,\rho_r}} = U\tp{\frac{1}{2},\frac{1}{2}} - Q(\Delta) n^{-1/2} + O_{\Delta}\tp{\frac{\log n}{n}},
\end{align*}
and the last inequality follows from~\Cref{lem:approx-Ising} and~\Cref{lem:center-alpha-Ising} ($Q$ is the constant in \Cref{lem:center-alpha-Ising}).
Hence, we can pick $\eta = \frac{Q(\Delta)/4}{8(\Delta-1)}$ and get
\begin{align*}
U(\rho_1,\rho_2) \le U\tp{\frac{1}{2},\frac{1}{2}} - \frac{Q(\Delta)}{2} n^{-1/2}
\end{align*}
when $n$ is sufficiently large. This concludes the proof.
\end{proof}

\subsection{\texorpdfstring{Proof of~\Cref{lem:distant-alpha-hardcore}}{Proof of Lemma~\ref{lem:distant-alpha-hardcore}}}\label{sec:append-hardcore}

As in the previous proof, we approximate $\log \alpha_{A,B,C}$ via an optimization problem. For convenience, let $H(x) = - x \log x$ be the entropy function.
\begin{lemma}\label{lem:approx-hardcore}
  Let $\*p = \tp{1,2\lambda^{1/\Delta},2\lambda^{1/\Delta},2\lambda^{2/\Delta},\lambda^{2/\Delta},\lambda^{2/\Delta}}$.
  Then, we define $f_N(\*\theta)$ and $f_D(\*\rho)$ as
    \begin{align*}
        f_N(\*\theta) = \sum_{i=1}^6 H(\theta_i) + \theta_i \log p_i \quad \text{and} \quad f_D(\*\rho) = \sum_{i=1}^4 H(\rho_i)
    \end{align*}
    where $H(x) = - x \log x$ is the entropy function.

    For integers $0 \le A,B,C \le 2n$ with $A+B+2C \le 2n$, the logarithm of $\alpha_{A,B,C}$ satisfies
    \begin{align*}
    \abs{\log \alpha_{A,B,C} - n\underbrace{\tp{\Delta \max_{\*\theta} f_N(\*\theta) - 2(\Delta-1) f_D(\*\rho)}}_{=: U(\*\rho)}} = O_{\Delta}\tp{\log n},
    \end{align*}
    where $\*\rho = \tp{1-\frac{A+B+C}{2n}, \frac{A}{2n}, \frac{B}{2n}, \frac{C}{2n}}$, and the maximum is taken over all $\*\theta \in [0,1]^6$ satisfying
    \begin{align}\label{eq:requirement-theta-hardcore}
            \sum_{i=1}^6 \theta_i = 1, \quad
            \theta_2 + 2 \theta_5 = 2\rho_2, \quad
            \theta_3 + 2 \theta_6 = 2\rho_3, \quad 
            \theta_4 = 2\rho_4.
    \end{align}
\end{lemma}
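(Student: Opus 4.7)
}
The plan is to take the closed-form expression for $\alpha_{A,B,C}$ in \eqref{eq:hardcore-alpha-coefficients}, namely
\[
  \alpha_{A,B,C} \;=\; \lambda^{A+B+2C}\,\frac{([x^Ay^Bz^C]N)^{\Delta}}{([x^Ay^Bz^C]D)^{\Delta-1}},
  \qquad \lambda=\lambda_c(\Delta),
\]
take logarithms, and estimate each of the two coefficient extractions by Stirling's formula, eventually absorbing the factor $\lambda^{A+B+2C}$ into the coefficients defining $f_N$. First, since $[x^Ay^Bz^C]D$ is the single multinomial coefficient $\binom{2n}{A,B,C,2n-A-B-C}$, Stirling's approximation immediately gives
\[
  \log [x^Ay^Bz^C]D \;=\; 2n\sum_{i=1}^4 H(\rho_i) + O(\log n) \;=\; 2n\,f_D(\rho) + O(\log n),
\]
using $\rho=(1-\tfrac{A+B+C}{2n},\tfrac{A}{2n},\tfrac{B}{2n},\tfrac{C}{2n})$ which sums to $1$.

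Next I would handle $[x^Ay^Bz^C]N$ where $N=(1+2x+2y+2z+x^2+y^2)^n$. Expanding the $n$-th power as a sum over choices $\zeta=(\zeta_1,\dots,\zeta_6)\in\Z_{\ge 0}^6$ with $\sum_i\zeta_i=n$, $\zeta_2+2\zeta_5=A$, $\zeta_3+2\zeta_6=B$, $\zeta_4=C$, gives
\[
  [x^Ay^Bz^C]N \;=\; \sum_{\zeta} \binom{n}{\zeta_1,\dots,\zeta_6}\, 2^{\zeta_2+\zeta_3+\zeta_4}.
\]
The number of feasible $\zeta$ is $O(n^2)$, so sandwiching the sum between its maximum and $n^2$ times its maximum gives, after Stirling,
\[
  \log [x^Ay^Bz^C]N \;=\; n\max_{\theta} \Bigl[\sum_{i=1}^6 H(\theta_i)+\sum_{i=1}^6\theta_i\log c_i\Bigr] + O(\log n),
\]
where $c=(1,2,2,2,1,1)$ and the maximum is over real $\theta\in[0,1]^6$ with $\sum \theta_i=1$, $\theta_2+2\theta_5=2\rho_2$, $\theta_3+2\theta_6=2\rho_3$, $\theta_4=2\rho_4$. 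Achieving the lower bound requires rounding the optimizer $n\theta^\star$ to an integer $\zeta^\star$ meeting all constraints; this is the main technical point and I would handle it by adjusting at most $O(1)$ coordinates of $n\theta^\star$ (for instance by rounding $\zeta_5$ and $\zeta_6$ first so that $\zeta_2=A-2\zeta_5$, $\zeta_3=B-2\zeta_6$, $\zeta_4=C$, and $\zeta_1=n-\sum_{i\ge 2}\zeta_i$ are automatically nonnegative integers when $n$ is large), which perturbs $f_N$ by only $O(\log n /n)$.

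The final step is to absorb $(A+B+2C)\log\lambda$ into the linear part of $f_N$. Using the linear constraints,
\[
  (A+B+2C)\log\lambda \;=\; n\bigl[(\theta_2+2\theta_5)+(\theta_3+2\theta_6)+2\theta_4\bigr]\log\lambda,
\]
so one can add to each $c_i$ a power of $\lambda$: the combined coefficient of $\theta_i$ inside the bracket becomes $\Delta\log c_i + \alpha_i\log\lambda$, where $\alpha=(0,1,1,2,2,2)$. Dividing by $\Delta$, this equals $\Delta\log p_i$ with $p_i=c_i\lambda^{\alpha_i/\Delta}$, which is precisely the vector $\*p=(1,2\lambda^{1/\Delta},2\lambda^{1/\Delta},2\lambda^{2/\Delta},\lambda^{2/\Delta},\lambda^{2/\Delta})$ appearing in the statement. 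Combining,
\[
  \Delta\log [x^Ay^Bz^C]N + (A+B+2C)\log\lambda \;=\; n\,\Delta\max_{\theta}f_N(\theta)+O(\log n),
\]
and subtracting $(\Delta-1)\log[x^Ay^Bz^C]D=2n(\Delta-1)f_D(\rho)+O(\log n)$ gives the claimed estimate $\log\alpha_{A,B,C}=n\,U(\rho)+O_\Delta(\log n)$. The hard part is the rounding/feasibility argument for the $N$-side, which is eased by the fact that $\theta_4=2\rho_4$ is already an integer multiple of $1/n$ and the other constraints can be satisfied by perturbing $\zeta_5,\zeta_6$ only; everything else is bookkeeping via Stirling.
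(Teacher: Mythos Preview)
Your proposal is correct and follows essentially the same approach as the paper: expand the coefficient $[x^Ay^Bz^C]N$ as a sum of weighted multinomials, sandwich by the maximum term (up to a polynomial-in-$n$ factor), apply Stirling, and round the real optimizer back to integers to get the matching lower bound. The only cosmetic difference is that the paper absorbs the factor $\lambda^{A+B+2C}$ upfront by the substitution $\hat N(x,y,z)=N(\lambda^{1/\Delta}x,\lambda^{1/\Delta}y,\lambda^{2/\Delta}z)$ (so the $p_i$ appear directly as the monomial coefficients of $\hat N$), whereas you carry it separately and merge it into the linear part of $f_N$ at the end; the two are trivially equivalent since the added term is constant on the feasible polytope.
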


\begin{remark}
    We remark that $U(\*\rho)$ is defined on 
    \begin{align*}
    \Omega = \{\*\rho \in [0,1]^4 \mid \rho_1+\rho_2+\rho_3+\rho_4 = 1 \text{ and } \rho_2+\rho_3+2\rho_4 \le 1\}.
    \end{align*}
    The latter constraint is to guarantee the existence of $\*\theta$ satisfying~\eqref{eq:requirement-theta-hardcore}.
\end{remark}
\begin{proof}[Proof of~\Cref{lem:approx-hardcore}]
  By formula~\eqref{eq:hardcore-alpha-coefficients} of $\alpha_{A,B,C}$, it holds that
  \begin{align*}
    \alpha_{A, B, C} = \frac{([x^Ay^Bz^C] \hat{N})^\Delta}{([x^Ay^Bz^C]D)^{\Delta-1}},
  \end{align*}
  where $\hat{N}(x,y,z) = N(\lambda^{1/\Delta}x, \lambda^{1/\Delta}y, \lambda^{2/\Delta}z)$.
  According to the definition of $N$ in~\eqref{eq:num-hardcore}, we have
  \begin{align*}
    \hat{N} = (p_1 + p_2 x + p_3 y + p_4 z + p_5 x^2 + p_6 y^2)^n.
  \end{align*}
  Hence, the coefficient of the generation function $\hat{N}$ can be bounded by
    \begin{align*}
        \max_{\*\zeta} \binom{n}{\*\zeta} \prod_{i=1}^6 p_i^{\zeta_i} \le [x^Ay^Bz^C]\hat{N} \le n^6 \max_{\*\zeta} \binom{n}{\*\zeta} \prod_{i=1}^6 p_i^{\zeta_i},
    \end{align*}
    where the maximum is taken over all parameters $\*\zeta \in \{0,1,\ldots,n\}^{6}$ satisfying
    \begin{align}\label{eq:requirement-zeta-hardcore}
        \sum_{i=1}^6 \zeta_i = n, \quad \zeta_2  + 2 \zeta_5 = A, \quad \zeta_3 + 2\zeta_6 = B,  \quad \zeta_4 = C.
    \end{align}
    The upper bound of $\log \alpha_{A,B,C}$ follows from the Stirling's approximation, where $\*\theta = \*\zeta/n$.
    The function $\max_{\*\theta} f_N(\*\theta)$ is the approximation for the coefficient $[x^Ay^Bz^C]\hat{N}$; and $f_D(\*\rho)$ is the approximation for the coefficient $[x^Ay^Bz^C]D$.
    For the lower bound, suppose $\max_{\*\theta} f_N(\*\theta)$ achieves the maximum value at $\*\theta = \*\theta_{\*\rho}$ for the given $A,B,C$. We may round $n\*\theta$ to the nearest integer point $\*\zeta$ satisfying~\eqref{eq:requirement-zeta-hardcore}. Hence, the difference between $f_N(\*\theta_{\*\rho})$ and $f_N(\*\zeta/n)$ can be bounded by $O(\log n/n)$, and the lower bound can be established together with Stirling's approximation.
\end{proof}

\begin{lemma}\label{lem:critical-hardcore}
Let $U^{\mathrm{sym}}(\rho_2,\rho_4) = U(1-2\rho_2-\rho_4,\rho_2,\rho_2,\rho_4)$ be the symmetric version of $U(\*\rho)$ defined on $\Omega = \{(\rho_2,\rho_4) \in [0,1]^2 \mid 2\rho_2+2\rho_4 \le 1 \}$. Then $U^{\mathrm{sym}}(\rho_2,\rho_4)$ achieves maximum value at its unique critical point at $(\rho_2,\rho_4) = \tp{\frac{\Delta-1}{\Delta^2},\frac{1}{\Delta^2}}$. 
\end{lemma}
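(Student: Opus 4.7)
\textbf{Proof plan for Lemma~\ref{lem:critical-hardcore}.} The strategy is to identify critical points of $U^{\mathrm{sym}}$ through a two-level Lagrangian analysis and then match the resulting stationarity system with the tree recursion for the hardcore model on the infinite $\Delta$-regular tree at critical fugacity $\lambda_c(\Delta)$.

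First, I would handle the inner maximization $F(\rho) := \max_\theta f_N(\theta)$ via Lagrange multipliers. Introducing a multiplier $\mu$ for $\sum_i \theta_i = 1$ and multipliers $\lambda_2,\lambda_3,\lambda_4$ for the remaining constraints in \eqref{eq:requirement-theta-hardcore}, the KKT conditions give $\theta_i = p_i\, R\, e^{-\nu_i}$, where $R = e^{-1-\mu}$ and $\nu_i$ is the appropriate multiplier combination (so $\nu_1=0$, $\nu_2=\lambda_2$, $\nu_5=2\lambda_2$, etc.). In the symmetric slice $\rho_2 = \rho_3$, uniqueness of the inner maximizer forces $\lambda_2 = \lambda_3$. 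Setting $a := e^{-\lambda_2}$, $b := e^{-\lambda_4}$, this yields $\theta_1=R$, $\theta_2=\theta_3=2\lambda^{1/\Delta}Ra$, $\theta_4=2\lambda^{2/\Delta}Rb$, and $\theta_5=\theta_6=\lambda^{2/\Delta}Ra^2$.

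Second, I would apply Danskin's envelope theorem to obtain $\partial_{\rho_2}F = 2\lambda_2 = 2\lambda_3 = \partial_{\rho_3}F$ and $\partial_{\rho_4}F = 2\lambda_4$, hence $\partial_{\bar\rho} F = 4\lambda_2$ along the symmetric slice. A direct calculation gives $\partial_{\bar\rho} f_D^{\mathrm{sym}} = 2\log(\rho_1/\bar\rho)$ and $\partial_{\rho_4} f_D^{\mathrm{sym}} = \log(\rho_1/\rho_4)$. Setting $\nabla U^{\mathrm{sym}} = 0$ and simplifying, the outer first-order conditions take the clean form
\[
a = (\bar\rho/\rho_1)^{(\Delta-1)/\Delta}, \qquad b = (\rho_4/\rho_1)^{(\Delta-1)/\Delta}.
\]
Substituting these into the equations $\bar\rho = \lambda^{1/\Delta}Ra(1+\lambda^{1/\Delta}a)$ and $\rho_4 = \lambda^{2/\Delta}Rb$ (which come from the symmetry-reduced constraints) and using $\sum \theta_i = 1$ to eliminate $R$, the system collapses to a single scalar equation in the ratio $r := \bar\rho/\rho_1$ together with the identity $b = a^2$, equivalent to $\rho_4/\rho_1 = (\bar\rho/\rho_1)^2$.

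Third, I would show that the scalar equation for $r$ is precisely the fixed-point equation for the ratio $x = p/(1-p)$ on the infinite $\Delta$-regular tree at fugacity $\lambda$. After simplification this becomes $x = \lambda/(1+x)^{\Delta-1}$ for the subtree ratio, together with the root-to-subtree identity determining the occupation probability at a degree-$\Delta$ vertex. At $\lambda = \lambda_c(\Delta)$, the tangency condition $|f'(x^*)|=1$ forces the unique interior fixed point $x^* = 1/(\Delta-2)$, yielding occupation probability $p^* = 1/\Delta$ at each vertex. Translating back through $\bar\rho + \rho_4 = p^*$, $\rho_4 = (p^*)^2$, $\rho_1 = (1-p^*)^2$, this gives the unique critical point $(\bar\rho,\rho_4) = ((\Delta-1)/\Delta^2,\, 1/\Delta^2)$, matching the ``center of mass'' of Lemma~\ref{lem:center-alpha-hardcore}. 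The identity $b = a^2$ at this point is the product-measure signature expected from tree uniqueness (i.e., asymptotic independence of the two bipartition sides).

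Finally, to promote this unique interior critical point to the global maximum, I would combine two observations: (i) by continuity and compactness of $\Omega$, $U^{\mathrm{sym}}$ attains its maximum; (ii) on the boundary of $\Omega$ (where $\bar\rho=0$, $\rho_4=0$, or $\rho_1=0$), the problem degenerates and direct inspection shows $U^{\mathrm{sym}}$ is strictly dominated by its value at the claimed interior point. Together with uniqueness of the interior critical point, this identifies $((\Delta-1)/\Delta^2, 1/\Delta^2)$ as the global maximizer. The main obstacle will be step three, namely the algebraic reduction of the coupled constraint-stationarity system to the one-dimensional tree recursion: exploiting $\lambda = \lambda_c(\Delta)$ at exactly the right point of the manipulation is delicate, and cleanly extracting the identity $b = a^2$ (which encodes the independence structure at the optimum) requires care.
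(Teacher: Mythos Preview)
Your overall architecture matches the paper's: Lagrange multipliers for the inner problem, an envelope/Danskin computation for the outer gradient, identification of the stationary system with (a transformation of) the tree recursion, and a boundary check. The parametrization you propose ($a,b$ playing the role of the paper's $Q,R$, with $y_2=\lambda^{1/\Delta}Q$, $y_4=\lambda^{2/\Delta}R$) is exactly right, and the outer first-order conditions $a=(\bar\rho/\rho_1)^{(\Delta-1)/\Delta}$, $b=(\rho_4/\rho_1)^{(\Delta-1)/\Delta}$ coincide with the paper's.

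There is, however, a genuine gap in your uniqueness step. You assert that the coupled system ``collapses to a single scalar equation \dots\ together with the identity $b=a^2$'' and then invoke uniqueness of the tree fixed point. But $b=a^2$ (equivalently $y_4=y_2^2$) is \emph{not} a consequence of the critical-point system; it is only a property of the particular solution. What the system does yield, after eliminating one variable, is a one-dimensional equation in $y_2$ that is strictly more complicated than $y_2=\lambda/(1+y_2)^{\Delta-1}$; in the paper's reduction it becomes
\[
F(y_2)\;:=\;\log\Bigl((y_2+1)\bigl((\lambda/y_2)^{1/(\Delta-1)}-1\bigr)\Bigr)-\log\Bigl(y_2\bigl(\lambda(y_2+1)^{-(\Delta-1)}+1\bigr)\Bigr)=0,
\]
and uniqueness requires a separate monotonicity argument $F'(y_2)\le 0$ on the relevant interval, carried out by explicit estimation and a case split on the sign of $(\Delta-2)y_2-1$. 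Appealing only to uniqueness of the tree fixed point is not enough: you must show that any critical point of $U^{\mathrm{sym}}$ actually forces $y_2$ to be that fixed point, and that is precisely where the work lies. Your own closing caveat (that ``cleanly extracting the identity $b=a^2$ \dots\ requires care'') is in fact the crux, and it cannot be extracted---it must be replaced by a direct uniqueness argument for the reduced 1D equation.

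A smaller gap: ``direct inspection'' is not sufficient for the boundary. The paper shows that $\partial U^{\mathrm{sym}}/\partial\rho_4\to+\infty$ as $\rho_4\to 0^+$, $\partial U^{\mathrm{sym}}/\partial\rho_2\to+\infty$ as $\rho_2\to 0^+$, and $\partial U^{\mathrm{sym}}/\partial\rho_4\to-\infty$ as $2\rho_2+2\rho_4\to 1^-$, which needs the explicit gradient formulas together with control of $y_2,y_4$ near the boundary (via the constraint relations). You should expect to reproduce that analysis rather than leave it as an inspection.
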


\begin{proof}[Proof of~\Cref{lem:critical-hardcore}]
We first show that $U^{\mathrm{sym}}(\rho_2,\rho_4)$ has a unique critical point at $(\rho_2,\rho_4) = \tp{\frac{\Delta-1}{\Delta^2},\frac{1}{\Delta^2}}$.
By definition, $U^{\mathrm{sym}}(\rho_2,\rho_4)$ corresponds to the following optimization program:

\begin{equation}\label{eq:optimization-hardcore}
\begin{aligned}
\max_{\*\theta} \quad & \Delta f_N(\*\theta) - 2(\Delta-1) f_D(\*\rho)\\
\textrm{s.t.} \quad & 
\theta_1+\theta_2+\theta_3+\theta_4 + \theta_5 + \theta_6 = 1\\
&\theta_2 + 2\theta_5 = 2\rho_2\\
&\theta_3 + 2\theta_6 = 2\rho_2\\
&\theta_4 = 2\rho_4.
\end{aligned}
\end{equation}
By the method of Lagrange multiplier, there exist $\*\theta = \*\theta^\star$ and constants $P,Q_1,Q_2,R \ge 0$, such that they satisfy \eqref{eq:optimization-hardcore} and the following equations
\begin{align}\label{eq:pre-lagrange-hardcore}
\theta_1 = p_1 P, \quad \theta_2 &= p_2 PQ_1, \quad \theta_3 =  p_2 PQ_2, \quad \theta_4 = p_4 PR, \quad \theta_5 = p_5PQ_1^2, \quad \theta_6 =p_5 PQ_2^2.
\end{align}
Moreover, $f_N(\*\theta^\star) = \max_{\*\theta}f_N(\*\theta)$ (we use the fact that $p_2 = p_3$ and $p_5 = p_6$).

We first show that $\theta^\star$, $P, Q_1, Q_2, R$ are uniquely determined by \eqref{eq:optimization-hardcore} and \eqref{eq:pre-lagrange-hardcore}, so that they can be seen as implicit functions of $\rho_2$ and $\rho_4$.
By~\eqref{eq:optimization-hardcore}, we have $\theta_2 + 2\theta_5 = \theta_3 + 2\theta_6$.
Let $x = \lambda^{1/\Delta}$ so that $p_2 = 2x$ and $p_5 = x^2$.
We have
\begin{align*}
  P(2xQ_1 + (xQ_1)^2) = P(2xQ_2 + (xQ_2)^2),
\end{align*}
which implies $P(xQ_1 + 1)^2 = P(xQ_2 + 1)^2$.
Since, $Q_1, Q_2 \geq 0$, we can conclude that $Q_1 = Q_2 =: Q$.
Hence \eqref{eq:pre-lagrange-hardcore} can be simplifies to
\begin{align}\label{eq:lagrange-hardcore}
\theta_1 = p_1 P, \quad \theta_2 = \theta_3 =  p_2 PQ, \quad \theta_4 = p_4 PR, \quad \theta_5 = \theta_6=p_5 PQ^2.
\end{align}
By \eqref{eq:optimization-hardcore} and \eqref{eq:lagrange-hardcore}, we have
\begin{align*}
  \frac{\rho_4}{\rho_2} &= \frac{\theta_4}{\theta_2 + 2\theta_5}
  = \frac{p_4 R}{p_2Q + p_5Q^2} \\
  \frac{1}{\rho_2} &= \frac{\sum_{i=1}^6 \theta_i}{\theta_2 + 2\theta_5}
  = \frac{p_1 + 2p_2Q + p_4R + 2p_5 Q^2}{p_2Q + p_5Q^2} \\
  &= 1 + \frac{p_1}{p_2 Q + p_5 Q^2} + \frac{\rho_4}{\rho_2},
\end{align*}
where, recall $x = \lambda^{1/\Delta}$ so that $p_2 = 2x$ and $p_5 = x^2$, the later equation implies that
\begin{align*}
  \frac{p_1\rho_2 + 1 - \rho_2 - \rho_4}{1 - \rho_2 - \rho_4} = (1 + xQ)^2.
\end{align*}
Hence $Q$ and $R$ are uniquely determined by $\rho_2$ and $\rho_4$.
By $\sum_{i=1}^6 \theta_i = 1$ and \eqref{eq:lagrange-hardcore}, $P$ is also uniquely determined by $\rho_2$ and $\rho_4$.
Again, by  \eqref{eq:lagrange-hardcore}, $\*\theta^\star$ are uniquely determined by $\rho_2$ and $\rho_4$.

Now, we can conclude that $\*\theta^\star$, $P, Q, R$ are implicit functions of $\rho_2$ and $\rho_4$ determined by \eqref{eq:optimization-hardcore} and \eqref{eq:lagrange-hardcore}.
Combining \eqref{eq:optimization-hardcore}, \eqref{eq:lagrange-hardcore}, and the fact that $p_2 = p_3$, $p_5 = p_6$, we have
\begin{align*}
  \max_{\*\theta} f_N(\*\theta) = f_N(\*\theta^\star)
  &= \sum_{i=1}^6 -\theta^\star_i \log \frac{\theta^\star_i}{p_i} 
  = - \tp{\sum_{i=1}^6 \theta^\star_i \log P + 2(\theta^\star_2 + 2\theta^\star_5)\log Q + \theta^\star_4\log R} \\
  &= - (\log P + 4\rho_2\log Q + 2\rho_4 \log R).
\end{align*}
This also implies
\begin{align*}
  U^{\mathrm{sym}}(\rho_2,\rho_4) &= -\Delta\tp{\log P + 4\rho_2 \log Q + 2 \rho_4 \log R} - 2(\Delta-1) f_D(\*\rho), \\
  \text{where} \quad \*\rho &= \tp{1-2\rho_2-\rho_4,\rho_2,\rho_2,\rho_4}.
\end{align*}
Therefore, by taking derivative on both sides of the first constraint in~\eqref{eq:optimization-hardcore}, we have
\begin{align}
  \nonumber
0 = \sum_{i=1}^{6} \frac{\partial \theta^\star_i}{\partial \rho_2} &= \tp{\sum_{i=1}^6\theta_i^\star} \cdot \frac{1}{P} \cdot \frac{\partial P}{\partial \rho_2} + \tp{\theta_2+\theta_3+2\theta_5+2\theta_6}\cdot \frac{1}{Q} \cdot \frac{\partial Q}{\partial \rho_2} + \cdot \frac{\theta_4}{R}\cdot \frac{\partial R}{\partial \rho_2}\\
\label{eq:rel}
&=\frac{1}{P} \cdot \frac{\partial P}{\partial \rho_2} + \frac{4\rho_2}{Q} \cdot \frac{\partial Q}{\partial \rho_2} + \frac{2\rho_4}{R} \cdot \frac{\partial R}{\partial \rho_2}, 
\end{align}
where the last equation follows from the constraints in~\eqref{eq:optimization-hardcore}. Similarly, a same type of equation holds for $\rho_4$ as well. Finally, combining~\eqref{eq:rel}, the critical point $(\rho_2,\rho_4)$ of $U^{\mathrm{sym}}$ satisfies
\begin{align*}
\frac{\partial U^{\mathrm{sym}}}{\partial \rho_2} = -\tp{4 \Delta \log Q+2(\Delta-1)\frac{\partial f_D(\*\rho)}{\partial \rho_2}} &= 0\\
\frac{\partial U^{\mathrm{sym}}}{\partial \rho_4} = -\tp{2 \Delta \log R+2(\Delta-1)\frac{\partial f_D(\*\rho)}{\partial \rho_4}} &= 0
\end{align*}
By a straightforward calculation, we have
\begin{align*}
    \log Q &= \frac{\Delta-1}{\Delta} \log \frac{\rho_2}{1-2\rho_2-\rho_4} = \frac{\Delta-1}{\Delta}\log \frac{\lambda^{1/\Delta}Q+\lambda^{2/\Delta} Q^2}{1+2\lambda^{1/\Delta}Q+\lambda^{2/\Delta}R}\\
    \log R &= \frac{\Delta-1}{\Delta} \log \frac{\rho_4}{1-2\rho_2-\rho_4} = \frac{\Delta-1}{\Delta} \log \frac{\lambda^{2/\Delta}R}{1+2\lambda^{1/\Delta}Q+\lambda^{2/\Delta}R},
\end{align*}
where we use \eqref{eq:lagrange-hardcore} the fact that \eqref{eq:optimization-hardcore} implies $2(1 - 2\rho_2 - \rho_4) = 2\theta_1 + \theta_2 + \theta_3 + \theta_4$.
Let $y_2 = \lambda^{1/\Delta} Q$ and $y_4 = \lambda^{2/\Delta} R$. It is equivalent to the following system:
    \begin{align} \label{eq:system-y}
       \begin{cases}
            y_2 
            = \lambda \tp{\frac{ 
            1 + y_2
            }{
            1 + 2 y_2 + y_4 
            }}^{\Delta-1} &\hspace{2cm} (a)\\
            y_4
            = \lambda^2 \tp{\frac{
            1
            }{
            1 + 2 y_2 + y_4 
            }}^{\Delta-1} &\hspace{2cm} (b)
        \end{cases}
    \end{align}
Let $d=\Delta-1$.
It is direct to verify that $(y_2,y_4) = (\hat{x}_d, \hat{x}_d^2)$ is a solution of the above system, where $\hat{x}_d$ is the unique fixed point of the tree recursion: $\hat{x}_d = \lambda(\frac{1}{1+\hat{x}_d})^{d}$.
Since $\lambda = \lambda_c(\Delta) = \frac{d^d}{(d-1)^{d+1}}$, we have $(y_2, y_4) = (\frac{1}{d-1}, \frac{1}{(d-1)^2})$.
By \eqref{eq:lagrange-hardcore} and \eqref{eq:optimization-hardcore}, one can verify that the point $(\rho_2, \rho_4) = (\frac{\Delta-1}{\Delta^2}, \frac{1}{\Delta^2})$ is a critical point.

In this paragraph, we verify that the system in \eqref{eq:system-y} has at most $1$ solution.
 We resolve $y_4$ via (b) in~\eqref{eq:system-y} and plug it into (a) in~\eqref{eq:system-y} to obtain
\begin{align*}
    (y_2+1) \left(\left(\frac{\lambda }{y_2}\right)^{1/d}-1\right)=y_2 \left(\lambda  (y_2+1)^{-d}+1\right).
\end{align*}
By taking logarithm on both sides and rearranging terms, it holds that 
\begin{align*}
    F(y_2) := \log\tp{(y_2+1) \left(\left(\frac{\lambda }{y_2}\right)^{1/d}-1\right)} - \log\tp{y_2 \left(\lambda  (y_2+1)^{-d}+1\right)} = 0.
\end{align*}
To prove the uniqueness of $y$, it suffices to show that $F'(y) \le 0$ for all $y \in [0,\lambda]$. Note that
\begin{align*}
    F'(y) &= -\frac{\tp{\frac{\lambda}{y}}^{1/d}}{d y\tp{\left(\frac{\lambda }{y}\right){}^{1/d} - 1}}+\frac{d \lambda }{\left(y+1\right) \left(\left(y+1\right){}^d+\lambda \right)}-\frac{1}{y^2+y} \\
    &\leq -\frac{1}{d y}+\frac{d \lambda }{\left(y+1\right) \left(\left(y+1\right){}^d+\lambda \right)}-\frac{1}{y^2+y} \\
    &= \frac{(d+1) \lambda  \left((d-1) y-1\right)-\left(y+1\right){}^d \left(d+y+1\right)}{d y \left(y+1\right) \left(\left(y+1\right){}^d+\lambda \right)}.
\end{align*}
When $(d - 1)y - 1 \leq 0$, we have $F'(y) \leq 0$.
Otherwise, to make $F'(y) \leq 0$, we need
\begin{align}\label{eq:lambda-bound}
    \lambda \leq \frac{\left(y+1\right){}^d \left(d+y+1\right)}{(d+1)  \left((d-1) y-1\right)}.
\end{align}
It can be verified easily that~\eqref{eq:lambda-bound} holds when $d=2$. We assume without loss of generality that $d \ge 3$.
Since $y \in [\frac{1}{d-1}, \lambda]$, we can simplify the r.h.s. as follow
\begin{align*}
    \frac{\left(y+1\right){}^d \left(d+y+1\right)}{(d+1)  \left((d-1) y-1\right)} \geq \lambda_c(\Delta) \frac{d-1}{d+1} \frac{d + \lambda + 1}{(d-1)\lambda - 1}.
\end{align*}
The r.h.s. is greater than $\lambda$ for $\lambda = \lambda_c(\Delta)$ and $d \geq 3$. Hence, we have shown the uniqueness of critical point.

Finally, we prove that $U^{\mathrm{sym}}$ cannot achieves the maximum value on the boundary. 
Recall $y_2 = \lambda^{1/\Delta} Q$ and $y_4=\lambda^{2/\Delta} R$.
By~\eqref{eq:optimization-hardcore} and~\eqref{eq:lagrange-hardcore}, it holds that
\begin{align} \label{eq:y-bound}
\frac{y_2 + 2y_2^2}{1+y_2} = \frac{2\rho_2}{1-2\rho_4-2\rho_2} \quad \text{ and } \quad \frac{y_4}{y_2+y_2^2} = \frac{\rho_4}{\rho_2}.
\end{align}
Recall that
\begin{align}
    \nonumber
  \frac{\partial U^{\mathrm{sym}}}{\partial \rho_2}
  &= -4\log \tp{\frac{\rho_2}{1-2\rho_2-\rho_4} \cdot \tp{\frac{1-2\rho_2-\rho_4}{\rho_2} \cdot Q}^{\Delta}} \\
    \label{eq:U-rho-2}
    &= -4\log \tp{\frac{\rho_2}{\lambda\tp{1-2\rho_2-\rho_4}} \cdot \tp{\frac{1+2y_2 + y_4}{1+ y_2}}^{\Delta}}\\
  \nonumber
  \frac{\partial U^{\mathrm{sym}}}{\partial \rho_4}
    &= -2\log \tp{\frac{\rho_4}{1-2\rho_2-\rho_4}\cdot \tp{\frac{1-2\rho_2-\rho_4}{\rho_4} \cdot R}^\Delta }\\
  \label{eq:U-rho-4}
   &= -2\log \tp{\frac{\rho_4}{\lambda^2\tp{1-2\rho_2-\rho_4}} \cdot \tp{1+2y_2+y_4}^\Delta }.
\end{align}
We are now ready to verify that $U^{\mathrm{sym}}$ does not achieve its maximum value on the boundary, i.e., the function does not achieve its maximum value when $\rho_2 \rho_4 (1-2\rho_2-2\rho_4) \to 0^+$. We follow a similar proof as in~\cite{GSV16}.
\paragraph{Case 1: $\rho_4 \to 0^+$} Note that $1+2y_2+y_4$ can be upper bounded by a constant depending on $\rho_2$.
This can be confirmed by~\eqref{eq:y-bound}.
Therefore, by \eqref{eq:U-rho-4}, $\frac{\partial U^{\mathrm{sym}}}{\partial \rho_4} \to +\infty$ as $\rho_4 \to 0^+$.
\paragraph{Case 2: $\rho_2\to 0^+$} We first show that $y_4$ can be bounded by a constant. Let $K = \frac{2\rho_2}{1-2\rho_2-2\rho_4}$. By~\eqref{eq:y-bound} and $y_2 \ge 0$, it holds that $y_2 = \frac{1}{4} \tp{\sqrt{K^2 + 6K + 1}+K-1} = \frac{K}{1-K + \sqrt{K^2 + 6K + 1}} \le K$. Therefore, $y_4 \le 2y_2 \cdot \frac{\rho_4}{\rho_2} \le \frac{2\rho_4}{1-2\rho_2-2\rho_4}$. Hence, $\frac{1+2y_2+y_4}{1+y_2}$ is upper bounded by a constant.
According to \eqref{eq:U-rho-2}, we have $\frac{\partial U^{\mathrm{sym}}}{\partial \rho_2} \to +\infty$ as $\rho_2 \to 0^+$.
\paragraph{Case 3: $2\rho_2+2\rho_4 \to 1^-$} By the argument in Case 2, $y_2 \to +\infty$ as $2\rho_2+2\rho_4 \to 1^-$. Therefore, by \eqref{eq:U-rho-4}, $\frac{\partial U^{\mathrm{sym}}}{\partial \rho_4} \to -\infty$ as $2\rho_2+2\rho_4 \to 1^-$.
\end{proof}

We are now ready to prove~\Cref{lem:distant-alpha-hardcore}.

\begin{proof}[Proof of~\Cref{lem:distant-alpha-hardcore}]
By~\Cref{lem:approx-hardcore}, it suffices to prove there exist constants $\epsilon,\eta >0$ satisfying
\begin{align}\label{eq:require-U}
U(\*\rho) \le  -\epsilon n^{-2/3} + U(\*\rho^\star),
\end{align}
where $\*\rho^\star = \tp{\frac{(\Delta-1)^2}{\Delta^2},\frac{\Delta-1}{\Delta^2},\frac{\Delta-1}{\Delta^2},\frac{1}{\Delta^2}}$, and $\*\rho$ satisfies 
\begin{enumerate}
\item $\abs{\rho_2 +\rho_3 - \frac{2(\Delta-1)}{\Delta^2}} > n^{-1/3}$ or $\abs{\rho_4 - \frac{1}{\Delta^2}} > n^{-1/3}$;
\item $\abs{\rho_2 - \rho_3} \le \eta n^{-1/3}$.
\end{enumerate}
By~\Cref{lem:critical-hardcore} and the continuity of function $U$, there exist constants $0<\alpha\leq 1/2$ and $\beta > 0$ such that for every $\rho_4 \in [0,1/2]$, $\sup_{\rho_2 \in [0,\alpha]}U^{\-{sym}}(\rho_2,\rho_4) \leq U^{\-{sym}}(\rho_2^\star, \rho_4^\star) - \beta$.
This implies that there exists constant $\kappa$ such that $0 < 2\kappa \leq \alpha$, $4(\Delta-1)\sqrt{2\kappa} \leq \beta/2$; and when $n$ is sufficiently large, $\min\set{\rho_2,\rho_3} \leq \kappa$, we have $\max\set{\rho_2,\rho_3} \leq \kappa + \eta n^{-1/3} \leq 2\kappa$ and 
\begin{align*}
  U(\*\rho)
  &\leq U^{\-{sym}}((\rho_2+\rho_3)/2,\rho_4) + 2(\Delta-1) (2H(\tp{\rho_2 + \rho_3}/2)) \\
    &\leq U^{\-{sym}}(\rho_2^\star, \rho_4^\star) - \beta + 2(\Delta-1)(2\sqrt{2\kappa}) \leq U^{\-{sym}}(\rho_2^\star, \rho_4^\star) - \beta/2.
\end{align*}
Hence~\eqref{eq:require-U} holds when $\min \set{\rho_2,\rho_3} \le \kappa$. 
Therefore, we now assume that $\rho_2,\rho_3 \ge \kappa$.
Note that for any $x_1,x_2 \ge \kappa$,
\begin{align*}
0 \le 2H(x_1+x_2)-H(x_1)-H(x_2) \le \min_{x \ge \kappa} \abs{H''(x)} \abs{x_1-x_2}^2 \le \kappa^{-1} \abs{x_1-x_2}^2.
\end{align*}
Hence, $U(\*\rho) \le U^{\mathrm{sym}}\tp{\frac{\rho_2+\rho_3}{2}, \rho_4} + \kappa^{-1} \eta^2 n^{-2/3}$. By~\Cref{lem:critical-hardcore} and the constraints of $\*\rho$, 
\begin{align*}
U^{\mathrm{sym}}\tp{\frac{\rho_2+\rho_3}{2},\rho_4} \le \max_{(\rho'_2,\rho'_4) \in \partial B\tp{\tp{\frac{\Delta-1}{\Delta^2},\frac{1}{\Delta^2}},n^{-1/3}}} U^{\mathrm{sym}}(\rho_2',\rho_4') \le U^{\mathrm{sym}}\tp{\frac{\Delta-1}{\Delta^2},\frac{1}{\Delta^2}} - \epsilon' n^{-2/3}, 
\end{align*}
where the last inequality follows from~\Cref{lem:approx-hardcore} and~\Cref{lem:center-alpha-hardcore}, and $\epsilon' > 0$ is a constant only relying on $\Delta$. This concludes the proof with $\eta = \sqrt{\epsilon'/2\kappa}$ and $\epsilon = \epsilon'/2$.
\end{proof}

\section{A Refined SI Bound for the Hardcore Model}
\label{sec:SI-hardcore}
In this section, we present a refined analysis of the spectral independence for the hardcore model satisfying $\delta$-uniqueness,
proving~\Cref{lem:sub-SI-hardcore}.
Let $G = (V, E)$ be a graph with maximum degree $\Delta \geq 3$.
Let $\mu$ denote the Gibbs distribution of the hardcore model on $G$ with fugacity $\lambda > 0$.

For convenience, define $d := \Delta - 1$. The tree recursion for the hardcore model is defined as
\begin{align*}
    f(x) := \lambda \tp{\frac{1}{1+x}}^d.
\end{align*}
Let $\delta^* \in (0, 1)$ be a parameter.
The hardcore model on graph $G$ with fugacity $\lambda$ is said to be \emph{$\delta^*$-unique} if the following inequality holds:
\begin{align} \label{eq:def-delta-unique}
    \abs{f'(\hat{x})} = d\lambda \tp{\frac{1}{1+\hat{x}}}^{d+1} = \frac{d \hat{x}}{1 + \hat{x}} \leq 1 - \delta^*,
\end{align}
where $\hat{x}$ represents the fixed point of the tree recursion (i.e., the unique positive solution of the equation $x = f(x)$).

It is worth noting that in the literature, the condition of being $\delta$-unique  is often assumed  for all degrees up to $\Delta$.
However, for the hardcore model, assuming $\delta$-uniqueness for degrees up to $\Delta$ is equivalent to assuming it holds for the maximum degree  $\Delta$.

The following spectral independence bound is known for the hardcore model that is $\delta$-unique.
\begin{lemma}[\text{\cite{chen2023rapid}}] \label{lem:hardcore-SI-CLV}
    If the hardcore model is $\delta^*$-unique, then $\mu$ is $\frac{2}{\delta^*}\frac{\Delta \lambda}{1 + \lambda}$-spectrally independent.
\end{lemma}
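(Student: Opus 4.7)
The plan is to follow the standard spectral independence proof via the self-avoiding walk (SAW) tree, but to extract the contraction constant $1-\delta^*$ exactly together with a tight first-level influence bound. Since $\lambda_{\max}(\Psi_{\mu^\tau}) \le \max_{v} \sum_{u} |\Psi_{\mu^\tau}(v,u)|$, it suffices to bound the total absolute row sum of the influence matrix. By the preservation of total absolute influence on the SAW tree~\cite[Lemma 8]{chen2023rapid},
\[
  \sum_{u} |\Psi_{\mu^\tau}(v,u)| \;\le\; \sum_{x \in V(\+T_{G,v}^{\mathrm{SAW}})} \bigl|\Psi_{\mu_{\+T_v}}(v,x)\bigr|,
\]
and by the conditional independence of influence along the unique tree path~\cite[Lemma 15]{chen2023rapid}, each summand factors as $\prod_{i=1}^\ell |\Psi(e_{i-1},e_i)|$ along the path $v = e_0, e_1, \ldots, e_\ell = x$. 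Thus the task reduces to bounding a sum of products over all self-avoiding walks from $v$.

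Next, I would apply the potential-function contraction argument in the style of Chen--Liu--Vigoda. Choose a potential $\varphi\colon [0,\infty)\to\mathbb{R}$ adapted to the hardcore tree recursion $f(x)=\lambda/(1+x)^d$ so that the conjugated map $\varphi\circ f\circ \varphi^{-1}$ is uniformly $(1-\delta^*)$-contractive on the relevant range; at the fixed point this contraction ratio matches $|f'(\hat{x})| = d\hat{x}/(1+\hat{x}) \le 1-\delta^*$, which is exactly the $\delta^*$-uniqueness hypothesis~\eqref{eq:def-delta-unique}. With such a potential, the parent-child telescope of absolute influences produces a factor of $(1-\delta^*)$ per step after the change-of-variable factors cancel, so the sum of $|\Psi(v,x)|$ over all depth-$\ell$ descendants $x$ in the SAW tree is at most $(1-\delta^*)^{\ell-1}$ times the first-level total.

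Finally, I would bound the first-level total influence directly. For any child $u_j$ of $v$ in the SAW tree, forcing $X_v=+1$ forbids $u_j$ from being occupied, so
\[
  \bigl|\Psi_{\mu_{\+T_v}}(v,u_j)\bigr| \;=\; \Pr{X_{u_j}=+1 \mid X_v=-1} \;\le\; \frac{\lambda}{1+\lambda},
\]
and summing over at most $\Delta$ children yields a first-level total of at most $\Delta\lambda/(1+\lambda)$. Multiplying by $\sum_{\ell\ge 1}(1-\delta^*)^{\ell-1} = 1/\delta^*$ and absorbing the contribution of the diagonal entry $\Psi(v,v)=1$ into an overall constant $2$ gives
\[
  \lambda_{\max}(\Psi_{\mu^\tau}) \;\le\; \frac{2}{\delta^*}\cdot\frac{\Delta\lambda}{1+\lambda}.
\]
The argument applies uniformly over pinnings $\tau$, since pinning yields a hardcore model on a subgraph of maximum degree at most $\Delta$ at the same fugacity $\lambda$, which inherits $\delta^*$-uniqueness.

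The main technical obstacle is the careful choice of $\varphi$: one needs the amortized contraction constant to equal exactly $1-\delta^*$ rather than a looser worst-case $|f'(x)|$, which would introduce a multiplicative overhead in $\Delta$ and prevent the downstream refinement of~\Cref{lem:sub-SI-hardcore} at criticality. The potential $\varphi$ arising in the CLV analysis of the hardcore model is standard, and the remaining calculation amounts to checking that the per-level amortization, the first-level bound $\Delta\lambda/(1+\lambda)$, and the geometric factor $1/\delta^*$ assemble as claimed.
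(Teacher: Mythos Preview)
Your overall strategy---bound $\lambda_{\max}(\Psi_\mu)$ by the maximum absolute row sum, pass to the SAW tree via \cite[Lemma~8, Lemma~15]{chen2023rapid}, and control the resulting sum of path-products by a potential-function contraction---is exactly the route the paper takes. The paper does not give its own proof; it cites \cite{chen2023rapid} and notes in the remark following the lemma that the bound arises from the $(\alpha,c)$-potential framework with $\alpha = \delta^*/2$ (from \cite[Theorem~18]{chen2023rapid}) and $c = \tfrac{\Delta\lambda}{1+\lambda}$ (from \cite[Proof of Lemma~36, H.1]{chen2023rapid}), yielding $c/\alpha = \tfrac{2}{\delta^*}\tfrac{\Delta\lambda}{1+\lambda}$.

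The gap in your accounting is where the factor $2$ comes from. You assert that the potential-conjugated recursion contracts by $(1-\delta^*)$ per level and then absorb the diagonal entry $\Psi(v,v)=1$ into the $2$. Neither step is quite right. First, the standard potential (from \cite{li2012correlation}, analyzed in \cite{chen2023rapid}) only yields contraction rate $\alpha=\delta^*/2$, not $\delta^*$; the factor $2$ in the lemma is precisely this loss, not the diagonal. Second, your absorption step requires $1 \le \tfrac{1}{\delta^*}\tfrac{\Delta\lambda}{1+\lambda}$, which fails when $\lambda$ is small (e.g.\ for $\lambda \to 0$ one has $\hat{x}\approx\lambda$, $\delta^*\approx 1$, but $\tfrac{\Delta\lambda}{1+\lambda}\to 0$). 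So your derivation of the exact constant does not go through as written; you should instead invoke the $(\alpha,c)$-potential framework directly with $\alpha=\delta^*/2$, as the paper's remark indicates.
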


\begin{remark}
In~\cite{chen2023rapid}, the authors introduce the notion of an $(\alpha, c)$-potential function, the existence of which implies $c/\alpha$-spectral independence for the distribution.
They show that the potential function proposed in~\cite{li2012correlation} is indeed an $(\alpha,c)$-potential function with desirable parameters.
Specifically, \Cref{lem:hardcore-SI-CLV} is implied by combining~\cite[Theorem 18]{chen2023rapid} (which proves $\alpha = \delta^*/2$) and ~\cite[Proof of Lemma 36 H.1]{chen2023rapid} (which proves $c = \frac{\lambda\Delta}{1+\lambda}$).  
\end{remark}


\begin{proof}[Proof of~\Cref{lem:sub-SI-hardcore}]
    First, we will show that $\frac{\Delta\lambda}{1+\lambda} \leq \e$.
    Note that $\lim_{\Delta \to \infty} \frac{\Delta\lambda_c(\Delta)}{1+\lambda_c(\Delta)} = \e$, and $\frac{\Delta\lambda}{1+\lambda} \leq \frac{\Delta\lambda_c(\Delta)}{1+\lambda_c(\Delta)}$.
    Therefore, it suffices to show that $T(\Delta) := \frac{\Delta\lambda_c(\Delta)}{1+\lambda_c(\Delta)}$ is an increasing function.
    
    Calculating the derivative $(\log T)'$, we have
    \begin{align*}
        (\log T(\Delta)') = \frac{1}{\Delta }+\frac{(\Delta -1) \left((\Delta -2) \log \left(\frac{1}{\Delta -2}+1\right)-2\right)}{(\Delta -2) \left(\left(\frac{1}{\Delta -2}+1\right)^{\Delta }+\Delta -1\right)}.
    \end{align*}
    Without loss of generality, we assume $(\Delta-2)\log(\frac{1}{\Delta-2}+1) - 2 \leq 0$ (otherwise, the lemma holds).
    Note that $(\frac{1}{\Delta-2} + 1)^\Delta \geq 1 + \frac{\Delta}{\Delta-2} + \binom{\Delta}{2} \frac{1}{(\Delta-2)^2}$, which follows from applying a generalization of the Bernoulli inequality  $(x+1)^r \geq 1 + rx + \frac{r(r-1)}{2} x^2$ that holds for $r \geq 2, x \geq 0$.
    We can apply this to the denominator in the above equation  to obtain:
    \begin{align*}
        (\log T(\Delta)') \geq \frac{-2 \Delta +2 (\Delta -2)^2 \log \left(\frac{1}{\Delta -2}+1\right)+5}{\Delta  (2 \Delta -3)}.
    \end{align*}
    Since $2\Delta \geq 3$, it suffices to prove:
    \begin{align*}
        \log\tp{1 + \frac{1}{\Delta-2}} \geq \frac{2\Delta-5}{2(\Delta-2)^2}.
    \end{align*}
    By Taylor's expansion for $\log(1+x)$ at $x = 0$, we have $\log(1+x) \geq x - \frac{x^2}{2}$ for $x \in (0,1)$, leading to:
    \begin{align*}
        \log\tp{1 + \frac{1}{\Delta-2}} \geq \frac{1}{\Delta-2} - \frac{1}{2}\tp{\frac{1}{\Delta-2}}^2 = \frac{2\Delta -5}{2(\Delta-2)^2}.
    \end{align*}
    This confirms that $T(\Delta)$ is an increasing function, proving  $\frac{\Delta\lambda}{1+\lambda} \leq \e$.
    
    According to~\Cref{lem:hardcore-SI-CLV}, we can complete the proof by showing that $2\delta^*(1+\frac{1}{\Delta-2}) \geq \delta$.
    The remaining proof follows a similar structure to the proof of~\cite[Lemma C.1]{anari2020spectral}.
    By~\eqref{eq:def-delta-unique}, to ensure the hardcore model is $\delta^*$-unique we need $\hat{x} \leq \frac{1-\delta^*}{d - (1 - \delta^*)}$.
    Since $\hat{x} = f(\hat{x})$ for the fixed point $\hat{x}$, we have $\lambda = \hat{x}(1+\hat{x})^d$,
    meaning that  $\lambda$ is an increasing function of the fixed point $\hat{x}$.
    Thus, to make the hardcore model $\delta^*$-unique, it suffices to require
    \begin{align*}
        \lambda \leq \lambda^* := \frac{1-\delta^*}{d - (1 - \delta^*)} \tp{1 + \frac{1-\delta^*}{d - (1 - \delta^*)}}^d.
    \end{align*}
    By the definition of $\lambda_c(\Delta) = \frac{1}{d-1}\tp{1 + \frac{1}{d-1}}^d$, we have
    \begin{align*}
        \frac{\lambda^*}{\lambda_c} 
        &= (1-\delta^*)\tp{1-\frac{\delta^*}{d-1+\delta^*}}^{d+1} \\
        &\geq (1-\delta^*)\tp{1 - \frac{(d+1)\delta^*}{d-(1-\delta^*)}} \tag{Bernoulli inequality} \\
        &\geq 1 - \tp{1+\frac{d+1}{d-(1-\delta^*)}}\delta^*
        \geq 1 - \tp{1 + \frac{d+1}{d-1}}\delta^*.
    \end{align*} 
    This implies that if $\delta = (1 + \frac{d+1}{d-1})\delta^* = 2(1 + \frac{1}{\Delta-2})\delta^*$ and $\lambda \leq (1 - \delta)\lambda_c(\Delta)$, then $\lambda \leq \lambda^*$ and the hardcore model is $\delta^*$-unique.
    Hence, we establish that $2\delta^*(1 + \frac{1}{\Delta-2}) \geq \delta$.
    This proves \Cref{lem:sub-SI-hardcore} as discussed above.
\end{proof}


\end{document}